\newcommand{\CASP}{Finite-Choice Logic Program}
\newcommand{\casp}{finite-choice logic program}
\newcommand{\CASPS}{Finite-Choice Logic Programs}
\newcommand{\casps}{finite-choice logic programs}
\newcommand{\Casping}{Finite-choice logic programming}
\newcommand{\casping}{{finite-choice logic programming}}
\newcommand{\CASPING}{Finite-Choice Logic Programming}
\newcommand{\dusa}{Dusa}
\DeclareMathOperator{\is}{{\color{blue}{\underline{\textrm{\textsc{is}}}}}}
\DeclareMathOperator{\isish}{{\color{blue}{\underline{\textrm{\textsc{is?}}}}}}
\newcommand{\term}[1]{{\color{teal}\mathsf{#1}}}
\newcommand{\pred}[1]{{\color{gray}\mathit{#1}}}
\newcommand{\prop}[3]{{{\pred{#1}({#2})} \is {{#3}}}}
\newcommand{\propK}[2]{{{\pred{#1}(#2)}}}
\newcommand{\propV}[2]{{{\pred{#1}} \is {{#2}}}}
\newcommand{\propish}[3]{{{\pred{#1}({#2})} \isish {{#3}}}}
\newcommand{\propishV}[2]{{{\pred{#1}} \isish {{#2}}}}
\newcommand\ruleconc{\ensuremath{{H}}}
\newcommand\compatible\parallel
\newcommand\incompatible{\not\parallel}
\newcommand\Domain{\ensuremath{\mathscr{D}}}
\newcommand\Constraint{\ensuremath{\mathit{Constraint}}}
\newcommand\DB{\ensuremath{\mathit{DB}}}
\newcommand\Choice{\ensuremath{\mathit{Choice}}}
\newcommand\C{\ensuremath{\mathcal{C}}}
\newcommand\just[1]{{\color{blue}{\mathsf{just}}}\,{#1}}
\newcommand\none[1]{{\color{violet}{\mathsf{noneOf}}}\,{#1}}
\newcommand\immcons[1]{\ensuremath{\tau_{#1}}}
\newcommand\bigimmcons[1]{\ensuremath{\mathcal{T}_{#1}}}
\newcommand\rulechoices[1]{\ensuremath{\langle{#1}\rangle}}
\DeclareMathOperator{\setstepAarrow}{\Rightarrow}
\newcommand{\setstepA}[1]{\setstepAarrow_{#1}}
\renewcommand{\emptyset}{\varnothing}
\newcommand\of{\mathop{\in}} 
\newcommand\lfp{\mathop{\mathrm{lfp}}}
\newcommand\suggestedchange[2]{#2} 
\newcommand\suggestedchangerationale[3]{#2} 
\newcommand\reforexpanded[2]{\Cref{#1}} 
\begin{document}

\title{Finite-Choice Logic Programming}

\author{Chris Martens}
\email{c.martens@northeastern.edu}
\orcid{0000-0002-7026-0348}
\affiliation{%
  \institution{Northeastern University}
  \city{Boston}
  \state{Massachusetts}
  \country{USA}
}

\author{Robert J. Simmons}
\email{robsimmons@gmail.com}
\orcid{0000-0003-2420-3067}
\affiliation{
  \institution{Unaffiliated}
  \city{Boston}
  \state{Massachusetts}
  \country{USA}
}

\authornote{The first and second authors contributed equally to the design of {\casping} and {\dusa}. The second author performed the implementation and analysis (\Cref{sec-implementation-in-dusa}).}

\author{Michael Arntzenius}
\orcid{0009-0002-0417-5636}
\affiliation{%
    \institution{Unaffiliated}
    \city{Hamilton Township}
    \state{New Jersey}
    \country{USA}
}

\authornote{The first and third authors contributed equally to the fixed-point semantics in \Cref{sec-immcons,sec-semantics}.}

\begin{abstract}
  Logic programming, as exemplified by datalog, defines the meaning of a program as its unique smallest model: the deductive closure of its inference rules.  
However, many problems call for an enumeration of models that vary along some set of choices while maintaining structural and logical constraints---there is no single canonical model. 
The notion of {\em stable models} for logic programs with negation has successfully captured programmer intuition about the set of valid solutions for such problems, giving rise to a family of programming languages and associated solvers known as answer set programming.
Unfortunately, the definition of a stable model is frustratingly indirect, especially in the presence of rules containing free variables.

We propose a new formalism, \emph{\casping,} that uses choice, not negation, to admit multiple solutions. 
{\Casping} contains all the expressive power of the stable model semantics,
gives meaning to a new and useful class of programs, and enjoys a least-fixed-point interpretation over a novel domain. We present an algorithm for exploring the solution space and prove it correct with respect to our semantics. Our implementation, the {\dusa} logic programming language, has performance that compares favorably with state-of-the-art answer set solvers and exhibits more predictable scaling with problem size.
\end{abstract}

\begin{CCSXML}
<ccs2012>
   <concept>
       <concept_id>10010147.10010178.10010187.10010196</concept_id>
       <concept_desc>Computing methodologies~Logic programming and answer set programming</concept_desc>
       <concept_significance>500</concept_significance>
       </concept>
   <concept>
       <concept_id>10003752.10003790.10003795</concept_id>
       <concept_desc>Theory of computation~Constraint and logic programming</concept_desc>
       <concept_significance>500</concept_significance>
       </concept>
   <concept>
       <concept_id>10003752.10010124.10010131.10010133</concept_id>
       <concept_desc>Theory of computation~Denotational semantics</concept_desc>
       <concept_significance>500</concept_significance>
       </concept>
   <concept>
       <concept_id>10011007.10011006.10011008.10011009.10011015</concept_id>
       <concept_desc>Software and its engineering~Constraint and logic languages</concept_desc>
       <concept_significance>500</concept_significance>
       </concept>
   <concept>
       <concept_id>10011007.10011006.10011008.10011024.10011027</concept_id>
       <concept_desc>Software and its engineering~Control structures</concept_desc>
       <concept_significance>100</concept_significance>
       </concept>
   <concept>
       <concept_id>10003752.10010070.10010111.10011734</concept_id>
       <concept_desc>Theory of computation~Logic and databases</concept_desc>
       <concept_significance>300</concept_significance>
       </concept>
   <concept>
       <concept_id>10003752.10010070.10010111.10010113</concept_id>
       <concept_desc>Theory of computation~Database query languages (principles)</concept_desc>
       <concept_significance>100</concept_significance>
       </concept>
   <concept>
       <concept_id>10010147.10010178.10010187.10010189</concept_id>
       <concept_desc>Computing methodologies~Nonmonotonic, default reasoning and belief revision</concept_desc>
       <concept_significance>300</concept_significance>
       </concept>
 </ccs2012>
\end{CCSXML}

\ccsdesc[500]{Computing methodologies~Logic programming and answer set programming}
\ccsdesc[500]{Theory of computation~Constraint and logic programming}
\ccsdesc[500]{Theory of computation~Denotational semantics}
\ccsdesc[500]{Software and its engineering~Constraint and logic languages}

\keywords{Datalog, answer set programming, possibility spaces}

\maketitle

\pdfimageresolution=300

\section{Introduction}

Many important problem domains involve generating varied data according to structural and logical constraints. Examples include
property-based random testing for typed functional
programs~\cite{goldstein2023reflecting,goldstein2022parsing,paraskevopoulou2022computing,lampropoulos2017generating,seidel2015type,claessen2011quickcheck}, procedural content generation in games~\cite{short2017procedural,shaker2016procedural,dormans2011generating},
and software configuration~\cite{le2011ifdef,czarnecki2004staged,czarnecki2002generative}.
To solve these problems, we write {\em
generative programs,} characterized by {\em choice points} that create
multiple possible outcomes and {\em constraints} that eliminate undesirable outcomes.
\suggestedchangerationale{Generative programs often make use of randomness to produce different solutions when the program is run multiple times.}{}{{tightening up prose \todo{consider adding back}}}
From a
declarative perspective, these generative programs characterize {\em possibility
spaces}, and the meaning of a program is the entire space, which may be considered independently of particular methods for
sampling individual solutions. 

Using logic to describe possibility spaces has a long
history, and its expressive power has been harnessed by established
tools such as satisfiability solvers and logic programming languages.
However, different logic-based tools disagree in important ways about how
logical statements should be interpreted and thus how possibility spaces
should be expressed.

\clearpage 
Consider the problem of procedurally generating maps for a virtual world 
of connected regions where every region
has one of three terrain types---$\pred{mountain}$, $\pred{ocean}$, or
$\pred{forest}$---and where oceans never directly adjoin mountains.
We can model the possibility space of terrain maps as a Boolean satisfiability
(SAT) problem by saying that the truth of a proposition
$\propK{mountain}{r}$ means that a region $r$ has mountainous terrain (and
so on):
\begin{gather} \left(\propK{mountain}{r} \vee \propK{forest}{r} \vee
  \propK{ocean}{r}\right) \leftrightarrow  \propK{region}{r} \label{eqn-sat1}\\
  (\neg\propK{mountain}{r} \wedge \neg\propK{forest}{r}) \vee
  (\neg\propK{mountain}{r} \wedge \neg\propK{ocean}{r}) \vee
(\neg\propK{forest}{r} \wedge \neg\propK{ocean}{r}) \label{eqn-sat2}\\
(\propK{adjacent}{r_1, r_2} \wedge \propK{ocean}{r_1}) \rightarrow
  (\propK{ocean}{r_2} \vee \propK{forest}{r_2}) \label{eqn-sat3}
\end{gather}
The conjunction of these three propositions neatly captures the
desired possibility space: when augmented with a finite
universe of \suggestedchange{terms (possible regions)}{possible regions} and a collection of
input literals
characterizing the $\pred{adjacent}$ relation,
a SAT solver would return valid terrain-to-region assignments.

Now suppose we additionally wish to enumerate all regions reachable from a given starting region by traversing only forests, 
perhaps with the goal of checking whether a player's initial position can reach both mountains and ocean.
\suggestedchangerationale{Reachability is naturally characterized by introducing a recursively-defined proposition $\propK{reach}{r}$ which indicates that the region $r$ is reachable from the starting region:}{The fact that a region $r$ is reachable from a starting region is naturally characterized with a recursively-defined proposition $\propK{reach}{r}$ as follows:}{tightening up prose}
\begin{align}
  \propK{reach}{r}  & \leftarrow \propK{start}{r}\label{eqn-reach1}\\
  \propK{reach}{r_2} & \leftarrow 
  \propK{reach}{r_1}, \propK{forest}{r_1},
  \propK{adjacent}{r_1,r_2}  \label{eqn-reach2}
\end{align}
In this instance, logic programming gives the desired interpretation: from a rule $p \leftarrow q_1\ldots q_n$ (where $p$ and $q_i$ are all positive assertions) and a database containing $q_1\ldots q_n$, we deduce $p$.
Such a program has a single canonical model, the smallest set of assertions closed under the given implications;
any other assertion is deemed false~\cite{gelder-ross-schlipf91wellfounded}.

Unfortunately, these two approaches don't play well together. On one hand, logic
programming lacks a notion of \emph{choice point,} because propositions are interpreted as rules:
not merely justifications for what {\em may} appear in the model, but assertions about what {\em must} appear in the model, so the first program has no direct
analog in, say, datalog. On the other hand, defining \suggestedchangerationale{transitive closure}{reachability}{transitive closure is unnecessary jargon here} is notoriously difficult for SAT solvers:
interpreting the reachability program as a logical formula using Clark's
completion \shortcite{clark78negation}, a SAT solver would validate
solutions with spurious $\pred{reach}$ facts that have no justification
imparted by the programmer.

The programming model we desire seems to alternate between two modes
of operation: on the one hand, making {\em mutually exclusive choices}
that multiply our possibility space and rule out certain combinations
with other choices; and on the other hand, computing the 
{\em deductive consequences} of those choices. 
Unifying these perspectives
is precisely what we achieve in this paper with {\casping}, a new approach to logic programming
in which the generative constraints in \suggestedchange{\cref{eqn-sat1,eqn-sat2,eqn-sat3}}{formulas~\ref{eqn-sat1}-\ref{eqn-sat3}} can be expressed as follows:
\begin{gather}
  \prop{terrain}{r}{\{ \term{mountain}, \term{forest}, \term{ocean} \}} 
    \leftarrow \propK{region}{r} \label{eqn-dusa-demo1}\\
  \prop{terrain}{r_1}{\{\term{forest},\term{ocean}\}} 
    \leftarrow \propK{adjacent}{r_1,r_2}, \prop{terrain}{r_2}{\term{ocean}} \label{eqn-dusa-demo2}
\end{gather}

\noindent
The $\is$ syntax signals a \emph{functional dependency} from the region $r$ to its
terrain type, and the rules represent mutually exclusive choices closed
under deduction: a region must have one of three terrain types, and a
region adjacent to the ocean must contain oceans or forests. In {\casping}, such rules combine naturally with rules like \suggestedchange{\cref{eqn-reach1,eqn-reach2}}{formulas~\ref{eqn-reach1}~and~\ref{eqn-reach2}}; we would rewrite the latter rule as $\left(\propK{reach}{r_2} \leftarrow  \propK{reach}{r_1}, \prop{terrain}{r_1}{\term{forest}}, \propK{adjacent}{r_1, r_2}\right)$.

\clearpage 
\subsection{Answer Set Programming and Its Discontents}

Historically, the most successful approach for logic programming with mutually exclusive choices has been the {\em stable model semantics} introduced by Gelfond and
Lifschitz~\shortcite{gel88}. Stable models combine the Boolean-satisfiability intuitions behind Clark's completion with a rejection of circular justification, and provide the foundation for \textit{answer set programming\suggestedchange{}{ (ASP)}}.
Systems for computing the stable models of answer set
programs have fruitfully co-evolved with the advancements in Boolean
satisfiability solving, leading to sophisticated heuristics that make many
problems fast in practice~\cite{gebser2017multi}.
\suggestedchange{Answer set programming}{ASP} has seen considerable and ongoing success in procedural content generation~\cite{smith2011answer,smith2013quantifying,smith2014logical,neufeld2015procedural,summerville2018gemini,dabral2020generating},
analysis and scenario generation for legal regulations, policies, and contracts~\cite{dabral2022exploring,lim2022automating},
spatial reasoning for built structures~\cite{li2020spatial},
and distributed system reasoning~\cite{alvaro2011dedalus}.

In a modern answer set
programming language like Clingo \cite{gebser2011potassco}, the map-generation critera expressed in \suggestedchange{\cref{eqn-sat1,eqn-sat2,eqn-sat3}}{formulas \ref{eqn-sat1}-\ref{eqn-sat3}} can be expressed concisely as follows:
\begin{gather} 1 \left\{ \, \propK{mountain}{r}; \, \propK{forest}{r};
\,\propK{ocean}{r} \, \right\} 1 \leftarrow \propK{region}{r}
\label{eqn-choice-rule-ex1} \\
\leftarrow \propK{adjacent}{r_1,r_2}, \propK{ocean}{r_1}, \propK{mountain}{r_2} \label{eqn-choice-rule-ex2}
\end{gather}
\suggestedchange{\Cref{eqn-choice-rule-ex1}}{Formula~\ref{eqn-choice-rule-ex1}} contains a ``cardinality constraint'' that insists on precisely one of the three terrain types holding for each region. \suggestedchange{\Cref{eqn-choice-rule-ex2}}{Formula~\ref{eqn-choice-rule-ex2}} is a ``headless'' rule, interpreted as a constraint forbidding the conjunction of all three premises from holding simultaneously.

\suggestedchange{Unfortunately, the interpretation of answer set programs involves multiple levels of indirection that make it challenging to reason about them.}{However, the interpretation of answer set programs involves multiple levels of indirection that, in the authors' experience, make it challenging to reason about program meaning and performance.}
First, the stable model semantics that justifies answer set programming is defined only for logic programs without free variables.
Answer set programming presents clients with an interface of rules containing free variables, such as \suggestedchange{\cref{eqn-choice-rule-ex2}}{formula~\ref{eqn-choice-rule-ex2}} above;
however, in a program with sixteen regions, the stable model semantics applies after expanding this rule into 256 variable-free rules, one for each assignment of regions to $r_1$ and $r_2$.  
This is reflected in essentially all implementations of answer set programming, which involve the interaction of a \textit{solver} that only understands variable-free rules and a \textit{grounder} that generates variable-free rules, usually incorporating heuristics to minimize the number of rules the solver must deal with \cite{Kaufmann_Leone_Perri_Schaub_2016}.

The second level of indirection is that even propositional answer set programs do not directly have a semantics:
the definition of stable models involves a syntactic transformation of an answer set program into a logic program without negation (the
\emph{reduct}, see \cite{gel88}).
This transformation is defined with respect to a candidate model, and if the unique model of the reduct is the same as the candidate model, the candidate model is accepted as an actual model.
This fixed-point-like definition is what the ``stability'' of stable models refers to.

The third level of indirection is that stable models are defined in terms of \emph{negation} and not in terms of the higher-level constructs --- choice, cardinality, and headless rules --- that are foundational for essentially all modern applications of answer set programming.
Higher-level constructs are either justified by translation to ``pure'' answer set programming with only negation (as in \cite{sacca1990stable}) or by appeal to a nonstandard definition of stable models (as in \cite{SIMONS2002181}).

\clearpage 
\subsection{A Constructive Semantics Based on Mutually-Exclusive Choices}
\label{sec:mutex}

In this paper we develop {\casping} as a logic programming language with \emph{choice,} not negation, as primitive.
We present and connect two semantics for {\casping}: a set-based semantics that describes the incremental construction of solutions (\Cref{sec-definition}) and an interpretation of programs as the least fixed point of a monotonic immediate consequence operator over the domain
of possibility spaces (\Cref{sec-semantics}). 
We demonstrate that {\casping} subsumes both datalog (\Cref{sec-simulating-datalog}) and answer set programming (\Cref{sec-simulating-asp}), and present new examples of idiomatic {\casp{s}} (\Cref{sec-examples}). 
We present an algorithm for nondeterministic enumeration of solutions (\Cref{sec-implementation}) 
and describe an implementation (\Cref{sec-implementation-in-dusa}) that outperforms state-of-the-art answer set programming engines on a variety of examples, in part because our implementation avoids the
``grounding bottleneck'' encountered by ASP solvers that take a ground-then-solve approach to program execution.

\section{Defining {\CASPING}} \label{sec-definition}

At the core of {\casping} are facts, which take the form $\prop{p}{\overline{t}}{v}$, where $\propK{p}{\overline{t}}$ is the attribute and $v$ is the unique value assigned to that attribute. A set of facts (which we'll call a \textit{database}) must map each attribute to at most one value.
Concretely, a fact like $\prop{terrain}{\term{home}}{\term{forest}}$ indicates
that the region named $\term{home}$ contains forest terrain, and cannot contain any other terrain.
This is called a \emph{functional dependency}: the predicate
$\pred{terrain}$ is a partial function from region names to terrain types.

\begin{definition}[Terms]\label{def-term}
As common in logic programming settings, terms are Herbrand structures, 
either variables $x,y,z, \ldots$ or uninterpreted functions 
$\term{f}(t_1, \ldots, t_n)$ where the arguments $t_i$ are terms. Constants
are uninterpreted functions with no arguments, and as
usual we'll leave the parentheses off and just write $\term{b}$ or $\term{c}$ instead of
$\term{b}()$ or $\term{c}()$.
We'll often abbreviate sequences 
of terms $t_1,\ldots, t_n$ as
$\overline{t}$ when the indices aren't important.
\end{definition}

\begin{definition}[Facts]\label{def-fact}
A fact has the form 
\begin{math}
\prop{p}{t_1, \ldots, t_n}{v}
\end{math},
where $\pred{p}$ is a \emph{predicate} and the $t_i$ and $v$ are
variable-free (i.e.\ \emph{ground}) terms. We call $\propK{p}{t_1, \ldots, t_n}$ the fact's \textit{attribute}
and call \suggestedchange{$v$}{the term $v$} the fact's \textit{value}. We will sometimes use $a$
to stand in for a variable-free attributes $\propK{p}{\overline{t}}$.
\end{definition}

{\Casping} admits specification of possibility spaces through the interplay of two kinds of rules:

\begin{definition}[Rules]\label{def-rule}
Rules $H \leftarrow F$ have one of two forms, \textit{open} and \textit{closed}.
\begin{align}
\propish{p}{\overline{t}}{v} & \leftarrow F \tag{open form rule}\\
 \prop{p}{\overline{t}}{\{ v_1, \ldots, v_m \}} & \leftarrow F \tag{closed form rule, $m \geq 1$}
\end{align}
In both cases,
$F$ is a conjunction of \textit{premises} of the form $\prop{p}{\overline{t}}{v}$. The rule's \textit{conclusion} (or \textit{head}) $H$ is the part to the left of the 
$\leftarrow$ symbol. Both the head and the premises may contain variables, but every
variable in the head must appear in a premise.
\end{definition}

For intuition: a \textit{closed} conclusion,
     such as $\prop{terrain}{\term{port}}{\{ \term{ocean}, \term{forest} \}}$,
      \textit{requires} that the region named $\term{port}$ be either ocean or forest.
An \textit{open} conclusion, such as
      $\propish{terrain}{\term{goal}}{\term{meadow}}$, requires
      that the attribute $\propK{terrain}{\term{goal}}$ takes some value in the solution,  
      and {\em permits} that value to be meadow.
We make the semantics of these rule forms precise in the next section.

\begin{definition}[Programs]\label{def-program}
A program $P$ is a finite set of rules. 
\end{definition}

\begin{definition}[Substitutions]\label{def-substition}
A substitution $\sigma$ is a total function from variables to ground terms. Applying 
a substitution to a term ($\sigma t$) or a formula ($\sigma F$) replaces all variables $x$
in the term or formula with the term $\sigma(x)$.
\end{definition}

\subsection{Fact-Set Semantics}\label{sec-set-semantics}

In this section, we present the meaning of {\casps} by describing their construction according to a nondeterminstic semantics.

\begin{definition}[Databases]\label{def-set-database}
A \textit{database} $D$ is a set of variable-free facts $\prop{p}{\overline{t}}{v}$ that is
\textit{consistent}, meaning that each attribute $\propK{p}{\overline{t}}$ maps to at most one value $v$:
if $\prop{p}{\overline{t}}{v} \in D$ and $\prop{p}{\overline{t}}{v'} \in D$, then $v = v'$.
\end{definition}

\begin{definition}[Satisfaction] \label{def-set-satisfaction}
We say that a substitution $\sigma$ \textit{satisfies $F$ in the database $D$} when,
for each $\prop{p}{\overline{t}}{v}$ in $F$, the fact
$\prop{p}{\sigma \overline{t}}{\sigma v}$ is present in $D$.
\end{definition}

\begin{definition}[Fact-set evolution]\label{def-set-evolution}
The relation
$D \setstepA{P} S$ relates a database (a consistent set of facts) to a set of databases (a set of consistent sets of facts):
\begin{itemize}
\item If $P$ contains the closed-form rule 
$\prop{p}{\overline{t}}{\{ v_1, \ldots, v_m \}} \leftarrow F$ and
$\sigma$ satisfies $F$ in $D$, then 
$D \setstepA{P} S$, where $S$ is the set of all $D \cup \{ \prop{p}{\sigma\overline{t}}{\sigma v_i} \}$ for $1 \leq i \leq m$ where the result of the union is consistent (and therefore a database).
\item If $P$ contains the open-form rule 
$\propish{p}{t_1, \ldots, t_n}{v} \leftarrow F$ and
$\sigma$ satisfies $F$ in $D$, then 
$D \setstepA{P} S$, where $S$ contains one or two elements. $S$ always contains $D$ itself, and if
$D \cup \{ \prop{p}{\sigma\overline{t}}{\sigma v} \}$ is a consistent set of facts then $S$ contains
$D \cup \{ \prop{p}{\sigma\overline{t}}{\sigma v} \}$ as well.
\end{itemize}
To avoid a corner case with empty programs, we'll also say that $D \setstepA{P} \{ D \}$ always.
\end{definition}

\begin{definition}[Steps]\label{def-set-step-sequence}
We say that the program $P$ allows $D$ to \emph{step} to $D'$ if $D \setstepA{P} S$ and $D' \in S$. Steps gives rise to \textit{step sequences}: $D_1\ldots D_k$ is a step sequence for $P$ if $k \in \mathbb{N}$ (that is, if the sequence is finite) and if, for each $i > 1$, the program $P$ allows $D_{i-1}$ to step to $D_i$.
\end{definition}

\begin{definition}[Saturation]\label{def-set-saturation}
A database $D$ is \textit{saturated under a program $P$} 
if its {\em only possible evolution} under the $\setstepA{P}$
relation is the singleton set containing itself. In other words, $D$ is saturated under $P$
if, for all $S$ such that $D \setstepA{P} S$, it is the case that $S = \{D\}$.
\end{definition}

\begin{definition}[Solutions]\label{def-set-solutions}
A \textit{solution} to the program $P$ is a saturated database
$D$ where 
$\emptyset \ldots D$ is a  step sequence for $P$.
\end{definition}

The implications of this definition of fact-set evolution are a bit subtle. In general, for a given program $P$ and database $D$, there may be many $S$ such that $D \setstepA{P} S$, as many as there are pairs of substitutions $\sigma$ and rules $H \leftarrow F$ such that $\sigma$ satisfies $F$ in $P$. This means that each step in a step sequence resolves two levels of nondeterminism: to take a step from $D$ to $D'$, first it is necessary to pick one of the possibly many $S$ such that $D \setstepA{P} S$, and then it is necessary to pick a database $D'$ from that set $S$.

\subsection{Simulating Datalog}\label{sec-simulating-datalog}

A datalog program without negation is a set of rules (Horn clauses) where all variables in the conclusion appear somewhere in a premise.
\begin{align}
 \propK{p}{\overline{t}} & \leftarrow \propK{p_1}{\overline{t_1}}, \ldots, \propK{p_n}{\overline{t_n}} \tag{datalog rule}
\end{align}
This is a generic use of ``datalog,'' as often 
people take ``Datalog'' to specifically refer to  ``function-free''
logic programs where term constants have no arguments, a condition sufficient to ensure
that every program has a finite model. We follow many theoretical developments and practical implementations of datalog in ignoring the  function-free requirement.

{\Casping} can simulate a datalog proposition $\propK{p}{\overline{t}}$ as a fact of the form $\prop{p}{\overline{t}}{\term{unit}}$, where $\term{unit}$ is a newly introduced constant. This is consistent with the traditional interpretation of datalog so long as the predicate $\pred{p}$ only appears in premises of the form $\prop{p}{\overline{t}}{\term{unit}}$ or in conclusions of the form $\prop{p}{\overline{t}}{\{\term{unit}\}}$. This observation justifies our the use of value-free predicates in {\casps}, which we already saw with the use of $\propK{region}{r}$ in rule~\ref{eqn-dusa-demo1} and $\propK{adjacent}{r_1, r_2}$ in rule~\ref{eqn-dusa-demo2}.

\subsection{Simulating Answer Set Programming}\label{sec-simulating-asp}

\begin{figure}
\begin{align}
    \propishV{p}{\term{ff}} & \leftarrow \label{r01} \\
    \propishV{q}{\term{ff}} & \leftarrow \label{r02} \\
    \propV{p}{\{ \term{tt} \}} & \leftarrow \propV{q}{\term{ff}} \label{r03} \\
    \propV{q}{\{ \term{tt} \}} & \leftarrow \propV{p}{\term{ff}} \label{r04}
\end{align}
    \caption{The {\casp} corresponding to the two rules $\pred{p} \leftarrow \neg \pred{q}$ and $\pred{q} \leftarrow \neg \pred{p}$ in ASP.}
    \label{fig-the-littlest-asp-casp}
\Description{A finite choice logic program. In Dusa's concrete syntax, this would be written:
p is? ff.
q is? ff.
p is \{ tt \} :- q is ff.
q is \{ tt \} :- q is ff. }
\end{figure}

{\Casping} can use the interplay of open and closed rules to obtain all the expressiveness of stable models without any reference to logical negation.
An answer set program containing the two rules $\pred{p} \leftarrow \neg\pred{q}$ and $\pred{q} \leftarrow \neg\pred{p}$ corresponds to the {\casp} in \Cref{fig-the-littlest-asp-casp}. 
The open rules~\ref{r01}~and~\ref{r02} unconditionally \textit{permit} $\pred{p}$ or $\pred{q}$ to have the ``false'' value $\term{ff}$, and the closed rules~\ref{r03}~and~\ref{r04} ensure that the assignment of $\term{ff}$ to either $\pred{p}$ or $\pred{q}$ will force the other attribute to take the ``true'' value $\term{tt}$. (This asymmetry between the handling of truth and falsehood reflects their asymmetric treatment in answer set programming.)
The two solutions for this program are $\{ \propV{p}{\term{tt}}, \propV{q}{\term{ff}} \}$ and  $\{ \propV{p}{\term{ff}}, \propV{q}{\term{tt}} \}$, which correspond to the two solutions that answer set programming assigns to the source program.

Answer set programming is usually defined in terms of variable-free rules that have both non-negated premises $\pred{p_i}$ and negated premises $\neg\pred{q_i}$.
\begin{align}
\pred{p} \leftarrow \pred{p_1}, \ldots, \pred{p_n}, \neg \pred{q_{1}}, \ldots, \neg \pred{q_m} \tag{ASP rule}
\end{align}
A stable model of an answer set program takes the form of a set $X$ of variable-free propositions: any proposition in $X$ is treated as true, and any proposition not in $X$ is treated as false. (For the full definition of stable models, see Gelfond and Lifschitz \shortcite{gel88} or \reforexpanded{sec-simulating-asp-details}{Appendix A}.)

\begin{theorem} \label{thm-sound-complete-asp}
Let $P$ be a finite collection of ASP rules. There exists a translation of $P$ to a {\casp} $\langle P\rangle$ such that the following hold:
\begin{itemize}
\item
For all stable models $X$ of $P$, there is a solution $D$ to $\langle P \rangle$ such that $X = \{ \pred{p} \mid \propV{p}{\term{tt}} \in D \}$.
\item 
For all solutions $D$ of $\langle P\rangle$, the set $\{ \pred{p} \mid \propV{p}{\term{tt}} \in D \}$ is a stable model
of $P$.
\end{itemize}
\end{theorem}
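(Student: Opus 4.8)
The plan is to define the translation $\langle P\rangle$ explicitly and then verify the two bullets by matching the fixed-point characterization of stable models---$X = \lfp(P^X)$, where $P^X$ is the reduct of $P$ with respect to $X$---against the reachability and saturation conditions defining solutions of $\langle P\rangle$. Since the ASP rules here are variable-free, the translation stays in the ground case. Let $N$ be the set of propositions occurring under negation anywhere in $P$. For each $\pred{q}\in N$ I emit an open rule $\propishV{q}{\term{ff}} \leftarrow$, and for each ASP rule $\pred{p} \leftarrow \pred{p_1},\ldots,\pred{p_n},\neg\pred{q_1},\ldots,\neg\pred{q_m}$ I emit a closed rule $\propV{p}{\{\term{tt}\}} \leftarrow \propV{p_1}{\term{tt}},\ldots,\propV{p_n}{\term{tt}},\propV{q_1}{\term{ff}},\ldots,\propV{q_m}{\term{ff}}$. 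Intuitively the open rules let a step sequence \emph{guess} which propositions are false, while the closed rules propagate truth along exactly the rules that would survive in the reduct.

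The backbone is a single observation about saturated databases that I would establish first. Because the only values the translation mentions are $\term{tt}$ and $\term{ff}$, any solution $D$ splits into $X = \{\pred{p} \mid \propV{p}{\term{tt}} \in D\}$ and $Y = \{\pred{q} \mid \propV{q}{\term{ff}} \in D\}$, disjoint by consistency. The claim is that every $\pred{q}\in N$ has a value in $D$: otherwise the open rule $\propishV{q}{\term{ff}}$ would fire to yield a two-element set $S$, contradicting saturation. Hence for $\pred{q}\in N$ we get $\pred{q}\in Y$ iff $\pred{q}\notin X$, so the guessed falsehoods are pinned down to be precisely the complement of $X$ among negated atoms---exactly the information the reduct consumes.

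For the first bullet I am given a stable model $X$ and construct the witness $D = \{\propV{p}{\term{tt}} \mid \pred{p}\in X\} \cup \{\propV{q}{\term{ff}} \mid \pred{q}\in N\setminus X\}$. Consistency is immediate from $X \cap (N\setminus X) = \emptyset$. Saturation holds because any closed rule whose premises sit in $D$ comes from a rule surviving in $P^X$ (its negated atoms lie in $N\setminus X$, hence outside $X$) with positive premises in $X = \lfp(P^X)$, so its conclusion $\pred{p}\in\lfp(P^X)=X$ is already present; open rules are blocked because their atom already has a value. Reachability I would establish by first firing, for each $\pred{q}\in N\setminus X$, the open rule's ``add'' branch to lay down $\propV{q}{\term{ff}}$, then replaying the least-model derivation of $\lfp(P^X)$ with the closed rules, each of whose $\term{ff}$-premises is already available. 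For the second bullet I am given a solution $D$, set $X$ as above, and prove $X = \lfp(P^X)$ by two inclusions. The containment $X \subseteq \lfp(P^X)$ follows by induction on the order in which $\term{tt}$-facts appear along a fixed step sequence $\emptyset\ldots D$: the closed rule that introduced $\propV{p}{\term{tt}}$ has positive premises already present (so in $X$, hence in $\lfp(P^X)$ by induction) and negated premises in $Y$ (so outside $X$), so the source rule survives in $P^X$ and derives $\pred{p}$. Conversely $\lfp(P^X)\subseteq X$ follows by induction on the reduct's fixed-point stages: when a surviving rule derives $\pred{p}$ from premises already in $X$, the corresponding closed rule's premises all hold in $D$---the $\term{ff}$-premises by the pinning-down observation, since each such $\pred{q_j}\in N\setminus X$---so saturation forces $\propV{p}{\term{tt}}\in D$, i.e.\ $\pred{p}\in X$.

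The main obstacle is not any individual inclusion but getting this correspondence's bookkeeping exactly right, and it all rides on the pinning-down observation: I must rule out that the open rules' guessing latitude introduces some $\term{ff}$-value not dictated by $X$, which would otherwise break the match with the reduct. Conceptually, saturation must be shown to encode \emph{closure} under the reduct while reachability encodes \emph{foundedness} (no $\term{tt}$-fact without a justifying derivation), and the asymmetry between $\term{tt}$ (only ever produced by closed rules) and $\term{ff}$ (only ever guessed by open rules) is exactly what mirrors negation-as-failure. Every step leans on that asymmetry, so the care lies in confirming that consistency and saturation together leave no slack between the two regimes.
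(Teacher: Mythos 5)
Your proof is correct, but it rests on a different translation from the one the paper actually uses, and this changes the shape of the argument in an interesting way. The paper's open rules are chained and conditional: for an ASP rule with negated premises $\pred{q_1},\ldots,\pred{q_m}$ it emits $\propishV{q_j}{\term{ff}} \leftarrow \propV{p_1}{\term{tt}},\ldots,\propV{p_n}{\term{tt}},\propV{q_1}{\term{ff}},\ldots,\propV{q_{j-1}}{\term{ff}}$ for each $j$, deliberately the most ``hesitant'' way to permit falsehood; the paper explicitly remarks that a whole spectrum of translations works and that the opposite extreme is exactly yours, the unconditional rules $\propishV{q}{\term{ff}} \leftarrow$ for every negated atom. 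Choosing the eager end buys you a genuinely simpler proof: your pinning-down lemma (saturation plus unconditional open rules force every $\pred{q} \in N$ to carry a value in any solution, hence $\propV{q}{\term{ff}} \in D$ iff $\pred{q} \notin X$) does in one stroke what the paper must do twice with more machinery --- in its completeness direction the paper runs a nested induction on $j$ along the chain of open rules to extract each $\propV{q_j}{\term{ff}} \in D$ from saturation, and in its soundness direction it needs a two-phase construction (replay the reduct derivation while interleaving only the $\term{ff}$ facts each rule needs, then post-hoc extend with further $\term{ff}$ facts until the database saturates), whereas you can lay down all of $N \setminus X$ up front and exhibit the witness solution $D$ in closed form. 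What the paper's hesitant translation buys in exchange is operational rather than logical: its open rules can only fire in contexts where the negated atom is actually relevant, so solutions need not assign values to negated atoms that never matter, and a deduce-then-choose execution faces fewer spurious choice points; your translation forces every solution to commit a value to every atom in $N$. Since the theorem only asserts the existence of \emph{some} translation, both routes prove it, and your two inclusions ($X \subseteq \lfp(P^X)$ by induction along the step sequence, using that steps only add facts; $\lfp(P^X) \subseteq X$ by induction on reduct stages, using saturation --- including the observation that an evolution to the empty set would also violate saturation) match the paper's skeleton once the pinning-down lemma is in place.
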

The proof of \Cref{thm-sound-complete-asp} is available in supplemental materials (\reforexpanded{datalog-and-asp-semantics}{Appendix A}). 

\subsection{Example Program Execution}

Let $P$ be the four rule program from \Cref{fig-the-littlest-asp-casp}.
We demonstrate the fact-set semantics step-by-step,
starting from the database
$D_0 = \emptyset$. Two rules have satisfied premises and thus generate evolutions:
\begin{align}
    \textbf{Rules:\;\;} \propishV{p}{\term{ff}} & \leftarrow 
    &\textbf{Evolutions:\;\;} \emptyset & \setstepA{P} \{ \; \emptyset, \; \{ \propV{p}{\term{ff}}
    \} \; \} = S_1
    \notag \\
    \propishV{q}{\term{ff}} & \leftarrow
    &\emptyset & \setstepA{P} \{ \; \emptyset , \;  \{ \propV{q}{\term{ff}}
    \} \; \} = S_2
    \notag \\
    \propV{p}{\{ \term{tt} \}} & \leftarrow \propV{q}{\term{ff}} 
    && \textit{rule does not apply}
    \notag \\
    \propV{q}{\{ \term{tt} \}} & \leftarrow \propV{p}{\term{ff}} 
    && \textit{rule does not apply}\notag
    \intertext{\indent There are three sets $S$ such that $\emptyset \setstepA{P} S$: by rule~\ref{r01} we have $\emptyset \setstepA{P} S_1$, by rule~\ref{r02} we have $\emptyset \setstepA{P} S_2$, and by the trivial evolution $\emptyset \setstepA{P} \{ \emptyset \}$. \endgraf
    It requires only a slight misreading of the definitions to incorrectly conclude that $\emptyset$ is saturated: after all, in all cases that $\emptyset \setstepA{P} S$, it is the case that $\emptyset \in S$. However, $\emptyset$ is \textit{not} saturated by \suggestedchange{\Cref{def-set-satisfaction}}{\Cref{def-set-saturation}}, because $\emptyset$ can step to other databases as well. \endgraf
    If we choose to step from $\emptyset$ to
    $\{\propV{p}{\term{ff}}\}$, three rules apply:}
    \propishV{p}{\term{ff}} & \leftarrow 
    &\{ \propV{p}{\term{ff}} \} & \setstepA{P} \{ \; \{ \propV{p}{\term{ff}} \} \; \}
      = S_3
    \notag \\
    \propishV{q}{\term{ff}} & \leftarrow
    &\{ \propV{p}{\term{ff}} \} & \setstepA{P} \{ \; \{ \propV{p}{\term{ff}} \} , \;  \{
    \propV{p}{\term{ff}}, \propV{q}{\term{ff}} \} \; \} = S_4
    \notag \\
    \propV{p}{\{ \term{tt} \}} & \leftarrow \propV{q}{\term{ff}} 
    && \textit{rule does not apply}
    \notag \\
    \propV{q}{\{ \term{tt} \}} & \leftarrow \propV{p}{\term{ff}} 
    &\{ \propV{p}{\term{ff}} \} & \setstepA{P} \{ \; \{
      \propV{p}{\term{ff}}, \propV{q}{\term{tt}} \} \; \} = S_5 \notag
    \intertext{\indent As before, there are three possible evolutions and one of them is trivial. If
    we step to $\{\propV{p}{\term{ff}}, \propV{q}{\term{ff}}\}$ via $S_4$, we will find ourselves in trouble:}
    \propishV{p}{\term{ff}} & \leftarrow 
    &\{
    \propV{p}{\term{ff}}, \propV{q}{\term{ff}} \} & \setstepA{P} \{ \; \{
    \propV{p}{\term{ff}}, \propV{q}{\term{ff}} \} \; \}
    \notag \\
    \propishV{q}{\term{ff}} & \leftarrow
    &\{
    \propV{p}{\term{ff}}, \propV{q}{\term{ff}} \} & \setstepA{P} \{ \; \{
    \propV{p}{\term{ff}}, \propV{q}{\term{ff}} \} \; \}
    \notag \\
    \propV{p}{\{ \term{tt} \}} & \leftarrow \propV{q}{\term{ff}} 
    &\{
    \propV{p}{\term{ff}}, \propV{q}{\term{ff}} \}& \setstepA{P} \emptyset 
    \notag \\
    \propV{q}{\{ \term{tt} \}} & \leftarrow \propV{p}{\term{ff}} 
    &\{
    \propV{p}{\term{ff}}, \propV{q}{\term{ff}} \}& \setstepA{P} \emptyset \notag
    \intertext{\indent This is yet another case where a slight misreading of the definitions could lead one to incorrectly conclude that $\{
    \propV{p}{\term{ff}}, \propV{q}{\term{ff}} \}$ is saturated: after all, we've just demonstrated that it is a database that can only step to itself. However, $\{
    \propV{p}{\term{ff}}, \propV{q}{\term{ff}} \}$ is not saturated by \Cref{def-set-saturation}: if we look at the last rule, it would seem to require that the program derive $\propV{q}{\term{tt}}$, 
    conflicting with the existing fact $\propV{q}{\term{ff}}$. This conflict means that $\{
    \propV{p}{\term{ff}}, \propV{q}{\term{ff}} \} \setstepA{P} \emptyset$, so the the database is not saturated by \Cref{def-set-saturation}. 
    The database also cannot take a step to any other database aside from itself, and so this represents a failed search: no series of steps will lead to a solution.
    Note that this case demonstrates why we cannot define a saturated database as  ``a database that can only step to itself.'' Such a definition would---undesirably---admit databases that evolve to the empty set.
    \endgraf
    If we back up and instead let $\{ \propV{p}{\term{ff}} \}$ step to $\{\propV{p}{\term{ff}}, 
    \propV{q}{\term{tt}}\}$ via $S_5$, we can observe that the only way the database can evolve is to the singleton set containing itself, which is what \Cref{def-set-saturation} required:}
    \propishV{p}{\term{ff}} & \leftarrow 
    &\{
    \propV{p}{\term{ff}}, \propV{q}{\term{tt}} \} & \setstepA{P} \{ \; \{
    \propV{p}{\term{ff}}, \propV{q}{\term{tt}} \} \; \}
    \notag \\
    \propishV{q}{\term{ff}} & \leftarrow
    &\{
    \propV{p}{\term{ff}}, \propV{q}{\term{tt}} \} & \setstepA{P} \{ \; \{
    \propV{p}{\term{ff}}, \propV{q}{\term{tt}} \} \; \}
    \notag \\
    \propV{p}{\{ \term{tt} \}} & \leftarrow \propV{q}{\term{ff}} 
    && \textit{rule does not apply}
    \notag \\
    \propV{q}{\{ \term{tt} \}} & \leftarrow \propV{p}{\term{ff}} 
    &\{
    \propV{p}{\term{ff}}, \propV{q}{\term{tt}} \}& \setstepA{P} \{ \; \{
    \propV{p}{\term{ff}}, \propV{q}{\term{tt}} \} \; \} \notag
\end{align}
Because $\{\propV{p}{\term{ff}}, 
    \propV{q}{\term{tt}}\}$ is saturated, it is a solution. Symmetric reasoning applies to see that 
$\{\propV{p}{\term{tt}}, \propV{q}{\term{ff}}\}$
is a solution. By inspection, there are no 
solutions where both $\pred{p}$ and $\pred{q}$ are assigned the same value.

\section{{\CASPING} by Example}\label{sec-examples}

This section presents examples to demonstrate common idioms that
arise naturally in writing and reasoning about \casps.

\subsection{Spanning Tree Creation}\label{sec-rooted-spanning-tree}

\begin{figure}
    \begin{align}
    \propK{edge}{x, y} & \leftarrow \propK{edge}{y, x} \label{rspan1}\\
    \propishV{root}{x} & \leftarrow \propK{edge}{x, y} \label{rspan2}\\
    \prop{parent}{x}{\{ x \}} & \leftarrow \propV{root}{x} \label{rspan3}\\
    \propish{parent}{y}{x} & 
        \leftarrow \propK{edge}{x,y}, \prop{parent}{x}{z} \label{rspan4}
    \end{align}
\caption{Calculating a spanning tree over an undirected graph.}\label{fig-spanning-tree-prog}
\Description{A finite choice logic program. In Dusa's concrete syntax, this would be written:
edge X Y :- edge Y X.
root is? X :- edge X Y.
parent X is \{ X \} :- root is X.
parent Y is? X :- edge X Y, parent X is Z.}
\end{figure}

Seeded with
an $\pred{edge}$ relation, the {\casp} in \Cref{fig-spanning-tree-prog} will pick an arbitrary node and 
construct a spanning
tree rooted at that node. 
The structure of this program is such that it's not possible to make forward progress that indirectly
leads to conflicts: rule~\ref{rspan2} can only apply once in a series of deductions, and
rules~\ref{rspan3}~and~\ref{rspan4} cannot fire at all until some root is chosen.
A node can only be added to the tree once, with a parent that already
exists in the tree, so this is effectively a declarative description of Prim's algorithm without weights.

The creation of an arbitrary spanning tree for an undirected graph is a common first
benchmark for datalog extensions that
admit multiple solutions. 
Most previous work makes a selection greedily, either \suggestedchange{discards}{discarding} any future
contradictory selections
\cite{krishnamurthy88nondet,giannotti01nondet,grecozaniolo2001,hu21soufflechoice}, or else 
avoiding contradictory deductions by consuming linear resources \cite{simmonslla}. 

\subsection{Appointing Canonical Representatives}\label{sec-canonical-representatives}

When we want to check whether two nodes in an undirected graph are in the
same connected component,
one option is to compute the transitive closure of the $\pred{edge}$ relation. However, in a sparse
graph, that can require computing $O(n^2)$ facts for a graph
with $n$ edges.

\begin{figure}
    \begin{align}
    \propK{edge}{x, y} & \leftarrow \propK{edge}{y, x} \label{rcanon1}\\
    \propish{representative}{x}{x} & \leftarrow \propK{node}{x} \label{rcanon2}\\
    \prop{representative}{y}{\{ z \}} & \leftarrow \propK{edge}{x, y}, \prop{representative}{x}{z} \label{rcanon3}
    \end{align}
    \caption{Appointing a canonical representative for each connected component in an undirected graph.}
\Description{A finite choice logic program. In Dusa's concrete syntax, this would be written:
edge X Y :- edge Y X.
representative X is? X :- node X.
representative Y is \{ Z \} :- edge X Y, representative X is Z. }
    \label{fig-canonical-rep-prog}
\end{figure}

An alternative is to appoint an arbitrary member of each connected component as the canonical representative
of that connected component:
then, two nodes are in the same connected component if and only if they have the same canonical representative.
This is the purpose of the program in \Cref{fig-canonical-rep-prog}.
In principle, it's quite possible for this program to get stuck in dead ends: if 
nodes $\term{a}$ and
$\term{b}$ are connected by an edge, then rule~\ref{rcanon2} could appoint both 
nodes as a canonical representative, and rule~\ref{rspan3} would then prevent any extension
of that database from being a solution. In the greedy-choice languages mentioned in the previous section, this would be a problem for correctness: incorrectly firing an analogue of rule~\ref{rcanon2} would mean that a final database might contain two canonical representatives in a connected component. In {\casping}, because closed rules can lead to the outright rejection of a database, this is merely a problem of efficiency: we would like to avoid going down these dead ends.

In {\casps} like this one, we can reason about avoiding certain dead ends by assuming a mode of execution that we call \textbf{deduce, then choose}. The deduce-then-choose strategy dictates that, when picking the next step in a step sequence, we will always choose a non-trivial evolution to a singleton ($D \setstepA{P} \{ D' \}$ with $D \neq D'$) over any evolution
to a set containing two or more databases.

Endowed with the deduce-then-choose execution strategy,
the program in \Cref{fig-canonical-rep-prog} will never make deductions that indirectly lead to conflicts. First, rule~\ref{rcanon1}
will ensure that the edge relation is symmetric. Once that is complete, execution will be forced
to choose some canonical representative using rule~\ref{rcanon2} in order to make forward progress.
Once a representative is chosen, rule~\ref{rcanon3} will exhaustively assign that newly-appointed
representative to every other node in the connected component. Only when it is done may rule~\ref{rcanon2} fire again for a node in another
connected component.

\subsection{Lazy Answer Set Programming}\label{lazy-asp-in-casp}

\Cref{thm-sound-complete-asp} only describes a correspondence between variable-free {\casps} and variable-free answer set programs: given that this is how answer set programming is usually formally defined, it was difficult to do otherwise. However, the program transformation underlying \Cref{thm-sound-complete-asp} applies to the non-ground answer set programs that are usually written down in practice, and all the translations we have attempted are faithful translations of the source answer set program. We confidently conjecture that the correctness of the translation extends to non-ground programs, but we leave the formal details for future work.

\begin{figure}
    \begin{align}
    \propK{visit}{\term{z}} & \leftarrow 
    &
    \prop{visit}{\term{z}}{\{\term{tt}\}} & \leftarrow 
    \\
    \propK{visit}{\term{s}(n)} & \leftarrow \propK{more}{n}
    &
    \prop{visit}{\term{s}(n)}{\{ \term{tt} \}} & \leftarrow \prop{more}{n}{\term{tt}}
    \\
    &
    &
    \propish{more}{n}{\term{ff}} & \leftarrow \prop{visit}{n}{\term{tt}}
     \\
    \propK{stop}{n} & \leftarrow \propK{visit}{n}, \neg\propK{more}{n}
    &
    \prop{stop}{n}{\{\term{tt}\}} & \leftarrow \prop{visit}{n}{\term{tt}}, \prop{more}{n}{\term{ff}}
    \\
    &
    &
    \propish{stop}{n}{\term{ff}} & \leftarrow \prop{visit}{n}{\term{tt}} 
    \\
    \propK{more}{n} & \leftarrow \propK{visit}{n}, \neg\propK{stop}{n}
    &
    \prop{more}{n}{\{\term{tt}\}} & \leftarrow \prop{visit}{n}{\term{tt}}, \prop{stop}{n}{\term{ff}} 
    \end{align}
    \caption{At left, an answer set program with no finite grounding. At right, the translation of this program to a {\casp}.}
\Description{An answer set program and the corresponding finite-choice logic program. In Dusa's concrete syntax, the finite-choice logic program would be written:
visit Z is \{ tt \}.
visit (s N) is \{ tt \} :- more N is tt.
more N is? ff :- visit N is tt.
stop N is \{ tt \} :- visit N is tt, more N is ff.
stop N is? ff :- visit N is tt.
more N is \{ tt \} :- visit N is tt, stop N is ff. }
    \label{fig-asp-lasy-casp}
\end{figure}

Translating non-ground answer set programs is interesting in part because of answer set programs like the one in \Cref{fig-asp-lasy-casp}. That program has a well-defined set of solutions, but none of these solutions can be enumerated by mainstream answer set programming implementations because the solver invokes an initial grounding step that is forced to generate an infinite set of ground rules. The translation of this program as a {\casp}, on the other hand, has a simple operational interpretation under the deduce-then-choose strategy: starting from zero, each simple round of deduction will $\pred{visit}$ a successively larger natural number, at which point a choice will be made to either $\pred{stop}$ at that number or to continue to visit $\pred{more}$ numbers. 

This example suggests a connection between {\casping} and the strategy of using \textit{lazy grounding} in answer set programming \cite{gasp}, a strategy which similarly enables the evaluation of answer set programs with no finite grounding \cite{taupe23dshast}. We will return to this point in \Cref{sec-dusa-and-asp-perf}. 

\subsection{Satisfiability}\label{sec-example-boolsat}

The previous examples all show how the evaluation of {\casping} can entirely avoid reaching databases that are not solutions.
However, the full expressive power of {\casping} comes from the ability to represent problems where that avoidance is not possible. Since answer set programming generalizes boolean satisfiability, it is no surprise that boolean satisfiability problems can be represented 
straightforwardly in \casping.

A Boolean satisfiability problem can be written as a 
conjunction of clauses where each clause is
a disjunction of propositions $\pred{p}$
and negated propositions $\neg\pred{p}$. We represent such problems by explicitly assigning each
proposition to $\term{tt}$
or $\term{ff}$ by a closed rule, and adding a rule for
each clause that causes a value conflict for the $\pred{ok}$ predicate
if the clause's negation holds; see \Cref{fig-sat-instance-prog} for an example.
The deduce-then-choose execution strategy gives no advantages here: the only deduction is observing inconsistencies that result from already-selected choices.

\begin{figure}
\begin{align}
\propV{p}{\{ \term{tt}, \term{ff} \}} & \leftarrow \\
\propV{q}{\{ \term{tt}, \term{ff} \}} & \leftarrow \\
\propV{r}{\{ \term{tt}, \term{ff} \}} & \leftarrow \\
\propV{ok}{\{ \term{yes} \}} & \leftarrow \\
\propV{ok}{\{ \term{no} \}} & \leftarrow
\propV{p}{\term{ff}}, \propV{q}{\term{tt}} \\
\propV{ok}{\{ \term{no} \}} & \leftarrow
\propV{p}{\term{tt}}, \propV{q}{\term{ff}}, \propV{r}{\term{ff}} 
\end{align}
    \caption{A {\casp} representing the SAT instance $(p \vee \neg q) \wedge (\neg p \vee q \vee r)$.}
\Description{A finite choice logic program. In Dusa's concrete syntax, this would be written:
p is \{ tt, ff \}.
q is \{ tt, ff \}.
r is \{ tt, ff \}.
ok is \{ yes \}.
ok is \{ no \} :- p is ff, q is tt.
ok is \{ no \} :- p is tt, q is ff, r is ff. }
    \label{fig-sat-instance-prog}
\end{figure}

\section{Nondeterministic Immediate Consequences}\label{sec-immcons}

The semantics given in \Cref{sec-definition} allow us to interpret the
meaning of a {\casp} as the set of the program's solutions
(\Cref{def-set-solutions}). In some ways, though, this semantics leaves a lot to be desired.
Consider the following program $P$:
\begin{align}
\propV{p}{\{ \term{a}, \term{b} \}} & \leftarrow \label{r1} \\
\propV{p}{\{ \term{b}, \term{c} \}} & \leftarrow \label{r0}\\
\propishV{q}{\term{ff}} & \leftarrow \label{r2} \\
\propV{q}{\{ \term{tt} \}} & \leftarrow \propV{p}{x} \label{r3}
\end{align}
Under the semantics in \Cref{sec-definition}, 
there is a two-step sequence from $\emptyset$ to $\{ \propV{p}{\term{c}, \propV{q}{\term{ff}}} \}$:
\begin{align}
\emptyset & \setstepA{P} \{ \; \{ \propV{p}{\term{b}} \}, \; \underline{\{ \propV{p}{\term{c}} \}} \; \} \tag{by~rule~\ref{r0}}\\
\{ \propV{p}{\term{c}} \} & \setstepA{P}
\{ \; \{ \propV{p}{\term{c}} \}, \;  \underline{\{ \propV{q}{\term{ff}}, \propV{p}{\term{c}} \}}  \; \}
\tag{by~rule~\ref{r2}}
\end{align}
The result is not a solution, cannot be extended to a solution, and also cannot step to any other database. The
first step led us, irrevocably, to a dead end: the step sequence $\emptyset, \{ \propV{p}{\term{c}} \}, \ldots $ can never be extended to reach a solution.

Some dead ends are unavoidable when conflicting assignments only
occur down significant chains of deduction. That possibility is part of what
gives {\casping} its expressive power! In the program above, though,
the
conflict is in some sense immediate: at each step, we have enough
information to know that rule~\ref{r1} and rule~\ref{r0} both apply, and
the overlap of these closed rules means that $\pred{p}$ can only be given
the value $\term{b}$ in any solution.

In this section, we will revisit the semantics of {\casping} to present a
\textit{immediate consequence} operator that captures global information
about what is ``immediately derivable'' from a given program and database.
This development will provide the basis for a least-fixed-point semantics
of {\casping} (\Cref{sec-semantics}), as well as the foundation for our implementation of {\casping} (\Cref{sec-implementation}). 

\subsection{Bounded-Complete Posets}

This development introduces several new concepts: \textit{constraints}, \textit{constraint databases}, and \textit{choice sets}. All these are instances of the same semilattice-like structure: bounded-complete posets.

In \Cref{sec-definition}, we define databases as sets with an auxiliary
definition of consistency.  Now we
will generalize consistency to a notion of {\em compatibility:}

\begin{definition}[Compatibility]
  If $\Domain$ is a set equipped with a partial order $\le$, then a subset $X \subseteq \Domain$ is \emph{compatible} when it has an upper bound, i.e.\ $\exists y \in \Domain.\ \forall x \in X.\ x \le y$.
  We write ${\compatible} X$ to assert that $X$ is compatible and ${\incompatible}X$ for its negation.
  As a binary operator $x \compatible y = {\compatible}\{x,y\}$ and $x \incompatible y = {\incompatible}\{x, y\}$.
\end{definition}

We will frequently use this basic fact about (in)compatibility:
\begin{lemma}\label{lemma-incompatibility-monotone}
  Compatibility is anti-monotone and incompatibility is monotone: if $D \le D'$ and $E \le E'$, then $D' \compatible E' \implies D \compatible E$, and contrapositively $D \incompatible E \implies D' \incompatible E'$.
\end{lemma}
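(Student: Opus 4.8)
The plan is to unfold the definition of compatibility into its existential form and then lean entirely on transitivity of the partial order $\le$. Recall that $D' \compatible E'$ abbreviates ${\compatible}\{D', E'\}$, which by definition says the set $\{D', E'\}$ has an upper bound: there is some $y \in \Domain$ with $D' \le y$ and $E' \le y$.

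First I would assume the hypotheses $D \le D'$, $E \le E'$, and $D' \compatible E'$, and extract such a witness $y$. The one real step is to observe that this same $y$ already witnesses $D \compatible E$: from $D \le D'$ and $D' \le y$, transitivity gives $D \le y$, and symmetrically from $E \le E'$ and $E' \le y$ we get $E \le y$. Hence $y$ is an upper bound of $\{D, E\}$, so $D \compatible E$ holds. This establishes the anti-monotonicity of compatibility.

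The second claim, that incompatibility is monotone, is then literally the contrapositive of the first. Since $D \incompatible E$ and $D' \incompatible E'$ are by definition the negations of $D \compatible E$ and $D' \compatible E'$, the implication $D' \compatible E' \implies D \compatible E$ is logically equivalent to $D \incompatible E \implies D' \incompatible E'$, so no separate argument is needed. I do not expect any genuine obstacle here; the only thing to watch is the direction of the inequalities through transitivity (a common upper bound of the larger elements $D'$ and $E'$ is inherited by the smaller elements $D$ and $E$, not the reverse), and to note that the binary notation $x \compatible y$ is exactly the two-element specialization of the general compatibility definition, so the unfolding is immediate.
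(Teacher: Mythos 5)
Your proof is correct and follows exactly the paper's own argument: the upper bound witnessing $D' \compatible E'$ also bounds $D$ and $E$ by transitivity, and the monotonicity of incompatibility is just the contrapositive. The only difference is that you spell out the transitivity step and the contrapositive remark explicitly, which the paper leaves implicit.
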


\begin{proof}
  $D' \compatible E'$ means $D', E'$ have some upper bound $D^*$; if $D \le D'$ and $E \le E'$, then $D^*$ is also an upper bound for $D, E$.
\end{proof}

\begin{definition}\label{def-domain}
  A \emph{bounded-complete poset} $(\Domain, \le_\Domain, \bot_\Domain, {\bigvee}_\Domain)$ is a poset with a least element where all compatible subsets have least upper bounds. In detail:
  \begin{enumerate}
  \item ${\le}_\Domain \subseteq \Domain \times \Domain$ is a partial order (a reflexive, transitive, antisymmetric relation).
  \item $\bot_\Domain \in \Domain$ is the least element: $\forall x \in \Domain.\ \bot \le x$.
  \item ${\bigvee}_\Domain : \{ X \subseteq \Domain : {\compatible}X \} \rightarrow \Domain$ finds the least upper bound of a compatible set of elements:
  $(\forall x \in X.\ x \le z) \implies \bigvee_\Domain X \le z$.
  As a binary operator, $x \vee_\Domain y = \bigvee_\Domain \{ x, y \}$.
  \end{enumerate}
\end{definition}

\begin{example}
    Databases as presented in \Cref{sec-set-semantics} are bounded-complete posets, where $D_1 \leq D_2$ is the subset relation and $\bot$ is the empty set $\emptyset$. The least upper bound operation is set union: if a set of databases are each individually subsets of some consistent set $D$ of facts, then their union is a subset of $D$ and so must also be a consistent set of facts.
\end{example}

\subsection{Constraints and Constraint Databases}\label{sec-constraints}

\begin{definition}
  A \emph{constraint,} $c \in \Constraint$, is either $(\just{t})$ for some ground term $t$ or $(\none{X})$ for some set $X$ of ground terms.
  Constraints form a bounded-complete poset as follows:

  \begin{enumerate}
  \item ${\le}_\Constraint$ is defined by cases:
    \begin{align*}
      \none{X} \le \none{Y} &\iff X \subseteq Y\\
      \none{X} \le \just{t} &\iff t \notin X\\
      \just{t} \le \just{t'} &\iff t = t'\\
      \just{t} \not\le \none{X} &
    \end{align*}

  \item $\bot_\Constraint = \none{\emptyset}$.

  \item $\bigvee_\Constraint\,C$ is defined by cases.
    Because the least upper bound is only defined for compatible sets, if we know $\just{t} \in C$ then $\just{t}$ is the least upper bound:
    any other upper bound must have the form $\just{t'}$ with $t = t'$.
    Otherwise, every $c_i \in C$ is of the form $\none{X_i}$, and their least upper bound is $\none{\left(\bigcup_i X_i\right)}$. By way of illustration, in the binary case:
    \begin{align*}
      \none{X} \vee \none{Y} &= \none{\left(X \cup Y\right)}
      \\
      \none{X} \vee \just{t} &= \just{t} \qquad\text{if }t \notin X
      \\
      \just{t} \vee \none{X} &= \just{t} \qquad\text{if }t \notin X
      \\
      \just{t} \vee \just{t'} &= \just{t} \qquad\text{if }t = t'
    \end{align*}

    \noindent
    If none of these cases apply, $c_1 \vee c_2$ is undefined because $c_1 \incompatible c_2$.

  \end{enumerate}
\end{definition}

\begin{definition}\label{def-db-database}
  A \emph{constraint database,} $D,E \in \DB$, is a map from ground attributes $a$ to constraints $D[a]$. Constraint databases form a bounded-complete poset with structure inherited pointwise from \Constraint\emph{:}

  \begin{enumerate}
  \item $D \le_{\DB} E \iff \forall a.\ D[a] \le E[a]$.
  \item $\bot_\DB$ is mapping that takes every attribute $a$ to $\bot_\Constraint = \none{\emptyset}$.
  \item $\bigvee_\DB\,S = a \mapsto \bigvee_\Constraint\, \{ D[a] : D \in S \}$, and so is
  defined whenever $\forall a.\ {\compatible}\{ D[a] : D \in S \}$.
  \end{enumerate}
\end{definition}

\begin{definition}\label{def-db-positive}
A constraint database $D$ is \emph{positive} if $D[a] = \none{X}$ implies $X = \emptyset$.
\end{definition}

\begin{definition}\label{def-db-finite}

    A constraint database $D$ is \emph{finite} if $\{a : D[a] \ne \none{\emptyset}\}$ is finite and all $X$ such that $D[a] = \none{X}$ are finite.
\end{definition}

We will introduce a new notation to describe finite constraint databases:
$(
\pred{p} \mapsto \just{\term{ff}},\;
\pred{q} \mapsto \none{\{ \term{ff} \}}
)$ represents the constraint database that takes 
$\pred{p}$ to $\just{\term{ff}}$,
takes $\pred{q}$ to $\none{ \{ \term{ff} \}}$, and takes every other 
attribute to $\bot_\Constraint = \none{\emptyset}$.

There is an obvious isomorphism between positive constraint databases and the databases-as-consistent-sets-of-facts as introduced in \Cref{def-set-database}. The consistent sets of facts $\emptyset$, \suggestedchangerationale{$\{ \propV{p}{\term{tt}}, \propV{q}{\term{tt}} \}$}{$\{ \propV{p}{\term{tt}}, \propV{q}{\term{ff}} \}$}{typo fix}, and $\{ \prop{edge}{\term{a},\term{b}}{\term{unit}} \}$ correspond, respectively, to the constraint databases $\bot_\DB$, $(\pred{p} \mapsto \just{\term{tt}}, \, \pred{q} \mapsto \just{\term{ff}})$, and $(\propK{edge}{\term{a}, \term{b}} \mapsto \just{\term{unit}})$.

\subsection{Why Constraint Databases Aren't Enough}\label{sec-choice-sets-needed}

The move from ``sets of facts'' to ``functions from attributes to constraints'' is an instance of a common pattern encountered in extensions of logic programming to non-Boolean values.  In the bilattice-annotated logic programming setting, Fitting \shortcite{FITTING1993197} calls the analogue of a constraint database a \textit{valuation}, and Komendantskaya and Seda \shortcite{KOMENDANTSKAYA2009141} call the analogue of a constraint database an \textit{annotation Herbrand model}. In the weighted logic programming setting, Eisner \shortcite{10.1162/tacl_a_00588} maintains a similar map $\omega$ from items to weights.

To our knowledge, all of this related work proceeds to define an immediate
consequence operator as a function from database-analogues to
database-analogues, and the meaning of a program is given as the unique
least fixed point of this operator. But in {\casping}, as in answer set
programming, programs can have multiple incompatible solutions, and so the
meaning of a program \emph{cannot} be a singular least fixed point of a
function from constraint databases to constraint databases. One way forward
is to present \suggestedchange{a function where solutions correspond to \textit{any} fixed point, or any fixed point satisfying}{an operator where solutions correspond to \textit{any} fixed point, or correspond to any fixed point of the operator that satisfies} some additional condition: this
is essentially the approach used by Fitting~\shortcite{FITTING1993197} to
bound the set of stable models for an answer set program using bilattices.
However, this formulation doesn't directly characterize 
the stable models: it is still necessary to carry out the Gelfond-Lifschitz
program transformation as a post-hoc check to see if one of Fitting's bounded
fixed points is a stable model. Our
attempts to precisely characterize solutions for {\casps} as an arbitrary
fixed point of some function from databases to databases encountered
similar difficulties.

In this work, we take a different approach, which we believe is novel: in \Cref{sec-semantics},
the least-fixed-point interpretation of {\casping} is defined to be a \emph{set of pairwise incompatible constraint databases},
which we call a {\em choice set}. \suggestedchangerationale{Choice sets are therefore the next step towards defining our immediate consequence operator.}{}{tightening up prose}

\subsection{Choice Sets}\label{sec-immcons-choice-sets}

Choice sets form not merely a bounded-complete poset, but a complete
lattice that has all least upper bounds. We will establish this in two
steps, first defining choice sets as a pointed partial order, and then
defining least upper bounds.

\begin{definition}\label{def-choice}
  A \emph{choice set} $\C \in \Choice$, is a \emph{pairwise-incompatible} set of constraint databases, meaning that $(\forall D, E \in \C.\ D \compatible E \implies D = E)$. $\Choice$ is a pointed partial order:

  \begin{enumerate}
  \item $\C_1 \le_\Choice \C_2 \iff (\forall D_2 \of \C_2.\ \exists D_1 \of \C_1.\ D_1 \le D_2) \iff (\forall D_2 \of \C_2.\ \exists! D_1 \of \C_1.\ D_1 \le D_2)$.
    Existence $\exists D_1$ implies \emph{unique} existence $\exists! D_1$ by pairwise incompatibility: if $D_1, D_1' \in \C_1$ are both $\le D_2$ they are compatible and thus equal.
  \item $\bot_\Choice = \{\bot_\DB\}$, that is, the set containing one item, the constraint database that maps every attribute to $\none{\emptyset}$.
  \item $\top_\Choice = \emptyset$.

  \end{enumerate}
\end{definition}

In a constraint database $D$, each rule and rule-satisfying substitution in a {\casp} will induce a choice set $\C_i$, and deriving the immediate consequences of $D$ involves a ``parallel composition'' of all these choices. The least upper bound $\bigvee_i \C_i$ calculates this parallel composition. We will start with a finite example to build intuition:

\begin{example}
    If we have $\C_1 = \{ D_a, D_b, D_c \}$ and $\C_2 = \{ D_x, D_y \}$, then $\C_1 \vee C_2$ contains between zero and six elements: all of the least upper bounds out of $(D_a \vee D_x)$, $(D_b \vee D_x)$, $(D_c \vee D_x)$, $(D_a \vee D_y)$, $(D_b \vee D_y)$, and $(D_c \vee D_y)$ that are actually defined.
\end{example}

For the least upper bound of $n$ choice sets, a database is in the least upper bound exactly when it is the least upper bound of $n$ compatible databases, one drawn from each of the choice sets. Formally, we define the least upper bound of a collection $\C_{i \in I}$ of choice sets indexed by some set $I$.
Let $f : I \to \DB$ be a function choosing one database from each choice set, so that $f(i) \in \C_i$.
If the set of databases thus chosen, $\mathrm{Im}(f) = \{ f(i) : i \in I \}$, is compatible, we include its least upper bound $\bigvee_i f(i)$ in the resulting choice set $\bigvee_i \C_i$.
The set of chosen databases, $\mathrm{Im}(f) = \{ f(i) : i \in I \}$, can be seen as a \emph{candidate set} for the least upper bound.
If $\mathrm{Im}(f)$ is compatible, the least upper bound of $\mathrm{Im}(f)$ exists and we include its least upper bound in the resulting choice set $\bigvee_i \C_i$.

\begin{definition}[Least upper bounds for $\Choice$]\label{def-choice-lub}
   Take any $\{ \C_i : i \in I \} \subseteq \Choice$, and let $\prod_{i \in I} \C_i$ be the set of all functions $f : I \rightarrow \DB$ such that $f(i) \in \C_i$.
   Then:
   \[
   \bigvee_{i \in I} \C_i
   =
   \left\{
   \bigvee \mathrm{Im}(f) \,:\,
   f \in {\prod_{i \in I} \C_i},\
   {\compatible} \mathrm{Im}(f)
   \right\}
   \]
\end{definition}

It is not entirely trivial to show that $\bigvee_i \C_i$ is a least upper bound in \emph{\Choice:} the proof is available in the supplemental material (\reforexpanded{sec-choice-has-lubs}{Appendix B}).  The main subtleties are ensuring that we avoid combining incompatible databases and ensuring that the resulting set remains pairwise incompatible.

The partial order $\leq_\Choice$, unlike the partial orders on \Constraint\ and \DB, has a greatest element ($\emptyset$), and so any collection of choice sets has an upper bound. Therefore, {\Choice} is a complete lattice, not merely a bounded-complete poset.

\subsection{Immediate Consequence}\label{sec-immediate-consequence-def}

Almost everything is now in place define an immediate consequence operator $\immcons{P} : \DB \rightarrow \Choice$. We only need to replay the definition of satisfaction from \Cref{sec-set-semantics} in terms of constraint databases and choice sets, and then we can define the immediate consequence as the least upper bound of the consequence of every rule that can fire.

\begin{definition}[Satisfaction]\label{def-db-satisfaction}
    We say that a substitution $\sigma$ \textit{satisfies $F$ in the constraint database $D$}
    when, for each premise $\prop{p}{\overline{t}}{v}$ in $F$, we have
    $(\just{\sigma{v}}) \leq D[\propK{p}{\sigma\overline{{t}}}]$.
\end{definition}

\begin{definition}\label{def-ruleconc}
    A ground rule conclusion {\ruleconc} defines a element of \Choice,
    which we write as $\rulechoices{\ruleconc}$, in the following way:
    \begin{itemize}
    \item $\rulechoices{\propish{p}{\overline{t}}{v}} = \{ \; ( \propK{p}{\overline{t}} \mapsto \just{v}),\;  (\propK{p}{\overline{t}} \mapsto \none{\{ v \} }) \; \}$
    \item $\rulechoices{\prop{p}{\overline{t}}{\{ v_1,  \ldots, v_n \}}} = \{ \; ( \propK{p}{\overline{t}} \mapsto \just{v_1}), \; \ldots , \; (\propK{p}{\overline{t}} \mapsto \just{v_n}) \;\}$
    \end{itemize}
\end{definition}

\begin{definition}[Immediate consequence]
\label{def-immediate-consequence}
The immediate consequence operator $\immcons{P} : \DB \to \Choice$
is the least upper bound of every head of a rule with satisfied premises:
\[\immcons{P}(D) = \{ D \} \vee \left(\bigvee \left\{ \langle \sigma H \rangle : (H \leftarrow F) \in P, \sigma F \leq {D} \right\}\right)\]
We forcibly ensure that $\{ D \} \leq \immcons{P}(D)$ by making the result the least upper bound of $\{ D \}$ itself and the least upper bound of all the satisfied rule heads. (We conjecture this is redundant for our subsequent developments, but it is also harmless.)
\end{definition}

\begin{example}
    Returning to the four rule program $P$ 
at the beginning of this section (rules~\ref{r1}-\ref{r3}), these
are all examples of how the immediate consequence operator for that program
behaves on different inputs:
\begin{align}
\immcons{P}(\bot_\DB) & = 
    \{ \; 
        ( \pred{p} \mapsto \just{\term{b}}, \; \pred{q} \mapsto \just {\term{ff}}), \; \notag 
\\ & \qquad
        ( \pred{p} \mapsto \just{\term{b}}, \; \pred{q} \mapsto \none{\{ \term{ff} \} })
    \;\}\notag
\\
\immcons{P}(\pred{p} \mapsto \just{\term{b}},\;\pred{q} \mapsto \just {\term{ff}}) &= \emptyset\notag
    \\
\immcons{P}(\pred{p} \mapsto \just{\term{b}},\; \pred{q} \mapsto \none{\{ \term{ff} \} } ) & = 
    \{ \; 
        ( \pred{p} \mapsto \just{\term{b}}, \;\pred{q} \mapsto \just{\term{tt} } )
    \;\}\notag
\\
\immcons{P}(\pred{q} \mapsto \just {\term{ff}}) & = 
    \{ \; 
        (\pred{p} \mapsto \just{\term{b}},\; \pred{q} \mapsto \just{\term{ff}} )
    \;\}\notag
\\
\immcons{P}(\pred{p} \mapsto \just {\term{b}}) &= 
    \{ \; 
        (\pred{p} \mapsto \just{\term{b}}, \;\pred{q} \mapsto \just{\term{tt} } )
    \;\}\notag
    \\
\immcons{P}(\pred{p} \mapsto \just{\term{b}}, \pred{q} \mapsto \just {\term{tt}}) & = 
    \{ \; 
        ( \pred{p}\mapsto \just{\term{b}},\; \pred{q} \mapsto \just{ \term{tt} } )
    \;\}\notag
\\
\immcons{P}(\pred{q} \mapsto \just{ \term{someOtherTerm} } ) & = 
    \{ \; 
        ( \pred{p} \mapsto \just{\term{b}},\; \pred{q} \mapsto \just{ \term{someOtherTerm} } )
    \;\}\notag
\end{align}

The immediate consequence operator on $\bot_\DB$ captures the fact that that, due to rules~\ref{r1}~and~\ref{r0} in combination, $\pred{p}$ can only be assigned the value $\term{b}$. Any other step, like a step to $\propV{p}{\term{a}}$ by rule~\ref{r1}, is a dead end. The immediate consequence operator does not, however, preclude the dead-end
step from an empty database to a database where
$\propV{q}{\term{ff}}$, since that information is not immediately available: rule~\ref{r3}
only applies in a database where $\pred{p}$ has a definite value.
\end{example}

\section{Fixed-Point Semantics}\label{sec-semantics}

We have one notion of what a {\casp} means: the meaning of a program $P$ is the set of databases that are solutions to $P$ according to \Cref{def-set-solutions}. So why do need to provide another explanation for what a program means? 

In our view, there are three satisfying ways of explaining the foundations of a logic programming language. In no particular order:
\begin{itemize}
    \item A logic program defines a set of \suggestedchange{derivable propositions}{propositions} in a constructive proof theory\suggestedchange{}{, and the meaning of a program is given by the logical consequences of those propositions}. The methodology of \textit{uniform proofs} brings with it a built-in operational semantics in terms of proof search in a focused sequent calculus \cite{miller91uniform}.
    \item A logic program defines a monotonic \textit{immediate consequence} function, and the meaning of a logic program is the least fixed point of this function. This methodology brings with it a built-in operational semantics, because one can repeatedly apply \suggestedchange{immediate}{the immediate} consequence operation to a least element in hopes of reaching a fixed point \cite{gelder-ross-schlipf91wellfounded}.
    \item A logic program \suggestedchange{is a proposition in classical logic obtained from a program}{defines a proposition in classical logic} through the program completion of Clark \shortcite{clark78negation}\suggestedchange{}{, and the meaning of a logic program is the set of satisfying models of that proposition}. This interpretation admits spurious circular justifications: the program containing \suggestedchangerationale{one rule $\pred{p} \leftarrow \pred{q}$ has the Clark completion $\pred{p} \leftrightarrow \pred{q}$, which admits both the 
expected interpretation (both propositions are false) as well as an undesirable interpretation (both propositions are true)}{one rule $\pred{p} \leftarrow \pred{p}$ has the Clark completion $\pred{p} \leftrightarrow \pred{p}$, which admits both the expected interpretation ($\pred{p}$ is false) as well as an undesirable interpretation ($\pred{p}$ is true)}{the previous version was an error: $p \leftarrow q$ will have the Clark completion $(p \leftrightarrow q) \wedge (q \leftrightarrow \bot)$}. Some programs, however, have a unique \emph{least} model --- the smallest set of true propositions --- which can be interpreted as the canonical solution.
\end{itemize}
A very nice property of datalog is that it admits all three justifications, and they all agree. Answer set programming provides a greatly desirable property --- the ability to give a program multiple, mutually-exclusive solutions --- but it does so at the cost of being able to give programs any of these foundationally satisfying semantics.

In this section, we will present the interpretation of a {\casp} as a member of $\Choice$, the least fixed point of a ``lifted'' immediate consequence operator that has choice sets as both its domain and range. \Cref{semantics-are-equivalent} establishes that solutions (\Cref{def-set-solutions}) correspond to databases in this least fixed point that are positive (\Cref{def-db-positive}) and finite (\Cref{def-db-finite}).

\begin{example}
First, let's develop some intuition about what it means for a choice set to provide the interpretation of a {\casp}. In this example only, allow yourself to squint and reinterpret consistent sets of facts as constraint databases: in this hazy light, the set of solutions for a given program looks a lot like a choice set. 

The program with no rules corresponds to the choice set
$\bot_\Choice = \{ \emptyset \}$. 

The program with one rule $(\propV{p}{ \{ \term{tt}, \term{ff} \} } \leftarrow)$
has two pairwise-incompatible solutions that form the
choice set $\{ \; \{ \propV{p}{\term{tt}} \}, \; \{ \propV{p}{\term{ff}} \} \; \}$.
\begin{align*}
\bot_\Choice& \leq_\Choice \{ \; \{ \propV{p}{\term{tt}} \}, \; \{ \propV{p}{\term{ff}} \} \; \}
\\
\intertext{\indent One way additional rules can create greater choice sets is by adding facts. A second rule $(\propV{q}{\{ \term{tt} \}} \leftarrow)$ results in a program
that still has two solutions.}
 \{ \; \{ \propV{p}{\term{tt}} \}, \; \{ \propV{p}{\term{ff}} \} \; \}
 & \leq_\Choice
 \{ \; \{ \propV{p}{\term{tt}}, \propV{q}{\term{tt}} \}, \; \{ \propV{p}{\term{ff}}, \propV{q}{\term{tt}} \} \; \}
\intertext{If we instead added a more constrained
second rule $(\propV{q}{\{\term{tt}\} \leftarrow \propV{p}{\term{tt}}})$, we would
instead have these solutions:}
 \{ \; \{ \propV{p}{\term{tt}} \}, \; \{ \propV{p}{\term{ff}} \} \; \}
 & \leq_\Choice
 \{ \; \{ \propV{p}{\term{tt}}, \propV{q}{\term{tt}} \}, \; \{ \propV{p}{\term{ff}} \} \; \}  \; \}
\intertext{\indent Another way additional rules can create greater choice sets is by \emph{removing solutions}. If we return to the program with just the rule $(\propV{p}{ \{ \term{tt}, \term{ff} \} } \leftarrow)$ and add a second rule $(\propV{p}{\{\term{tt}\}} \leftarrow \propV{{p}}{x})$, the database where $\propV{p}{\term{ff}}$ is no longer a solution:}
  \{ \; \{ \propV{p}{\term{tt}} \}, \; \{ \propV{p}{\term{ff}} \} \; \}
 & \leq_\Choice
 \{ \; \{ \propV{p}{\term{tt}} \} \; \}
 \intertext{\indent A choice set that is greater according to $\leq_\Choice$
 may alternatively contain \emph{more} constraint databases,
 which would occur if the second rule was instead $(\propV{q}{\{\term{tt}, \term{ff}\}} \leftarrow \propV{p}{\term{ff}})$:}
 \{ \; \{ \propV{p}{\term{tt}} \}, \; \{ \propV{p}{\term{ff}} \} \; \}
 & \leq_\Choice
 \{ \; \{ \propV{p}{\term{tt}} \}, \; \{ \propV{p}{\term{ff}}, \propV{q}{\term{tt}} \}, \; \{ \propV{p}{\term{ff}}, \propV{q}{\term{ff}} \} \; \}
\intertext{\indent Finally, if we have a fully contradictory program, such as one containing two rules $(\propV{p}{ \{ \term{tt}, \term{ff} \} } \leftarrow)$ and $(\propV{p}{ \{ \term{meadow} \} } \leftarrow \propV{p}{x})$, its interpretation is the set containing zero solutions. This is a valid choice set, and is in fact the greatest element of {\Choice}.}
 \{ \; \{ \propV{p}{\term{tt}} \}, \; \{ \propV{p}{\term{ff}} \} \; \}
 & \leq_\Choice
 \emptyset = \top_\Choice
\end{align*}

Running through this example, one can observe a kind of monotonicity at work: adding rules to a program always results in the meaning of that larger program being a greater choice set according to the partial order $\leq_\Choice$ from \Cref{def-choice}. 
\end{example}

\subsection{Lifted Immediate Consequence}
\label{sec-fixed-point-semantics}

The typical move when defining the meaning of a forward-chaining logic
programs is to interpret the program as the least fixed point of an immediate
consequence function. Immediate consequence presented in \Cref{sec-immediate-consequence-def} is a function $\immcons{P} : \DB \to \Choice$, and because the domain and range are different we can't take the fixed point. Recalling the definition of saturation in \Cref{def-set-saturation} gives us a limited (but important!) notion of fixed points that we call \emph{models}:

\begin{definition}\label{def-model}
  A constraint database $D$ is a \emph{model} of the program $P$ in either of these equivalent conditions:
  \[ D \in \immcons{P}(D) \iff \immcons{P}(D) = \{D\} \]
\end{definition}

We have no hope of giving a least-fixed-point interpretation of {\casps} this way, as most interesting {\casps} do not have unique minimal models. Instead, we will ``lift'' $\immcons{P}$ to a function $\bigimmcons{P} : \Choice \to \Choice$\emph{:}

\begin{definition}\label{def-lifted-immcons}
    $\bigimmcons{P}(\C) = 
    \bigcup_{D \in \C} \immcons{P}(D)$.
\end{definition}

For \Cref{def-lifted-immcons} to be a candidate for iterating to a fixed point, we must show that $\bigcup_{D \in \C} \immcons{P}(D)$ is in \Choice, i.e.\ that all the databases it contains are pairwise incompatible:

\begin{lemma}\label{lem-tp-star-in-choice}
   If $\C$ is pairwise incompatible, then so is $\bigcup_{D \in \C} \immcons{P}(D)$. That is,
   if $E_1,E_2\in \bigcup_{D \in \C} \immcons{P}(D)$ are compatible, then they are equal.
\end{lemma}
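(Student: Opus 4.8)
The plan is to unfold the definition of the union and trace each of the two allegedly-compatible databases back to the element of $\C$ that produced it, then push the compatibility downward through the order using anti-monotonicity. Concretely, suppose $E_1, E_2 \in \bigcup_{D \in \C} \immcons{P}(D)$ with $E_1 \compatible E_2$. By definition of the union, there are $D_1, D_2 \in \C$ with $E_1 \in \immcons{P}(D_1)$ and $E_2 \in \immcons{P}(D_2)$. The goal is to show $E_1 = E_2$, and the argument splits naturally according to whether $D_1 = D_2$.

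The crucial observation I would extract from \Cref{def-immediate-consequence} is that its construction forces $\{D\} \le_\Choice \immcons{P}(D)$ for every $D$. Reading this through the definition of $\le_\Choice$ in \Cref{def-choice}, it says precisely that every element $E \in \immcons{P}(D)$ has some element of $\{D\}$ below it, and the only candidate is $D$ itself; hence $D \le E$ for all $E \in \immcons{P}(D)$. Applying this to our two cases gives $D_1 \le E_1$ and $D_2 \le E_2$. Now I would invoke \Cref{lemma-incompatibility-monotone}: since compatibility is anti-monotone, $E_1 \compatible E_2$ together with $D_1 \le E_1$ and $D_2 \le E_2$ yields $D_1 \compatible D_2$. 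Because $\C$ is pairwise incompatible, compatibility of $D_1$ and $D_2$ forces $D_1 = D_2$.

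Having collapsed the two source databases to a single $D \mathrel{:=} D_1 = D_2$, both $E_1$ and $E_2$ lie in the single set $\immcons{P}(D)$. Since $\immcons{P}(D) \in \Choice$ is itself a choice set, it is pairwise incompatible, so the compatible pair $E_1 \compatible E_2$ must in fact be equal, giving $E_1 = E_2$ as required. This also absorbs the $D_1 = D_2$ case uniformly, so no separate case analysis is really needed once the reduction is made.

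I do not expect a serious obstacle here: the lemma is essentially a direct consequence of two facts already available, namely that $\immcons{P}$ lands in $\Choice$ (so each image is internally pairwise incompatible) and that its output always dominates its input. The only point requiring care is the orientation when applying \Cref{lemma-incompatibility-monotone} — one must push the \emph{compatible} relation downward from the larger elements $E_i$ to the smaller $D_i$, which is exactly the anti-monotone direction — and correctly reading $\{D\} \le_\Choice \immcons{P}(D)$ as the pointwise bound $D \le E$ rather than an existential over a multi-element set.
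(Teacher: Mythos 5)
Your proposal is correct and follows essentially the same route as the paper's own proof: trace $E_1, E_2$ back to $D_1, D_2 \in \C$, use $D_i \le E_i$ together with the anti-monotonicity of compatibility (\Cref{lemma-incompatibility-monotone}) to conclude $D_1 = D_2$, then invoke pairwise incompatibility of the single choice set $\immcons{P}(D_1)$. Your only addition is spelling out why $D \le E$ for every $E \in \immcons{P}(D)$ (via $\{D\} \le_\Choice \immcons{P}(D)$ and the definition of $\le_\Choice$), a step the paper uses without comment.
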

\begin{proof}
    Consider some $D_1,D_2 \in \C$ and $E_1 \in \immcons{P}(D_1)$ and $E_2 \in \immcons{P}(D_2)$.
    Suppose $E_1 \compatible E_2$. Since $D_1 \le E_1$ and $D_2 \le E_2$, by \Cref{lemma-incompatibility-monotone} we know $D_1 \compatible D_2$, thus $D_1 = D_2$ (since both are in $\C$). And since $\immcons{P}(D_1) = \immcons{P}(D_2)$ is pairwise incompatible, $E_1 = E_2$.
\end{proof}

Now we would like to define the meaning of a {\casp} as the unique least fixed point of $\bigimmcons{P}$.
Because choice sets form a complete lattice, we can  apply Knaster-Tarski~\cite{tarski1955lattice} as long as $\bigimmcons{P}$ is monotone, which it is:



\begin{lemma}[$\immcons{P}$ is monotone]\label{thm-immcons-monotone}
  If $D_1 \leq_{\DB} D_2$,
  then $\immcons{P}(D_1) \leq_{\Choice} \immcons{P}(D_2)$.
\end{lemma}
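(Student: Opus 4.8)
The plan is to rewrite the immediate consequence operator as a single least upper bound in $\Choice$ and then exploit the universal property of that least upper bound. Writing $\mathcal{S}_D = \{\,\{D\}\,\} \cup \{\, \rulechoices{\sigma H} : (H \leftarrow F) \in P,\ \sigma F \leq D \,\}$ for the family of choice sets appearing in \Cref{def-immediate-consequence}, the identity $\bigvee(A \cup B) = (\bigvee A) \vee (\bigvee B)$ (valid in any complete lattice, and $\Choice$ is one) lets me read $\immcons{P}(D) = \bigvee \mathcal{S}_D$. The goal then becomes $\bigvee \mathcal{S}_{D_1} \leq_{\Choice} \bigvee \mathcal{S}_{D_2}$, and I will establish it by showing that $\bigvee \mathcal{S}_{D_2}$ is an upper bound for every member of $\mathcal{S}_{D_1}$; the least-upper-bound property then yields the conclusion immediately.

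Two monotonicity observations drive the argument. First, the rule-head choice set $\rulechoices{\sigma H}$ (\Cref{def-ruleconc}) depends only on $\sigma$ and $H$, never on the database, and satisfaction is upward closed: if $\sigma F \leq D_1$ and $D_1 \leq_{\DB} D_2$, then by the pointwise order on $\DB$ (\Cref{def-db-database}) together with transitivity in $\Constraint$ we get $\sigma F \leq D_2$. Hence every rule-head choice set occurring in $\mathcal{S}_{D_1}$ is literally an element of $\mathcal{S}_{D_2}$. Second, the singleton component grows: $D_1 \leq_{\DB} D_2$ yields $\{D_1\} \leq_{\Choice} \{D_2\}$ directly from \Cref{def-choice}, since the sole element $D_2$ of $\{D_2\}$ is dominated by the element $D_1$ of $\{D_1\}$.

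With these in hand the conclusion is routine. Every rule-head member of $\mathcal{S}_{D_1}$ belongs to $\mathcal{S}_{D_2}$ and so is $\leq_{\Choice} \bigvee \mathcal{S}_{D_2}$, because each element of a family lies below its least upper bound. The remaining member $\{D_1\}$ satisfies $\{D_1\} \leq_{\Choice} \{D_2\} \leq_{\Choice} \bigvee \mathcal{S}_{D_2}$ by the second observation, transitivity, and $\{D_2\} \in \mathcal{S}_{D_2}$. Thus $\bigvee \mathcal{S}_{D_2}$ is an upper bound of $\mathcal{S}_{D_1}$, and the defining least-upper-bound property of $\bigvee$ in $\Choice$ gives $\immcons{P}(D_1) = \bigvee \mathcal{S}_{D_1} \leq_{\Choice} \bigvee \mathcal{S}_{D_2} = \immcons{P}(D_2)$.

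The reason this goes through so smoothly is that the genuine difficulty was discharged earlier: that $\Choice$ is a complete lattice whose $\bigvee$ is a true least upper bound (\Cref{def-choice-lub}, with the proof in Appendix B). I expect the only point needing care to be the bookkeeping around $\leq_{\Choice}$ — in particular checking $\{D_1\} \leq_{\Choice} \{D_2\}$ against the quantifier structure of \Cref{def-choice}, and justifying the decomposition of $\immcons{P}(D)$ into a single join — rather than any computation, since the argument never evaluates a least upper bound explicitly and only invokes its universal property. It is also worth noting that this proof, unlike the equivalence theorems, needs neither positivity nor finiteness of the databases involved.
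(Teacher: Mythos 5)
Your proof is correct and takes essentially the same route as the paper's: the crux in both is that premise satisfaction is upward closed (for each premise, $\just{\sigma v} \le D_1[\propK{p}{\sigma\overline{t}}] \le D_2[\propK{p}{\sigma\overline{t}}]$), so every rule-head choice set joined in $\immcons{P}(D_1)$ also appears among those joined in $\immcons{P}(D_2)$, and the least-upper-bound property of $\bigvee_\Choice$ finishes the argument. You merely spell out the lattice bookkeeping that the paper leaves implicit (the single-join decomposition and the $\{D_1\} \leq_{\Choice} \{D_2\}$ component); the only nit is the phrase ``the sole element $D_2$ of $\{D_2\}$ is dominated by the element $D_1$,'' which states the order backwards --- what \Cref{def-choice} requires, and what $D_1 \leq_{\DB} D_2$ supplies, is that $D_1$ lies \emph{below} $D_2$.
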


\begin{proof}
     Because $\immcons{P}(D_1)$ and $\immcons{P}(D_2)$ are the least upper bound of the rule heads satisfied in $D_1$ and $D_2$ respectively, it suffices to show that if $\sigma$ satisfies $F$ in $D_1$, then $\sigma$ satisfies $F$ in $D_2$.
     This follows from \Cref{def-db-satisfaction}: for each premise $\prop{p}{\overline{t}}{v}$ in $F$, we have $\just{\sigma{v}} \le D_1[\propK{p}{\sigma\overline{t}}] \le D_2[\propK{p}{\sigma\overline{t}}]$.
\end{proof}

\begin{lemma}[$\bigimmcons{P}$ is monotone]\label{thm-lifted-monotone}
  If $\C_1 \leq \C_2$, then $\bigimmcons{P}(\C_1) \leq \bigimmcons{P}(\C_2)$.
\end{lemma}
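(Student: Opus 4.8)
The plan is to unfold the definition of $\le_\Choice$ on the goal and reduce everything to the already-established monotonicity of $\immcons{P}$ (\Cref{thm-immcons-monotone}). Concretely, to prove $\bigimmcons{P}(\C_1) \le \bigimmcons{P}(\C_2)$ it suffices, by the definition of $\le_\Choice$ in \Cref{def-choice}, to show that for every $E_2 \of \bigimmcons{P}(\C_2)$ there is some $E_1 \of \bigimmcons{P}(\C_1)$ with $E_1 \le_\DB E_2$. So I would fix an arbitrary such $E_2$ and immediately unpack the definition of $\bigimmcons{P}$ (\Cref{def-lifted-immcons}): since $\bigimmcons{P}(\C_2) = \bigcup_{D \of \C_2} \immcons{P}(D)$, there is a witnessing database $D_2 \of \C_2$ with $E_2 \of \immcons{P}(D_2)$.

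Next I would use the hypothesis $\C_1 \le \C_2$, which by \Cref{def-choice} hands me a $D_1 \of \C_1$ with $D_1 \le_\DB D_2$. Applying \Cref{thm-immcons-monotone} gives $\immcons{P}(D_1) \le_\Choice \immcons{P}(D_2)$; unfolding this instance of $\le_\Choice$ against the particular element $E_2 \of \immcons{P}(D_2)$ produces an $E_1 \of \immcons{P}(D_1)$ with $E_1 \le_\DB E_2$. Since $\immcons{P}(D_1) \subseteq \bigcup_{D \of \C_1} \immcons{P}(D) = \bigimmcons{P}(\C_1)$, this $E_1$ lies in $\bigimmcons{P}(\C_1)$ and is the required witness, completing the argument.

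The proof is essentially a diagram chase through two nested layers of the ``for every upper element, there exists a lower element'' shape of $\le_\Choice$, so the only thing to watch is not conflating the two levels: one existential comes from $\C_1 \le \C_2$ (choosing $D_1$ below $D_2$) and the other from $\immcons{P}$-monotonicity (choosing $E_1$ below $E_2$). I expect no genuine obstacle. In particular, I do not need pairwise incompatibility of the resulting union for this argument—that property was already secured separately in \Cref{lem-tp-star-in-choice}—so I can reason purely with the order relation $\le_\DB$ and never touch least upper bounds directly.
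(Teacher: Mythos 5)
Your proposal is correct and matches the paper's own proof essentially step for step: fix an element of $\bigimmcons{P}(\C_2)$, decompose it via the union into some $E_2 \in \immcons{P}(D_2)$ with $D_2 \in \C_2$, pull back $D_2$ to a $D_1 \in \C_1$ using $\C_1 \le \C_2$, and then apply \Cref{thm-immcons-monotone} to obtain the required witness $E_1 \le E_2$. The only cosmetic difference is that the paper notes the witnesses are \emph{unique} (by pairwise incompatibility), whereas you correctly observe that mere existence suffices for the order-theoretic argument.
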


\begin{proof}
      We wish to show $\bigcup_{D \in \C} \immcons{P}(D) \le_\Choice \bigcup_{D' \in \C'} \immcons{P}(D')$.
  So, fixing $D' \in \C'$ and $E' \in \immcons{P}(D')$, we wish to find a $D \in \C$ and $E \in \immcons{P}(D)$ with $E \le E'$.
  Since $\C \le \C'$, for $D' \in \C'$ there exists a unique $D \in \C$ with $D \le D'$.
  By monotonicity of $\immcons{P}$ we have $\immcons{P}(D) \le_\Choice \immcons{P}(D')$.
  Thus for $E' \in \immcons{P}(D')$ we have a unique $E \in \immcons{P}(D)$ with $E \le E'$.
\end{proof}

\begin{theorem}[Least fixed points]\label{thm-least-fixed-points}
  $\bigimmcons{P}$ has a least fixed point, written as $\lfp \bigimmcons{P}$.
\end{theorem}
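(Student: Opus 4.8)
The plan is to invoke the Knaster--Tarski fixed-point theorem directly, exactly as the surrounding text anticipates. Two hypotheses must be in hand. First, $\Choice$ must be a complete lattice: this was argued immediately after \Cref{def-choice-lub}, where the existence of the greatest element $\top_\Choice = \emptyset$ guarantees that every collection of choice sets has an upper bound and hence, by \Cref{def-choice-lub}, a least upper bound --- so all joins exist. Second, $\bigimmcons{P}$ must be a monotone endofunction on $\Choice$: it is well-defined by \Cref{lem-tp-star-in-choice} (its output is always pairwise incompatible, hence a genuine element of $\Choice$), and it is monotone by \Cref{thm-lifted-monotone}.

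Given these, I would argue as follows. In any complete lattice all greatest lower bounds exist as well, since the meet of a set can be realized as the join of its lower bounds; in particular the set of pre-fixed points $S = \{ \C \in \Choice : \bigimmcons{P}(\C) \le_\Choice \C \}$ has a greatest lower bound $\C_* = \bigwedge S$. The standard Knaster--Tarski steps then show $\C_*$ is the least fixed point. For every $\C \in S$ we have $\C_* \le_\Choice \C$, so monotonicity gives $\bigimmcons{P}(\C_*) \le_\Choice \bigimmcons{P}(\C) \le_\Choice \C$; as this holds for all $\C \in S$, taking the meet yields $\bigimmcons{P}(\C_*) \le_\Choice \C_*$, so $\C_* \in S$. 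Applying $\bigimmcons{P}$ once more and using monotonicity gives $\bigimmcons{P}(\bigimmcons{P}(\C_*)) \le_\Choice \bigimmcons{P}(\C_*)$, so $\bigimmcons{P}(\C_*) \in S$ and hence $\C_* \le_\Choice \bigimmcons{P}(\C_*)$; the two inequalities force $\bigimmcons{P}(\C_*) = \C_*$. Leastness is immediate, since any fixed point is in particular a pre-fixed point and therefore bounded below by $\C_*$. We then set $\lfp \bigimmcons{P} = \C_*$.

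I do not expect a genuine obstacle here: the substantive work was already discharged by the preceding lemmas, and this theorem is a bookkeeping application of a classical result. The one point deserving care is confirming that $\Choice$ supplies \emph{every} piece of structure Knaster--Tarski requires --- in particular all greatest lower bounds, needed to form $\bigwedge S$ --- rather than only the least upper bounds that \Cref{def-choice-lub} makes explicit. As noted, completeness for joins forces completeness for meets, so this is automatic. Alternatively, one could sidestep meets entirely and construct the least fixed point by (transfinite) Kleene iteration from $\bot_\Choice$, setting $\C_0 = \bot_\Choice$ and $\C_{\alpha+1} = \bigimmcons{P}(\C_\alpha)$ with joins taken at limit stages; monotonicity makes this chain increasing, and its stabilization point is the least fixed point. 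This variant has the added benefit of exposing the iterative reading of $\bigimmcons{P}$ that motivates its operational use later in the paper.
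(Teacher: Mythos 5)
Your proposal is correct and takes essentially the same route as the paper: both establish that $\Choice$ is a complete lattice and that $\bigimmcons{P}$ is a monotone endofunction on it (via \Cref{lem-tp-star-in-choice,thm-lifted-monotone}), then apply Knaster--Tarski, identifying $\lfp \bigimmcons{P}$ with the meet of the pre-fixed points, where the meet is realized as the join of lower bounds. The only difference is presentational --- the paper cites Tarski's theorem outright, while you inline its standard proof and sketch a Kleene-iteration alternative --- which does not change the substance.
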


\begin{proof}
  Because $\bigimmcons{P}$ is monotone and $\Choice$ is a complete lattice,
  by Tarski~\shortcite{tarski1955lattice}, the set of fixed points
  of $\bigimmcons{P}$ forms a complete lattice. The least fixed point
  is the least element of this lattice, i.e.\
  $\lfp \bigimmcons{P} = \bigwedge \{\C : \bigimmcons{P}(\C) \leq \C\}$
  (where $\bigwedge X = \bigvee \{\C : \forall x \in X, \C \leq x\}$).
\end{proof}

\begin{corollary} \label{corollary-covering}
    Every model $E$ has a lower bound $D \in \lfp\bigimmcons{P}$ with $D \le E$.
\end{corollary}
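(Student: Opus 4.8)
The plan is to exploit the characterization of $\lfp\bigimmcons{P}$ supplied by \Cref{thm-least-fixed-points}, namely that it is the least pre-fixed point $\bigwedge\{\C : \bigimmcons{P}(\C) \le \C\}$, and to exhibit a particularly simple pre-fixed point built directly from $E$. The one genuine idea is that a \emph{model} in the sense of \Cref{def-model}---an $E$ with $\immcons{P}(E) = \{E\}$---becomes an honest fixed point of the \emph{lifted} operator once it is packaged as a singleton choice set.

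First I would check that $\{E\}$ is a legitimate element of $\Choice$: a choice set is a pairwise-incompatible set of constraint databases (\Cref{def-choice}), and a singleton is vacuously pairwise-incompatible. Next, using \Cref{def-lifted-immcons} together with the defining property of a model, I would compute
\[
\bigimmcons{P}(\{E\}) = \bigcup_{D \in \{E\}} \immcons{P}(D) = \immcons{P}(E) = \{E\},
\]
where the last equality is exactly the statement that $E$ is a model (\Cref{def-model}). Hence $\{E\}$ is a fixed point of $\bigimmcons{P}$, so in particular $\bigimmcons{P}(\{E\}) \le_\Choice \{E\}$, and $\{E\}$ is one of the pre-fixed points over which the meet defining $\lfp\bigimmcons{P}$ is taken.

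Because $\lfp\bigimmcons{P}$ is a lower bound of every such pre-fixed point, I would conclude $\lfp\bigimmcons{P} \le_\Choice \{E\}$. I would then unfold the definition of $\le_\Choice$ from \Cref{def-choice}---that $\C_1 \le_\Choice \C_2$ holds iff every database of $\C_2$ has some database of $\C_1$ below it---and instantiate it at the unique element $E \in \{E\}$. This yields a database $D \in \lfp\bigimmcons{P}$ with $D \le E$, which is precisely the claim.

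I do not anticipate any serious obstacle: each step is a direct unfolding of a definition or an appeal to \Cref{thm-least-fixed-points}. The only point that deserves care---and the crux of the argument---is recognizing that the two layers of ``fixed point'' collapse here: an individual model $E$, which is a fixed point of $\immcons{P}$ only in the weak sense of \Cref{def-model}, lifts to a genuine singleton fixed point $\{E\}$ of $\bigimmcons{P}$, which is then automatically dominated by the least fixed point.
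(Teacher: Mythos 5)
Your proposal is correct and follows essentially the same route as the paper's proof: both observe that a model $E$ yields a singleton fixed point $\{E\}$ of $\bigimmcons{P}$, conclude $\lfp\bigimmcons{P} \le_\Choice \{E\}$, and unfold $\le_\Choice$ to extract the desired $D \le E$. The paper states this in one line; you merely fill in the (routine but worthwhile) verifications that $\{E\}$ is a valid choice set and that $\bigimmcons{P}(\{E\}) = \immcons{P}(E) = \{E\}$.
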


\begin{proof}
    Since $\{E\}$ is a fixed point of $\bigimmcons{P}$, we know $\lfp \bigimmcons{P} \le_\Choice \{E\}$, i.e.\ $\exists D \in \lfp \bigimmcons{P}.\ D \le E$.
\end{proof}

\begin{theorem}[Minimal models] \label{thm-exactly-the-minimal-models}
  $\lfp\bigimmcons{P}$ contains exactly the minimal models of $P$, meaning models $D$ such that any model $D' \le D$ is equal to $D$.
\end{theorem}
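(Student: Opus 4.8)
The plan is to prove both inclusions by exploiting one central observation: every element of \emph{any} fixed point of $\bigimmcons{P}$ is already a model, not merely an arbitrary constraint database. Writing $L = \lfp\bigimmcons{P}$, the fixed-point equation reads $L = \bigcup_{D \in L}\immcons{P}(D)$, so any $E \in L$ lies in $\immcons{P}(D_0)$ for some $D_0 \in L$. The definition of immediate consequence (\Cref{def-immediate-consequence}) forces $\{D_0\} \le_\Choice \immcons{P}(D_0)$, which unfolds via \Cref{def-choice} to $D_0 \le E$. Hence $D_0 \compatible E$, and since $D_0$ and $E$ both lie in the pairwise-incompatible set $L$, we get $D_0 = E$. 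Therefore $E \in \immcons{P}(E)$, i.e.\ $E$ is a model by \Cref{def-model}. I would isolate this as the key lemma; everything else is assembled from it together with \Cref{corollary-covering} and pairwise incompatibility.

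For the inclusion that every $D \in L$ is a \emph{minimal} model, the lemma already gives that $D$ is a model, so it remains to check minimality. Suppose $D' \le D$ with $D'$ a model. By \Cref{corollary-covering} there is some $D_0 \in L$ with $D_0 \le D'$, hence $D_0 \le D$. Both $D_0$ and $D$ lie in $L$ and are compatible (with $D$ an upper bound of $\{D_0, D\}$), so pairwise incompatibility forces $D_0 = D$; then $D = D_0 \le D' \le D$ collapses to $D' = D$. Thus $D$ admits no strictly smaller model and is minimal.

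For the reverse inclusion, let $D$ be any minimal model. \Cref{corollary-covering} supplies a $D_0 \in L$ with $D_0 \le D$, and by the key lemma $D_0$ is itself a model. Minimality of $D$ then forces $D_0 = D$, so $D \in L$. The only real content is the key lemma; the subtlety there is recognizing that the deliberately inserted factor $\{D\}$ in the definition of $\immcons{P}(D)$, namely the clause guaranteeing $\{D\} \le_\Choice \immcons{P}(D)$, is exactly what lets pairwise incompatibility collapse a fixed-point element onto itself and thereby certify it as a model. Once that is in hand, both directions are short applications of \Cref{corollary-covering} and the pairwise incompatibility built into the definition of choice sets.
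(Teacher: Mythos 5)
Your proof is correct, and its skeleton---both inclusions obtained from \Cref{corollary-covering} together with the pairwise incompatibility built into choice sets---is the same as the paper's. The genuine difference is your key lemma: every element of $\lfp\bigimmcons{P}$ (indeed of any fixed point of $\bigimmcons{P}$) is a model, proved by unwinding the fixed-point equation, using the deliberately inserted factor $\{D\}$ in \Cref{def-immediate-consequence} to get $D_0 \le E$, and collapsing $D_0 = E$ by pairwise incompatibility. The paper never states or proves this fact; its proof shows only that \emph{models} lying in $\lfp\bigimmcons{P}$ are minimal, and that a model outside $\lfp\bigimmcons{P}$ has a lower bound inside it. Strictly read, that leaves two small gaps which your lemma closes. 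First, since the theorem asserts that $\lfp\bigimmcons{P}$ contains \emph{exactly} the minimal models, one must also rule out non-model elements of $\lfp\bigimmcons{P}$, a point the paper's argument does not address. Second, in the paper's first sentence, concluding that a model outside $\lfp\bigimmcons{P}$ fails to be minimal requires the covering lower bound to itself be a model (minimality quantifies only over models), which is again precisely your lemma. So your decomposition buys a fully rigorous proof of the statement as written, at the cost of one short extra lemma; the paper's version is terser but implicitly presupposes the very fact you isolate.
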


\begin{proof}
  No model outside $\lfp\bigimmcons{P}$ is minimal, because by \Cref{corollary-covering} it has a lower bound in $\lfp\bigimmcons{P}$. And every model $D \in \lfp\bigimmcons{P}$ is minimal: given a model $E \le D$, by \Cref{corollary-covering}, there is some $D' \in \lfp\bigimmcons{P}$ with $D' \le E \le D$. But then $D' \compatible D$ and by pairwise incompatibility $D' = D = E$.
\end{proof}

\subsection{Agreement of Step Semantics and Least-Fixed-Point Semantics}

\Cref{thm-least-fixed-points} establishes that it is reasonable to say that a {\casp} has a canonical model, $\lfp\bigimmcons{P}$.
Defining the meaning of a program this way, instead of the step-by-step definition given in \Cref{sec-set-semantics}, has a number of advantages. However, we need to take care with stating the correspondence between our two ways of assigning meaning to a {\casp}.

\begin{example}\label{caveat-positive}
    The 
the one-rule program
$(\propishV{p}{\term{b}} \leftarrow)$ has one solution according to \Cref{def-set-solutions}, the set $\{ \propV{p}{\term{b}} \}$. The least-fixed-point interpretation of this program contains two models. The first, $(\pred{p} \mapsto \just{\term{b}})$, obviously corresponds to the unique solution. The second, $(\pred{p} \mapsto \none{\{ \term{b}} \})$, does not correspond to any solution.
\end{example}

This is a relatively straightforward
issue to resolve: we will only expect \emph{positive} models (\Cref{def-db-positive}) to correspond to solutions.

Another challenge for connecting the step semantics and the least-fixed-point interpretation has to do with infinite choice sets and constraint databases.

\begin{example}\label{caveat-finite}
The translated answer set program with no finite grounding that we showed in \Cref{fig-asp-lasy-casp} makes for an interesting example. All but one of the models in the least-fixed-point interpretation of this program are finite. Here is one finite model, corresponding to the visiting only 0 and then stopping:
\[
\left(
\propK{visit}{\term{z}} \mapsto \just{\term{tt}}, 
\;
\propK{more}{\term{z}} \mapsto \just{\term{ff}},
\;
\propK{stop}{\term{z}} \mapsto \just{\term{tt}} 
\right)
\]
That constraint database corresponds to the fact set $\{ \prop{visit}{\term{z}}{\term{tt}}, \prop{more}{\term{z}}{\term{ff}}, \prop{stop}{\term{z}}{\term{tt}} \}$,  which is a solution according to \Cref{def-set-solutions}.

In addition to countably infinite constraint databases that correspond to solutions, the least-fixed-point interpretation of the program in \Cref{fig-asp-lasy-casp} also includes a single infinite constraint database that does not correspond to any solution:
\[\bigvee_{i \in \mathbb{N}} \left( 
\propK{visit}{\term{s}^i(\term{z})} \mapsto \just{\term{tt}}, \;
\propK{stop}{\term{s}^i(\term{z})} \mapsto \just{\term{ff}}, \;
\propK{more}{\term{s}^i(\term{z})} \mapsto \just{\term{tt}} \right) \]
\end{example}

\Cref{caveat-finite} demonstrates an advantage the least-fixed-point interpretation: it lets us reason mathematically about the meaning of programs that cannot be reached in finitely many steps. The least-fixed-point interpretation of datalog confers analogous advantages: a simple forward-chaining interpreter cannot fully evaluate the datalog program with two rules $(\propK{p}{\term{z}} \leftarrow)$ and $(\propK{p}{\term{s}(x)} \leftarrow \propK{p}{x})$, but the least-fixed-point interpretation assigns the program a canonical interpretation as the infinite set $\{\propK{p}{\term{s}^n(\term{z})} \mid n \in \mathbb{N} \}$.

Because we defined solutions to be finite sets, we have to account for the fact that the least fixed point interpretation of a {\casp} may contain \suggestedchange{infinite constraint databases that do not}{constraint databases that are not finite, and therefore do not} correspond to solutions.

The following theorem establishes that there are no other caveats beyond the two highlighted by \Cref{caveat-finite,caveat-positive}:

\begin{theorem}\label{semantics-are-equivalent}
    For $D \in \DB$, the following are equivalent:
    \begin{enumerate}
        \item $D$ is a solution to $P$ by \Cref{def-set-solutions}.
        \item $D \in \lfp \bigimmcons{P}$ and $D$ is positive and finite.
    \end{enumerate}
\end{theorem}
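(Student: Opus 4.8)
The plan is to route both directions through two results established above: the characterization of $\lfp\bigimmcons{P}$ as exactly the minimal models of $P$ (\Cref{thm-exactly-the-minimal-models}), and the isomorphism between positive constraint databases and the consistent fact sets of \Cref{sec-set-semantics}. Under that isomorphism the two conditions read, respectively, as ``$D$ is a finite, reachable, saturated fact set'' and ``$D$ is a finite minimal model.'' Two observations are shared by both directions. First, every solution is finite and positive: a step sequence is finite and each step adds at most one fact, so a database reached from $\emptyset$ is a finite consistent fact set, i.e.\ (the image of) a positive constraint database. Second, I will use repeatedly that satisfaction is monotone in the database---if $\sigma$ satisfies $F$ in $D$ and $D \le D'$ then $\sigma$ satisfies $F$ in $D'$ (\Cref{def-db-satisfaction}, as already exploited in \Cref{thm-immcons-monotone}).

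The technical heart is a \emph{Model Lemma}: a positive constraint database $D$ is a model of $P$ (\Cref{def-model}) if and only if the corresponding fact set is saturated (\Cref{def-set-saturation}). I would prove this by unfolding the condition $\immcons{P}(D) = \{D\}$. Since $\{D\} \le \immcons{P}(D)$ is forced by construction, every element of $\immcons{P}(D)$ is $\ge D$, so $D$ is a model iff $D \in \immcons{P}(D)$, iff every choice set $\rulechoices{\sigma H}$ arising from a rule $H \leftarrow F$ with $\sigma F \le D$ contains an element $\le D$. A short case analysis on the two rule forms, using positivity of $D$ (so each $D[a]$ is $\just{w}$ or $\none{\emptyset}$) and the order on $\Constraint$, shows this matches saturation exactly: for a closed rule it says $D$ already assigns the attribute one of the permitted values, and for an open rule it says $D$ assigns the attribute \emph{some} value; in every other case either the rule forces a genuinely new fact ($D$ not saturated) or forces a conflicting value (so $\immcons{P}(D) = \emptyset$, matching $D \setstepA{P} \emptyset$, again not saturated).

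For (1) $\Rightarrow$ (2), let $D$ be a solution witnessed by $\emptyset = D_0 \setstepA{P} \cdots \setstepA{P} D_k = D$ with $D$ saturated. By the Model Lemma $D$ is a model, and it is positive and finite, so by \Cref{thm-exactly-the-minimal-models} it remains only to show $D$ is \emph{minimal}. Given any model $E \le D$, I would prove $D \le E$ (hence $E = D$) by showing $D_i \le E$ for all $i$ by induction. The base case is $\bot_\DB \le E$. For the step, if $D_{i-1} \le E$ and $D_{i-1}$ steps to $D_i = D_{i-1} \cup \{\prop{p}{\sigma\overline{t}}{v}\}$ via a rule with $\sigma F \le D_{i-1} \le E$, then by monotone satisfaction that rule also fires in $E$, so since $E$ is a model $\rulechoices{\sigma H}$ has an element $\le E$. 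The key computation is that, because $E \le D$ and $D$ already pins $\propK{p}{\sigma\overline{t}}$ to $v$, the only element of $\rulechoices{\sigma H}$ that can lie below $E$ is $(\propK{p}{\sigma\overline{t}} \mapsto \just{v})$ (the $\none$-branch of an open rule is impossible because $\none{\{v\}} \le E[\propK{p}{\sigma\overline{t}}] \le \just{v}$ is contradictory), forcing $\prop{p}{\sigma\overline{t}}{v} \in E$ and hence $D_i \le E$.

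The direction (2) $\Rightarrow$ (1) is where the real work lies. Given $D \in \lfp\bigimmcons{P}$ positive and finite, \Cref{thm-exactly-the-minimal-models} makes $D$ a minimal model and the Model Lemma makes it saturated, so I only need a step sequence $\emptyset \setstepA{P} \cdots \setstepA{P} D$. The idea is to order the finitely many facts of $D$ so that each is produced by a rule whose premises lie among the earlier facts; then each stage is a legitimate fact-set step (selecting from the evolution set the database that adds exactly the next fact), and the terminal database is the saturated $D$. The obstacle---and the crux of the whole theorem---is showing such an ordering exists, i.e.\ that $D$ has no ``orphan'' facts that are present but underivable from their predecessors. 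This is precisely where minimality must be used: I would aim to show that the greedy closure $D'$ of $\emptyset$ under ``add any fact of $D$ produced by a currently-firing rule'' is itself a model whenever $D' \subsetneq D$, so that a proper inclusion would contradict minimality. The delicate case is an open rule firing on $D'$ at an attribute whose value in $D$ was supplied by some \emph{other} rule not yet triggered; ruling this out cleanly is best handled by appealing to the stagewise construction of $\lfp\bigimmcons{P}$ from $\bot_\Choice$, which equips every fact of $D$ with a well-founded derivation and thereby furnishes the required ordering directly.
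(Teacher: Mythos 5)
Your direction (1) $\Rightarrow$ (2) is correct, and it is actually more streamlined than the paper's. The paper routes this direction through an auxiliary notion of \emph{algorithmic step sequences} (so that the same lemmas also yield \Cref{thm-algorithm-correct}): \Cref{prop-algorithm-complete} lifts a fact-set step sequence ending in a saturated database to an algorithmic step sequence ending in a model, and \Cref{prop-algorithm-sound} (via \Cref{lemma:alg-sound-model}) shows that each stage of such a sequence sits below any model $M$ below the endpoint, forcing minimality, whence membership in $\lfp\bigimmcons{P}$ by \Cref{thm-exactly-the-minimal-models}. Your induction ($D_i \le E$ for every model $E \le D$, with the case analysis showing that the $\none{\{v\}}$ branch of an open rule can never lie below $E$ once $E \le D$ and $D$ pins the attribute to $\just{v}$) is the same core argument as \Cref{lemma:alg-sound-model}, carried out directly on the fact-set sequence and thereby skipping the algorithmic-step detour; your Model Lemma corresponds to \Cref{prop-positive-model-is-saturated} together with the closing observation of the paper's completeness lemma. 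All of this checks out.

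The genuine gap is in (2) $\Rightarrow$ (1). You correctly refute your own greedy-closure idea (the orphaned open-rule attribute is exactly why it fails), but the fallback you invoke is not available: \Cref{thm-least-fixed-points} obtains $\lfp\bigimmcons{P}$ by Knaster--Tarski, as the meet of all pre-fixed points, \emph{not} as the limit of the iteration $\bigimmcons{P}^n(\bot_\Choice)$. Since $\bigimmcons{P}$ has only been shown monotone, not (Scott-)continuous, Kleene iteration need not reach the least fixed point in $\omega$ steps, and even under transfinite iteration a given database could a priori first appear at a limit stage; so ``the stagewise construction equips every fact of $D$ with a well-founded derivation'' is an assumption, not a fact you may appeal to. Closing precisely this hole is the content of the paper's \Cref{prop-finite-fixed-points-reachable}: for \emph{finite} $\Delta \in \lfp\bigimmcons{P}$ one takes the unique $\Delta_i \in \bigimmcons{P}^i(\bot_\Choice)$ with $\Delta_i \le \Delta$ (these exist because $\bigimmcons{P}^i(\bot_\Choice) \le \lfp\bigimmcons{P}$), argues from finiteness of $\Delta$ that the chain $\Delta_0 \le \Delta_1 \le \cdots$ must stabilize, and shows that a stabilized $\Delta_i$ is a model and hence, by minimality, equals $\Delta$; this puts $\Delta$ in $\bigimmcons{P}^n(\bot_\Choice)$ for a finite $n$. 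A second lemma is then needed to simulate each passage from $\Delta_i$ to $\Delta_{i+1} \in \immcons{P}(\Delta_i)$ by a finite fact-set step sequence, which again uses finiteness to guarantee that only finitely many ground rule heads fire. Your proposal names the destination but supplies neither argument, and they are where the real work of this direction lives.
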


\begin{proof}
    Details are in the supplemental materials (\reforexpanded{sec-algorithm-round-trip}{Appendix C}), but we will provide an outline here:
    
    Going from (1) to (2) we first establish that any step which is precluded by the immediate consequence operator (as discussed in the introduction to \Cref{sec-immcons}) is actually a dead end, which means that productive steps taken towards a solution are always suitably ``within'' the set of immediate consequences. We then observe that all solutions are models, fixed points as defined by \Cref{def-model}, and are therefore bounded below by some minimal model in $\lfp \bigimmcons{P}$. But any step sequence deriving a solution must be bounded above by a constraint database in $\lfp \bigimmcons{P}$, and this squeezes solutions into membership in $\lfp \bigimmcons{P}$.
    
    Going from (2) to (1) we establish that any finite constraint database in $\lfp \bigimmcons{P}$ belongs to a finite iteration of $\bigimmcons{P}$, and the action of each of those finitely-many iterations can be simulated with finite step sequences. 
\end{proof}
\section{Abstract Algorithm}\label{sec-implementation}

Choice sets are critical both in defining the immediate consequence operators $\immcons{P}$ and $\bigimmcons{P}$ and in defining the meaning of a {\casp} as $\lfp \bigimmcons{P}$. However, the infrastructure of choice sets is more heavyweight than was strictly required to define $\immcons{P}$. For any program $P$ and constraint database $D$, we can express $\immcons{P}(D)$ in a simplified form, as the least upper bound of \textit{singular} choice sets:

\begin{definition}
   A choice set $\C$ is \textit{singular} to an attribute $a$ 
   when, for all $D \in \C$, we have that $D[a'] = \none{\emptyset}$ whenever $a' \neq a$.
\end{definition}

\begin{figure}
\begin{align}
& \{ \; ( \pred{p_1} \mapsto \just {\term{b}} ) \; \} \; \vee
&
\{ \; & \left( \pred{p_1} \mapsto \just {\term{b}}, 
   \; \pred{p_2} \mapsto \just{\term{b}}, 
   \; \pred{p_3} \mapsto \just{\term{b}} \right),
\notag
\\
& \{ \; ( \pred{p_2} \mapsto \just{\term{b}} ), \; ( \pred{p_2} \mapsto \just{\term{c}} ) \;\} \;  \vee
&
& \left( \pred{p_1} \mapsto \just {\term{b}},
  \; \pred{p_2}  \mapsto \just {\term{b}}, 
  \; \pred{p_3} \mapsto \none {\{\term{b}\}} \right),
\notag
\\
& \{ \; ( \pred{p_3}  \mapsto \just {\term{b}} ), \; ( \pred{p_3}  \mapsto \none {\{\term{b}\}} ) \; \}
&
& \left( \pred{p_1} \mapsto \just {\term{b}}, 
  \; \pred{p_2} \mapsto \just {\term{c}}, 
  \; \pred{p_3} \mapsto \just {\term{b}} \right),
\notag \\
&&
& \left( \pred{p_1} \mapsto \just {\term{b}}, 
  \; \pred{p_2} \mapsto \just {\term{c}}, 
  \; \pred{p_3} \mapsto \none {\{\term{b}\}} \right) \; \}
\notag
\end{align}
\caption{Two views of the same choice set. On the right, the choice set is represented straightforwardly as a set of pairwise-incompatible constraint databases. On the left, the choice set is represented as the least upper bound of singular choice sets.}
\label{fig-two-views}
\Description{A choice set represented first, on the left, as the least upper bound of singular choice sets: a choice set mapping p1 to just b, a choice set mapping p2 to just b or just c, and a choice set mapping p3 to just b or noneof \{ b \}. The normal presentation of this least upper bound is shown on the right; it contains four constraint databases with mappings for each of p1, p2, and p3.}
\end{figure}

The least upper bound of multiple choice sets singular to $a$ is also a choice set singular to $a$, so we can view the output of immediate consequence as a \emph{map from attributes to singular choice sets}. This critical shift in perspective is illustrated in \Cref{fig-two-views} and is specified formally in supplemental materials (\reforexpanded{sec-algorithm-round-trip}{Appendix C}).

Note that while $\immcons{P}(D)$ can always be written as the least upper bound of singular choice sets, not all choice sets can be described this way.
A simple example is the program from \Cref{fig-the-littlest-asp-casp}, whose interpretation is given by this choice set:
\[
\{
\; (\pred{p} \mapsto \just{\term{ff}}, \; \pred{q} \mapsto \just{\term{tt}}), 
\; (\pred{p} \mapsto \just{\term{tt}}, \; \pred{q} \mapsto \just{\term{ff}}) 
\; \}
\]

\subsection{An Algorithm for Exploring the Interpretation of a {\CASP}}
\label{sec-algorithm}

This nondeterministic algorithm attempts to return a single element from $\lfp \bigimmcons{P}$.

\begin{itemize}
\item Initially, let $D$ be $\bot_\DB$.
\item While $\immcons{P}(D) \neq \{ D \}$ and $\immcons{P}(D) \neq \emptyset$:
\begin{enumerate}
    \item Interpret $\immcons{P}(D)$ as a map from attributes $a$ to choice sets singular in $a$. 
    \item Nondeterministically pick some attribute $a$ where $\immcons{P}(D)[a] \neq \{ (a \mapsto D[a] ) \}$. 
    
    (The loop guard $\immcons{P}(D) \ne \{D\}$ ensures some such $a$ exists.)
    \item Nondeterministically pick some $(a \mapsto c)$ in $\immcons{P}(D)[a]$.

    (The loop guard $\immcons{P}(D) \ne \emptyset$ ensures that
      $\immcons{P}(D)[a] \ne \emptyset$.)
    \item Modify $D$ by setting $D[a]$ to be $c$.
\end{enumerate}
\item Successfully return $D$ if $\immcons{P}(D) = \{ D \}$, and return failure if $\immcons{P}(D) = \emptyset$.
\end{itemize}

This algorithm captures the intuition discussed at the start of \Cref{sec-immcons}: we can use the immediate consequences of a database to guide the process of picking the next step in the step-by-step computation of solutions to a {\casp}.

\begin{theorem}\label{thm-algorithm-correct}
    For positive constraint databases $D$, the following are equivalent:
    \begin{enumerate}
        \item The algorithm described above may successfully return $D$.
        \item $D \in \lfp \bigimmcons{P}$ and $D$ is finite.
    \end{enumerate}
\end{theorem}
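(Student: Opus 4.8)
The plan is to prove the two implications separately, exploiting that every run of the algorithm produces a monotonically increasing chain $\bot_\DB = D_0 \le D_1 \le \cdots$ in $\DB$: since $\{D_i\} \le_\Choice \immcons{P}(D_i)$ always holds (\Cref{def-immediate-consequence}), every $c$ with $(a \mapsto c) \in \immcons{P}(D_i)[a]$ satisfies $D_i[a] \le c$, so the update $D_i[a] := c$ only moves up. A successful return happens exactly when the chain reaches a constraint database $D$ with $\immcons{P}(D) = \{D\}$, i.e.\ a \emph{model} in the sense of \Cref{def-model}. Finiteness in direction (1)$\Rightarrow$(2) is then immediate: a successful run has finitely many iterations, each touching one attribute, and a standard argument shows $\immcons{P}$ sends finite databases to joins of finitely many finite rule-heads, so each $c$ is finite and hence $D$ is finite. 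Throughout I will use the attribute-wise reading of $\immcons{P}(D)$ as a map to singular choice sets (\Cref{fig-two-views}, formalized in Appendix C), writing $\immcons{P}(D)[a]$ for the $a$-component and noting that the elements of $\immcons{P}(D)$ are exactly the compatible joins $\bigvee_a e_a$ with $e_a \in \immcons{P}(D)[a]$.

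For (1)$\Rightarrow$(2) the real content is that the returned model $D$ is \emph{minimal}, so that $D \in \lfp\bigimmcons{P}$ by \Cref{thm-exactly-the-minimal-models}. By \Cref{corollary-covering} there is a model $E \in \lfp\bigimmcons{P}$ with $E \le D$; I will show $D \le E$ (whence $D = E \in \lfp\bigimmcons{P}$) by proving the invariant $D_i \le E$ for every step $D_i$ of the run, by induction on $i$. The base case is $\bot_\DB \le E$. For the step, write the chosen update as $c = D_i[a] \vee \bigvee_j c_j$, where each $c_j$ is one option taken from the head of a rule $r_j$ that fires at $D_i$ and concludes about $a$; since $D_i \le E$, satisfaction is monotone (as in \Cref{thm-immcons-monotone}) and every such $r_j$ also fires at $E$. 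Because $E$ is a model, $\immcons{P}(E)[a] = \{(a \mapsto E[a])\}$ is a singleton, and a short case analysis on $r_j$ (closed versus open, matched against the two shapes of $\rulechoices{\cdot}$ in \Cref{def-ruleconc}) pins down $E[a]$: a firing closed rule forces $E[a] = \just{v}$ for a unique one of its values, and a firing open rule with value $v$ forces $E[a] \in \{\just{v}\} \cup \{\none{X} : v \in X\}$. Combining each of these with $c_j \le c \le D[a]$ and $E[a] \le D[a]$ forces $c_j \le E[a]$ in every case; together with $D_i[a] \le E[a]$ this gives $c \le E[a]$, i.e.\ $D_{i+1} \le E$. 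I expect this case analysis --- in particular the open-rule branch, where the algorithm may legitimately have chosen the $\none{\{v\}}$ option --- to be the main obstacle, since it is exactly here that one must use that the \emph{output} $D$ is a common upper bound of the whole run to rule out the algorithm having committed to a value incompatible with the covering model $E$.

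For (2)$\Rightarrow$(1), suppose $D \in \lfp\bigimmcons{P}$ is finite. Then $D$ is a (minimal) model, so $\immcons{P}(D) = \{D\}$; thus once a run reaches $D$ it exits the loop successfully, and $\immcons{P}(D) \ne \emptyset$ rules out failure. It remains to exhibit one run reaching $D$. I would use that a finite member of $\lfp\bigimmcons{P}$ is already produced at a finite stage of the Kleene iteration of $\bigimmcons{P}$ (the same fact underlying the (2)$\Rightarrow$(1) half of \Cref{semantics-are-equivalent}), giving a finite chain $\bot_\DB = D^{(0)} \le \cdots \le D^{(n)} = D$ with $D^{(k+1)} \in \immcons{P}(D^{(k)})$. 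Each stage changes only finitely many attributes (as $D$ is finite), so I simulate it by finitely many single-attribute updates: while the current $D'$ lies strictly below the stage target, I pick an attribute $a$ and value $(a \mapsto c)$ witnessing $D' < D$, which is a legal step because $\immcons{P}(D') \ne \{D'\}$ (otherwise $D'$ would be a model strictly below the minimal model $D$) and because monotonicity of $\immcons{P}$ (\Cref{thm-immcons-monotone}) keeps the needed option available as $D'$ is refined. This chain is strictly increasing and bounded above by the finite $D$, so it terminates at $D$, and the algorithm returns $D$.

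Finally, the two implications give the stated equivalence. The hypothesis that $D$ is positive is carried through but is not actually needed for either direction; it lets us read condition (2) as ``$D \in \lfp\bigimmcons{P}$, positive, and finite,'' which is precisely condition (2) of \Cref{semantics-are-equivalent} --- so the theorem also records that, on positive databases, the algorithm returns exactly the solutions of $P$.
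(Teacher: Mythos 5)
Your (1)$\Rightarrow$(2) direction is essentially the paper's own argument: the paper proves this as its ``Soundness of the Algorithm'' lemma (\Cref{prop-algorithm-sound}, via \Cref{lemma:alg-sound-model}), by exactly your induction --- every state of the run stays below a model $E \le D$, using the final output $D$ as a common upper bound of the whole run to force the algorithm's chosen option to agree with $E$'s. Your per-rule case analysis is a correct (if more laborious) unfolding of the paper's one-line use of pairwise incompatibility; the only slip is that your list of possibilities for $E[a]$ under a fired open rule omits $E[a] = \just{t}$ with $t \ne v$, though the conclusion $c_j \le E[a]$ still holds in that case. The finiteness claim matches the paper's finitely-many-fired-heads observation.

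The (2)$\Rightarrow$(1) direction is where you genuinely depart from the paper, and where the gaps are. The paper never simulates directly on constraint databases: it goes through fact-set step sequences (\Cref{prop-finite-fixed-points-reachable}, then \Cref{prop-algorithm-complete}), and that detour is precisely why positivity appears in the theorem: the round trip $((D)^-)^+$ recovers $D$ only when $D$ is positive. Your closing remark that positivity ``is not actually needed for either direction'' is therefore a claim of the conjecture the paper explicitly leaves open at the end of \Cref{sec-algorithm-round-trip}, so this half must be airtight, and as written it is not, for two concrete reasons. (i) ``Monotonicity of $\immcons{P}$ keeps the needed option available'' is false for your stage-by-stage simulation: updating one attribute of a Kleene stage can cause new rules to fire, whose heads get joined into the singular choice set of another attribute, pushing every available option strictly past the stage target. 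For the rules $(\propV{p}{\{ \term{a} \}} \leftarrow)$, $(\propishV{q}{\term{b}} \leftarrow)$, $(\propishV{q}{\term{c}} \leftarrow \propV{p}{\term{a}})$, the stage-one target $(\pred{p} \mapsto \just{\term{a}},\; \pred{q} \mapsto \none{\{\term{b}\}})$ becomes unreachable after setting $\pred{p}$: the options for $\pred{q}$ are then $\just{\term{b}}$, $\just{\term{c}}$, and $\none{\{\term{b},\term{c}\}}$, none of which is $\none{\{\term{b}\}}$. The repair is to aim at the final $D$ rather than at stage targets: for $D' < D$, monotonicity gives $\immcons{P}(D') \le_\Choice \immcons{P}(D) = \{D\}$, hence a unique $E \in \immcons{P}(D')$ with $E \le D$; this $E$ also witnesses $\immcons{P}(D') \ne \emptyset$ (needed for the step to be legal at all), $E \ne D'$ since otherwise $D'$ would be a model below the minimal model $D$, and updating any attribute to $E$'s value keeps the state below $D$. (ii) Your termination argument --- strictly increasing and bounded above by the finite $D$ --- is not a valid inference in $\DB$: below $\just{v}$ there are infinite strictly ascending chains $\none{\emptyset} < \none{\{t_1\}} < \none{\{t_1,t_2\}} < \cdots$, and a $\just{v}$ entry imposes no finiteness restriction on $D$ by \Cref{def-db-finite}. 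You need the further observation that every rule instance fired at any $D' \le D$ already fires at $D$, of which there are finitely many, so each attribute ranges over a fixed finite set of joins and the strictly increasing run must be finite. With these two repairs your route is sound and is in fact stronger than the paper's, since it would establish the unrestricted conjecture; without them, the second implication is not established.
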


This a corollary of the lemmas used to establish \Cref{semantics-are-equivalent}, which are presented in the supplemental materials (\reforexpanded{sec-algorithm-round-trip}{Appendix C}).

The key to efficient evaluation of this algorithm is that we can incrementally update and efficiently query the portion of $\immcons{P}(D)$ where $\immcons{P}(D)[a] \neq \{( a \mapsto D[a] )\}$. This is not original to {\casping} or our implementation: maintaining a data structure capturing all immediate consequences is perhaps \textit{the} fundamental move for efficient implementation of semi-naive, tuple-at-a-time forward-chaining evaluation of logic programs. To give two examples, this is precisely the purpose of the queue $Q$ in \cite{mcallester02} and of the \textit{agenda} $A$ in \cite{10.1162/tacl_a_00588}. Our setting is novel because the immediate consequence of a partial solution is not another deterministically-defined partial solution --- as it is for McAllester, Eisner, and all other work we are aware of --- but a pairwise-incompatible set of partial solutions, which we explore nondeterministically.

\subsection{Resolving Nondeterminism}\label{sec-resolving-nondet}

There are two nondeterministic choice points in the abstract algorithm described in \Cref{sec-algorithm}: the choice of an attribute $a$ in step 2 of the loop, and the choice from $\immcons{P}(D)[a]$ in step 3 of the loop. The nondeterministic choice in step 2 is common to all similar semi-naive, tuple-at-a-time forward-chaining evaluation algorithms, but the nondeterministic choice in step 3 is not.

We \suggestedchange{claim}{conjecture} that the selection of an attribute (in step 2 of the abstract algorithm's loop) isn't ``lossy:'' when the algorithm might pick one of two attributes, the choice of one won't preclude finding any answer that the program might have returned if it had instead picked the other attribute. This means that a backtracking version of our algorithm that searches for all solutions doesn't need to reconsider the choices made in step 2. 

While each \emph{individual} attribute choice cannot cut off the path to any particular solution, a \emph{systematic} bias in the way attributes are selected can interfere with the algorithm's nondeterministic completeness. Nevertheless, we choose for our implementation to be systematically biased by the \emph{deduce-then-choose} strategy introduced in \Cref{sec-canonical-representatives}. In the context of our abstract algorithm, the deduce-then-choose strategy always picks an attribute $a$ where $\immcons{P}(D)[a]$ is a singleton when such a choice is possible. This strategy does force non-termination on the abstract algorithm where it might otherwise be able to terminate, as the following example shows:

\begin{example}\label{example-force-nontermination}
    Consider the following five-rule {\casp}:
\begin{align}
    \propV{p}{\{ \term{red} \}} & \leftarrow \label{rx1} \\
    \propV{q}{\{ \term{blue}, \term{yellow} \}} & \leftarrow \propK{num}{\term{s}(\term{s}(\term{z}))} \label{rx2} \\
    \propV{p}{\{ x \}} & \leftarrow \propV{q}{x} \label{rx3}\\
    \propK{num}{\term{z}} & \leftarrow \label{rx4} \\
    \propK{num}{ \term{s}(x) } & \leftarrow \propK{num}{x} \label{rx5}
\end{align}

\noindent
This program has no solutions --- its least-fixed-point interpretation is the empty set --- and it is possible for the abstract algorithm in \Cref{sec-algorithm} to finitely return failure or to loop endlessly, depending on how nondeterministic choices are resolved.
However, the additional constraint imposed by the deduce-then-choose strategy means that our algorithm cannot finitely return failure, and can only fail to terminate.
\end{example}

Unlike the choice of which attribute to consider in step 2 of the abstract algorithm, the choice in step 3 to pick a specific value necessarily cuts off possible solutions. However, we can turn our abstract algorithm into a partially correct algorithm for enumerating the elements of a least-fixed-point interpretation by backtracking over the choices made in step 3. 

Alongside backtracking, we add in one final bias. While our abstract algorithm enumerates arbitrary members of $\lfp \bigimmcons{P}$, we are interested only in \textit{solutions}, which by \Cref{semantics-are-equivalent} are the \textit{positive} constraint databases in $\lfp \bigimmcons{P}$. Accordingly, we restrict our backtracking strategy to prefer  constraints of the form $(\just{t})$ prior to backtracking. This means that if the algorithm returns without backtracking, the result will be a positive model.

This combination of strategies---the deduce-then-choose strategy, no backtracking over choices of attribute, and some form of backtracking over the choice of constraint while preferring positive constraints---means that we can think of an execution of the abstract algorithm as a process of expanding a (potentially infinite) tree of decisions. 

\begin{example}\label{example-explore-an-execution}
    Consider the following five-rule {\casp}:
\begin{align}
    \propV{p}{\{ \term{tt}, \term{ff} \}} & \leftarrow \label{rg1}\\
    \propV{q}{\{ \term{tt}, \term{ff} \}} & \leftarrow \label{rg2} \\
    \propishV{r}{\term{a}} & \leftarrow \label{rg3}\\
    \propV{r}{\{ \term{b}, \term{c} \}} & \leftarrow \propV{p}{\term{ff}} \label{rg4}\\
    \propV{r}{\{ x \}} & \leftarrow \propV{p}{x}, \propV{q}{x} \label{rg5}
\end{align}

Two of the many possible induced execution trees are show in Figure~\ref{fig-explore-an-execution}. The upper tree
\Cref{fig-explore-an-execution} represents
a execution where $\pred{p}$ is considered first in step 2 of the abstract algorithm's loop, and the lower tree represents an execution where $\pred{r}$ is considered first.

In the first example where $\pred{p}$ is considered first, when
the algorithm decides to give $\pred{p}$ the value $\just{\term{tt}}$, the
next attribute considered is $\pred{q}$, but when $\pred{p}$ is given the value $\just{\term{ff}}$,
the next attribute considered is $\pred{r}$. This highlights that we are not insisting on a single global ordering of attributes.
\end{example}

\begin{figure}

\begin{center}
\includegraphics[width=13.8cm]{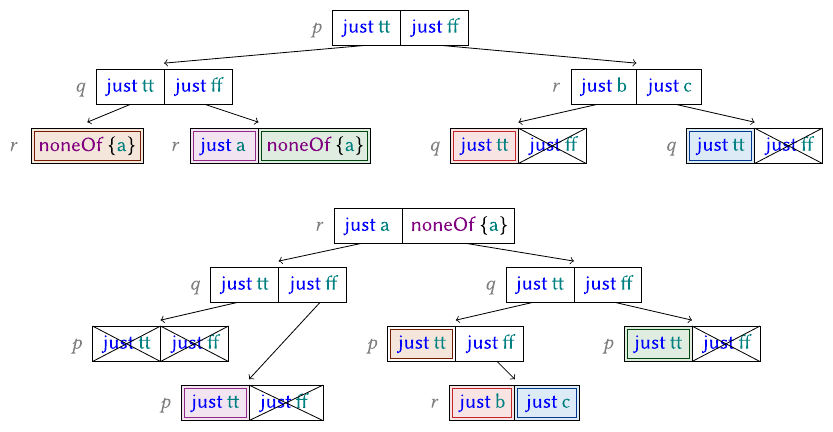}

\end{center}
\caption{Two trees of nondeterministic execution induced by different ways of selecting
attributes in course of executing the {\casp} from \Cref{example-explore-an-execution}.
Leaves representing failure-return are crossed out, and leaves representing successful terms are highlighted.     \endgraf
\indent\hspace{1em} The algorithm potentially returns the same five models regardless of the order in which attributes are selected. The related models are outlined in the same color in the two trees. For example, the leaf that corresponds to returning the model $(p \mapsto \just{\term{tt}}, \; q \mapsto \just{\term{tt}}, \; r \mapsto \none{\{ \term{a} \}})$ is highlighted in brown and can be found on the left side of the upper tree and near the center of the lower tree.}
\label{fig-explore-an-execution}
\Description{Two trees: the first tree describes an execution that first branches on the attribute "p," and the second describes an execution that first branches on the attribute "r." The trees differ in the number of routes that end up eliminated due to incompatible assignments (two in the first tree, four in the second), but both trees have leaves corresponding to the same five models; colors indicate how the solutions correspond to each other.}
\end{figure}

\subsection{Completeness of the Constrained Algorithm}
\label{sec-completeness-constrained}

\Cref{example-force-nontermination} demonstrated that the restrictions we put on the nondeterministic algorithm in this section force the non-backtracking version of our algorithm into non-termination where it might have otherwise terminated signaling failure. It is an open question whether the restrictions in this section interfere with the nondeterministic completeness of the algorithm: in other words, we don't know whether there is a program $P$ and a database $D \in \lfp \bigimmcons{P}$ such that the more restricted algorithm is unable return $D$. We conjecture that, even with the refinements we have discussed, any finite $D \in \lfp \bigimmcons{P}$ can be reached with a suitable choice of attributes in step 2 of the algorithm's loop. Even if these restrictions do sacrifice completeness, we believe they are worthwhile. The deduce-then-choose strategy makes the performance of {\casps} vastly more predictable, as demonstrated by the examples in \Cref{sec-examples}.

\section{Implementation}\label{sec-implementation-in-dusa}

Our implementation of the constrained abstract algorithm for {\casping} is called {\dusa}. The implementation supports three modes of interaction: a TypeScript API, a command-line program, and \suggestedchange{a browser-based editor.}{the browser-based editor pictured in \Cref{fig-dusa-rocks} and accessible at \url{https://dusa.rocks/}.} All modes allow the client to request incremental enumeration of all solutions to a {\casp}, and the first two modes allow the client to sample individual solutions \cite{simmons_2024_13921177}.

Despite implementing the algorithm in \Cref{sec-algorithm}, which enumerates finite models in $\lfp \bigimmcons{P}$, our implementation discards all non-positive models and only exposes solutions---the positive models in the program's interpretation---to the client. That's why, while the interpretation of the program in \Cref{example-explore-an-execution} includes five different models, the web interface in \Cref{fig-dusa-rocks} only indicates the presence of four solutions. 

\begin{figure}
\includegraphics[width=12cm]{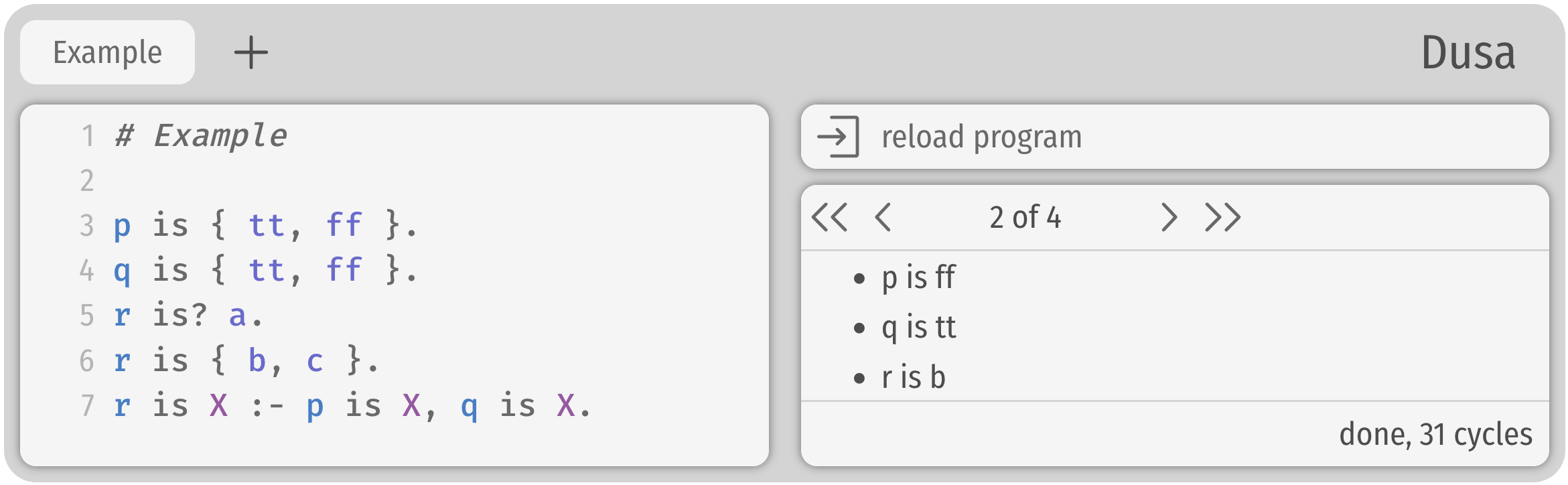}
\caption{Online editor for the {\dusa} implementation of {\casping} at \url{https://dusa.rocks/}.}
\Description{A web page with two main panels: the left panel contains the concrete syntax of a {\dusa} program, and the right panel indicates that we are looking at the second of four solutions, a solution that contains four facts: ``p is ff'', ``q is tt'', and ``r is b''.}
\label{fig-dusa-rocks}
\end{figure}

Our implementation of {\casping} was initially designed with procedural generation and possibility-space exploration in mind. This makes our design considerations very similar to those that led Horswill to design CatSAT as a way to facilitate procedural generation in Unity games \cite{horswill18catsat}. Horswill identifies four barriers to the use of answer set programming in video games, all of which are relevant to {\casping} and the {\dusa} implementation:

\paragraph{Designer Transparency} By this, Horswill refers to tools that allow \suggestedchange{``non-programmers''}{designers} to understand and manipulate programs. CatSAT does not directly address this barrier, and while {\dusa} has not been systematically tested with designers, our web interface does make it possible for non-experts to inspect, manipulate, and share programs without downloading or installing software (meeting many of the criteria for {\em casual creators} \cite{compton2015casual}).

\paragraph{Performance} CatSAT uses a simple implementation of the WalkSAT algorithm to get reasonable time performance and excellent memory performance on satisfability problems that are not excessively constrained. {\dusa} approaches performance from a different angle: our implementation is relatively memory-intensive, but we aim to provide \textit{predictable} performance, which we discuss more in \Cref{sec-perf-eval}.

\paragraph{Run-Time Integration} Whereas answer set programming tools are primarily written as standalone programs, CatSAT is implemented as a DSL in C\# to naturally integrate with programs in the Unity ecosystem. {\dusa} is implemented in TypeScript to facilitate integration with the web ecosystem. 

\paragraph{Determinism} Horswill observes that most SAT solvers and answer set programming languages make it difficult or impossible to access suitably random behavior. Like CatSAT, the {\dusa} implementation defaults to randomness. After constraining the abstract algorithm as described in \Cref{sec-resolving-nondet}, our implementation resolves all remaining nondeterministic choices uniformly at random. Furthermore, rather than using a stack-based backtracking algorithm, our implementation creates in-memory representations of the trees described in \Cref{fig-explore-an-execution}. After a solution is discovered, the algorithm prunes fully-explored parts of the tree and then returns to the root of the tree to begin random exploration again. The goal of this strategy is to avoid returning a second solution that is very similar to the first, a behavior that is commonly observed when exploring a state space with a stack-based (or depth-first) backtracking search.

While our strategy for randomized exploration does not truly sample the interpretation at random, in practice it has proven suitable for procedural generation. \suggestedchangerationale{This strategy also does an empirically good job of locating all the ``small'' solutions for programs with an infinite interpretation, like the one in \Cref{fig-asp-lasy-casp}, and will eventually return any specific finite solution for that particular program with probability 1. The Alpha implementation of ASP with lazy grounding \cite{alpha} exhibits a different behavior that is wholly unsuitable for procedural generation: when we ask Alpha to enumerate solutions to the answer set program in \Cref{fig-asp-lasy-casp}, we observe that it skips most small models and successively returns larger and larger models with large gaps in between.}{In our experience, {\dusa} also does a good job of locating ``small'' solutions for programs with an infinite interpretation like the one in \Cref{fig-asp-lasy-casp}. However, when asked to enumerate multiple solutions for this specific program, {\dusa} will start by presenting smaller solutions and will then only present successively larger solutions, never returning to smaller solutions that were skipped along the way. This behavior, which {\dusa} shares with the Alpha implementation of ASP with lazy grounding \cite{alpha}, is unsuitable for procedural generation, and we hope to address it in future work.}{The initial submission made some claims here that were incorrect, which we identified due to both the artifact evaluation process and Reviewer A's questions.} 

\subsection{Predictable Performance With Prefix Firings}\label{sec-perf-eval}

The implementation of the abstract evaluation algorithm in {\dusa} is closely modeled after the semi-naive, tuple-at-a-time forward-chaining interpreter for datalog presented by McAllester \shortcite{mcallester02}. McAllester's algorithm was designed to facilitate a \textit{cost semantics}, a way of reasoning about the run-time cost of evaluation without reasoning about the details of how the implementation works.

\begin{figure}
\includegraphics[width=13.8cm]{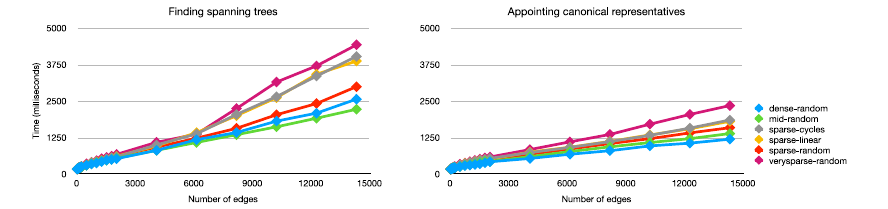}
\caption{Performance of the spanning tree program from \Cref{sec-rooted-spanning-tree} and the canonical representative algorithm from \Cref{sec-canonical-representatives} on graphs with different characteristics. All data points are the median of three runs requesting a single solution.}
\Description{Two graphs, each displaying six tightly-spaced lines that indicate approximately linear growth. The vertical axis is time in milliseconds, the horizontal axis is the number of edges in the graphs. The left graph is titled finding spanning trees, the right graph is labeled appointing canonical representatives, and the legend for the six lines describes graphs with different features: dense-random, mid-random, sparse-cycles, sparse-linear, sparse-random, and verysparse-random.}
\label{fig-its-kinda-linear}
\end{figure}

McAllester's cost semantics is based on counting the number of \textit{prefix firings} in a solution. If a rule $H \leftarrow F$ has the premises $F = \prop{p_1}{\overline{t_1}}{v_1}, \ldots, \prop{p_n}{\overline{t_n}}{v_n}$, then a prefix firing for the database $D$ is a unique variable-free instantiation of the first $i \leq n$ premises $\prop{p_1}{\sigma{\overline{t_1}}}{v_1},\ldots,\prop{p_i}{\sigma{\overline{t_i}}}{v_i}$ that is satisfied \suggestedchange{by the $D$}{in $D$}. The central result of McAllester's work is that, assuming constant-time hashtable operations, a datalog program that returns the solution $D_\textit{final}$ can be evaluated in time proportional to the prefix firings for $D_\textit{final}$. This cost semantics allows programmers to make decisions about how to write declarative programs: the McAllester cost semantics suggests defining the recursive rule for a transitive closure as $\propK{path}{x,z} \leftarrow \propK{edge}{x,y}, \propK{path}{y,z}$ rather than $\propK{path}{x,z} \leftarrow \propK{path}{x,y}, \propK{path}{y,z}$, because in a graph where the number of edges is proportional to $n$, the number of vertices, the former rule has $O(n^2)$ prefix firings and the latter has $O(n^3)$. 

\suggestedchange{We do not present a theorem about cost semantics in this development, in part because it requires careful introduction of a number of extra concepts and details, and in part because our implementation is based on functional maps instead of hash tables in order to support the very general backtracking methodology described above; this means that a variant of McAllester's result will not apply directly to our implementation.}{It would be nice to say that McAllester's cost semantics apply to {\dusa} programs that reach a solution without backtracking, but that isn't quite the case. McAllester's algorithm represents an extreme point in possible time-space tradeoffs: the result of every deduction is memoized in a hash table. (If one makes the standard assumption that hash table operations take constant time, this results in a space complexity equal to the time complexity.) In order to support our general backtracking strategy, our implementation uses AVL trees to implement the maps that McAllester's algorithm implements as hash tables. To apply a prefix-firing cost semantics to {\casping}, we will need to take one of two approaches: either we must adapt the cost semantics to account for the logarithmic factors involved in accessing functional maps, or we must describe a modified implementation of our algorithm that uses hash tables.} 

Nevertheless, \suggestedchange{the}{the prefix-firing} cost semantics \suggestedchange{has \textit{predictive} power.}{is successful at predicting the behavior of {\dusa} programs that do not backtrack.} The programs for spanning tree generation in \Cref{fig-spanning-tree-prog} and for canonical representative appointment in \Cref{fig-canonical-rep-prog} are expected, in any deduce-then-choose execution, to find some solution without backtracking, and solutions for both programs will have prefix firings proportional to the number of edges. \Cref{fig-its-kinda-linear} demonstrates that the actual behavior of {\dusa} generally conforms to the running time predicted by the cost \suggestedchange{semantics.}{semantics, possibly with some additional logarithmic factors thrown in to account for the use of functional data structures.}

\subsection{Comparing {\dusa} to Answer Set Programming}
\label{sec-dusa-and-asp-perf}

\begin{figure}
\includegraphics[width=13.8cm]{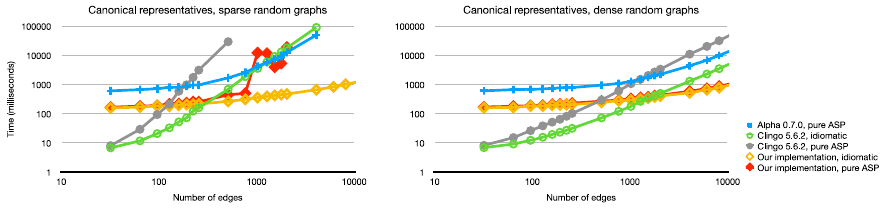}
\caption{Performance of Alpha, Clingo, and {\dusa} at finding canonical representatives. All data points are the median of three runs requesting a single solution, discarding runs exceeding 100 seconds.}
\Description{Two log-log graphs with a vertical axis representing time in milliseconds and a horizontal axis representing the number of edges in an input graph. Both graphs represent the time it takes to solve the appointment-of-canonical-representatives problem, the left one on sparse random graphs and the right one on dense random graphs. The five different lines represent different implementations, as discussed further in the text.}
\label{fig-head-to-head}
\end{figure}

There's not an obvious way to ask an answer set programming engine to run the spanning tree program from \Cref{sec-rooted-spanning-tree} or the canonical representative program from \Cref{sec-canonical-representatives}, because we have not defined a translation from {\casps} to answer set programs. However, solutions to the rooted spanning tree problem and the canonical representative problem can be expressed in answer set programming, and these answer set programs can be translated into {\casps} as discussed in \Cref{sec-simulating-asp,lazy-asp-in-casp}. 

In \Cref{fig-head-to-head} we compare the canonical representative problem from \Cref{fig-canonical-rep-prog} against an idiomatic Clingo program that uses features like cardinaity constraints, as well as comparing the performance \suggestedchange{of as single}{of a single} ``pure'' answer set program executed in Clingo, in the Alpha implementation of ASP with lazy grounding \cite{alpha}, and in {\dusa} by direct translation of ASP to {\casping}. (The yellow lines in \Cref{fig-head-to-head} repeat the red and blue lines from the right-hand side of \Cref{fig-its-kinda-linear}.) In contrast with the predictable behavior shown in \Cref{fig-its-kinda-linear}, the performance of the other four solutions varies significantly depending on characteristics of the graph. In the head-to-head comparison, {\dusa} outperforms the state of the art on dense graphs but does worse as graphs get sparser.

The poor performance of Clingo in the head-to-head comparison can be attributed to the ``grounding bottleneck'' encountered by traditional approaches to ASP. Both Alpha's lazy grounding and {\dusa}'s deduce-then-choose execution model represent ways of avoiding this grounding bottleneck, and this translates to better asymptotic performance for this example. 

\suggestedchange{An extended set of}{Compared with modern ASP implementations, the backtracking strategy used by the current {\dusa} implementation is incredibly naive, comparable to 1990s-era DPLL SAT solvers that lacked non-chronological backtracking. This lack of sophistication is almost certainly the cause of {\dusa}'s relatively poor performance in head-to-head comparisons on sparse graphs. More generally, our implementation struggles on programs like the Boolean satisfiability example in \Cref{fig-sat-instance-prog} that define a possibility space generically and then use constraints to whittle it down to a desired state. This \emph{design-space sculpting} approach to modeling \cite{smith2011answer} is a natural way to express many problems: to implement procedural generation we can let every point in a grid be land or sea and demand that there be a path between two specific points, to implement graph coloring we can assign every node in a graph to one of five colors and demand that no two edge-connected nodes have the same color, and to implement the $N$-queens problem we can let every space on a $N$-by-$N$ chessboard contain a queen or not and demand that there are $N$ queens on the board and that there are no immediate avenues of attack. Variants of all these problems are considered in our extended} benchmarking results, \suggestedchange{including the programs used in this comparison, are available in supplemental materials (\Cref{sec-benchmarking-and-examples}).}{available in supplemental materials (\reforexpanded{sec-benchmarking-and-examples}{Appendix D}). In future work, we intend to adapt {\dusa} to better support sculptural {\casping} by adapting techniques such as conflict-driven nogood learning that have proven successful in similar settings \cite{gebser2012nogood}.}
\section{Related Work}

Our least-fixed-point interpretation of {\casping} describes a domain-like structure for nondeterminism that 
draws inspiration from domain theory writ large~\cite{scott1982domains}, particularly 
powerdomains for nondeterministic lambda calculi and imperative programs~\cite{plotkin1976powerdomain,smyth1976powerdomains,kennaway1980theory}.
Our partial order on choice sets matches the one used for Smyth powerdomains in particular~\cite{smyth1976powerdomains}.
Our construction requires certain completeness criteria on posets that 
differ from these and other domain definitions we have found in the literature.

Both the least-fixed-point interpretation and the step-sequence interpretation of {\casping} represent ways of reasoning about the incremental development of solutions. 
Because we provide a translation of answer set programming into {\casping}, these ways of reasoning about incremental development of solutions are applicable to ASP as well. 
Previous work in this area includes the \emph{stable backtracking fixedpoint} algorithm described by Sacca and Zaniolo \shortcite{sacca1990stable}. (We interpret their 
algorithm as giving the first direct account for answer set programming without grounding,
though it seems to us that this significant fact was not noticed or exploited by the authors or anyone else.) There has also been a disjunctive extension to datalog~\cite{eiter1997disjunctive} and characterizations
of its stable models~\cite{przymusinski1991stable};
Leone et al.~\shortcite{leone1997disjunctive} present an algorithm for incrementally deriving stable models for disjunctive datalog. Their approach is based on initially creating a single canonical model by letting
some propositions be ``partially true.''

At the core of {\casping} is the propositional form $\prop{p}{\overline{t}}{v}$, which establishes a \emph{functional dependency} from an attribute to a value. This mirrors the proposition form $R[x_1,\ldots,x_{n-1}]=x_n$ used to enforce integrity constraints in \suggestedchange{LogicQL}{LogiQL}~\cite{aref15logicblox}. 
Souffl\'{e}'s functional dependencies allow programs to express multiple mutually contradictory conclusions \cite{hu21soufflechoice}, but as as discussed in \Cref{sec-rooted-spanning-tree}, functional dependencies in the tradition of Krishnamurthy and Naqvi \shortcite{krishnamurthy88nondet} cannot express that the violation of a functional dependency should cause a potential solution to be rejected. Roughly speaking, LogiQL works like a {\casping} language that only has closed rules without multiple choices, and Souffl\'{e} works like a {\casping} language that has only open rules. These systems are less expressive than ours in this respect, but they can completely avoid expensive backtracking. 

The partially-ordered set $\Constraint$ introduced in \Cref{sec-constraints} suggests a relationship between {\casping} and other languages that assign non-Boolean values to propositions: values drawn from a semiring in weighted logic programming \cite{eisner-etal-2005-compiling}, various interpretations of degree-of-truth in annotated logic programming \cite{KIFER1992335} and bilattice-based logic programming \cite{FITTING199191}, and probabilities in probabilistic logic programming \cite{6278236}. As discussed in \Cref{sec-choice-sets-needed}, we're unaware of work along these lines that can give an account for stable models without falling back on a variant of Gelfond and
Lifschitz's syntactic transformation. 
However, the related mathematical structures shared across these approaches points towards generalizing $\Constraint$ to accept alternate partial orders, similar to the lattice-based functional dependencies seen in datalog extensions like Bloom~\cite{conway2012logic}, Flix~\cite{madsen2020fixpoints}, and Egglog~\cite{zhang2023better}\suggestedchange{which could expand the expressiveness of {\casping} to account for monotonic data aggregation~\cite{ross1992aggregation}}{}.

Fandinno et al.~\shortcite{fandinno23} extend a translation due to Nieml{\"a} \shortcite{niemla08} to formalize the ``idea from folklore'' that ASP can be based on an ASP choice rules along with integrity constraints. This is similar to our observation that choice, rather than negation, is a suitable foundation for answer set programming, though a limited (stratified) negation is still necessary in their translation. They present a conceptual operational semantics based on making \emph{all} possible choices up-front, followed by a phase of deduction; this operational interpretation is intended as a pedagogical tool rather than the basis of a practical implementation.

\section{Conclusion}

We have introduced the theory and implementation of {\casping}, an approach to logic programming that has connections to answer set programming but that treats \emph{choice}, not \emph{negation}, as the fundamental primitive of nondeterminism. We establish that choice is a suitably expressive foundation by showing that answer set programming can be defined in terms of {\casping}.

We give a first definition of {\casping} in terms of nondeterministically augmenting a set of facts, a second definition as the unique least fixed point in a novel domain of mutually-exclusive models, and a proof that these two definitions agree on solutions. 

Our {\dusa} implementation
can enumerate solutions to {\casps}, and the runtime behavior of our
implementation can often reliably (if approximately) be predicted by 
McAllester's 
cost semantics based on prefix firings. Because our implementation is not subject to the ``grounding bottleneck'' that affects mainstream answer set programming solvers, our implementation outperforms the state-of-the-art Clingo ASP implementation in many cases despite its extremely naive backtracking strategy.

\suggestedchange{}{In future work, we hope to firmly establish the status of nondeterministic completeness for the abstract algorithm as restricted by the deduce-then-choose strategy (as discussed in \Cref{sec-completeness-constrained}), investigate fair enumeration of solutions (as discussed in \Cref{sec-implementation-in-dusa}),
and apply proven techniques from ASP solvers and Boolean satisfiability solvers to improve {\dusa}'s performance on programs where the current implementation's simplistic backtracking strategy does not perform well (as discussed in \Cref{sec-dusa-and-asp-perf}). In addition, we plan to investigate the semantics of fragments of our language,
both by giving a precise account for connecting open rules and languages with choice constructs like Souffl\'{e}'s and by seeing if closed rules can account for disjunctive datalog. We also hope to generalize {\casping} as presented here, particularly by extending the partial order on ${\Constraint}$ to richer partial orders, as has been done in related systems, which we believe
could expand the expressiveness of {\casping} to account for monotonic data aggregation~\cite{ross1992aggregation}. Finally, we hope to explore 
proof-theoretic accounts of {\casping}, which could support compositional
explanation and provenance analysis for possibility spaces.
}

\begin{acks}
  This material is based upon work supported by the National Science Foundation under Grant No. 1846122. The genesis of this work was performed at the Recurse Center, and we are grateful for the Recurse Center for nurturing a space where people can think about ideas in programming.

We want to thank Richard Comploi-Taupe, who helped us understand the state of research on answer set programming with lazy grounding, and Ben Simner, who helped acquire reference materials on disjunctive logic programming. Many folks gave helpful feedback that improved this paper, particularly Mitch Wand, Emma Tosch, Jason Reed, Philip Zucker, David Renshaw, and six anonymous reviewers.

\end{acks}

\clearpage

\bibliographystyle{ACM-Reference-Format}
\bibliography{main}


\begin{thebibliography}{72}


\ifx \showCODEN    \undefined \def \showCODEN     #1{\unskip}     \fi
\ifx \showDOI      \undefined \def \showDOI       #1{#1}\fi
\ifx \showISBNx    \undefined \def \showISBNx     #1{\unskip}     \fi
\ifx \showISBNxiii \undefined \def \showISBNxiii  #1{\unskip}     \fi
\ifx \showISSN     \undefined \def \showISSN      #1{\unskip}     \fi
\ifx \showLCCN     \undefined \def \showLCCN      #1{\unskip}     \fi
\ifx \shownote     \undefined \def \shownote      #1{#1}          \fi
\ifx \showarticletitle \undefined \def \showarticletitle #1{#1}   \fi
\ifx \showURL      \undefined \def \showURL       {\relax}        \fi
\providecommand\bibfield[2]{#2}
\providecommand\bibinfo[2]{#2}
\providecommand\natexlab[1]{#1}
\providecommand\showeprint[2][]{arXiv:#2}

\bibitem[Alvaro et~al\mbox{.}(2011)]%
        {alvaro2011dedalus}
\bibfield{author}{\bibinfo{person}{Peter Alvaro}, \bibinfo{person}{William~R. Marczak}, \bibinfo{person}{Neil Conway}, \bibinfo{person}{Joseph~M. Hellerstein}, \bibinfo{person}{David Maier}, {and} \bibinfo{person}{Russell Sears}.} \bibinfo{year}{2011}\natexlab{}.
\newblock \showarticletitle{Dedalus: Datalog in time and space}. In \bibinfo{booktitle}{\emph{Datalog Reloaded}}, \bibfield{editor}{\bibinfo{person}{Oege de~Moor}, \bibinfo{person}{Georg Gottlob}, \bibinfo{person}{Tim Furche}, {and} \bibinfo{person}{Andrew Sellers}} (Eds.). \bibinfo{publisher}{Springer}, \bibinfo{address}{Oxford, UK}, \bibinfo{pages}{262--281}.
\newblock
\urldef\tempurl%
\url{https://doi.org/10.1007/978-3-642-24206-9_16}
\showURL{%
\tempurl}


\bibitem[Aref et~al\mbox{.}(2015)]%
        {aref15logicblox}
\bibfield{author}{\bibinfo{person}{Molham Aref}, \bibinfo{person}{Balder ten Cate}, \bibinfo{person}{Todd~J. Green}, \bibinfo{person}{Benny Kimelfeld}, \bibinfo{person}{Dan Olteanu}, \bibinfo{person}{Emir Pasalic}, \bibinfo{person}{Todd~L. Veldhuizen}, {and} \bibinfo{person}{Geoffrey Washburn}.} \bibinfo{year}{2015}\natexlab{}.
\newblock \showarticletitle{Design and Implementation of the LogicBlox System}. In \bibinfo{booktitle}{\emph{Proceedings of the 2015 ACM SIGMOD International Conference on Management of Data}} (Melbourne, Victoria, Australia) \emph{(\bibinfo{series}{SIGMOD '15})}. \bibinfo{publisher}{Association for Computing Machinery}, \bibinfo{address}{New York, NY, USA}, \bibinfo{pages}{1371–1382}.
\newblock
\showISBNx{9781450327589}
\urldef\tempurl%
\url{https://doi.org/10.1145/2723372.2742796}
\showDOI{\tempurl}


\bibitem[Claessen and Hughes(2011)]%
        {claessen2011quickcheck}
\bibfield{author}{\bibinfo{person}{Koen Claessen} {and} \bibinfo{person}{John Hughes}.} \bibinfo{year}{2011}\natexlab{}.
\newblock \showarticletitle{QuickCheck: a lightweight tool for random testing of Haskell programs}.
\newblock \bibinfo{journal}{\emph{{ACM} {SIGPLAN} Notices}} \bibinfo{volume}{46}, \bibinfo{number}{4} (\bibinfo{year}{2011}), \bibinfo{pages}{53--64}.
\newblock
\urldef\tempurl%
\url{https://doi.org/10.1145/1988042.1988046}
\showDOI{\tempurl}


\bibitem[Clark(1978)]%
        {clark78negation}
\bibfield{author}{\bibinfo{person}{Keith~L. Clark}.} \bibinfo{year}{1978}\natexlab{}.
\newblock \showarticletitle{Negation as Failure}.
\newblock In \bibinfo{booktitle}{\emph{Logic and Data Bases}}, \bibfield{editor}{\bibinfo{person}{Herv{\'e} Gallaire} {and} \bibinfo{person}{Jack Minker}} (Eds.). \bibinfo{publisher}{Springer US}, \bibinfo{address}{Boston, MA}, \bibinfo{pages}{293--322}.
\newblock
\showISBNx{978-1-4684-3384-5}
\urldef\tempurl%
\url{https://doi.org/10.1007/978-1-4684-3384-5_11}
\showDOI{\tempurl}


\bibitem[Comploi-Taupe et~al\mbox{.}(2023)]%
        {taupe23dshast}
\bibfield{author}{\bibinfo{person}{Richard Comploi-Taupe}, \bibinfo{person}{Gerhard Friedrich}, \bibinfo{person}{Konstantin Schekotihin}, {and} \bibinfo{person}{Antonius Weinzierl}.} \bibinfo{year}{2023}\natexlab{}.
\newblock \showarticletitle{Domain-Specific Heuristics in Answer Set Programming: A Declarative Non-Monotonic Approach}.
\newblock \bibinfo{journal}{\emph{J. Artif. Int. Res.}}  \bibinfo{volume}{76} (\bibinfo{date}{may} \bibinfo{year}{2023}), \bibinfo{numpages}{56}~pages.
\newblock
\showISSN{1076-9757}
\urldef\tempurl%
\url{https://doi.org/10.1613/jair.1.14091}
\showDOI{\tempurl}


\bibitem[Compton and Mateas(2015)]%
        {compton2015casual}
\bibfield{author}{\bibinfo{person}{Kate Compton} {and} \bibinfo{person}{Michael Mateas}.} \bibinfo{year}{2015}\natexlab{}.
\newblock \showarticletitle{Casual Creators}. In \bibinfo{booktitle}{\emph{Proceedings of the Sixth International Conference on Computational Creativity (ICCC 2015)}}, \bibfield{editor}{\bibinfo{person}{Hannu Toivonen}, \bibinfo{person}{Simon Colton}, \bibinfo{person}{Michael Cook}, {and} \bibinfo{person}{Dan Ventura}} (Eds.). \bibinfo{publisher}{Brigham Young University}, \bibinfo{address}{Park City, Utah}, \bibinfo{pages}{228--235}.
\newblock
\urldef\tempurl%
\url{http://computationalcreativity.net/iccc2015/proceedings/10_2Compton.pdf}
\showURL{%
\tempurl}


\bibitem[Conway et~al\mbox{.}(2012)]%
        {conway2012logic}
\bibfield{author}{\bibinfo{person}{Neil Conway}, \bibinfo{person}{William~R. Marczak}, \bibinfo{person}{Peter Alvaro}, \bibinfo{person}{Joseph~M. Hellerstein}, {and} \bibinfo{person}{David Maier}.} \bibinfo{year}{2012}\natexlab{}.
\newblock \showarticletitle{Logic and lattices for distributed programming}. In \bibinfo{booktitle}{\emph{Proceedings of the Third ACM Symposium on Cloud Computing}} (San Jose, California) \emph{(\bibinfo{series}{SoCC '12})}. \bibinfo{publisher}{Association for Computing Machinery}, \bibinfo{address}{New York, NY, USA}, Article \bibinfo{articleno}{1}, \bibinfo{numpages}{14}~pages.
\newblock
\showISBNx{9781450317610}
\urldef\tempurl%
\url{https://doi.org/10.1145/2391229.2391230}
\showDOI{\tempurl}


\bibitem[Czarnecki et~al\mbox{.}(2004)]%
        {czarnecki2004staged}
\bibfield{author}{\bibinfo{person}{Krzysztof Czarnecki}, \bibinfo{person}{Simon Helsen}, {and} \bibinfo{person}{Ulrich Eisenecker}.} \bibinfo{year}{2004}\natexlab{}.
\newblock \showarticletitle{Staged Configuration Using Feature Models}. In \bibinfo{booktitle}{\emph{Software Product Lines}}, \bibfield{editor}{\bibinfo{person}{Robert~L. Nord}} (Ed.). \bibinfo{publisher}{Springer}, \bibinfo{address}{Berlin, Heidelberg}, \bibinfo{pages}{266--283}.
\newblock
\showISBNx{978-3-540-28630-1}
\urldef\tempurl%
\url{https://doi.org/10.1007/978-3-540-28630-1_17}
\showDOI{\tempurl}


\bibitem[Czarnecki et~al\mbox{.}(2002)]%
        {czarnecki2002generative}
\bibfield{author}{\bibinfo{person}{Krzysztof Czarnecki}, \bibinfo{person}{Kasper {\O}sterbye}, {and} \bibinfo{person}{Markus V{\"o}lter}.} \bibinfo{year}{2002}\natexlab{}.
\newblock \showarticletitle{Generative Programming}. In \bibinfo{booktitle}{\emph{European Conference on Object-Oriented Programming}}. \bibinfo{publisher}{Springer}, \bibinfo{address}{Berlin, Heidelberg}, \bibinfo{pages}{15--29}.
\newblock
\urldef\tempurl%
\url{https://doi.org/10.1007/3-540-36208-8_2}
\showDOI{\tempurl}


\bibitem[Dabral and Martens(2020)]%
        {dabral2020generating}
\bibfield{author}{\bibinfo{person}{Chinmaya Dabral} {and} \bibinfo{person}{Chris Martens}.} \bibinfo{year}{2020}\natexlab{}.
\newblock \showarticletitle{Generating explorable narrative spaces with answer set programming}. In \bibinfo{booktitle}{\emph{Proceedings of the AAAI Conference on Artificial Intelligence and Interactive Digital Entertainment}}, Vol.~\bibinfo{volume}{16}. \bibinfo{publisher}{AAAI Press}, \bibinfo{address}{Washington, USA}, \bibinfo{pages}{45--51}.
\newblock
\urldef\tempurl%
\url{https://doi.org/10.1609/aiide.v16i1.7406}
\showDOI{\tempurl}


\bibitem[Dabral et~al\mbox{.}(2023)]%
        {dabral2022exploring}
\bibfield{author}{\bibinfo{person}{Chinmaya Dabral}, \bibinfo{person}{Emma Tosch}, {and} \bibinfo{person}{Chris Martens}.} \bibinfo{year}{2023}\natexlab{}.
\newblock \bibinfo{title}{Exploring Consequences of Privacy Policies with Narrative Generation via Answer Set Programming}.  (\bibinfo{year}{2023}).
\newblock
\showeprint{2212.06719}
\newblock
\shownote{Presented at Workshop on Programming Languages and the Law (ProLaLa@POPL)}.


\bibitem[Dal~Pal\`{u} et~al\mbox{.}(2009)]%
        {gasp}
\bibfield{author}{\bibinfo{person}{Alessandro Dal~Pal\`{u}}, \bibinfo{person}{Agostino Dovier}, \bibinfo{person}{Enrico Pontelli}, {and} \bibinfo{person}{Gianfranco Rossi}.} \bibinfo{year}{2009}\natexlab{}.
\newblock \showarticletitle{GASP: Answer Set Programming with Lazy Grounding}.
\newblock \bibinfo{journal}{\emph{Fundam. Inf.}} \bibinfo{volume}{96}, \bibinfo{number}{3} (\bibinfo{year}{2009}), \bibinfo{pages}{297–322}.
\newblock
\showISSN{0169-2968}


\bibitem[Dao-Tran et~al\mbox{.}(2012)]%
        {omiga}
\bibfield{author}{\bibinfo{person}{Minh Dao-Tran}, \bibinfo{person}{Thomas Eiter}, \bibinfo{person}{Michael Fink}, \bibinfo{person}{Gerald Weidinger}, {and} \bibinfo{person}{Antonius Weinzierl}.} \bibinfo{year}{2012}\natexlab{}.
\newblock \showarticletitle{OMiGA : An Open Minded Grounding On-The-Fly Answer Set Solver}. In \bibinfo{booktitle}{\emph{Logics in Artificial Intelligence}}, \bibfield{editor}{\bibinfo{person}{Luis~Fari{\~{n}}as del Cerro}, \bibinfo{person}{Andreas Herzig}, {and} \bibinfo{person}{J{\'e}r{\^o}me Mengin}} (Eds.). \bibinfo{publisher}{Springer}, \bibinfo{address}{Berlin, Heidelberg}, \bibinfo{pages}{480--483}.
\newblock
\showISBNx{978-3-642-33353-8}


\bibitem[Dormans and Bakkes(2011)]%
        {dormans2011generating}
\bibfield{author}{\bibinfo{person}{Joris Dormans} {and} \bibinfo{person}{Sander Bakkes}.} \bibinfo{year}{2011}\natexlab{}.
\newblock \showarticletitle{Generating missions and spaces for adaptable play experiences}.
\newblock \bibinfo{journal}{\emph{IEEE Transactions on Computational Intelligence and AI in Games}} \bibinfo{volume}{3}, \bibinfo{number}{3} (\bibinfo{year}{2011}), \bibinfo{pages}{216--228}.
\newblock
\urldef\tempurl%
\url{https://doi.org/10.1109/TCIAIG.2011.2149523}
\showDOI{\tempurl}


\bibitem[Eisner(2023)]%
        {10.1162/tacl_a_00588}
\bibfield{author}{\bibinfo{person}{Jason Eisner}.} \bibinfo{year}{2023}\natexlab{}.
\newblock \showarticletitle{{Time-and-Space-Efficient Weighted Deduction}}.
\newblock \bibinfo{journal}{\emph{Transactions of the Association for Computational Linguistics}}  \bibinfo{volume}{11} (\bibinfo{date}{08} \bibinfo{year}{2023}), \bibinfo{pages}{960--973}.
\newblock
\showISSN{2307-387X}
\urldef\tempurl%
\url{https://doi.org/10.1162/tacl_a_00588}
\showDOI{\tempurl}


\bibitem[Eisner et~al\mbox{.}(2005)]%
        {eisner-etal-2005-compiling}
\bibfield{author}{\bibinfo{person}{Jason Eisner}, \bibinfo{person}{Eric Goldlust}, {and} \bibinfo{person}{Noah~A. Smith}.} \bibinfo{year}{2005}\natexlab{}.
\newblock \showarticletitle{Compiling Comp Ling: Weighted Dynamic Programming and the {D}yna Language}. In \bibinfo{booktitle}{\emph{Proceedings of Human Language Technology Conference and Conference on Empirical Methods in Natural Language Processing}}, \bibfield{editor}{\bibinfo{person}{Raymond Mooney}, \bibinfo{person}{Chris Brew}, \bibinfo{person}{Lee-Feng Chien}, {and} \bibinfo{person}{Katrin Kirchhoff}} (Eds.). \bibinfo{publisher}{Association for Computational Linguistics}, \bibinfo{address}{Vancouver, British Columbia, Canada}, \bibinfo{pages}{281--290}.
\newblock
\urldef\tempurl%
\url{https://aclanthology.org/H05-1036}
\showURL{%
\tempurl}


\bibitem[Eiter et~al\mbox{.}(1997)]%
        {eiter1997disjunctive}
\bibfield{author}{\bibinfo{person}{Thomas Eiter}, \bibinfo{person}{Georg Gottlob}, {and} \bibinfo{person}{Heikki Mannila}.} \bibinfo{year}{1997}\natexlab{}.
\newblock \showarticletitle{Disjunctive datalog}.
\newblock \bibinfo{journal}{\emph{ACM Transactions on Database Systems (TODS)}} \bibinfo{volume}{22}, \bibinfo{number}{3} (\bibinfo{year}{1997}), \bibinfo{pages}{364--418}.
\newblock
\urldef\tempurl%
\url{https://doi.org/10.1145/261124.261126}
\showDOI{\tempurl}


\bibitem[Fandinno et~al\mbox{.}(2023)]%
        {fandinno23}
\bibfield{author}{\bibinfo{person}{Jorge Fandinno}, \bibinfo{person}{Seemran Mishra}, \bibinfo{person}{Javier Romero}, {and} \bibinfo{person}{Torsten Schaub}.} \bibinfo{year}{2023}\natexlab{}.
\newblock \bibinfo{booktitle}{\emph{Answer Set Programming Made Easy}}.
\newblock \bibinfo{publisher}{Springer}, \bibinfo{address}{Cham}, \bibinfo{pages}{133--150}.
\newblock
\showISBNx{978-3-031-31476-6}
\urldef\tempurl%
\url{https://doi.org/10.1007/978-3-031-31476-6_7}
\showDOI{\tempurl}


\bibitem[Fitting(1991)]%
        {FITTING199191}
\bibfield{author}{\bibinfo{person}{Melvin Fitting}.} \bibinfo{year}{1991}\natexlab{}.
\newblock \showarticletitle{Bilattices and the semantics of logic programming}.
\newblock \bibinfo{journal}{\emph{The Journal of Logic Programming}} \bibinfo{volume}{11}, \bibinfo{number}{2} (\bibinfo{year}{1991}), \bibinfo{pages}{91--116}.
\newblock
\showISSN{0743-1066}
\urldef\tempurl%
\url{https://doi.org/10.1016/0743-1066(91)90014-G}
\showDOI{\tempurl}


\bibitem[Fitting(1993)]%
        {FITTING1993197}
\bibfield{author}{\bibinfo{person}{Melvin Fitting}.} \bibinfo{year}{1993}\natexlab{}.
\newblock \showarticletitle{The family of stable models}.
\newblock \bibinfo{journal}{\emph{The Journal of Logic Programming}} \bibinfo{volume}{17}, \bibinfo{number}{2} (\bibinfo{year}{1993}), \bibinfo{pages}{197--225}.
\newblock
\showISSN{0743-1066}
\urldef\tempurl%
\url{https://doi.org/10.1016/0743-1066(93)90031-B}
\showDOI{\tempurl}


\bibitem[Gebser et~al\mbox{.}(2019)]%
        {gebser2017multi}
\bibfield{author}{\bibinfo{person}{Martin Gebser}, \bibinfo{person}{Roland Kaminski}, \bibinfo{person}{Benjamin Kaufmann}, {and} \bibinfo{person}{Torsten Schaub}.} \bibinfo{year}{2019}\natexlab{}.
\newblock \showarticletitle{Multi-shot {ASP} solving with clingo}.
\newblock \bibinfo{journal}{\emph{Theory and Practice of Logic Programming}} \bibinfo{volume}{19}, \bibinfo{number}{1} (\bibinfo{year}{2019}), \bibinfo{pages}{27–82}.
\newblock
\urldef\tempurl%
\url{https://doi.org/10.1017/S1471068418000054}
\showDOI{\tempurl}


\bibitem[Gebser et~al\mbox{.}(2011)]%
        {gebser2011potassco}
\bibfield{author}{\bibinfo{person}{Martin Gebser}, \bibinfo{person}{Benjamin Kaufmann}, \bibinfo{person}{Roland Kaminski}, \bibinfo{person}{Max Ostrowski}, \bibinfo{person}{Torsten Schaub}, {and} \bibinfo{person}{Marius Schneider}.} \bibinfo{year}{2011}\natexlab{}.
\newblock \showarticletitle{Potassco: The Potsdam answer set solving collection}.
\newblock \bibinfo{journal}{\emph{AI Communications}} \bibinfo{volume}{24}, \bibinfo{number}{2} (\bibinfo{year}{2011}), \bibinfo{pages}{107--124}.
\newblock
\urldef\tempurl%
\url{https://doi.org/10.3233/AIC-2011-0491}
\showDOI{\tempurl}


\bibitem[Gebser et~al\mbox{.}(2012)]%
        {gebser2012nogood}
\bibfield{author}{\bibinfo{person}{Martin Gebser}, \bibinfo{person}{Benjamin Kaufmann}, {and} \bibinfo{person}{Torsten Schaub}.} \bibinfo{year}{2012}\natexlab{}.
\newblock \showarticletitle{Conflict-driven answer set solving: From theory to practice}.
\newblock \bibinfo{journal}{\emph{Artificial Intelligence}}  \bibinfo{volume}{187-188} (\bibinfo{year}{2012}), \bibinfo{pages}{52--89}.
\newblock
\showISSN{0004-3702}
\urldef\tempurl%
\url{https://doi.org/10.1016/j.artint.2012.04.001}
\showDOI{\tempurl}


\bibitem[Gelfond and Lifschitz(1988)]%
        {gel88}
\bibfield{author}{\bibinfo{person}{Michael Gelfond} {and} \bibinfo{person}{Vladimir Lifschitz}.} \bibinfo{year}{1988}\natexlab{}.
\newblock \showarticletitle{The Stable Model Semantics for Logic Programming}. In \bibinfo{booktitle}{\emph{Proceedings of International Logic Programming Conference and Symposium}}, \bibfield{editor}{\bibinfo{person}{Robert Kowalski} {and} \bibinfo{person}{Kenneth~A. Bowen}} (Eds.). \bibinfo{publisher}{{MIT} Press}, \bibinfo{pages}{1070--1080}.
\newblock


\bibitem[Giannotti et~al\mbox{.}(2001)]%
        {giannotti01nondet}
\bibfield{author}{\bibinfo{person}{Fosca Giannotti}, \bibinfo{person}{Dino Pedreschi}, {and} \bibinfo{person}{Carlo Zaniolo}.} \bibinfo{year}{2001}\natexlab{}.
\newblock \showarticletitle{Semantics and Expressive Power of Nondeterministic Constructs in Deductive Databases}.
\newblock \bibinfo{journal}{\emph{J. Comput. System Sci.}} \bibinfo{volume}{62}, \bibinfo{number}{1} (\bibinfo{year}{2001}), \bibinfo{pages}{15--42}.
\newblock
\showISSN{0022-0000}
\urldef\tempurl%
\url{https://doi.org/10.1006/jcss.1999.1699}
\showDOI{\tempurl}


\bibitem[Goldstein et~al\mbox{.}(2023)]%
        {goldstein2023reflecting}
\bibfield{author}{\bibinfo{person}{Harrison Goldstein}, \bibinfo{person}{Samantha Frohlich}, \bibinfo{person}{Meng Wang}, {and} \bibinfo{person}{Benjamin~C Pierce}.} \bibinfo{year}{2023}\natexlab{}.
\newblock \showarticletitle{Reflecting on Random Generation}.
\newblock \bibinfo{journal}{\emph{Proceedings of the ACM on Programming Languages}} \bibinfo{volume}{7}, \bibinfo{number}{ICFP} (\bibinfo{year}{2023}), \bibinfo{pages}{322--355}.
\newblock
\urldef\tempurl%
\url{https://doi.org/10.1145/3607842}
\showDOI{\tempurl}


\bibitem[Goldstein and Pierce(2022)]%
        {goldstein2022parsing}
\bibfield{author}{\bibinfo{person}{Harrison Goldstein} {and} \bibinfo{person}{Benjamin~C Pierce}.} \bibinfo{year}{2022}\natexlab{}.
\newblock \showarticletitle{Parsing randomness}.
\newblock \bibinfo{journal}{\emph{Proceedings of the ACM on Programming Languages}} \bibinfo{volume}{6}, \bibinfo{number}{OOPSLA2} (\bibinfo{year}{2022}), \bibinfo{pages}{89--113}.
\newblock
\urldef\tempurl%
\url{https://doi.org/10.1145/3563291}
\showDOI{\tempurl}


\bibitem[Greco and Zaniolo(2001)]%
        {grecozaniolo2001}
\bibfield{author}{\bibinfo{person}{Sergio Greco} {and} \bibinfo{person}{Carlo Zaniolo}.} \bibinfo{year}{2001}\natexlab{}.
\newblock \showarticletitle{Greedy algorithms in Datalog}.
\newblock \bibinfo{journal}{\emph{Theory and Practice of Logic Programming}} \bibinfo{volume}{1}, \bibinfo{number}{4} (\bibinfo{year}{2001}), \bibinfo{pages}{381–407}.
\newblock
\urldef\tempurl%
\url{https://doi.org/10.1017/S1471068401001090}
\showDOI{\tempurl}


\bibitem[Horswill(2018)]%
        {horswill18catsat}
\bibfield{author}{\bibinfo{person}{Ian Horswill}.} \bibinfo{year}{2018}\natexlab{}.
\newblock \showarticletitle{CatSAT: A Practical, Embedded, SAT Language for Runtime PCG}.
\newblock \bibinfo{journal}{\emph{Proceedings of the AAAI Conference on Artificial Intelligence and Interactive Digital Entertainment}} \bibinfo{volume}{14}, \bibinfo{number}{1} (\bibinfo{date}{Sep.} \bibinfo{year}{2018}), \bibinfo{pages}{38--44}.
\newblock
\urldef\tempurl%
\url{https://doi.org/10.1609/aiide.v14i1.13026}
\showDOI{\tempurl}


\bibitem[Hu et~al\mbox{.}(2021)]%
        {hu21soufflechoice}
\bibfield{author}{\bibinfo{person}{Xiaowen Hu}, \bibinfo{person}{Joshua Karp}, \bibinfo{person}{David Zhao}, \bibinfo{person}{Abdul Zreika}, \bibinfo{person}{Xi Wu}, {and} \bibinfo{person}{Bernhard Scholz}.} \bibinfo{year}{2021}\natexlab{}.
\newblock \showarticletitle{The Choice Construct in the Souffl{\'e} Language}. In \bibinfo{booktitle}{\emph{Programming Languages and Systems}}, \bibfield{editor}{\bibinfo{person}{Hakjoo Oh}} (Ed.). \bibinfo{publisher}{Springer International Publishing}, \bibinfo{address}{Cham}, \bibinfo{pages}{163--181}.
\newblock
\showISBNx{978-3-030-89051-3}
\urldef\tempurl%
\url{https://doi.org/10.1007/978-3-030-89051-3_10}
\showDOI{\tempurl}


\bibitem[Kaufmann et~al\mbox{.}(2016)]%
        {Kaufmann_Leone_Perri_Schaub_2016}
\bibfield{author}{\bibinfo{person}{Benjamin Kaufmann}, \bibinfo{person}{Nicola Leone}, \bibinfo{person}{Simona Perri}, {and} \bibinfo{person}{Torsten Schaub}.} \bibinfo{year}{2016}\natexlab{}.
\newblock \showarticletitle{Grounding and Solving in Answer Set Programming}.
\newblock \bibinfo{journal}{\emph{AI Magazine}} \bibinfo{volume}{37}, \bibinfo{number}{3} (\bibinfo{date}{Oct.} \bibinfo{year}{2016}), \bibinfo{pages}{25--32}.
\newblock
\urldef\tempurl%
\url{https://doi.org/10.1609/aimag.v37i3.2672}
\showDOI{\tempurl}


\bibitem[Kennaway and Hoare(1980)]%
        {kennaway1980theory}
\bibfield{author}{\bibinfo{person}{J.R. Kennaway} {and} \bibinfo{person}{C.A.R. Hoare}.} \bibinfo{year}{1980}\natexlab{}.
\newblock \showarticletitle{A theory of nondeterminism}. In \bibinfo{booktitle}{\emph{Automata, Languages and Programming}}, \bibfield{editor}{\bibinfo{person}{Jaco de~Bakker} {and} \bibinfo{person}{Jan van Leeuwen}} (Eds.). \bibinfo{publisher}{Springer}, \bibinfo{address}{Berlin, Heidelberg}, \bibinfo{pages}{338--350}.
\newblock
\showISBNx{978-3-540-39346-7}
\urldef\tempurl%
\url{https://doi.org/10.1007/3-540-10003-2_82}
\showURL{%
\tempurl}


\bibitem[Kifer and Subrahmanian(1992)]%
        {KIFER1992335}
\bibfield{author}{\bibinfo{person}{Michael Kifer} {and} \bibinfo{person}{V.S. Subrahmanian}.} \bibinfo{year}{1992}\natexlab{}.
\newblock \showarticletitle{Theory of generalized annotated logic programming and its applications}.
\newblock \bibinfo{journal}{\emph{The Journal of Logic Programming}} \bibinfo{volume}{12}, \bibinfo{number}{4} (\bibinfo{year}{1992}), \bibinfo{pages}{335--367}.
\newblock
\showISSN{0743-1066}
\urldef\tempurl%
\url{https://doi.org/10.1016/0743-1066(92)90007-P}
\showDOI{\tempurl}


\bibitem[Komendantskaya and Seda(2009)]%
        {KOMENDANTSKAYA2009141}
\bibfield{author}{\bibinfo{person}{Ekaterina Komendantskaya} {and} \bibinfo{person}{Anthony~Karel Seda}.} \bibinfo{year}{2009}\natexlab{}.
\newblock \showarticletitle{Sound and Complete SLD-Resolution for Bilattice-Based Annotated Logic Programs}.
\newblock \bibinfo{journal}{\emph{Electronic Notes in Theoretical Computer Science}}  \bibinfo{volume}{225} (\bibinfo{year}{2009}), \bibinfo{pages}{141--159}.
\newblock
\showISSN{1571-0661}
\urldef\tempurl%
\url{https://doi.org/10.1016/j.entcs.2008.12.071}
\showDOI{\tempurl}


\bibitem[Krishnamurthy and Naqvi(1988)]%
        {krishnamurthy88nondet}
\bibfield{author}{\bibinfo{person}{Ravi Krishnamurthy} {and} \bibinfo{person}{Shamim Naqvi}.} \bibinfo{year}{1988}\natexlab{}.
\newblock \showarticletitle{Non-Deterministic Choice in Datalog}.
\newblock In \bibinfo{booktitle}{\emph{Proceedings of the Third International Conference on Data and Knowledge Bases}}, \bibfield{editor}{\bibinfo{person}{C.~Beeri}, \bibinfo{person}{J.W. Schmidt}, {and} \bibinfo{person}{U.~Dayal}} (Eds.). \bibinfo{publisher}{Morgan Kaufmann}, \bibinfo{pages}{416--424}.
\newblock
\showISBNx{978-1-4832-1313-2}
\urldef\tempurl%
\url{https://doi.org/10.1016/B978-1-4832-1313-2.50038-X}
\showDOI{\tempurl}


\bibitem[Lampropoulos et~al\mbox{.}(2017)]%
        {lampropoulos2017generating}
\bibfield{author}{\bibinfo{person}{Leonidas Lampropoulos}, \bibinfo{person}{Zoe Paraskevopoulou}, {and} \bibinfo{person}{Benjamin~C Pierce}.} \bibinfo{year}{2017}\natexlab{}.
\newblock \showarticletitle{Generating good generators for inductive relations}.
\newblock \bibinfo{journal}{\emph{Proceedings of the ACM on Programming Languages}} \bibinfo{volume}{2}, \bibinfo{number}{POPL} (\bibinfo{year}{2017}), \bibinfo{pages}{1--30}.
\newblock
\urldef\tempurl%
\url{https://doi.org/10.1145/3158133}
\showDOI{\tempurl}


\bibitem[Le et~al\mbox{.}(2011)]%
        {le2011ifdef}
\bibfield{author}{\bibinfo{person}{Duc Le}, \bibinfo{person}{Eric Walkingshaw}, {and} \bibinfo{person}{Martin Erwig}.} \bibinfo{year}{2011}\natexlab{}.
\newblock \showarticletitle{\#ifdef confirmed harmful: Promoting understandable software variation}. In \bibinfo{booktitle}{\emph{2011 IEEE Symposium on Visual Languages and Human-Centric Computing (VL/HCC 2011)}}. \bibinfo{publisher}{IEEE Computer Society}, \bibinfo{address}{Los Alamitos, CA, USA}, \bibinfo{pages}{143--150}.
\newblock
\urldef\tempurl%
\url{https://doi.org/10.1109/VLHCC.2011.6070391}
\showDOI{\tempurl}


\bibitem[Lef\'evre et~al\mbox{.}(2017)]%
        {asperix}
\bibfield{author}{\bibinfo{person}{Claire Lef\'evre}, \bibinfo{person}{Christopher B\'eatrix}, \bibinfo{person}{Igor St\'ephan}, {and} \bibinfo{person}{Laurent Garcia}.} \bibinfo{year}{2017}\natexlab{}.
\newblock \showarticletitle{ASPeRiX, a first-order forward chaining approach for answer set computing}.
\newblock \bibinfo{journal}{\emph{Theory and Practice of Logic Programming}} \bibinfo{volume}{17}, \bibinfo{number}{3} (\bibinfo{year}{2017}), \bibinfo{pages}{266–310}.
\newblock
\urldef\tempurl%
\url{https://doi.org/10.1017/S1471068416000569}
\showDOI{\tempurl}


\bibitem[Leone et~al\mbox{.}(1997)]%
        {leone1997disjunctive}
\bibfield{author}{\bibinfo{person}{Nicola Leone}, \bibinfo{person}{Pasquale Rullo}, {and} \bibinfo{person}{Francesco Scarcello}.} \bibinfo{year}{1997}\natexlab{}.
\newblock \showarticletitle{Disjunctive stable models: Unfounded sets, fixpoint semantics, and computation}.
\newblock \bibinfo{journal}{\emph{Information and computation}} \bibinfo{volume}{135}, \bibinfo{number}{2} (\bibinfo{year}{1997}), \bibinfo{pages}{69--112}.
\newblock
\urldef\tempurl%
\url{https://doi.org/10.1006/inco.1997.2630}
\showDOI{\tempurl}


\bibitem[Li et~al\mbox{.}(2020)]%
        {li2020spatial}
\bibfield{author}{\bibinfo{person}{Beidi Li}, \bibinfo{person}{Jochen Teizer}, {and} \bibinfo{person}{Carl Schultz}.} \bibinfo{year}{2020}\natexlab{}.
\newblock \showarticletitle{Non-monotonic Spatial Reasoning for Safety Analysis in Construction}. In \bibinfo{booktitle}{\emph{Proceedings of the 22nd International Symposium on Principles and Practice of Declarative Programming}} (Bologna, Italy) \emph{(\bibinfo{series}{PPDP '20})}. \bibinfo{publisher}{Association for Computing Machinery}, \bibinfo{address}{New York, NY, USA}, Article \bibinfo{articleno}{16}, \bibinfo{numpages}{12}~pages.
\newblock
\showISBNx{9781450388214}
\urldef\tempurl%
\url{https://doi.org/10.1145/3414080.3414096}
\showDOI{\tempurl}


\bibitem[Lim et~al\mbox{.}(2022)]%
        {lim2022automating}
\bibfield{author}{\bibinfo{person}{How~Khang Lim}, \bibinfo{person}{Avishkar Mahajar}, \bibinfo{person}{Martin Strecker}, {and} \bibinfo{person}{Meng~Weng Wong}.} \bibinfo{year}{2022}\natexlab{}.
\newblock \showarticletitle{Automating defeasible reasoning in law with answer set programming}.
\newblock \bibinfo{journal}{\emph{CEUR}}  \bibinfo{volume}{3193} (\bibinfo{year}{2022}).
\newblock


\bibitem[Madsen and Lhot{\'a}k(2020)]%
        {madsen2020fixpoints}
\bibfield{author}{\bibinfo{person}{Magnus Madsen} {and} \bibinfo{person}{Ond{\v{r}}ej Lhot{\'a}k}.} \bibinfo{year}{2020}\natexlab{}.
\newblock \showarticletitle{Fixpoints for the masses: programming with first-class datalog constraints}.
\newblock \bibinfo{journal}{\emph{Proceedings of the ACM on Programming Languages}} \bibinfo{volume}{4}, \bibinfo{number}{OOPSLA} (\bibinfo{year}{2020}), \bibinfo{pages}{1--28}.
\newblock
\urldef\tempurl%
\url{https://doi.org/10.1145/3428193}
\showDOI{\tempurl}


\bibitem[McAllester(2002)]%
        {mcallester02}
\bibfield{author}{\bibinfo{person}{David McAllester}.} \bibinfo{year}{2002}\natexlab{}.
\newblock \showarticletitle{On the complexity analysis of static analyses}.
\newblock \bibinfo{journal}{\emph{J. ACM}} \bibinfo{volume}{49}, \bibinfo{number}{4} (\bibinfo{date}{July} \bibinfo{year}{2002}), \bibinfo{pages}{512–537}.
\newblock
\showISSN{0004-5411}
\urldef\tempurl%
\url{https://doi.org/10.1145/581771.581774}
\showDOI{\tempurl}


\bibitem[Miller et~al\mbox{.}(1991)]%
        {miller91uniform}
\bibfield{author}{\bibinfo{person}{Dale Miller}, \bibinfo{person}{Gopalan Nadathur}, \bibinfo{person}{Frank Pfenning}, {and} \bibinfo{person}{Andre Scedrov}.} \bibinfo{year}{1991}\natexlab{}.
\newblock \showarticletitle{Uniform proofs as a foundation for logic programming}.
\newblock \bibinfo{journal}{\emph{Annals of Pure and Applied Logic}} \bibinfo{volume}{51}, \bibinfo{number}{1} (\bibinfo{year}{1991}), \bibinfo{pages}{125--157}.
\newblock
\showISSN{0168-0072}
\urldef\tempurl%
\url{https://doi.org/10.1016/0168-0072(91)90068-W}
\showDOI{\tempurl}


\bibitem[Neufeld et~al\mbox{.}(2015)]%
        {neufeld2015procedural}
\bibfield{author}{\bibinfo{person}{Xenija Neufeld}, \bibinfo{person}{Sanaz Mostaghim}, {and} \bibinfo{person}{Diego Perez-Liebana}.} \bibinfo{year}{2015}\natexlab{}.
\newblock \showarticletitle{Procedural level generation with answer set programming for general video game playing}. In \bibinfo{booktitle}{\emph{2015 7th Computer Science and Electronic Engineering Conference (CEEC)}}. \bibinfo{publisher}{IEEE}, \bibinfo{pages}{207--212}.
\newblock
\urldef\tempurl%
\url{https://doi.org/10.1109/CEEC.2015.7332726}
\showDOI{\tempurl}


\bibitem[Niemel{\"a}(2008)]%
        {niemla08}
\bibfield{author}{\bibinfo{person}{Ilkka Niemel{\"a}}.} \bibinfo{year}{2008}\natexlab{}.
\newblock \showarticletitle{Answer Set Programming without Unstratified Negation}. In \bibinfo{booktitle}{\emph{Logic Programming}}, \bibfield{editor}{\bibinfo{person}{Maria Garcia de~la Banda} {and} \bibinfo{person}{Enrico Pontelli}} (Eds.). \bibinfo{publisher}{Springer}, \bibinfo{address}{Berlin, Heidelberg}, \bibinfo{pages}{88--92}.
\newblock
\showISBNx{978-3-540-89982-2}
\urldef\tempurl%
\url{https://doi.org/10.1007/978-3-540-89982-2_15}
\showDOI{\tempurl}


\bibitem[Paraskevopoulou et~al\mbox{.}(2022)]%
        {paraskevopoulou2022computing}
\bibfield{author}{\bibinfo{person}{Zoe Paraskevopoulou}, \bibinfo{person}{Aaron Eline}, {and} \bibinfo{person}{Leonidas Lampropoulos}.} \bibinfo{year}{2022}\natexlab{}.
\newblock \showarticletitle{Computing correctly with inductive relations}. In \bibinfo{booktitle}{\emph{Proceedings of the 43rd ACM SIGPLAN International Conference on Programming Language Design and Implementation}}. \bibinfo{publisher}{ACM}, \bibinfo{address}{NY, USA}, \bibinfo{pages}{966--980}.
\newblock
\urldef\tempurl%
\url{https://doi.org/10.1145/3519939.3523707}
\showDOI{\tempurl}


\bibitem[Plotkin(1976)]%
        {plotkin1976powerdomain}
\bibfield{author}{\bibinfo{person}{Gordon~D. Plotkin}.} \bibinfo{year}{1976}\natexlab{}.
\newblock \showarticletitle{A powerdomain construction}.
\newblock \bibinfo{journal}{\emph{SIAM J. Comput.}} \bibinfo{volume}{5}, \bibinfo{number}{3} (\bibinfo{year}{1976}), \bibinfo{pages}{452--487}.
\newblock
\urldef\tempurl%
\url{https://doi.org/10.1137/0205035}
\showDOI{\tempurl}


\bibitem[Przymusinski(1991)]%
        {przymusinski1991stable}
\bibfield{author}{\bibinfo{person}{Teodor~C Przymusinski}.} \bibinfo{year}{1991}\natexlab{}.
\newblock \showarticletitle{Stable semantics for disjunctive programs}.
\newblock \bibinfo{journal}{\emph{New generation computing}} \bibinfo{volume}{9}, \bibinfo{number}{3-4} (\bibinfo{year}{1991}), \bibinfo{pages}{401--424}.
\newblock
\urldef\tempurl%
\url{https://doi.org/10.1007/BF03037171}
\showDOI{\tempurl}


\bibitem[Ross and Sagiv(1992)]%
        {ross1992aggregation}
\bibfield{author}{\bibinfo{person}{Kenneth~A. Ross} {and} \bibinfo{person}{Yehoshua Sagiv}.} \bibinfo{year}{1992}\natexlab{}.
\newblock \showarticletitle{Monotonic aggregation in deductive databases}. In \bibinfo{booktitle}{\emph{Proceedings of the Eleventh ACM SIGACT-SIGMOD-SIGART Symposium on Principles of Database Systems}} (San Diego, California, USA) \emph{(\bibinfo{series}{PODS '92})}. \bibinfo{publisher}{Association for Computing Machinery}, \bibinfo{address}{New York, NY, USA}, \bibinfo{pages}{114–126}.
\newblock
\showISBNx{0897915194}
\urldef\tempurl%
\url{https://doi.org/10.1145/137097.137852}
\showDOI{\tempurl}


\bibitem[Sacca and Zaniolo(1990)]%
        {sacca1990stable}
\bibfield{author}{\bibinfo{person}{Domenico Sacca} {and} \bibinfo{person}{Carlo Zaniolo}.} \bibinfo{year}{1990}\natexlab{}.
\newblock \showarticletitle{Stable models and non-determinism in logic programs with negation}. In \bibinfo{booktitle}{\emph{Proceedings of the Ninth ACM SIGACT-SIGMOD-SIGART Symposium on Principles of Database Systems}} (Nashville, USA) \emph{(\bibinfo{series}{PODS '90})}. \bibinfo{publisher}{Association for Computing Machinery}, \bibinfo{address}{New York, NY, USA}, \bibinfo{pages}{205–217}.
\newblock
\showISBNx{0897913523}
\urldef\tempurl%
\url{https://doi.org/10.1145/298514.298572}
\showDOI{\tempurl}


\bibitem[Schaub(2014)]%
        {schaub2013modeling}
\bibfield{author}{\bibinfo{person}{Torsten Schaub}.} \bibinfo{year}{2014}\natexlab{}.
\newblock \bibinfo{title}{Basic modeling in ASP and more via the n-Queens puzzle}.
\newblock
\newblock
\urldef\tempurl%
\url{https://www.cs.uni-potsdam.de/~torsten/Potassco/Videos/BasicModeling/modeling.pdf}
\showURL{%
\tempurl}


\bibitem[Scott(1982)]%
        {scott1982domains}
\bibfield{author}{\bibinfo{person}{Dana~S Scott}.} \bibinfo{year}{1982}\natexlab{}.
\newblock \showarticletitle{Domains for denotational semantics}. In \bibinfo{booktitle}{\emph{Automata, Languages and Programming: Ninth Colloquium Aarhus, Denmark, July 12--16, 1982 9}}. \bibinfo{publisher}{Springer}, \bibinfo{address}{Berlin, Heidelberg}, \bibinfo{pages}{577--610}.
\newblock
\urldef\tempurl%
\url{https://doi.org/10.1007/BFb0012801}
\showDOI{\tempurl}


\bibitem[Seidel et~al\mbox{.}(2015)]%
        {seidel2015type}
\bibfield{author}{\bibinfo{person}{Eric~L Seidel}, \bibinfo{person}{Niki Vazou}, {and} \bibinfo{person}{Ranjit Jhala}.} \bibinfo{year}{2015}\natexlab{}.
\newblock \showarticletitle{Type targeted testing}. In \bibinfo{booktitle}{\emph{Programming Languages and Systems: 24th European Symposium on Programming, ESOP 2015, Held as Part of the European Joint Conferences on Theory and Practice of Software, ETAPS 2015, London, UK, April 11-18, 2015, Proceedings 24}}. \bibinfo{publisher}{Springer}, \bibinfo{address}{Berlin, Heidelberg}, \bibinfo{pages}{812--836}.
\newblock
\urldef\tempurl%
\url{https://doi.org/10.1007/978-3-662-46669-8_33}
\showDOI{\tempurl}


\bibitem[Shaker et~al\mbox{.}(2016)]%
        {shaker2016procedural}
\bibfield{author}{\bibinfo{person}{Noor Shaker}, \bibinfo{person}{Julian Togelius}, {and} \bibinfo{person}{Mark~J Nelson}.} \bibinfo{year}{2016}\natexlab{}.
\newblock \bibinfo{booktitle}{\emph{Procedural content generation in games}}.
\newblock \bibinfo{publisher}{Springer International Publishing}, \bibinfo{address}{Cham}.
\newblock
\urldef\tempurl%
\url{https://doi.org/10.1007/978-3-319-42716-4}
\showDOI{\tempurl}


\bibitem[Short and Adams(2017)]%
        {short2017procedural}
\bibfield{author}{\bibinfo{person}{Tanya Short} {and} \bibinfo{person}{Tarn Adams}.} \bibinfo{year}{2017}\natexlab{}.
\newblock \bibinfo{booktitle}{\emph{Procedural generation in game design}}.
\newblock \bibinfo{publisher}{CRC Press}.
\newblock


\bibitem[Simmons(2024a)]%
        {dusabenchmarkdatasets}
\bibfield{author}{\bibinfo{person}{Robert Simmons}.} \bibinfo{year}{2024}\natexlab{a}.
\newblock \bibinfo{booktitle}{\emph{Dusa benchmarking datasets}}.
\newblock Zenodo.
\newblock
\urldef\tempurl%
\url{https://doi.org/10.5281/zenodo.13930788}
\showDOI{\tempurl}


\bibitem[Simmons(2024b)]%
        {simmons_2024_13921177}
\bibfield{author}{\bibinfo{person}{Robert Simmons}.} \bibinfo{year}{2024}\natexlab{b}.
\newblock \bibinfo{booktitle}{\emph{Dusa implementation, examples, and benchmarking}}.
\newblock Zenodo.
\newblock
\urldef\tempurl%
\url{https://doi.org/10.5281/zenodo.13983457}
\showDOI{\tempurl}


\bibitem[Simmons and Pfenning(2008)]%
        {simmonslla}
\bibfield{author}{\bibinfo{person}{Robert~J. Simmons} {and} \bibinfo{person}{Frank Pfenning}.} \bibinfo{year}{2008}\natexlab{}.
\newblock \showarticletitle{Linear Logical Algorithms}. In \bibinfo{booktitle}{\emph{Automata, Languages and Programming}}, \bibfield{editor}{\bibinfo{person}{Luca Aceto}, \bibinfo{person}{Ivan Damg{\aa}rd}, \bibinfo{person}{Leslie~Ann Goldberg}, \bibinfo{person}{Magn{\'u}s~M. Halld{\'o}rsson}, \bibinfo{person}{Anna Ing{\'o}lfsd{\'o}ttir}, {and} \bibinfo{person}{Igor Walukiewicz}} (Eds.). \bibinfo{publisher}{Springer}, \bibinfo{address}{Berlin, Heidelberg}, \bibinfo{pages}{336--347}.
\newblock
\showISBNx{978-3-540-70583-3}
\urldef\tempurl%
\url{https://doi.org/10.1007/978-3-540-70583-3_28}
\showDOI{\tempurl}


\bibitem[Simons et~al\mbox{.}(2002)]%
        {SIMONS2002181}
\bibfield{author}{\bibinfo{person}{Patrik Simons}, \bibinfo{person}{Ilkka Niemelä}, {and} \bibinfo{person}{Timo Soininen}.} \bibinfo{year}{2002}\natexlab{}.
\newblock \showarticletitle{Extending and implementing the stable model semantics}.
\newblock \bibinfo{journal}{\emph{Artificial Intelligence}} \bibinfo{volume}{138}, \bibinfo{number}{1} (\bibinfo{year}{2002}), \bibinfo{pages}{181--234}.
\newblock
\showISSN{0004-3702}
\urldef\tempurl%
\url{https://doi.org/10.1016/S0004-3702(02)00187-X}
\showDOI{\tempurl}
\newblock
\shownote{Knowledge Representation and Logic Programming}.


\bibitem[Smith and Bryson(2014)]%
        {smith2014logical}
\bibfield{author}{\bibinfo{person}{Anthony~J Smith} {and} \bibinfo{person}{Joanna~J Bryson}.} \bibinfo{year}{2014}\natexlab{}.
\newblock \showarticletitle{A logical approach to building dungeons: Answer set programming for hierarchical procedural content generation in roguelike games}. In \bibinfo{booktitle}{\emph{Proceedings of the 50th Anniversary Convention of the AISB}}.
\newblock


\bibitem[Smith(2011)]%
        {speedrun}
\bibfield{author}{\bibinfo{person}{Adam~M Smith}.} \bibinfo{year}{2011}\natexlab{}.
\newblock \bibinfo{title}{A Map Generation Speedrun with Answer Set Programming}.
\newblock
\newblock
\urldef\tempurl%
\url{https://eis-blog.soe.ucsc.edu/2011/10/map-generation-speedrun/}
\showURL{%
\tempurl}


\bibitem[Smith et~al\mbox{.}(2013)]%
        {smith2013quantifying}
\bibfield{author}{\bibinfo{person}{Adam~M Smith}, \bibinfo{person}{Eric Butler}, {and} \bibinfo{person}{Zoran Popovic}.} \bibinfo{year}{2013}\natexlab{}.
\newblock \showarticletitle{Quantifying over play: Constraining undesirable solutions in puzzle design}. In \bibinfo{booktitle}{\emph{Foundations of Digital Games}}. \bibinfo{pages}{221--228}.
\newblock


\bibitem[Smith and Mateas(2011)]%
        {smith2011answer}
\bibfield{author}{\bibinfo{person}{Adam~M Smith} {and} \bibinfo{person}{Michael Mateas}.} \bibinfo{year}{2011}\natexlab{}.
\newblock \showarticletitle{Answer set programming for procedural content generation: A design space approach}.
\newblock \bibinfo{journal}{\emph{IEEE Transactions on Computational Intelligence and AI in Games}} \bibinfo{volume}{3}, \bibinfo{number}{3} (\bibinfo{year}{2011}), \bibinfo{pages}{187--200}.
\newblock
\urldef\tempurl%
\url{https://doi.org/10.1109/TCIAIG.2011.2158545}
\showDOI{\tempurl}


\bibitem[Smyth(1976)]%
        {smyth1976powerdomains}
\bibfield{author}{\bibinfo{person}{Michael~B Smyth}.} \bibinfo{year}{1976}\natexlab{}.
\newblock \showarticletitle{Powerdomains}. In \bibinfo{booktitle}{\emph{International Symposium on Mathematical Foundations of Computer Science}}. \bibinfo{publisher}{Springer}, \bibinfo{address}{Berlin, Heidelberg}, \bibinfo{pages}{537--543}.
\newblock
\urldef\tempurl%
\url{https://doi.org/10.1007/3-540-07854-1_226}
\showDOI{\tempurl}


\bibitem[Sterling(1995)]%
        {6278236}
\bibfield{author}{\bibinfo{person}{Leon~S. Sterling}.} \bibinfo{year}{1995}\natexlab{}.
\newblock \bibinfo{booktitle}{\emph{A Statistical Learning Method for Logic Programs with Distribution Semantics}}.
\newblock \bibinfo{publisher}{MIT Press}, \bibinfo{pages}{715--729}.
\newblock
\urldef\tempurl%
\url{https://doi.org/10.7551/mitpress/4298.003.0069}
\showDOI{\tempurl}


\bibitem[Summerville et~al\mbox{.}(2018)]%
        {summerville2018gemini}
\bibfield{author}{\bibinfo{person}{Adam Summerville}, \bibinfo{person}{Chris Martens}, \bibinfo{person}{Ben Samuel}, \bibinfo{person}{Joseph Osborn}, \bibinfo{person}{Noah Wardrip-Fruin}, {and} \bibinfo{person}{Michael Mateas}.} \bibinfo{year}{2018}\natexlab{}.
\newblock \showarticletitle{Gemini: Bidirectional generation and analysis of games via {ASP}}.
\newblock \bibinfo{journal}{\emph{Proceedings of the AAAI Conference on Artificial Intelligence and Interactive Digital Entertainment}} \bibinfo{volume}{14}, \bibinfo{number}{1} (\bibinfo{year}{2018}), \bibinfo{pages}{123--129}.
\newblock
\urldef\tempurl%
\url{https://doi.org/10.1609/aiide.v14i1.13013}
\showDOI{\tempurl}


\bibitem[Tarski(1955)]%
        {tarski1955lattice}
\bibfield{author}{\bibinfo{person}{Alfred Tarski}.} \bibinfo{year}{1955}\natexlab{}.
\newblock \showarticletitle{A lattice-theoretical fixpoint theorem and its applications}.
\newblock \bibinfo{journal}{\emph{Pacific J. Math}} \bibinfo{volume}{5}, \bibinfo{number}{2} (\bibinfo{year}{1955}), \bibinfo{pages}{285--309}.
\newblock


\bibitem[Van~Gelder et~al\mbox{.}(1991)]%
        {gelder-ross-schlipf91wellfounded}
\bibfield{author}{\bibinfo{person}{Allen Van~Gelder}, \bibinfo{person}{Kenneth~A. Ross}, {and} \bibinfo{person}{John~S. Schlipf}.} \bibinfo{year}{1991}\natexlab{}.
\newblock \showarticletitle{The Well-Founded Semantics for General Logic Programs}.
\newblock \bibinfo{journal}{\emph{J. ACM}} \bibinfo{volume}{38}, \bibinfo{number}{3} (\bibinfo{date}{jul} \bibinfo{year}{1991}), \bibinfo{pages}{619–649}.
\newblock
\showISSN{0004-5411}
\urldef\tempurl%
\url{https://doi.org/10.1145/116825.116838}
\showDOI{\tempurl}


\bibitem[Weinzierl(2017a)]%
        {alphabenchmark}
\bibfield{author}{\bibinfo{person}{Antonius Weinzierl}.} \bibinfo{year}{2017}\natexlab{a}.
\newblock \bibinfo{title}{Benchmark Results}.
\newblock
\newblock
\urldef\tempurl%
\url{http://www.kr.tuwien.ac.at/research/systems/alpha/benchmarks.html}
\showURL{%
\tempurl}


\bibitem[Weinzierl(2017b)]%
        {alpha}
\bibfield{author}{\bibinfo{person}{Antonius Weinzierl}.} \bibinfo{year}{2017}\natexlab{b}.
\newblock \showarticletitle{Blending Lazy-Grounding and CDNL Search for Answer-Set Solving}. In \bibinfo{booktitle}{\emph{Logic Programming and Nonmonotonic Reasoning}}, \bibfield{editor}{\bibinfo{person}{Marcello Balduccini} {and} \bibinfo{person}{Tomi Janhunen}} (Eds.). \bibinfo{publisher}{Springer International Publishing}, \bibinfo{address}{Cham}, \bibinfo{pages}{191--204}.
\newblock
\showISBNx{978-3-319-61660-5}
\urldef\tempurl%
\url{https://doi.org/10.1007/978-3-319-61660-5_17}
\showDOI{\tempurl}


\bibitem[Zhang et~al\mbox{.}(2023)]%
        {zhang2023better}
\bibfield{author}{\bibinfo{person}{Yihong Zhang}, \bibinfo{person}{Yisu~Remy Wang}, \bibinfo{person}{Oliver Flatt}, \bibinfo{person}{David Cao}, \bibinfo{person}{Philip Zucker}, \bibinfo{person}{Eli Rosenthal}, \bibinfo{person}{Zachary Tatlock}, {and} \bibinfo{person}{Max Willsey}.} \bibinfo{year}{2023}\natexlab{}.
\newblock \showarticletitle{Better together: Unifying datalog and equality saturation}.
\newblock \bibinfo{journal}{\emph{Proceedings of the ACM on Programming Languages}} \bibinfo{volume}{7}, \bibinfo{number}{PLDI} (\bibinfo{year}{2023}), \bibinfo{pages}{468--492}.
\newblock
\urldef\tempurl%
\url{https://doi.org/10.1145/3591239}
\showDOI{\tempurl}


\end{thebibliography}


\clearpage
\appendix
\section{Datalog and Answer Set Programming Connections}
\label{datalog-and-asp-semantics}

This section includes additional details from \reforexpanded{sec-simulating-datalog,sec-simulating-asp}{Sections 2.2 and 2.3} in the main body of the paper.

\subsection{Simulating Datalog}\label{sec-simulating-datalog-details}

A datalog program is a finite collection of rules of this form:
\begin{align}
 \propK{p}{\overline{t}} & \leftarrow \propK{p_1}{\overline{t_1}}, \ldots, \propK{p_n}{\overline{t_n}} \tag{datalog rule}
\end{align}

The model of a datalog program is traditionally understood as the least fixed point of the (datalog) immediate consequence operator, which takes a set $X$ of datalog facts and returns the set of rule conclusions $\propK{p}{\sigma\overline{t}}$ such that all that rule's premises $\propK{p_i}{\sigma\overline{t_i}} \in X$. These facts can be given a total order:

\begin{lemma}\label{lem-get-sequence}
If the model of a datalog program is finite, it is possible to order the elements of that model in a sequence
$\propK{p_1}{\overline{t_1}} \ldots \propK{p_n}{\overline{t_n}}$, such that $\propK{p_i}{\overline{t_i}}$ is always
an immediate
consequence of $\propK{p_j}{t_j}$ for $j < i$.
\end{lemma}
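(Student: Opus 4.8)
The plan is to stratify the model by the stage at which each fact is first derived and then read off the desired sequence stratum by stratum. Write $T$ for the datalog immediate consequence operator and $M$ for the model, so that $M = \lfp T$ by the definition given just above. Since $T$ is monotone and finitary (each rule has finitely many premises) and $\emptyset \subseteq T(\emptyset)$ trivially, the iterates form an increasing chain $\emptyset = T^0(\emptyset) \subseteq T^1(\emptyset) \subseteq \cdots$ whose union is $\lfp T = M$. Because $M$ is assumed finite, this chain stabilizes: there is some $N$ with $M = T^N(\emptyset)$.

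First I would assign to each fact $a \in M$ its \emph{rank} $\rho(a) = \min\{\, n : a \in T^n(\emptyset) \,\}$, which is well defined and at least $1$ because $T^0(\emptyset) = \emptyset$. The crucial observation is that if $\rho(a) = r$, then $a \in T^r(\emptyset) = T(T^{r-1}(\emptyset))$, so some rule instance has conclusion $a$ with all of its premises lying in $T^{r-1}(\emptyset)$; each such premise therefore has rank at most $r-1$, hence strictly smaller than $\rho(a)$.

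Next I would list the finitely many elements of $M$ in any order of nondecreasing rank, breaking ties within a fixed rank arbitrarily, producing a sequence $\propK{p_1}{\overline{t_1}} \ldots \propK{p_n}{\overline{t_n}}$. To verify the required property, consider the fact at position $i$, say of rank $r$. By the crucial observation it is the conclusion of a rule instance all of whose premises have rank strictly less than $r$; since every fact of rank $< r$ precedes every fact of rank $r$ in the ordering, all these premises occur among $\propK{p_1}{\overline{t_1}} \ldots \propK{p_{i-1}}{\overline{t_{i-1}}}$. Hence the fact at position $i$ is an immediate consequence of the facts strictly before it, where for a rank-$1$ fact the witnessing rule simply has an empty body and the claim holds vacuously over the empty prefix.

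I expect the main obstacle to be the bookkeeping that premises \emph{strictly} decrease in rank: this is exactly what forces the premises to fall strictly earlier in the sequence rather than merely weakly earlier, and it rests on pinning down that a rank-$r$ fact is witnessed by a rule whose premises all live at stage $r-1$. The supporting facts—that the chain union is the least fixed point and that finiteness forces stabilization—are standard properties of monotone, finitary operators on the powerset of the Herbrand base, so I would state them and move quickly past them.
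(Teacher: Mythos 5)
Your proof is correct and follows essentially the same route as the paper's: the paper also constructs the sequence by listing facts in order of the iteration of the immediate consequence operator at which they first appear, which is exactly your rank function $\rho$. You simply make explicit the bookkeeping (ranks, stabilization of the chain, strict decrease of premise ranks) that the paper's proof leaves informal.
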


\begin{proof}
    If a datalog model is finite, it can be derived by a finite number of
applications of the (datalog) immediate consequence operator, where 
each application adds a finite number of additional facts. We construct the
sequence by placing all the facts derived with the first application
of the immediate consequence operator first, followed by 
any additional facts derived with the second application of the immediate
consequence operator, and so on.
\end{proof}

We translate datalog rules of the form shown above to the {\casping} rules that look like this: 
\begin{align}
\prop{p}{\overline{t}}{\{ \term{unit} \}} & \leftarrow \prop{p_1}{\overline{t_1}}{\term{unit}}, \ldots, \prop{p_n}{\overline{t_n}}{\term{unit}} \notag
\end{align}
The paper glosses over the statement of correctness for this translation, but for completeness we make it explicit here: 
\begin{theorem}\label{thm-corr-datalog}
Let $P$ be a datalog program with a finite model, and 
let $\langle P \rangle$ be the interpretation of $P$ as a {\casp} as above. Then there is a unique solution $D$ to $\langle P \rangle$,
and the model of $P$ is $\{ \propK{p}{\overline{t}} \mid \prop{p}{\overline{t}}{\term{unit}} \in D \}$.
\end{theorem}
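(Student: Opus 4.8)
The plan is to fix the finite datalog model $M$ of $P$, let $D_M = \{ \prop{p}{\overline{t}}{\term{unit}} \mid \propK{p}{\overline{t}} \in M \}$ be its image under the translation, and show that $D_M$ is the unique solution to $\langle P \rangle$. The central simplifying observation is that every closed rule in $\langle P \rangle$ has the singleton choice $\{\term{unit}\}$ and assigns the single value $\term{unit}$; consequently every union appearing in a fact-set evolution remains consistent, so each evolution $D \setstepA{\langle P \rangle} S$ produces a singleton $S$ and there is no branching. The only remaining source of nondeterminism is which applicable rule fires, and since firing a rule can never cause a conflict, the computation is essentially confluent. This removes precisely the subtlety that makes the general fact-set semantics delicate.

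For existence, I would first note that $D_M$ is a consistent database (all values are $\term{unit}$) and check that it is saturated in the sense of \Cref{def-set-saturation}: if a substitution $\sigma$ satisfies a translated rule's premises $\prop{p_i}{\sigma\overline{t_i}}{\term{unit}}$ in $D_M$, then each $\propK{p_i}{\sigma\overline{t_i}} \in M$, so closure of $M$ under the datalog immediate consequence operator gives $\propK{p}{\sigma\overline{t}} \in M$, whence $\prop{p}{\sigma\overline{t}}{\term{unit}} \in D_M$ and the evolution adds nothing. To exhibit a step sequence $\emptyset \ldots D_M$, I would invoke \Cref{lem-get-sequence} to order $M$ as $\propK{p_1}{\overline{t_1}} \ldots \propK{p_n}{\overline{t_n}}$ with each fact an immediate consequence of its predecessors, then build the chain $\emptyset, D_1, \ldots, D_n = D_M$ where $D_i$ collects the first $i$ translated facts. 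Each step $D_{i-1} \to D_i$ is witnessed by the rule and substitution justifying $\propK{p_i}{\overline{t_i}}$, whose premises are already present in $D_{i-1}$. Finiteness of $M$ is exactly what keeps this step sequence finite, as \Cref{def-set-step-sequence} requires.

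For uniqueness together with the correspondence, I would take an arbitrary solution $D$ with step sequence $\emptyset = D_0, \ldots, D_k = D$ and argue $D = D_M$ in two halves. First, by induction along the step sequence, every fact in $D$ has value $\term{unit}$ (there are no open rules, and all closed conclusions assign $\term{unit}$), so $D = \{ \prop{p}{\overline{t}}{\term{unit}} \mid \propK{p}{\overline{t}} \in N\}$ for some datalog fact set $N$. The inclusion $N \subseteq M$ follows from the same induction: if $D_{i-1}$ corresponds to a subset of $M$, any newly added fact is justified by a rule whose datalog premises lie in $M$, so closure of $M$ places the conclusion in $M$ as well. For the reverse inclusion, I would again use the ordering from \Cref{lem-get-sequence} and take the first $\propK{p_i}{\overline{t_i}} \in M \setminus N$; its datalog premises all precede it and therefore lie in $N$, so the translated rule's premises are satisfied in $D$, forcing a nontrivial evolution $D \setstepA{\langle P \rangle} \{D \cup \{\prop{p_i}{\overline{t_i}}{\term{unit}}\}\} \neq \{D\}$ and contradicting saturation.

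The step I expect to be the main obstacle is getting the bookkeeping of these two inductions exactly right, particularly the reverse inclusion $M \subseteq N$: it is the only place that genuinely uses saturation rather than mere reachability, and it relies on the ordering from \Cref{lem-get-sequence} being compatible with the way rules fire under the fact-set semantics. Everything else reduces to the ``consistency is automatic'' observation, which collapses the two-level nondeterminism of the step semantics down to a confluent deterministic computation.
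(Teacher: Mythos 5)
Your proposal is correct and follows essentially the same route as the paper's proof: both use \Cref{lem-get-sequence} to build the step sequence $\emptyset \ldots D_M$, both establish saturation from closure of the datalog model under immediate consequence, and both prove uniqueness by showing every solution's facts have value $\term{unit}$ and then arguing the two inclusions (one by induction along the step sequence plus model closure, the other by contradicting saturation via a first missing element). Your explicit observation that all evolutions are singletons (hence no branching) and your ``first element of $M \setminus N$'' bookkeeping are just slightly more detailed renderings of steps the paper leaves terse.
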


\begin{proof}
    By \Cref{lem-get-sequence} we can get some sequence
$X_\mathsf{seq} = \propK{p_1}{\overline{t_1}} \ldots \propK{p_n}{\overline{t_n}}$
containing exactly the elements in the
model of $P$, where $\propK{p_i}{\overline{t_i}}$ is always
an immediate
consequence of the set of facts that precede it in the sequence. By induction on $k$ we can derive a step sequence from $\emptyset$ to $\{ \prop{p_i}{\overline{t_i}}{\term{unit}} \mid  1 \leq i \leq k \wedge \propK{p_i}{\overline{t_i}} \in X_\mathsf{seq} \}$ for any $1 \leq k \leq n$. We will let $D = \{ \prop{p_i}{\overline{t_i}}{\term{unit}} \mid \propK{p_i}{\overline{t_i}} \in X_\mathsf{seq} \}$. There is a step sequence $\emptyset\ldots D$, so if $D$ is saturated, it is a solution.
Assume some additional fact $\prop{p}{\overline{t}}{ \term{unit}}$
can be derived in the translated program, a corresponding additional fact $\propK{p}{\overline{t}}$ can be derived in the model of $P$ that isn't present in $X_\mathsf{seq}$, a contradiction.

It remains to be shown that $D$ is the unique solution. Consider an arbitrary
solution $D'$. First, by induction on the step sequence that reaches $D'$, if 
$\prop{p}{\overline{t}}{v} \in D'$ it must be the case that $v = \term{unit}$.
If $D'$ was missing any elements in $D$ then it wouldn't be a solution, and if 
$D$ was missing any elements in $D'$ we could use that to show that $X_\mathsf{seq}$
was missing some element in $P$'s unique model, so $D = D'$.
\end{proof}

\subsection{Simulating Answer Set Programming}\label{sec-simulating-asp-details}

Answer set programs are defined in terms of the \textit{stable model semantics} \cite{gel88}. A rule in answer set programming has both non-negated (positive) premises,
which we'll write as $\pred{p_i}$, and negated (negative) premises, which 
we'll write as $\pred{q_i}$.
\begin{align}
\pred{p} \leftarrow \pred{p_1}, \ldots, \pred{p_n}, \neg \pred{q_{1}}, \ldots, \neg \pred{q_m} \tag{ASP rule}
\end{align}
Let $X$ denote a finite set of ground predicates. When modeling ASP, any 
predicate in the set is treated as true, and any predicate not in the set is false.
To explain what a {\em stable} model is, the ASP literature~\cite{gel88} 
first defines $P^X$, the \textit{reduct} of a program $P$ over $X$ obtained by:
\begin{enumerate}
    \item removing any rules where a negated premise appears in $X$
    \item removing all the negated premises from rules that weren't removed in step 1
\end{enumerate}
The reduct $P^X$ is always a regular datalog program with a unique finite model. If the model 
of $P^X$ is exactly $X$, then $X$ is a stable model for the ASP program $P$.

The translation of a single ASP rule with $m$ negated premises of the form above produces $m+1$ rules in the resulting {\casp}. These introduced
rules use two new program-wide introduced constants, $\term{tt}$ and $\term{ff}$, which
represent the answer set program's assignment of truth or falsehood, respectively, to the relevant 
attribute:
\begin{align}
\propishV{q_{1}}{\term{ff}} & \leftarrow \propV{p_1}{\term{tt}}, \ldots, \propV{p_n}{\term{tt}} \notag \\
\propishV{q_{2}}{\term{ff}} & \leftarrow \propV{p_1}{\term{tt}}, \ldots, \propV{p_n}{\term{tt}}, \propV{q_{1}}{\term{ff}} \notag \\
& \ldots \notag \\
\propishV{q_m}{\term{ff}} & \leftarrow \propV{p_1}{\term{tt}}, \ldots, \propV{p_n}{\term{tt}}, \propV{q_{1}}{\term{ff}}, \ldots, \propV{q_{m-1}}{\term{ff}} \notag\\
\propV{p}{\{ \term{tt} \}} & \leftarrow \propV{p_1}{\term{tt}}, \ldots, \propV{p_n}{\term{tt}}, \propV{q_{1}}{\term{ff}}, \ldots, \propV{q_{m-1}}{\term{ff}}, \propV{q_m}{\term{ff}} \notag
\end{align}
It is this translation that we will use to prove \reforexpanded{thm-sound-complete-asp}{Theorem 2.12}, but it is worth noting that nothing in the proof \textit{requires} the introduced open rules to have all the premises shown above. The proof works equally well if arbitrary premises are removed from the open rules, so this translation should be seen as one end of a spectrum of possible translations, the version that is the most ``hesitant'' to derive facts of the form $\propV{q}{\term{ff}}$. At the other end of the spectrum is the translation that just introduces a rule $(\propishV{q_i}{\term{ff}} \leftarrow)$ for every negated premise in the source program. 

We will prove the two directions of \reforexpanded{thm-sound-complete-asp}{Theorem 2.12} separately as \Cref{thm-sound-asp} and \Cref{thm-complete-asp}

\begin{proposition}\label{thm-sound-asp}
Let $P$ be an ASP program, and let $\langle P\rangle$ be the
interpretation of $P$ as a {\casp} as defined above. 
For all stable models $X$ of $P$, there is a solution $D$ to $\langle P \rangle$ such that $X = \{ \pred{p} \mid \propV{p}{\term{tt}} \in D \}$.
\end{proposition}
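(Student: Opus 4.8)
The plan is to build the solution $D$ explicitly from the stable model $X$ and then verify the three conditions a solution must meet: consistency, saturation (\Cref{def-set-saturation}), and reachability from $\emptyset$ by a step sequence (\Cref{def-set-solutions}). Since the program is ground, substitutions play no role and I can work directly with ground facts. The $\term{tt}$ facts are forced: I take $\propV{p}{\term{tt}} \in D$ exactly when $\pred{p} \in X$, which immediately yields the required equation $X = \{ \pred{p} \mid \propV{p}{\term{tt}} \in D \}$. The delicate choice is which $\propV{q}{\term{ff}}$ facts to include, and here I introduce an auxiliary set $N$ defined as the least set of ground predicates closed under the following cascade, which mirrors the firing of the translated open rules: $\pred{q_i} \in N$ whenever some ASP rule $\pred{p} \leftarrow \pred{p_1}, \ldots, \pred{p_n}, \neg\pred{q_1}, \ldots, \neg\pred{q_m}$ of $P$ has all positive premises $\pred{p_1}, \ldots, \pred{p_n}$ in $X$, all earlier negated premises $\pred{q_1}, \ldots, \pred{q_{i-1}}$ in $N$, and $\pred{q_i} \notin X$. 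I then set $D = \{ \propV{p}{\term{tt}} \mid \pred{p} \in X \} \cup \{ \propV{q}{\term{ff}} \mid \pred{q} \in N \}$; the side condition $\pred{q_i} \notin X$ guarantees $N \cap X = \emptyset$, so $D$ assigns at most one value per attribute and is a consistent database.

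For saturation I examine each translated rule whose premises hold in $D$. For a closed rule $\propV{p}{\{ \term{tt} \}} \leftarrow \propV{p_1}{\term{tt}}, \ldots, \propV{q_m}{\term{ff}}$, all positive premises then lie in $X$ and all negated premises lie in $N$ and hence outside $X$; thus the source rule survives the reduct $P^X$ and becomes a datalog rule with all premises in $X$. Because $X$ is the model of $P^X$ and so is closed under its rules, $\pred{p} \in X$ and $\propV{p}{\term{tt}}$ is already present, giving the trivial evolution. For an open rule $\propishV{q_i}{\term{ff}} \leftarrow \ldots$ whose premises hold in $D$, the positive premises lie in $X$ and the prior negated premises lie in $N$; then either $\pred{q_i} \in X$, so $\propV{q_i}{\term{tt}} \in D$ makes $D \cup \{ \propV{q_i}{\term{ff}} \}$ inconsistent, or $\pred{q_i} \notin X$, so the defining closure of $N$ forces $\pred{q_i} \in N$ and $\propV{q_i}{\term{ff}} \in D$. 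In both cases the only evolution is $\{ D \}$, so $D$ is saturated.

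Reachability is the main obstacle, as it requires interleaving the derivation of the $\term{tt}$ facts with that of the $\term{ff}$ facts into one valid step sequence. My plan is to assign each fact of $D$ a justifying rule all of whose premises are again facts of $D$, forming a justification graph, and then argue it is acyclic. Each $\propV{q_i}{\term{ff}}$ (with $\pred{q_i} \in N$) is justified by an open rule whose premises are positive-premise $\term{tt}$ facts (in $X$) and earlier negated-premise $\term{ff}$ facts of the same source rule (in $N$); each $\propV{p}{\term{tt}}$ (with $\pred{p} \in X$) is justified by the closed rule coming from a reduct rule that derives $\pred{p}$, whose premises are that rule's positive-premise $\term{tt}$ facts and negated-premise $\term{ff}$ facts. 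Within any single source rule the positive premises precede its negated-premise $\term{ff}$ facts, which precede its conclusion, and across rules I invoke \Cref{lem-get-sequence} on the reduct $P^X$ (a datalog program with finite model $X$) to order the $\term{tt}$ facts so each is a reduct-consequence of strictly earlier ones, with the $\term{ff}$ facts inheriting a compatible stratification from the least-fixed-point construction of $N$; together these rule out cycles. Taking any linear extension $\phi_1, \ldots, \phi_k$ of the graph, the sequence $\emptyset, \{ \phi_1 \}, \{ \phi_1, \phi_2 \}, \ldots, D$ is a step sequence, because every prefix is a subset of the consistent set $D$ and each step adds $\phi_{j+1}$ by its justifying rule --- forced for a closed rule, or chosen as the non-trivial element of a two-element evolution for an open rule. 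This exhibits $\emptyset \ldots D$ and completes the verification that $D$ is a solution with $X = \{ \pred{p} \mid \propV{p}{\term{tt}} \in D \}$.
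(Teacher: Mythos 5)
Your definition of $D$ is the right one---your cascade set $N$ is exactly the set of $\term{ff}$ facts that the paper's proof ends up accumulating---and your consistency and saturation arguments are sound; the case split on whether $\pred{q_i} \in X$ is, if anything, cleaner than the paper's three-case analysis. The genuine gap is in reachability, specifically the claim that the reduct ordering of the $\term{tt}$ facts (\Cref{lem-get-sequence}) together with the fixed-point stages of $N$ ``together rule out cycles.'' They do not. The stages of $N$ only stratify $\term{ff}$ facts relative to one another, because your cascade checks positive premises statically against $X$, with no record of where those premises sit in the reduct ordering; symmetrically, the reduct ordering says nothing about $\term{ff}$ facts. So the cross edges of your justification graph---a $\term{tt}$ fact depending on negated-premise $\term{ff}$ facts, and an $\term{ff}$ fact depending on positive-premise $\term{tt}$ facts---are completely unconstrained by the two orderings, and they can close a cycle.

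Concretely, take $P$ to be the three rules $(\pred{a} \leftarrow \neg\pred{u})$, $(\pred{b} \leftarrow \pred{a}, \neg\pred{v})$, $(\pred{c} \leftarrow \pred{b}, \neg\pred{u})$, with stable model $X = \{\pred{a},\pred{b},\pred{c}\}$ and $N = \{\pred{u},\pred{v}\}$. Both ways for $\pred{u}$ to enter $N$ occur at stage $1$ of your cascade: via the first rule's open rule $(\propishV{u}{\term{ff}} \leftarrow)$, or via the third rule's open rule $(\propishV{u}{\term{ff}} \leftarrow \propV{b}{\term{tt}})$. Nothing in your stratification forbids justifying $\propV{u}{\term{ff}}$ by the latter, and then the graph contains the cycle $\propV{a}{\term{tt}} \to \propV{u}{\term{ff}} \to \propV{b}{\term{tt}} \to \propV{a}{\term{tt}}$, so no linear extension exists. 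The repair is to coordinate the choices rather than take the two orderings independently: when $\pred{p_j}$ is derived in the reduct by source rule $R$, justify the $\term{ff}$ facts for $R$'s negated premises using $R$'s \emph{own} open rules (whose premises are exactly $R$'s positive premises, already placed earlier in the reduct order, plus the earlier negated premises of $R$ itself, placed immediately before), and only then fire $R$'s closed rule; any leftover elements of $N$ can be appended afterwards in stage order, since by then every $\term{tt}$ fact is present. This interleaved induction is precisely how the paper's proof constructs the step sequence directly, which is why it never needs an acyclicity argument at all.
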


\begin{proof}
    Let $X$ be a stable model of $P$. The reduct $P^X$ is just
a datalog program, so by \Cref{lem-get-sequence} we can get some sequence 
$X_\mathsf{seq} = \pred{p_1} \ldots \pred{p_k}$ 
containing exactly the elements of $X$ 
such that each $\pred{p_i}$ is the immediate consequence 
of the preceding facts according to some rule in the reduct $P^X$.

We will construct our solution $D$ in two steps. In the first step, we establish by induction on $j$ that, for any prefix $\pred{p_1} \ldots \pred{p_j}$ of $X_\mathsf{seq}$, there 
is a database $D_j$ and step sequence $\emptyset\ldots D_j$ under such that $\langle P \rangle$ such that:
\begin{itemize}
    \item $\pred{p'} \is \term{tt} \in D_j$ if and only if
$\pred{p'} \in \pred{p_1}, \ldots, \pred{p_j}$.
    \item $\pred{p'} \is \term{ff} \in D_j$ only if $\pred{p'} \not\in X$.
\end{itemize}
Applying this inductive construction to the entire sequence $X_\mathsf{seq}$, we get a 
$D_k$ with $\pred{p'} \is \term{tt} \in D$ if and only if $\pred{p'} \in X$.
In the second step, we add additional facts of the form $\propV{q'}{\term{ff}}$
to $D_k$ until we have a solution $D$ to $\langle P \rangle$
where $\propV{p'}{\term{tt}} \in D$ if and only if $\pred{p'} \in X$.

\paragraph{First step} The base
case of our induction is immediate for $D_0 = \emptyset$. In the inductive case,
we're given a step sequence $\emptyset\ldots D_{j-1}$ under $\langle P \rangle$
where $D_{j-1}$ contains
$\propV{p'}{\term{tt}}$ for exactly the $p' \in p_1\ldots p_{j-1}$. We need to construct $D_j$ by adding one fact $\propV{p_j}{\term{tt}}$ to $D_{j-1}$. We can also add any facts of the form $\propV{p'}{\term{ff}}$ so long as $\pred{p'} \not\in X$.

When we derive $\pred{p_j}$ in the reduct, 
we know that this was due to a rule in the reduct whose positive premises all appear in $X_\mathsf{seq}$ prior to $\pred{p}$.
This rule in the reduct $P^X$ was itself derived from a rule
$\pred{p_j} \leftarrow \pred{p_1}, \ldots, \pred{p_n}, \neg \pred{q_{1}}, \ldots, \neg \pred{q_{m}}$
in the original program $P$. Having found this rule,
we let $D_j = D_{j-1} \cup \{ \propV{p}{\term{tt}}, \propV{q_1}{\term{ff}}, \ldots, \propV{q_m}{\term{ff}} \}$.
This 
immediately satisfies the first condition that $\pred{p'} \is \term{tt} \in D$ if and only if
$\pred{p'} \in \pred{p_1}, \ldots, \pred{p}$. Because the reduct survived
into $P^X$, we know none of the $q_i$ are in $X$, so the third condition that
$\pred{p'} \is \term{ff} \in D$ only if $\pred{p'} \not\in X$ is also satisfied.

We need only to construct a step sequence from $D_{j-1}$ to $D_j$. We will first take $m$ steps, one for every negative premise, using the $m$ open rules in the translation to add $\propV{q_i}{\term{ff}}$ for $i$ in order from 1 to $m$.
Having added all the relevant negative premises, we have satisfied all the premises of the one translated closed rule:
\begin{align}
\propV{p_j}{\{ \term{tt} \}} & \leftarrow \propV{p_1}{\term{tt}}, \ldots, \propV{p_n}{ \term{tt}},  \propV{q_{1}}{\term{ff}}, \ldots, \propV{q_m}{\term{ff}} \notag
\end{align}
Therefore, we can take a last step with this rule to add we can add $\propV{p_j}{\term{tt}}$. 
\paragraph{Second step}
The database $D_k$ we obtained in the first step might not be a solution,
because it might not be saturated. In 
particular, there may be open rules
that allow us to derive additional non-contradictory facts $\propV{p'}{\term{ff}}$.
This would 
be the case, for example, if our original program looked something like this:
\begin{align}
\pred{c} & \leftarrow \notag \\
\pred{b} & \leftarrow \pred{c} \notag \\
\pred{a} & \leftarrow \pred{b}, \neg \pred{d}, \neg \pred{c} \notag
\end{align}
In this case, $D_k$ would only contain $\propV{c}{\term{tt}}$
and $\propV{b}{\term{tt}}$. In order to get a solution database, 
$\propV{d}{\term{ff}}$ would need to be added.

First, we'll establish that no additional facts of the form $\propV{p}{\term{tt}}$ can be
added to $D_k$ or any consistent extension of $D_k$ that only adds facts of the form $\propV{q}{\term{ff}}$. 
We consider every rule 
\begin{align}
\propV{p}{\{ \term{tt} \}} & \leftarrow \propV{p_1}{\term{tt}}, \ldots, \propV{p_n}{ \term{tt}},  \propV{q_{1}}{\term{ff}}, \ldots, \propV{q_m}{\term{ff}} \notag
\end{align}
in the translated program.
There are three cases to consider:
\begin{enumerate}
\item If $\propV{p}{\term{tt}} \in D_k$, then the rule can't add any new facts.
\item If $\propV{p}{\term{tt}} \not\in D_k$ and the original ASP rule was
\textit{not} eliminated from the reduct $P^X$, then $\pred{p} \leftarrow \pred{p_1},\ldots, \pred{p_n}$ 
appears in the reduct. Because $\pred{p}$ is not in the model of the reduct,
there must be some premise $\pred{p_i} \not\in X$, meaning $\propV{p_i}{\term{tt}} \not\in D_k$, so the
rule doesn't apply and, furthermore, adding additional consistent facts of the form $\propV{q}{\term{tt}}$ will
not change that.
\item If $\propV{p}{\term{tt}} \not\in D_k$ and the original ASP rule \textit{was}
eliminated from the reduct $P^X$, then that rule has a premise $\propV{q_i}{\term{ff}}$ that
cannot possibly be satisfied in a consistent extension of $D_k$, because $\propV{q_i}{\term{tt}} \in D_k$. Therefore, the
rule doesn't apply and, furthermore, adding additional consistent facts of the form $\propV{q}{\term{ff}}$ will
not change that.
\end{enumerate}

Because no new facts of the form $\propV{p}{\term{tt}}$ can be derived, and every rule
that might introduce a $\propV{p}{\term{ff}}$ fact is an open rule, we can iteratively 
apply rules that derive new consistent $\propV{p}{\term{ff}}$ facts to $D_k$ until no additional
such facts can be consistently added. The result is a consistent, saturated database: a solution $D$
where $\propV{p}{\term{tt}} \in D$ if and only if $\pred{p} \in X$.
\end{proof}

\begin{proposition}\label{thm-complete-asp}
Let $P$ be an ASP program, and let $\langle P\rangle$ be the
interpretation of $P$ as a {\casp} as defined above. 
For all solutions $D$ of $\langle P\rangle$, the set $\{ \pred{p} \mid \propV{p}{\term{tt}} \in D \}$ is a stable model
of $P$.
\end{proposition}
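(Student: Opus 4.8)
The plan is to show that $X = \{ \pred{p} \mid \propV{p}{\term{tt}} \in D \}$ coincides with the least model $M$ of the datalog reduct $P^X$, which is exactly the condition that $X$ be a stable model of $P$. First I would record a structural invariant: since every rule head in $\langle P \rangle$ assigns either $\term{tt}$ (closed rules) or $\term{ff}$ (open rules), a straightforward induction on the step sequence $\emptyset \ldots D$ shows that every fact of $D$ has the form $\propV{p}{\term{tt}}$ or $\propV{q}{\term{ff}}$. In particular, by consistency of $D$, no predicate can be assigned both values, so whenever $\pred{p} \notin X$ and a value for $\pred{p}$ is forced, that value must be $\term{ff}$. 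I would then prove $M = X$ by two inclusions.

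For $M \subseteq X$, I would show $X$ is closed under the rules of the reduct. Take a reduct rule $\pred{p} \leftarrow \pred{p_1}, \ldots, \pred{p_n}$ arising from an ASP rule with negated premises $\pred{q_1}, \ldots, \pred{q_m}$, all absent from $X$ (this is why the rule survived reduction), and assume every $\pred{p_i} \in X$, i.e. $\propV{p_i}{\term{tt}} \in D$. By induction on $i$ I would use saturation (\Cref{def-set-saturation}) of the open rule for $q_i$, whose premises $\propV{p_1}{\term{tt}}, \ldots, \propV{p_n}{\term{tt}}, \propV{q_1}{\term{ff}}, \ldots, \propV{q_{i-1}}{\term{ff}}$ are all in $D$: since $D$ can only step to itself, either $\propV{q_i}{\term{ff}} \in D$ or adding it is inconsistent, but the latter would mean $\propV{q_i}{\term{tt}} \in D$, i.e. $\pred{q_i} \in X$, contradicting survival of the rule. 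Hence $\propV{q_i}{\term{ff}} \in D$ for all $i$, so the translated closed rule for $p$ has all its premises satisfied. Saturation again forces $\propV{p}{\term{tt}} \in D$: if instead $\propV{p}{\term{tt}} \notin D$, then because $D$ is a solution it cannot step to $\emptyset$, so $D \cup \{\propV{p}{\term{tt}}\}$ is consistent and the closed rule would step $D$ to the distinct database $D \cup \{ \propV{p}{\term{tt}} \}$, violating saturation. Thus $\pred{p} \in X$.

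For $X \subseteq M$, I would induct along the step sequence $\emptyset \ldots D$. Every fact $\propV{p}{\term{tt}}$ is introduced by some translated closed rule, whose premises $\propV{p_i}{\term{tt}}$ and $\propV{q_j}{\term{ff}}$ were present in the database at that step and therefore persist into $D$. By consistency of $D$ and the invariant, $\propV{q_j}{\term{tt}} \notin D$, so no $\pred{q_j}$ lies in $X$ and the originating ASP rule survives the reduct as $\pred{p} \leftarrow \pred{p_1}, \ldots, \pred{p_n}$. The positive premises $\pred{p_i}$ were derived earlier in the sequence, so by the induction hypothesis $\pred{p_i} \in M$; applying the surviving reduct rule yields $\pred{p} \in M$. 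Combining the two inclusions gives $X = M$, so $X$ is a stable model of $P$.

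The main obstacle I anticipate is the $M \subseteq X$ direction, and specifically extracting positive information from saturation. The open rules only \emph{permit} a value rather than forcing it, so "$D$ is saturated" does not directly say $\propV{q_i}{\term{ff}} \in D$; one must observe that the only alternative permitted by \Cref{def-set-evolution} is an inconsistency ($\propV{q_i}{\term{tt}} \in D$) that is excluded by $\pred{q_i} \notin X$. Chaining this through the $m$ open rules in the prescribed order, and then handling the closed-rule case where a step to $\emptyset$ would already contradict the solution property, is the delicate heart of the argument; the rest is bookkeeping over the step sequence.
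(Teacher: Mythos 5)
Your proof is correct, and its heart coincides with the paper's: both rely on the same saturation argument, chaining through the $m$ open rules in their prescribed order to force $\propV{q_i}{\term{ff}} \in D$ whenever $\pred{q_i} \notin X$, and then using the closed rule plus saturation (\Cref{def-set-saturation}) to force $\propV{p}{\term{tt}} \in D$; your $X \subseteq M$ direction (induction along the step sequence, using consistency of $D$ to show the originating ASP rule survives the reduct) is essentially verbatim the paper's second step. The one structural difference is in $M \subseteq X$: the paper invokes \Cref{lem-get-sequence} to linearize $M$ into a derivation-ordered sequence and inducts over prefixes of that sequence, whereas you show that $X$ is closed under the reduct's rules and then appeal to minimality of the least model of a datalog program. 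Your version is marginally cleaner---it avoids the linearization lemma in this direction, and as you implicitly noticed, the saturation argument never needs the premises $\pred{p_i}$ to lie in $M$, only that $\propV{p_i}{\term{tt}} \in D$---at the cost of invoking the (standard) fact that the least model of a datalog program is contained in every set closed under its rules. One further point in your favor: you state explicitly the invariant that every fact of $D$ has value $\term{tt}$ or $\term{ff}$, which the paper uses only tacitly when it concludes that otherwise ``$D$ would not be saturated''; without that invariant, a third value assigned to $\pred{q_i}$ would block the open rule by inconsistency without putting $\pred{q_i}$ into $X$, and the contradiction would not go through, so making it explicit is a genuine tightening rather than mere bookkeeping.
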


\begin{proof}
Let $D$ be a solution to $\langle P \rangle$, and let $X$ be $\{ \pred{p} \mid \propV{p}{\term{tt}} \in D \}$. We can compute the reduct $P^X$, and that reduct $P^X$ has a unique (and finite) model $M$.
If $X = M$, then $X$ is a stable model of $P$.

\paragraph{First step: $M \subseteq X$}
$M$ is the model of $P^X$, and so by \Cref{lem-get-sequence} we can
get some $M_\mathsf{seq}$ containing exactly the elements of $M$, and where each element
in the sequence is the immediate consequence of the previous elements. We'll prove by induction
that for every prefix of $M_\mathsf{seq}$, the elements of that sequence are a subset of
$X = \{ \pred{p} \mid \propV{p}{\mathsf{tt}} \in D \}$. The base case is immediate,
because $\emptyset \subseteq X$.

When we want to extend our prefix with a new fact $\pred{p}$, we have that 
$\pred{p}$ was derived 
by a rule of the form  $\pred{p} \leftarrow \pred{p_1}, \ldots, \pred{p_n}$
in the reduct $P^X$,
which itself comes from a rule
$\pred{p} \leftarrow \pred{p_1}, \ldots, \pred{p_n}, \neg \pred{q_{1}}, \ldots, \neg \pred{q_m}$ in the program $P$.
By the induction hypothesis, $\propV{p_i}{\term{tt}} \in D$ for $1 \leq i \leq n$.

By virtue of the fact that this rule was in the reduct $P^X$, it must \textit{not} 
be the case that $\propV{q_i}{\term{tt}} \in D$ for $1 \leq i \leq m$. 
There are $m$ open rules of the form
\begin{align}
\propishV{q_{j}}{\term{ff}} & \leftarrow \propV{p_1}{\term{tt}}, \ldots, \propV{p_n}{\term{tt}},
\propV{q_{1}}{\term{ff}}, \ldots, \propV{q_{j-1}}{\term{ff}}\notag
\end{align}
for $1 \leq j \leq m$, and we can show by induction on $j$
that the presence of this rule, plus the fact that
$\propV{q_j}{\term{tt}} \not\in D$, that it must be the case that 
$\propV{q_j}{\term{ff}} \in D$, because otherwise $D$ would not be
saturated. 

This means that all the premises
of the closed rule deriving $\propV{p}{\term{tt}}$ are satisfied, so by the
 saturation of $D$ we must have $\propV{p}{\term{tt}} \in D$ and therefore $\pred{p} \in X$. 

\paragraph{Second step: $X \subseteq M$}
For every $\propV{p}{\term{tt}} \in D$, we must show that $\pred{p}$ is in the model of $P^X$.
We'll prove by induction on $j$ that, if we have a step sequence $\emptyset\ldots D_j$ and $D_j \subseteq D$, then $\{ \pred{p} \mid \propV{p}{\term{tt}} \in D_j \}$ is a subset 
of the model of $P^X$.

In the base case, we have null step sequence from $\emptyset$ to itself. It's necessarily the case that $\emptyset$ is a subset of the model of $P^X$.

When we extend the derivation by deriving a new fact
of the form $\propV{p}{\term{ff}}$,
we can directly use the induction hypothesis, because the set 
$\{ \pred{p} \mid \propV{p}{\term{tt}} \in D_j \}$ is the same as the set $\{ \pred{p} \mid \propV{p}{\term{tt}} \in D_{j-1} \}$.

When we extend the derivation with the rule
\begin{align}
\propV{p}{\{ \term{tt} \}} & \leftarrow \propV{p_1}{\term{tt}}, \ldots, \propV{p_n}{ \term{tt}},  \propV{q_{1}}{\term{ff}}, \ldots, \propV{p_m}{\term{ff}} \notag
\end{align}
then we need to show that $\pred{p}$ is in the model of $P^X$.
We know by the definition of the translation 
that a corresponding rule 
$\pred{p} \leftarrow \pred{p_1}, \ldots, \pred{p_n}, \neg \pred{q_{1}}, \ldots, \neg \pred{q_{m}}$ must appear in $P$. 

For $1 \leq i \leq m$, we have $\propV{q_i}{\term{ff}} \in D_{j-1}$ by the fact that the rule was applied, so its premises must have been satisfied.
Because $D_{j-1} \subseteq D$ we must have $\propV{q_i}{\term{ff}} \in D$, and because
$D$ is consistent, $\propV{q_i}{\term{tt}} \not\in D$, so $\pred{q_i} \not\in X$ for all $1 \leq i \leq m$. That means the rule 
$\pred{p} \leftarrow \pred{p_1}, \ldots, \pred{p_n}$ appears in the reduct $P^X$.

For $1 \leq i \leq n$, we have $\propV{p_i}{\term{tt}} \in D_{j-1}$ by the fact that the rule was applied, so its premises must have been satisfied. By 
the induction hypothesis these $\pred{p_i}$ must all be in the model of $P^X$. 

If the premises of a rule are in the model then the conclusion must be in the model, so $\pred{p}$ is in the
model of $P^X$.
\end{proof}

\section{Least Upper Bounds for Choice Sets}
\label{sec-choice-has-lubs}

The following theorems establish that the pointed partial order $\Choice$ is a complete lattice with the least upper bound defined in \reforexpanded{def-choice-lub}{Definition 4.11} as follows:
   Take any $\{ \C_i : i \in I \} \subseteq \Choice$, and let $\mathcal{F}$ be the set of functions 
   $f : I \rightarrow \DB$
   such that $f(i) \in \C_i$. (Equivalently, $\mathcal{F}$ is the direct product $\prod_{i\in I} C_i$.)
   We define
   $\bigvee \{\C_i : i \in I\}$
   as $\{ \bigvee \mathcal{D} : f \in \mathcal{F},\, \mathcal{D} = \{f(i) : i \in I\}, {\compatible} \mathcal{D}  \} $.
   
We must prove that this function that takes sets of choice
sets always produces a choice set, and that the result is actually a least
upper bound.

\begin{lemma}\label{thm-choice-lub-is-choice}
   $\bigvee_i \C_i$ is an element of $\Choice$, i.e.\ is pairwise incompatible.
\end{lemma}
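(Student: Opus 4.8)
The plan is to unfold the definition of pairwise incompatibility directly. Suppose $E_1, E_2 \in \bigvee_{i} \C_i$ with $E_1 \compatible E_2$; the goal is to show $E_1 = E_2$. By \Cref{def-choice-lub}, each of these elements arises from a choice function: there exist $f, g \in \prod_{i \in I} \C_i$ with $\compatible \mathrm{Im}(f)$ and $\compatible \mathrm{Im}(g)$ such that $E_1 = \bigvee \mathrm{Im}(f)$ and $E_2 = \bigvee \mathrm{Im}(g)$. I would aim to show that $f$ and $g$ agree pointwise, whence $\mathrm{Im}(f) = \mathrm{Im}(g)$ and therefore $E_1 = E_2$.

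The core step is a transitivity-of-compatibility argument carried out coordinate by coordinate. Fix any $i \in I$. Since $E_1$ is the least upper bound of $\mathrm{Im}(f)$, it is in particular an upper bound, so $f(i) \le E_1$; symmetrically $g(i) \le E_2$. From $E_1 \compatible E_2$ together with these two inequalities, the anti-monotonicity of compatibility (\Cref{lemma-incompatibility-monotone}) yields $f(i) \compatible g(i)$. But $f(i)$ and $g(i)$ both lie in $\C_i$, which is pairwise incompatible by \Cref{def-choice}, so compatibility forces $f(i) = g(i)$.

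Since $i$ was arbitrary, $f = g$ as functions $I \to \DB$, hence $\mathrm{Im}(f) = \mathrm{Im}(g)$ and so $E_1 = \bigvee \mathrm{Im}(f) = \bigvee \mathrm{Im}(g) = E_2$, as required. I do not expect a serious obstacle: the only point requiring care is that the single common upper bound witnessing $E_1 \compatible E_2$ is exactly what lets us compare the separately-chosen databases $f(i)$ and $g(i)$ across the two choice functions. This comparison is precisely the content packaged into \Cref{lemma-incompatibility-monotone}, so invoking that lemma keeps the argument short, and the pairwise incompatibility of each $\C_i$ then does the remaining work of collapsing the two choice functions into one.
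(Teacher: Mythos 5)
Your proof is correct and is essentially the paper's own argument in contrapositive form: the paper shows that distinct elements of $\bigvee_i \C_i$ must differ at some coordinate $f_1(i) \neq f_2(i)$, hence are incompatible there, and lifts this via the monotonicity direction of \Cref{lemma-incompatibility-monotone}, while you push compatibility of the joins down to each coordinate via the anti-monotonicity direction of the same lemma and collapse the choice functions pointwise. Both arguments rest on exactly the same two ingredients (\Cref{lemma-incompatibility-monotone} and pairwise incompatibility of each $\C_i$), so there is no substantive difference.
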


\begin{proof}
    Consider databases $D_1,D_2 \in \bigvee_i \C_i$.
    These are the least upper bound of the sets of databases picked out by $f_1$ and $f_2$, respectively.
    If $D_1 \neq D_2$, then there is some index $i$ for which $f_1(i) \neq f_2(i)$, and $f_1(i)$ and $f_2(i)$ come from the same choice set $\C_i$, so they must be incompatible.
    By monotonicity of incompatibility (\reforexpanded{lemma-incompatibility-monotone}{Lemma 4.2}), $D_1$ and $D_2$ are also incompatible.
\end{proof}

\begin{lemma}\label{thm-choice-lub-is-ub}
  $\bigvee_{\Choice}$ is an upper bound in {\Choice}. That is,
  if $\C \in \mathcal{S}$ then $\C \leq \bigvee \mathcal{S}$.
\end{lemma}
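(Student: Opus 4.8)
The plan is to unfold the definition of the order $\le_\Choice$ from \Cref{def-choice} and reduce the claim to a one-line observation about least upper bounds in $\DB$. Write $\mathcal{S} = \{\C_i : i \in I\}$, and suppose $\C \in \mathcal{S}$, so that $\C = \C_j$ for some index $j \in I$. By \Cref{def-choice}, proving $\C_j \le_\Choice \bigvee \mathcal{S}$ amounts to producing, for every $D \in \bigvee \mathcal{S}$, some witness $D_j \in \C_j$ with $D_j \le D$. The one subtlety worth flagging at the outset is the direction of $\le_\Choice$: it is the members of the \emph{larger} choice set $\bigvee\mathcal{S}$ that must each be bounded below by something in $\C_j$, not the reverse.

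First I would fix an arbitrary $D \in \bigvee \mathcal{S}$. By \Cref{def-choice-lub}, every such $D$ arises as $D = \bigvee \mathrm{Im}(f)$ for some choice function $f \in \prod_{i \in I} \C_i$ whose image $\mathrm{Im}(f) = \{ f(i) : i \in I \}$ is compatible. The component $f(j)$ is exactly the witness I need: it lies in $\C_j$ by the definition of a choice function, so it only remains to check that $f(j) \le D$.

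That remaining step is immediate. Since $f(j) \in \mathrm{Im}(f)$ and $D = \bigvee \mathrm{Im}(f)$ is the \emph{least} upper bound of the compatible set $\mathrm{Im}(f)$, it is in particular an upper bound of that set, so $f(j) \le \bigvee \mathrm{Im}(f) = D$. Taking $D_j = f(j)$ discharges the obligation, and since $D \in \bigvee\mathcal{S}$ was arbitrary we conclude $\C_j \le_\Choice \bigvee \mathcal{S}$, i.e.\ $\C \le \bigvee \mathcal{S}$. I do not anticipate any genuine obstacle here; the proof is essentially definitional, and the only care required is keeping the quantifier direction of $\le_\Choice$ straight and invoking the fact that the lub supplied by \Cref{def-choice-lub} dominates each element of $\mathrm{Im}(f)$.
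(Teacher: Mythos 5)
Your proof is correct and follows essentially the same route as the paper's: both arguments fix an arbitrary element $D = \bigvee \mathrm{Im}(f)$ of $\bigvee \mathcal{S}$, observe that the component $f(j) \in \C_j$ selected by the choice function is a member of the compatible set being joined, and conclude $f(j) \le D$ because the join is an upper bound of that set. Your explicit flagging of the quantifier direction of $\le_\Choice$ is a nice touch but does not change the substance.
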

\begin{proof}
  Let $\C_i \in \mathcal{S}$.
  Every element of $\bigvee \mathcal{S}$ is the join over some selection
  of compatible databases, exactly one of which is an element of $\C_i$. 
  That is, if $\mathcal{X}_j \in \bigvee \mathcal{S}$, then 
  $\mathcal{X}_j = \bigvee_i f_j(i)$ where $f_j(i) = D_i \in \C_i$.
  Thus $D_i \leq \mathcal{X}_j$, as required. 
\end{proof}

\begin{theorem}\label{thm-choice-lubs}
   $\bigvee_i \C_i$ is the least upper bound of $\{ \C_i : i \in I \}
   \subseteq \Choice$.
\end{theorem}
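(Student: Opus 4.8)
The plan is to leverage the two preceding lemmas, which already do most of the work. \Cref{thm-choice-lub-is-choice} establishes that $\bigvee_i \C_i \of \Choice$, and \Cref{thm-choice-lub-is-ub} establishes that it is an upper bound of $\{\C_i : i \in I\}$. So the only thing remaining is to show that it is the \emph{least} upper bound: given any other upper bound $\mathcal{U} \of \Choice$ (that is, $\C_i \le_\Choice \mathcal{U}$ for every $i$), I would show $\bigvee_i \C_i \le_\Choice \mathcal{U}$.

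Unfolding the definition of $\le_\Choice$ from \Cref{def-choice}, this amounts to showing that for every $E \of \mathcal{U}$ there is some database in $\bigvee_i \C_i$ that lies below $E$. So I would fix an arbitrary $E \of \mathcal{U}$ and construct a witness. For each index $i$, the assumption $\C_i \le_\Choice \mathcal{U}$ supplies a database $D_i \of \C_i$ with $D_i \le E$; by pairwise incompatibility of $\C_i$ this $D_i$ is in fact unique, so the assignment $f(i) = D_i$ is a well-defined function in $\prod_{i \in I} \C_i$.

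The crux is then to observe that $E$ itself witnesses the compatibility of $\mathrm{Im}(f) = \{ f(i) : i \of I \}$: since every $D_i \le E$, the set $\mathrm{Im}(f)$ has $E$ as an upper bound, so ${\compatible}\,\mathrm{Im}(f)$ holds. Consequently, by \Cref{def-choice-lub}, the database $\bigvee \mathrm{Im}(f)$ is defined and is a member of $\bigvee_i \C_i$. Finally, because $\bigvee \mathrm{Im}(f)$ is by construction the \emph{least} upper bound of $\mathrm{Im}(f)$, while $E$ is merely \emph{an} upper bound of the same set, we conclude $\bigvee \mathrm{Im}(f) \le E$. This exhibits the required witness below $E$, and since $E \of \mathcal{U}$ was arbitrary, $\bigvee_i \C_i \le_\Choice \mathcal{U}$.

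The main obstacle is conceptual rather than computational: one must read the direction of $\le_\Choice$ correctly (it has the Smyth-powerdomain shape, ``for all elements of the larger choice set, there exists a smaller database below it''), and then recognize that an arbitrary element $E$ of the candidate upper bound does double duty---it both selects one database $D_i$ out of each $\C_i$ and simultaneously certifies that this selection is compatible. Once that observation is in hand, the fact that the join of a bounded family sits below any of its upper bounds closes the argument with no real calculation.
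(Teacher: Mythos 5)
Your proof is correct and follows essentially the same route as the paper's: fix an arbitrary element $E$ of a candidate upper bound, use $\C_i \le_\Choice \mathcal{U}$ to select a unique $D_i \in \C_i$ below $E$ for each $i$, observe that $E$ itself certifies compatibility of the selection, and conclude $\bigvee \mathrm{Im}(f) \le E$ by leastness of $\bigvee_\DB$. The only cosmetic difference is that you cite \Cref{thm-choice-lub-is-choice} and \Cref{thm-choice-lub-is-ub} explicitly, where the paper leaves that reliance implicit.
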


\begin{proof} 
    Let $\C'$ be an upper bound of 
    $\{ \C_i : i \in I \}$. 
    To show that $\bigvee_i \C_i \leq \C'$,
    it suffices to show that for every $D' \in \C'$, 
    there exists some $D \in \bigvee_i \C_i$ such that
    $D \leq D'$.
    
    Therefore, let $D'$ be an arbitrary element of $\C'$.
    For every $i \in I$, there is (by definition of $\leq_{\Choice}$)
    a unique $D_i \in \C_i$ such that 
    $D_i \leq D'$, and we can define
    $f : I \to \DB$ to pick out that $D_i$. The set of 
    databases $\{ f(i) : i \in I \}$ is compatible because $D'$ is an upper
    bound. That means that 
    $\bigvee_i \C_i$ contains $\bigvee_i f(i)$, and 
    because $\bigvee_\DB$ is already established as a least upper bound,
    $\bigvee_i f(i) \leq_\DB D'$. Letting $D = \bigvee_i f(i)$,
    establishes that $\bigvee_\Choice$ is also a least upper bound.
\end{proof}

\section{Equivalence of Semantics}
This section contains the proof that three semantics for {\casps}---the step sequences in \reforexpanded{sec-definition}{Section 2}, the least-fixed-point interpretation in \reforexpanded{sec-semantics}{Section 5}, and the algorithm in \reforexpanded{sec-implementation}{Section 6}---all agree.

These proofs relate sequences of fact sets and the least fixed-point of an immediate consequence operator on choice sets, sets of mutually exclusive constraint databases.
In the main body of the paper, we frequently conflated sets of facts and constraint databases. In this appendix, we will avoid this conflation, and will consistently use $D$ to refer to databases that are comprised of sets of facts as defined in \reforexpanded{sec-definition}{Section 2} but use $\Delta$ to refer to constraint databases as defined in \reforexpanded{sec-semantics}{Section 5}. Fact sets and constraint databases are connected by two adjoint functors, erasure and promotion.

\begin{definition}[Erasure of a Constraint Database to a Set of Facts]\label{def-erasure}
For all constraint databases $\Delta \in \DB$, define $(\Delta)^- = \{ a \is v \mid \Delta(a) = \just{v} \}$. This set is a consistent database of facts as defined in \reforexpanded{def-set-database}{Definition 2.6}.
\end{definition}

\begin{definition}[Promotion of a Set of Facts to a Constraint Database]\label{def-promotion}
If the set of facts $D$ is consistent by \reforexpanded{def-set-database}{Definition 2.6}, then for any $a$ there is at most one $v$ such that $a \is v \in D$. Therefore, we can define $(D)^+$ as a function from consistent sets of facts to constraint databases as follows:
\begin{itemize}
    \item $(D)^+(a) = \just{v}$ if $a \is v \in D$
    \item $(D)^+(a) = \none{\emptyset}$ if there does not exist a $v$ such that $a \is v \in D$
\end{itemize}
\end{definition}

Here are two examples of these definitions in action:
\begin{itemize}
    \item $(\{ \propV{p}{\term{tt}} \})^+ = (\pred{p} \mapsto \just{\term{tt}})$
    \item $(\pred{p} \mapsto \just{\term{tt}},\; \pred{q} \mapsto \none{\{ \term{a}, \term{b} \}})^- = \{ \propV{p}{\term{tt}} \}$
\end{itemize}

\begin{lemma}[Properties of Erasure and Promotion]\label{properties-of-erasure-and-promition}
The following all hold:
    \begin{itemize}
        \item For all consistent sets of facts $D$, it is the case that $(D)^+$ is positive and $((D)^+)^- = D$.
        \item For all $\Delta \in \DB$, it is the case that $((\Delta)^-)^+ \leq_\DB \Delta$.
        \item For all $\Delta \in \DB$, it is the case that $((\Delta)^-)^+ = \Delta$ if and only if $\Delta$ is positive.
    \end{itemize}
\end{lemma}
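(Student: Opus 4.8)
The plan is to prove all three bullets by unwinding \Cref{def-erasure,def-promotion} pointwise, the only real content being a case split on whether a given attribute $a$ is mapped by $\Delta$ to a constraint of the form $\just{v}$ or of the form $\none{X}$. Throughout, I would work attribute-by-attribute, since both $\le_\DB$ and equality of constraint databases are defined pointwise over $\Constraint$.

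For the first bullet, positivity of $(D)^+$ is immediate: by \Cref{def-promotion} the value $(D)^+(a)$ is always either $\just{v}$ or $\none{\emptyset}$, so whenever $(D)^+(a) = \none{X}$ we must have $X = \emptyset$, which is exactly \Cref{def-db-positive}. For the round trip $((D)^+)^- = D$, I would chain the two definitions: by \Cref{def-erasure}, $((D)^+)^-$ is the set of facts $a \is v$ with $(D)^+(a) = \just{v}$, and by \Cref{def-promotion} this holds precisely when $a \is v \in D$; consistency of $D$ guarantees the witnessing $v$ is unique, so the two fact sets coincide.

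For the second bullet I would show $((\Delta)^-)^+(a) \le_\Constraint \Delta(a)$ for every $a$. If $\Delta(a) = \just{v}$, then $a \is v \in (\Delta)^-$ and promotion recovers $\just{v}$ exactly, giving equality. If instead $\Delta(a) = \none{X}$, then no fact with attribute $a$ survives erasure, so $((\Delta)^-)^+(a) = \none{\emptyset} = \bot_\Constraint$, which is $\le_\Constraint \none{X}$ since $\emptyset \subseteq X$. For the third bullet, the backward direction upgrades this inequality to an equality using positivity: in the $\just{v}$ case we already had equality, and in the $\none{X}$ case positivity of $\Delta$ forces $X = \emptyset$, so $\Delta(a) = \none{\emptyset} = ((\Delta)^-)^+(a)$. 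The forward direction is then cheap, since $(\Delta)^-$ is a consistent fact set (as asserted in \Cref{def-erasure}), so by the first bullet $((\Delta)^-)^+$ is positive; if it equals $\Delta$, then $\Delta$ is positive.

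There is no deep obstacle here, as the whole argument is bookkeeping over two definitions. The one point that needs care, and which I would state explicitly, is the observation used twice that when $\Delta(a)$ is a $\none{}$ constraint \emph{no} fact with attribute $a$ appears in $(\Delta)^-$, so erasure-then-promotion collapses that attribute to $\bot_\Constraint$ rather than preserving the set $X$; this is precisely the gap between $\le_\DB$ and equality that positivity closes. It is also worth noting that erasure always yields a consistent fact set, so that promotion (and hence the first bullet) is legitimately applicable in the forward direction of the third part.
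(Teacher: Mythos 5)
Your proof is correct and matches the paper's approach exactly: the paper dispatches this lemma with the single line ``straightforward unfolding of definitions,'' and your pointwise case analysis on whether $\Delta(a)$ is $\just{v}$ or $\none{X}$ is precisely the bookkeeping that line elides. Your explicit note that erasure collapses $\none{X}$ constraints to $\bot_\Constraint$ (so only positivity closes the gap between $\le_\DB$ and equality) is the right point to emphasize.
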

\begin{proof}
    Straightforward unfolding of definitions.
\end{proof}

With these definitions, we can precisely restate \reforexpanded{semantics-are-equivalent}{Theorem 5.12} without the imprecise punning of fact sets and constraint databases:

\begin{proposition}[Restatement of \reforexpanded{semantics-are-equivalent}{Theorem 5.12}]\label{semantics-are-equivalent-precise} 
For all $P$ and $D$, the following are equivalent:
    \begin{enumerate}
        \item There is a fact-set step sequence $\emptyset \ldots D$ and whenever $D \setstepA{P} S$ then $S = \{ D \}$.
        \item $(D)^+ \in \lfp \bigimmcons{P}$ and $D$ is finite.
    \end{enumerate}
\end{proposition}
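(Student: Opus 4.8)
The plan is to route both directions through a single bridging lemma that translates the fact-set notion of saturation into the constraint-database notion of being a \emph{model} (\Cref{def-model}), after which finiteness is essentially free. Concretely, I would first prove a \textbf{saturation--model lemma}: for a consistent fact set $D$, the set $D$ is saturated under $P$ if and only if $\immcons{P}((D)^+) = \{(D)^+\}$. This is proved by unfolding \Cref{def-set-evolution,def-set-saturation,def-immediate-consequence} and comparing them case by case on open and closed rules. The delicate point is the \emph{conflict} case: a saturated database must forbid both genuine nontrivial evolutions (which correspond to a satisfied rule head whose join with $\{(D)^+\}$ strictly raises some attribute, so $\immcons{P}((D)^+)$ contains a strictly larger database) and evolutions to $\emptyset$ (which correspond to a satisfied closed rule whose values are all incompatible with $D$, so the candidate selection in the join is empty and $\immcons{P}((D)^+) = \top_\Choice = \emptyset$). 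Matching these two failure modes to the two ways $\immcons{P}((D)^+)$ can differ from $\{(D)^+\}$ is the content of the lemma. Finiteness in the $(1)\Rightarrow(2)$ direction is immediate, since step sequences are finite and each step adds at most one fact, so any solution is a finite fact set and $(D)^+$ is finite.

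For $(1)\Rightarrow(2)$, the saturation--model lemma gives that $(D)^+$ is a model, and it remains to show $(D)^+ \in \lfp\bigimmcons{P}$. By \Cref{corollary-covering} there is a $D^\ast \in \lfp\bigimmcons{P}$ with $D^\ast \le (D)^+$, and $D^\ast$ is itself a model (\Cref{thm-exactly-the-minimal-models}); the goal reduces to $D^\ast = (D)^+$. I would prove this by contradiction from the reachability of $D$, exploiting that every fact of $D$ is \emph{supported}: the step sequence $\emptyset \ldots D$ records, for each fact, a rule firing whose premises already held. Ordering the facts by such a sequence, consider the first fact $a \is v$ on which $D^\ast$ disagrees with $(D)^+$. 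Because $D^\ast \le (D)^+$, the disagreement must have the form $D^\ast[a] = \none{X}$ with $v \notin X$, i.e.\ $D^\ast$ leaves $a$ uncommitted. All premises of the rule that added $a \is v$ occur earlier in the sequence, so by the choice of first disagreement they are committed identically in $D^\ast$; hence the rule fires in $\immcons{P}(D^\ast)$, and whether it is open or closed it forces $a$ away from $\none{X}$, so $\immcons{P}(D^\ast) \ne \{D^\ast\}$, contradicting model-hood. Thus $D^\ast = (D)^+ \in \lfp\bigimmcons{P}$.

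For $(2)\Rightarrow(1)$, saturation comes for free: every element of $\lfp\bigimmcons{P}$ is a model (\Cref{thm-exactly-the-minimal-models}), so $\immcons{P}((D)^+) = \{(D)^+\}$, and since $D = ((D)^+)^-$ by \Cref{properties-of-erasure-and-promition}, the saturation--model lemma makes $D$ saturated. The work is therefore to build a step sequence $\emptyset \ldots D$. I would first prove a \textbf{finite-stage lemma}: a finite $(D)^+ \in \lfp\bigimmcons{P}$ already belongs to some finite iterate $\bigimmcons{P}^n(\bot_\Choice)$, using that its finitely many committed attributes are supported by finitely many rule firings. Then I would thread a chain $\bot_\DB = \delta_0 \le \delta_1 \le \cdots \le \delta_n = (D)^+$ with $\delta_k \in \immcons{P}(\delta_{k-1})$ and each $\delta_k \le (D)^+$, simulating each increment $\delta_{k-1}\to\delta_k$ by a short block of fact-set steps. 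Because $(D)^+$ is positive, every newly committed attribute of $\delta_k$ is a $\just{v}$ with $a \is v \in D$, while the $\none{\cdot}$ information carried by the $\delta_k$ is invisible to fact sets and needs no step; satisfaction transfers across the translation (\Cref{def-db-satisfaction} versus \Cref{def-set-satisfaction}), so each firing rule is genuinely applicable on $(\delta_{k-1})^-$, and the blocks concatenate into the required finite step sequence.

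The main obstacle is the minimality argument in $(1)\Rightarrow(2)$. Generic reachable databases need \emph{not} promote below any element of $\lfp\bigimmcons{P}$---this is exactly the dead-end phenomenon motivating \Cref{sec-immcons}, where a reachable database such as $\{\propV{p}{\term{ff}},\propV{q}{\term{ff}}\}$ lies outside the least fixed point---so the proof cannot simply track promotions along arbitrary step sequences. The leverage must come from combining saturation with support: only a fully reachable, saturated database is forced to be a minimal model, and pinning down precisely why the ``first unsupported disagreement'' with a smaller model is impossible, handling open and closed rules uniformly under the constraint $D^\ast \le (D)^+$, is the technical heart of the equivalence. A secondary obstacle is the continuity/support argument behind the finite-stage lemma, which is what licenses reducing the transfinite Knaster--Tarski construction to a finite, simulable prefix.
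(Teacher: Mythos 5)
Your proposal is correct, and its first direction takes a genuinely different route from the paper's. The paper factors the entire equivalence through an intermediate notion of \emph{algorithmic step sequences} (steps $\Delta' \in \{\Delta\} \vee \immcons{P[a]}(\Delta)$ for a chosen attribute $a$): it first translates the fact-set sequence into an algorithmic sequence ending in a model (an induction whose disjunctive invariant is ``either this prefix can never reach a solution, or it translates''), then proves that any model reached by an algorithmic sequence is minimal---by showing every prefix of the sequence lies below any model $M \le \Delta$---and concludes with \Cref{thm-exactly-the-minimal-models}; saturation in the converse direction comes from a one-directional lemma (positive models erase to saturated fact sets). You instead prove a two-sided bridge ($D$ saturated $\iff$ $(D)^+$ is a model), which is true and subsumes the paper's one-directional lemma, and for $(1)\Rightarrow(2)$ you bypass algorithmic sequences entirely: take $D^\ast \le (D)^+$ in $\lfp\bigimmcons{P}$ from \Cref{corollary-covering} and refute a first disagreement along the fact-set sequence, using that each fact's supporting firing has all its premises added earlier. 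I checked your two rule cases: with $D^\ast[a] = \none{X}$ and $v \notin X$, a closed head's picks are all of the form $\just{w_i}$ and an open head's picks are $\just{v}$ and $\none{\{v\}}$, so every element of $\immcons{P}(D^\ast)$ moves attribute $a$ strictly up, or else the consequence set is empty; either way $\immcons{P}(D^\ast) \ne \{D^\ast\}$, contradicting model-hood. This is morally the same squeeze as the paper's induction, but run directly in the fact-set world against the lfp witness rather than in the algorithmic world against an arbitrary model; what the paper's heavier factoring buys is reuse, since its lemmas simultaneously prove \Cref{thm-algorithm-correct}, which your route does not touch. Your $(2)\Rightarrow(1)$ is essentially the paper's: finite stages plus block-by-block simulation, with the positivity/erasure subtlety handled the same way. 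One detail in it needs repair: in your finite-stage lemma, a chain $\delta_k \le (D)^+$ can keep strictly growing through $\none{\cdot}$ information alone at the finitely many attributes where $(D)^+$ is a $\just{v}$, so counting committed attributes does not by itself terminate the chain; you need the paper's further observation that rule firings depend only on $\just$ commitments, so once the justs stabilize, the very next iterate is already a fixed point. That is a patchable detail, not a flaw in the architecture.
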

\subsection{Correspondence of the Sequential, Algorithmic, and Least-Fixed-Point Semantics}\label{sec-algorithm-round-trip}

The algorithm described in \reforexpanded{sec-algorithm}{Section 6.1}
defines another kind of sequence, which we can call an 
\textit{algorithmic step sequence} to distinguish it from a fact-set step sequence.

The definition of algorithmic steps takes advantage of the fact that all rule consequences are singular in some attribute $a$, which lets us define {\em attribute-specific} immediate consequence $\immcons{P[a]}(\Delta)$, mapping an attribute $a$ to
a choice set, the elements of which each map $a$ to a constraint (by the definition of choice sets, those constraints are pairwise-incompatible):
\begin{definition}[Attribute-Specific Immediate Consequence]

\[
\immcons{P[a]}(\Delta) = 
 \bigvee 
 \left\{ 
    \langle \sigma H \rangle 
        : (H \leftarrow F) \in P, 
          \sigma F \leq \Delta, 
          \sigma H @ a
\right\}
\]

Where $\sigma H @ a$ means $\sigma H = \propV{a}{S}$
or $\sigma H = \propishV{a}{v}$.

\end{definition}
In \reforexpanded{sec-algorithm}{Section 6.1} we wrote $\immcons{P}(D)[a]$ for the concept we are precisely defining as $\immcons{P[a]}(D)$. The action of the algorithm described in \reforexpanded{sec-algorithm}{Section 6.1} can be characterized in terms of $\immcons{P[a]}$ as follows:

\begin{definition}[Algorithmic steps]
    \label{def-algorithmic-steps}

    Let $\Delta$ be a constraint database.  
    We say that the program $P$ allows the constraint database $\Delta$ to make an \textit{algorithmic step} to $\Delta' \in \left(\{ \Delta \} \vee \immcons{P[a]}(\Delta)\right)$ whenever $\immcons{P}(\Delta)$ is nonempty.

    Algorithmic steps give rise to algorithmic step sequences: $\Delta_0\ldots \Delta_k$ is an algorithmic step sequence for $P$ if, for each $i > 0$, the program $P$ allows $\Delta_{i-1}$ to take an algorithmic step to $\Delta_i$. (As with fact-set step sequences, algorithmic step sequences are assumed to be finite.)
\end{definition}

These definitions allow us to break our central theorem into several lemmas:

\begin{lemma}[Completeness of the Algorithm]
    \label{prop-algorithm-complete}
    If there is a fact-set step sequence $\emptyset\ldots D$ under $P$ and $D$ is
    saturated by \reforexpanded{def-set-saturation}{Definition 2.10}, then there is an algorithmic
    step sequence $\bot_\DB \ldots (D)^+$ under $P$ and $(D)^+$ is a model by~\reforexpanded{def-model}{Definition 5.2}.
\end{lemma}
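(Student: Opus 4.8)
The plan is to construct the algorithmic step sequence by \emph{mirroring} the given fact-set step sequence one fact at a time. First I would record the structural fact that fact-set steps only ever add facts, so the sequence $\emptyset = D_0, D_1, \ldots, D_n = D$ satisfies $D_0 \subseteq D_1 \subseteq \cdots \subseteq D_n$; trivial steps (where $D_{i-1} = D_i$) can simply be dropped, leaving a sequence in which each step adds exactly one fact $a \is v$. The induction hypothesis is that after processing $D_0, \ldots, D_i$ I have built an algorithmic step sequence $\bot_\DB \ldots (D_i)^+$ with each $(D_i)^+$ positive. The inductive step must turn the fact-set step adding $a \is v$ into a single algorithmic step from $(D_{i-1})^+$ to $(D_i)^+$. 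Since $D_i = D_{i-1} \cup \{a \is v\}$ is consistent, $a$ has no value in $D_{i-1}$, so $(D_{i-1})^+[a] = \none{\emptyset}$ and $(D_i)^+$ agrees with $(D_{i-1})^+$ except at $a$, where it is $\just{v}$. Thus it suffices to show that $(a \mapsto \just{v})$ is an available choice, i.e.\ $\just{v} \in \immcons{P[a]}((D_{i-1})^+)$, and that the side condition $\immcons{P}((D_{i-1})^+) \neq \emptyset$ holds.

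The first ingredient is a satisfaction-transfer observation: because $(D_{i-1})^+$ is positive, a substitution $\sigma$ satisfies a premise set $F$ in $(D_{i-1})^+$ (\Cref{def-db-satisfaction}) exactly when it satisfies $F$ in $D_{i-1}$ in the fact-set sense, so the rules that fire in the two worlds coincide. The rule that produced $a \is v$ therefore fires in $(D_{i-1})^+$ and contributes $\just{v}$ to its choice set. The delicate point is that $\immcons{P[a]}$ is a \emph{join} over all rules targeting $a$, so I must verify that $\just{v}$ is not eliminated by some other rule firing at $a$. Open rules are harmless, since for a permitted value $v' \neq v$ one may select $\none{\{v'\}}$, which is compatible with $\just{v}$. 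The real obstacle is a competing \emph{closed} rule whose value set $V$ omits $v$: such a rule would knock $\just{v}$ out of the join.

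This is exactly where I would invoke saturation of the terminal database $D$. Because facts accumulate, any rule whose premises are satisfied in $D_{i-1}$ still has its premises satisfied in $D$, and $a \is v$ itself lies in $D$. If some closed rule firing at $a$ had $v \notin V$, then in $D$ this rule would force every consistent extension to assign $a$ a value in $V$, each clashing with $a \is v \in D$; hence $D \setstepA{P} \emptyset$, contradicting that $D$ is saturated (\Cref{def-set-saturation}). The same contradiction excludes two closed rules at any attribute with disjoint value sets, which is precisely the configuration that would make some $\immcons{P[a']}$ empty; combined with the fact that singular choice sets for distinct attributes are always compatible, this yields $\immcons{P}((D_{i-1})^+) \neq \emptyset$. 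With competing closed rules excluded, selecting $\just{v}$ from every rule at $a$ that can offer it and $\none{\{v'\}}$ from the rest produces a compatible family whose join is $\just{v}$, so $\just{v} \in \immcons{P[a]}((D_{i-1})^+)$ and the algorithmic step is legitimate.

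Finally I would show that the reached database $(D)^+$ is a model, i.e.\ $\immcons{P}((D)^+) = \{(D)^+\}$ (\Cref{def-model}). Saturation again does the work. For an attribute $a$ outside the domain of $D$, no rule can fire at $a$: a firing open rule would let $D$ step to a strictly larger database, and a firing closed rule would force $a$ to take a value, each contradicting saturation; such attributes contribute nothing to the join. For an attribute $a$ with $a \is v \in D$, every firing closed rule has $v$ in its value set, so $\just{v}$ is available, while any alternative choice produced by the join is incompatible with $(D)^+[a] = \just{v}$ and is discarded when joined with $\{(D)^+\}$. Every surviving selection thus reconstructs $(D)^+$ exactly, giving $\immcons{P}((D)^+) = \{(D)^+\}$. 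I expect the bookkeeping around this global join---tracking which per-attribute choices survive compatibility with $(D)^+$ and confirming the join collapses to the single database $(D)^+$---to be the most error-prone part, with the backward propagation of saturation along the accumulating chain being the conceptual crux.
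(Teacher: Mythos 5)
Your proposal is correct, and its crux coincides with the paper's: the only way a fact-set step $D_{i-1} \to D_{i-1} \cup \{a \is v\}$ can fail to be mirrored by an algorithmic step is a competing \emph{closed} rule at $a$ whose value set omits $v$, and since premises persist as facts accumulate, that rule still fires in the terminal $D$, forcing $D \setstepA{P} \emptyset$ and contradicting saturation. Where you differ is in the bookkeeping of the induction. The paper does \emph{not} use the terminal database's saturation inside the induction; instead it proves a stronger disjunctive statement about arbitrary prefixes ("either $D_k$ cannot be extended to any saturated database, or $(D_1)^+\ldots(D_k)^+$ is an algorithmic step sequence"), with a separate warm-up lemma showing that $\immcons{P}((D_k)^+) = \emptyset$ dooms every continuation, and only at the very end observes that the first disjunct is absurd for a sequence ending in a saturated $D$. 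You instead propagate the terminal saturation backward eagerly, which lets you keep a plain (non-disjunctive) induction hypothesis and derive each contradiction on the spot, including non-emptiness of $\immcons{P}((D_{i-1})^+)$. Your route is arguably more direct given the lemma's hypotheses; the paper's buys a reusable intermediate fact about prefixes of \emph{arbitrary} step sequences, saturated ending or not, which is the shape its other lemmas want. Your closing argument that $(D)^+$ is a model (no rule fires at unvalued attributes; every firing closed rule at a valued attribute must offer the held value) matches the paper's final observation, just spelled out in more detail.

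One small repair you should make: your claim that "two closed rules with disjoint value sets" is \emph{precisely} the configuration making $\immcons{P[a']}$ empty is not exact. The join at $a'$ is empty exactly when closed rules fire whose value sets have empty \emph{common intersection}; this can happen with three or more pairwise-overlapping sets (e.g.\ $\{1,2\}, \{2,3\}, \{1,3\}$). The fix costs nothing: if $a'$ holds a value $w$ in $D$, emptiness of the join forces some firing closed rule to omit $w$ (otherwise selecting $\just{w}$ everywhere is a compatible selection), giving $D \setstepA{P} \emptyset$; if $a'$ holds no value in $D$, then \emph{any} firing closed rule already yields an evolution to strictly larger databases, contradicting saturation. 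Either way your terminal-saturation argument closes the case unchanged.
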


\begin{proof}
    See \Cref{sec-algorithm-complete}.
\end{proof}

\begin{lemma}[Soundness of the Algorithm]
    \label{prop-algorithm-sound}
    If there is an algorithmic step sequence $\bot_\DB\ldots\Delta$  and $\Delta$ is a model of $P$ by \reforexpanded{def-model}{Definition 5.2}, then $\Delta \in \lfp \bigimmcons{P}$.
\end{lemma}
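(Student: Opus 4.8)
The plan is to show that $\Delta$ is a \emph{minimal} model and then conclude $\Delta \in \lfp\bigimmcons{P}$ by \Cref{thm-exactly-the-minimal-models}, which identifies the least fixed point with the set of minimal models. Since $\Delta$ is already assumed to be a model, the whole task reduces to showing that no model lies strictly below $\Delta$. Notably, this uses neither finiteness nor positivity of $\Delta$, matching the statement.

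First I would record two structural facts about the algorithmic step sequence $\bot_\DB = \Delta_0, \ldots, \Delta_k = \Delta$. By \Cref{def-algorithmic-steps}, each step replaces $\Delta_{i-1}$ by some $\Delta_i \in \{\Delta_{i-1}\} \vee \immcons{P[a_i]}(\Delta_{i-1})$ for some attribute $a_i$, and every element of such a join dominates $\Delta_{i-1}$. Hence the sequence is increasing, $\Delta_0 \le \Delta_1 \le \cdots \le \Delta_k = \Delta$, so $\Delta_i \le \Delta$ for every $i$. Unwinding \Cref{def-choice-lub} for the binary join $\{\Delta_{i-1}\}\vee\immcons{P[a_i]}(\Delta_{i-1})$, each $\Delta_i$ can be written concretely as $\Delta_i = \Delta_{i-1} \vee e_i$ for a single $e_i \in \immcons{P[a_i]}(\Delta_{i-1})$ with $\Delta_{i-1} \compatible e_i$.

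The heart of the argument is a minimality proof by contradiction. Suppose $E$ is a model with $E \le \Delta$ and $E \neq \Delta$; then $\Delta \not\le E$. Since $\Delta_0 = \bot_\DB \le E$ while $\Delta_k = \Delta \not\le E$, there is a least index $j$ with $\Delta_j \not\le E$, so that $\Delta_{j-1} \le E$. Writing $\Delta_j = \Delta_{j-1}\vee e_j$ as above, the fact that $\Delta_{j-1}\le E$ but $\Delta_{j-1}\vee e_j \not\le E$ forces $e_j \not\le E$ (otherwise $E$ would be an upper bound of $\{\Delta_{j-1}, e_j\}$ and dominate their least upper bound). On the other hand, from $\Delta_{j-1} \le E$ the monotonicity of attribute-specific immediate consequence (the evident analogue of \Cref{thm-immcons-monotone}, proved identically from monotonicity of satisfaction) gives $\immcons{P[a_j]}(\Delta_{j-1}) \le_\Choice \immcons{P[a_j]}(E)$; and because $E$ is a model, $\immcons{P}(E) = \{E\}$ decomposes as the join of $\{E\}$ with its attribute-specific components $\immcons{P[a]}(E)$, so $\immcons{P[a_j]}(E) \le_\Choice \{E\}$. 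Composing, $\immcons{P[a_j]}(\Delta_{j-1}) \le_\Choice \{E\}$, and unwinding the definition of $\le_\Choice$ (\Cref{def-choice}) produces some $e' \in \immcons{P[a_j]}(\Delta_{j-1})$ with $e' \le E$.

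It remains to identify $e'$ with $e_j$, which I expect to be the main obstacle, since the order $\le_\Choice$ is a Smyth-style order that only guarantees \emph{some} witness $e'$ below $E$, not the specific $e_j$ selected by the step. The resolution is that both $e_j$ and $e'$ lie in the \emph{pairwise-incompatible} choice set $\immcons{P[a_j]}(\Delta_{j-1})$, and both lie below $\Delta$: indeed $e_j \le \Delta_j \le \Delta$ and $e' \le E \le \Delta$. Having $\Delta$ as a common upper bound makes them compatible, so pairwise incompatibility forces $e_j = e'$, whence $e_j = e' \le E$, contradicting $e_j \not\le E$. This rules out any model strictly below $\Delta$, so $\Delta$ is a minimal model and therefore lies in $\lfp\bigimmcons{P}$. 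The argument hinges precisely on the interplay between the Smyth witness and the common bound $\Delta$ furnished by the monotone step sequence; everything else is routine unfolding of the definitions.
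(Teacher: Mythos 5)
Your proposal is correct and is essentially the paper's own argument: the paper likewise reduces the claim to minimality via \Cref{thm-exactly-the-minimal-models} and then runs an induction along the monotone step sequence, showing $\immcons{P[a]}(\Delta_i) \le_\Choice \{M\}$ from the model property and pinning down the step's witness via pairwise incompatibility with $\Delta$ as the common upper bound. Your least-counterexample formulation (contradiction at the first index $j$ with $\Delta_j \not\le E$) and your decomposition $\Delta_j = \Delta_{j-1} \vee e_j$ are just cosmetic repackagings of that same induction step.
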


\begin{proof}
    See \Cref{sec-algorithm-sound}.
\end{proof}

\begin{lemma}[Reachability of Finite Fixed Points]
    \label{prop-finite-fixed-points-reachable}
    If $\Delta \in \lfp \bigimmcons{P}$ and $\Delta$ is finite, then there is a fact-set step sequence $\emptyset\ldots(\Delta)^-$. 
\end{lemma}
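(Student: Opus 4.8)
The plan is to characterise $\lfp\bigimmcons{P}$ concretely as a Kleene iteration and then ``unroll'' membership into a fact-set step sequence. Write $\C_n = \bigimmcons{P}^n(\bot_\Choice)$ for the finite iterates, so that $\C_0 = \{\bot_\DB\}$ and $\C_{n+1} = \bigimmcons{P}(\C_n)$; by monotonicity (\Cref{thm-lifted-monotone}) these form an $\omega$-chain $\C_0 \le_\Choice \C_1 \le_\Choice \cdots$. First I would establish two structural facts about this chain. (a) \emph{Each $\C_n$ consists only of finite constraint databases.} This follows by induction on $n$: a finite constraint database $D$ assigns a $\just{\cdot}$ constraint to only finitely many attributes, so only finitely many substitutions $\sigma$ can satisfy a premise list $F$ with $\sigma F \le D$ (each variable of $F$, and hence of the head, is pinned down by matching against the finite set of $\just{\cdot}$-facts of $D$); thus only finitely many rule heads fire and their join is finite. (b) \emph{$\bigimmcons{P}$ is continuous on $\omega$-chains,} so that $\lfp\bigimmcons{P} = \C_\omega := \bigvee_n \C_n$. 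The crucial ingredient here is that each rule has finitely many premises: if a rule fires at a join $\bigvee_n g(n)$, then because a premise $\just{\sigma v} \le (\bigvee_n g(n))[a]$ forces some $g(N)[a] = \just{\sigma v}$ (joins of $\none{\cdot}$ constraints never produce a $\just{\cdot}$), finitely many premises are all satisfied already at some finite stage.

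With continuity in hand, the next step is the key lemma: \emph{every finite $\Delta \in \lfp\bigimmcons{P}$ already lies in some $\C_N$ with $N < \omega$.} Since $\C_n \le_\Choice \C_\omega = \lfp\bigimmcons{P}$, for each $n$ there is a unique $D_n \in \C_n$ with $D_n \le_\DB \Delta$, and these projections form an increasing chain $D_0 \le_\DB D_1 \le_\DB \cdots \le_\DB \Delta$ whose join is $\Delta$. By fact (a) each $D_n$ is finite. Because $\Delta$ is finite it is nontrivial on only finitely many attributes $a$; examining each such $a$ separately, if $\Delta[a] = \none{Y}$ then the $D_n[a] = \none{X_n}$ with $X_n \subseteq Y$ increase to the finite set $Y$ and so stabilise, while if $\Delta[a] = \just{t}$ then $\bigvee_n D_n[a] = \just{t}$ forces some $D_N[a] = \just{t}$ (and it remains $\just{t}$ thereafter). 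Taking the maximum of these finitely many stabilisation stages gives an $N$ with $D_N = \Delta$, hence $\Delta \in \C_N$.

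Finally I would unroll $\Delta \in \C_N$ into a fact-set step sequence by induction on $N$. The base case $N = 0$ is immediate, since $\Delta = \bot_\DB$ and $(\Delta)^- = \emptyset$. For the step, $\Delta \in \C_N = \bigimmcons{P}(\C_{N-1})$ means $\Delta \in \immcons{P}(D)$ for some finite $D \in \C_{N-1}$ with $D \le_\DB \Delta$; the induction hypothesis yields a fact-set step sequence $\emptyset \ldots (D)^-$. It then remains to simulate the single immediate-consequence application from $D$ to $\Delta$ with finitely many fact-set steps reaching $(\Delta)^-$. Each fact $a \is t$ in $(\Delta)^- \setminus (D)^-$ arises because $\Delta[a] = \just{t}$ is contributed by some firing rule head $\langle \sigma H \rangle$ (a closed rule listing $t$ among its values, or an open rule's $\just{\cdot}$ branch) whose premises are satisfied in $D$; by \Cref{def-erasure} those premises hold in $(D)^-$ and hence in every superset, so the rule applies in the fact-set semantics (\Cref{def-set-evolution}) and lets us add $a \is t$. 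Since $(\Delta)^-$ is consistent these additions never conflict, so we may add them one at a time in any order to extend $(D)^-$ to $(\Delta)^-$.

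The main obstacle is the pair of facts in the first paragraph, and especially continuity of $\bigimmcons{P}$: the join on $\Choice$ is defined through a product construction (\Cref{def-choice-lub}), so some care is required to track how a database in $\bigimmcons{P}(\bigvee_n \C_n)$ decomposes as a join of databases drawn from the $\bigimmcons{P}(\C_n)$. Everything downstream---the projection/stabilisation argument and the unrolling---is then routine, relying only on the finiteness of $\Delta$ and the monotone, finite-premise structure of the rules.
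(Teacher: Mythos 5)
Your proposal is correct, and its overall skeleton matches the paper's: first show that a finite $\Delta \in \lfp\bigimmcons{P}$ already occurs in some finite iterate $\bigimmcons{P}^N(\bot_\Choice)$, then unroll those $N$ iterations into fact-set steps. Your unrolling (fact-by-fact simulation, justified by the agreement of the two satisfaction relations under erasure and by the consistency of $(\Delta)^-$) is essentially the paper's second lemma. Where you genuinely diverge is in how the first, harder half is established. You prove $\omega$-continuity of $\bigimmcons{P}$, conclude $\lfp\bigimmcons{P} = \bigvee_n \bigimmcons{P}^n(\bot_\Choice)$ by Kleene iteration, and then stabilize the projections $D_n \le \Delta$ attribute-by-attribute (using finiteness of $\Delta$, including finiteness of its $\none{\cdot}$ sets) to conclude $D_N = \Delta$. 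The paper never proves continuity --- which, as you rightly flag, is real work over the product-based join of \Cref{def-choice-lub} --- and gets by with monotonicity alone: it takes the same projection chain $\Delta_i \in \bigimmcons{P}^i(\bot_\Choice)$, $\Delta_i \le \Delta$, and argues (i) if ever $\Delta_i = \Delta_{i+1}$ then $\Delta_i \in \immcons{P}(\Delta_i)$ is a model (\Cref{def-model}), and a model below $\Delta$ must equal $\Delta$ because $\lfp\bigimmcons{P}$ consists of minimal, pairwise-incompatible models (\Cref{thm-exactly-the-minimal-models}), so $\Delta \in \bigimmcons{P}^i(\bot_\Choice)$; and (ii) the chain cannot strictly ascend forever, because a step that introduces no new $\just{\cdot}$ attribute already forces $\Delta_i = \Delta_{i+1}$, so perpetual strict ascent would force unboundedly many $\just{\cdot}$ attributes in databases below the finite $\Delta$. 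So the paper's route is more economical --- no continuity argument and no $\none{\cdot}$-stabilization are needed, since the model trick substitutes for exact convergence of the projections --- while your route costs the careful chain/decomposition analysis but buys a strictly stronger structural fact that the paper never establishes, namely the Kleene characterization of $\lfp\bigimmcons{P}$ as the limit of its finite iterates. Both arguments are sound; if you want the shorter path, the observation that two consecutive equal projections form a model, which must then coincide with $\Delta$, lets you bypass continuity entirely.
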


\begin{proof}
    See \Cref{sec-finite-fixed-points-reachable}.
\end{proof}

\begin{lemma}\label{prop-positive-model-is-saturated}
If $\Delta$ is a positive model, $(\Delta)^-$ is saturated (\reforexpanded{def-set-saturation}{Definition 2.10}).
\end{lemma}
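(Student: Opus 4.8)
The plan is to unfold the definition of saturation directly: fixing $D := (\Delta)^-$, I will show that whenever $D \setstepA{P} S$ we must have $S = \{D\}$. By \Cref{def-set-evolution} there are only three ways to produce such an $S$: the trivial evolution $D \setstepA{P} \{D\}$ (which is immediately of the desired form), a closed rule $\prop{p}{\overline{t}}{\{ v_1,\ldots,v_m \}} \leftarrow F$, or an open rule $\propish{p}{\overline{t}}{v} \leftarrow F$, in the latter two cases together with a substitution $\sigma$ that satisfies $F$ in the fact set $D$. So the real work is to rule out, for each kind of rule, that the step can introduce any database other than $D$.

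The key bridge is a satisfaction-transfer step: I will first check that if $\sigma$ satisfies $F$ in the fact set $(\Delta)^-$ (\Cref{def-set-satisfaction}), then $\sigma$ satisfies $F$ in the constraint database $\Delta$ (\Cref{def-db-satisfaction}). Indeed, each premise $\prop{p}{\overline{t}}{v}$ of $F$ yields a fact $\prop{p}{\sigma\overline{t}}{\sigma v} \in (\Delta)^-$, which by \Cref{def-erasure} means $\Delta[\propK{p}{\sigma\overline{t}}] = \just{\sigma v}$ and hence $\just{\sigma v} \le \Delta[\propK{p}{\sigma\overline{t}}]$. Thus $\sigma F \le \Delta$, so the head of the fired rule contributes the choice set $\rulechoices{\sigma H}$ to the join defining $\immcons{P}(\Delta)$, and by \Cref{thm-choice-lub-is-ub} we get $\rulechoices{\sigma H} \le_\Choice \immcons{P}(\Delta)$. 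Since $\Delta$ is a model, $\immcons{P}(\Delta) = \{\Delta\}$, so $\rulechoices{\sigma H} \le_\Choice \{\Delta\}$; unfolding $\le_\Choice$ (\Cref{def-choice}) produces a single constraint database $D_1 \in \rulechoices{\sigma H}$ with $D_1 \le_\DB \Delta$. This witness is the crux, since it tells me precisely what $\Delta$ already records about the attribute $\propK{p}{\sigma\overline{t}}$.

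For a closed rule, $D_1 = (\propK{p}{\sigma\overline{t}} \mapsto \just{\sigma v_i})$ for some $i$, so by the order on $\Constraint$ we get $\Delta[\propK{p}{\sigma\overline{t}}] = \just{\sigma v_i}$, i.e.\ $\prop{p}{\sigma\overline{t}}{\sigma v_i} \in D$. Then each candidate $D \cup \{\prop{p}{\sigma\overline{t}}{\sigma v_j}\}$ either equals $D$ (if $\sigma v_j = \sigma v_i$) or is inconsistent (if $\sigma v_j \ne \sigma v_i$), so $S = \{D\}$ and no positivity is needed. For an open rule, $D_1$ is either the $\just$ or the $\none$ alternative. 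The $\just$ alternative again forces $\Delta[\propK{p}{\sigma\overline{t}}] = \just{\sigma v}$, so the candidate addition equals $D$ and $S = \{D\}$. The $\none$ alternative gives $\none{\{\sigma v\}} \le \Delta[\propK{p}{\sigma\overline{t}}]$, and this is exactly where positivity does its work: by \Cref{def-db-positive} a positive $\Delta$ sends every attribute either to a $\just$ or to $\none{\emptyset}$, and $\none{\{\sigma v\}} \le \none{\emptyset}$ is false, so $\Delta[\propK{p}{\sigma\overline{t}}] = \just{t}$ with $t \ne \sigma v$; hence the candidate addition $\prop{p}{\sigma\overline{t}}{\sigma v}$ conflicts with the existing fact $\prop{p}{\sigma\overline{t}}{t} \in D$ and is inconsistent, again giving $S = \{D\}$.

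I expect the main obstacle to be conceptual rather than computational: keeping the two notions of satisfaction and the two orders ($\le_\DB$ on constraints and $\le_\Choice$ on choice sets) straight, and correctly unfolding $\rulechoices{\sigma H} \le_\Choice \{\Delta\}$ into the single witnessing database $D_1$. The genuinely load-bearing case is the open-rule $\none$ branch, where positivity is what prevents $\Delta$ from silently recording a forbidden value that the erased fact set could innocently re-add. Having dispatched every case, every $S$ with $(\Delta)^- \setstepA{P} S$ equals $\{(\Delta)^-\}$, which is saturation in the sense of \Cref{def-set-saturation}.
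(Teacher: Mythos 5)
Your proof is correct and takes essentially the same route as the paper's: the paper's own proof is a one-sentence appeal to ``inspection of definitions,'' asserting that any non-trivial evolution of $(\Delta)^-$ would contradict the model property of $\Delta$, and your case analysis (satisfaction transfer, then closed rules forced to re-derive the already-recorded value, then open rules blocked either by an existing $\just{\sigma v}$ constraint or by inconsistency with a recorded $\just{t}$, $t \neq \sigma v$) is exactly that inspection carried out in full. In particular, your observation that positivity is load-bearing only in the open-rule $\none{\{\sigma v\}}$ branch matches the paper's parenthetical remark that the lemma would fail for non-positive $\Delta$.
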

\begin{proof}
    Inspection of definitions: given a positive constraint database $\Delta$, we can show that any non-trivial evolution of $(\Delta)^-$ would contradict the assumption that $\Delta$ is in $\lfp \bigimmcons{P}$. (This would not be true if $\Delta$ were not positive.)
\end{proof}

These four lemmas suffice to prove \reforexpanded{semantics-are-equivalent}{Theorem 5.12}/\Cref{semantics-are-equivalent-precise} and, in a different configuration, the correctness of the algorithm, \reforexpanded{thm-algorithm-correct}{Theorem 6.2}.

The algorithm in \reforexpanded{sec-algorithm}{Section 6.1} is really an algorithm for enumerating all finite minimal models, not just for finding the positive models that correspond to solutions. However, \reforexpanded{thm-algorithm-correct}{Theorem 6.2} only talks about positive models. We conjecture the following, which is just \reforexpanded{thm-algorithm-correct}{Theorem 6.2} with the word ``positive'' removed, but the proof requires defining a constraint-database analogue to fact-step step sequences which we have not included.

\begin{conjecture}
    For constraint databases $\Delta$, the following are equivalent:
    \begin{enumerate}
        \item The algorithm described in \reforexpanded{sec-algorithm}{Section 6.1} may successfully return $\Delta$.
        \item $\Delta \in \lfp \bigimmcons{P}$ and $\Delta$ is finite.
    \end{enumerate}
\end{conjecture}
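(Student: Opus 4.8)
The plan is to split the biconditional and notice that, of the two directions, one is already delivered by the Appendix C machinery while the other is exactly where the word ``positive'' in \reforexpanded{thm-algorithm-correct}{Theorem 6.2} was doing real work. First I would unfold statement (1): by the success condition of the algorithm in \reforexpanded{sec-algorithm}{Section 6.1}, ``the algorithm may successfully return $\Delta$'' means there is an algorithmic step sequence $\bot_\DB \ldots \Delta$ with $\immcons{P}(\Delta) = \{\Delta\}$, i.e.\ $\Delta$ is a model. For the direction $(1) \Rightarrow (2)$, membership $\Delta \in \lfp \bigimmcons{P}$ is then immediate from \Cref{prop-algorithm-sound}, whose statement and proof never mention positivity; and finiteness of $\Delta$ follows because each algorithmic step sets a single attribute to a constraint drawn from some $\rulechoices{\sigma H}$, every such constraint ($\just{v}$ or $\none{\{v\}}$) is finite, and a finite sequence touches finitely many attributes, so $\Delta$ is finite by \Cref{def-db-finite}. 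Thus $(1)\Rightarrow(2)$ holds for all constraint databases, not just positive ones.

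All the new content is in $(2) \Rightarrow (1)$. Given finite $\Delta \in \lfp \bigimmcons{P}$, \Cref{thm-exactly-the-minimal-models} already guarantees $\Delta$ is a (minimal) model, so $\immcons{P}(\Delta) = \{\Delta\}$; this supplies the loop's termination condition, and all that is missing is an algorithmic step sequence $\bot_\DB \ldots \Delta$. For positive $\Delta$ this was obtained by routing through fact-set step sequences (\Cref{prop-finite-fixed-points-reachable,prop-algorithm-complete}), but erasure $(\Delta)^-$ discards precisely the information carried by the nonempty $\none{X}$ entries of a non-positive $\Delta$. I would therefore define a \emph{constraint-database step sequence} directly on $\DB$, generalizing fact-set evolution (\Cref{def-set-evolution}): a closed rule $\prop{p}{\overline t}{\{v_1,\ldots,v_m\}}$ satisfied in $\Delta_{i-1}$ permits a step to $\Delta_{i-1} \vee_\DB (\propK{p}{\overline t} \mapsto \just{v_j})$, and an open rule $\propish{p}{\overline t}{v}$ permits a step to $\Delta_{i-1} \vee_\DB (\propK{p}{\overline t} \mapsto \just v)$ or to $\Delta_{i-1} \vee_\DB (\propK{p}{\overline t} \mapsto \none{\{v\}})$, in each case only when the join is defined. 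This is exactly the relation realized by the algorithm's inner loop, so a constraint-database step sequence that ends at a model is an execution that may return it.

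The reachability lemma I would then prove is: every finite $\Delta \in \lfp \bigimmcons{P}$ is the endpoint of a constraint-database step sequence from $\bot_\DB$ that never rises above $\Delta$. I would reuse the finiteness argument behind \Cref{prop-finite-fixed-points-reachable} to place $\Delta$ in a finite iterate $\bigimmcons{P}^n(\bot_\Choice)$, then build the sequence by induction. The crucial observation, forced by $\immcons{P}(\Delta) = \{\Delta\}$, is that for every rule and substitution whose premises hold in anything $\le \Delta$, the induced $\rulechoices{\sigma H}$ has a disjunct $\le \Delta$: for closed rules one $\just{v_j}$ matches $\Delta[\propK{p}{\sigma\overline t}]$, and for open rules either $\just v = \Delta[a]$ or $\none{\{v\}} \le \Delta[a]$, the remaining configuration $\Delta[a] = \none X$ with $v \notin X$ being excluded precisely because it would make $\immcons{P}(\Delta) \ne \{\Delta\}$. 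Always selecting such a disjunct maintains the invariant $\Delta_i \le \Delta$ and never overshoots. Ordering the pieces of $\Delta$ in the style of \Cref{lem-get-sequence}, each $\just$ constraint is installed once its rule's premises are present, and each nonempty $\none X = \none{\{v\}} \vee \cdots$ is assembled by firing, in turn, the several open rules excluding its members. Once every constraint of the finite $\Delta$ is accounted for we have $\Delta_k = \Delta$, and since $\Delta$ is a model the loop guard fails and the algorithm returns $\Delta$, establishing (1).

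The main obstacle is this last ordering-and-accounting argument for the $\none X$ entries. Unlike a fact set, where every fact is a single monotone addition, a final constraint $\none X$ with $|X| > 1$ must be manufactured from several independent open-rule firings, each legal only when its premises are already satisfied; making precise a well-founded order on the constituents of $\Delta$ under which every required firing is enabled at its turn, and under which no firing is ever forced that would exceed $\Delta$, is the technical heart and is exactly the ``constraint-database analogue to fact-set step sequences'' the paper flags as missing. I expect monotonicity of $\immcons{P}$ (\Cref{thm-immcons-monotone}) together with the explicit description of joins in $\Constraint$ to carry most of the bookkeeping.
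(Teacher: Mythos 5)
To be clear about the target: the paper does \emph{not} prove this statement --- it is deliberately left as a conjecture, with the remark that a proof would require defining a constraint-database analogue of fact-set step sequences, which the paper omits. So there is no paper proof to compare against; the question is whether your sketch closes the gap, and it does not. Your direction $(1)\Rightarrow(2)$ is essentially correct: \Cref{prop-algorithm-sound} nowhere uses positivity, and finiteness follows by induction along the run. One imprecision: each algorithmic step installs an element of $\immcons{P[a]}(\Delta_i)$, which is a join of one disjunct from \emph{every} fired rule with head at $a$, not a constraint drawn from a single $\rulechoices{\sigma H}$; finiteness still holds, but only after observing that a finite $\Delta_i$ admits finitely many satisfying groundings (by the head-variable restriction of \Cref{def-rule}), so that join is a finite join of finite constraints.

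That same point is where $(2)\Rightarrow(1)$ breaks. Your claim that the rule-at-a-time relation ``is exactly the relation realized by the algorithm's inner loop'' is false: by \Cref{def-algorithmic-steps} a step lands in $\{\Delta_i\}\vee\immcons{P[a]}(\Delta_i)$, whose elements already merge the contributions of \emph{all} fired rules at $a$. For the two-rule program $(\propishV{p}{\term{a}} \leftarrow)$, $(\propishV{p}{\term{b}} \leftarrow)$ one computes $\immcons{P[\pred{p}]}(\bot_\DB) = \{\,(\pred{p}\mapsto\just{\term{a}}),\, (\pred{p}\mapsto\just{\term{b}}),\, (\pred{p}\mapsto\none{\{\term{a},\term{b}\}})\,\}$, so no algorithm execution ever passes through $(\pred{p}\mapsto\none{\{\term{a}\}})$, which your relation permits; in particular your plan of assembling $\none{X}$ one member at a time produces sequences that are not executions, so you would additionally need a simulation lemma compressing your sequences into algorithmic ones. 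Beyond that, the step you candidly leave unproven --- the well-founded ordering under which every required firing is enabled in turn --- is the entire technical content; it is precisely the missing piece the paper identifies. Your key observation is correct and salvageable: $\immcons{P}(\Delta) = \{\Delta\}$ forces every fired rule to have a disjunct $\le \Delta$, hence the join of these good disjuncts is an element of $\immcons{P[a]}(\Delta_i)$ lying $\le \Delta$, and \Cref{thm-exactly-the-minimal-models} rules out stalling at a model strictly below $\Delta$; but the progress-and-termination bookkeeping (presumably via the finite-iterate bound behind \Cref{prop-finite-fixed-points-reachable}) is still absent. As it stands, the conjecture remains open.
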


\subsection{Completeness of the Algorithm}\label{sec-algorithm-complete}

We will start with a warm-up lemma:

\begin{lemma}\label{lem-immcons-fact-step}
If $\immcons{P}((D_0)^+) = \emptyset$ and $D_0\ldots D_n$ is a fact-step step sequence, then $D_n$ is not saturated. 
\end{lemma}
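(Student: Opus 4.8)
The plan is to argue by contradiction, reducing fact-set saturation to a statement about the immediate-consequence operator and then exploiting monotonicity. First I would record two monotonicity facts. Every clause of \Cref{def-set-evolution} produces a successor $D'$ with $D \subseteq D'$, so fact-set steps only add facts and by induction $D_0 \subseteq D_n$. Promotion is monotone on consistent sets — immediate from \Cref{def-promotion}, since if $(D_0)^+[a] = \just{v}$ then $a \is v \in D_0 \subseteq D_n$ and consistency of $D_n$ force $(D_n)^+[a] = \just{v}$, while $(D_0)^+[a] = \none{\emptyset} = \bot$ is below everything — so $(D_0)^+ \leq_\DB (D_n)^+$. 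Then \Cref{thm-immcons-monotone} gives $\immcons{P}((D_0)^+) \leq_\Choice \immcons{P}((D_n)^+)$. Since the hypothesis is $\immcons{P}((D_0)^+) = \emptyset$ and $\emptyset = \top_\Choice$ is the greatest element of $\Choice$ (\Cref{def-choice}), the only choice set at or above it is itself, so $\immcons{P}((D_n)^+) = \emptyset$ as well.

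It remains to show that $\immcons{P}((D_n)^+) = \emptyset$ forces $D_n$ to be non-saturated, and here I would prove the contrapositive: assuming $D_n$ is saturated (\Cref{def-set-saturation}), I will exhibit $(D_n)^+$ as a member of $\immcons{P}((D_n)^+)$, contradicting emptiness. By \Cref{def-immediate-consequence}, $\immcons{P}((D_n)^+)$ is the least upper bound of $\{(D_n)^+\}$ together with the head $\langle \sigma H\rangle$ of every rule instance satisfied in $(D_n)^+$. By \Cref{def-choice-lub}, to witness $(D_n)^+$ as a member of this join it suffices to select $(D_n)^+$ from the distinguished singleton and, from each satisfied head $\langle \sigma H\rangle$, some element $\le (D_n)^+$: the resulting family is then bounded above by $(D_n)^+$ (hence compatible) and joins to exactly $(D_n)^+$. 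The whole argument therefore reduces to the claim that, under saturation, every satisfied rule head contains an element below $(D_n)^+$.

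Establishing that claim is where I expect the real work, via a case analysis tying constraint-level satisfaction back to the fact-set semantics. A rule instance satisfied in $(D_n)^+$ in the sense of \Cref{def-db-satisfaction} is also satisfied in the fact set $D_n$, because $(D_n)^+$ is positive and so $\just{\sigma v} \le (D_n)^+[a]$ forces $a \is \sigma v \in D_n$; thus the instance genuinely induces a fact-set evolution $D_n \setstepA{P} S$, which saturation pins to $S = \{D_n\}$. For a closed rule with head $\prop{p}{\overline t}{\{v_1,\dots,v_m\}}$ on attribute $a$ (with $m \ge 1$ by \Cref{def-rule}), $S = \{D_n\}$ can hold only if some union $D_n \cup \{a \is \sigma v_j\}$ already equals $D_n$, i.e.\ $a \is \sigma v_j \in D_n$, whence $(a \mapsto \just{\sigma v_j}) \le (D_n)^+$ is the required element of $\langle \sigma H\rangle$. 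For an open rule permitting $a = v$, $S = \{D_n\}$ forces $a$ to be assigned in $D_n$ (otherwise $D_n \cup \{a \is \sigma v\}$ would be a new consistent database in $S$), and whichever value $a$ takes one of the two elements of $\langle \propish{p}{\overline t}{v}\rangle$ lies below $(D_n)^+$: namely $\just{\sigma v}$ when $a \is \sigma v \in D_n$, and $\none{\{\sigma v\}}$ when $a$ is assigned some $w \neq \sigma v$, since then $\none{\{\sigma v\}} \le \just{w}$. This discharges the claim in both cases and closes the contradiction, so $D_n$ cannot be saturated.
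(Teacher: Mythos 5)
Your proof is correct, but it takes a genuinely different route from the paper's. The paper argues by induction along the step sequence, maintaining a concrete syntactic invariant: from $\immcons{P}((D_0)^+) = \emptyset$ it extracts a conflict among closed-rule groundings (two satisfied closed rules whose value sets do not overlap, or a fact $a \is v$ together with a satisfied closed rule whose value set omits $v$), observes that satisfied premises remain satisfied as facts accumulate, and notes that such a conflict directly prevents each $D_i$ from being saturated. You instead work semantically: since steps only add facts, $(D_0)^+ \le_\DB (D_n)^+$, so by the already-proven monotonicity of $\immcons{P}$ (\Cref{thm-immcons-monotone}) and the fact that $\top_\Choice = \emptyset$ has nothing strictly above it, the emptiness transfers to $\immcons{P}((D_n)^+)$; then you show that saturation of $D_n$ would force $(D_n)^+ \in \immcons{P}((D_n)^+)$, a contradiction. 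Your second half is carried out carefully and correctly (the positive-database correspondence between \Cref{def-db-satisfaction} and \Cref{def-set-satisfaction}, and the closed/open case analysis under $S = \{D_n\}$, are exactly what is needed). What each approach buys: yours reuses existing machinery, needs no per-step conflict bookkeeping, and is arguably tighter --- the paper's stated invariant of \emph{two} rules with disjoint value sets does not literally cover conflicts arising only from three or more closed rules with pairwise-overlapping but jointly empty value sets, a corner your semantic argument sidesteps entirely; moreover, your key claim that saturation puts $(D_n)^+$ inside its own immediate consequences is essentially the observation the paper invokes without proof at the end of its completeness argument (\Cref{prop-algorithm-complete}). The paper's approach, in exchange, is self-contained (it does not lean on \Cref{thm-immcons-monotone} or the lattice structure of $\Choice$) and gives a concrete operational picture of what the irrecoverable conflict looks like.
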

\begin{proof}
By induction on $n$, in each $D_i$ one of these two conditions apply:
\begin{itemize}
    \item There is an attribute $a$ and two rule groundings with satisfied premises in $D_i$ and conclusions $a \is \{ v_1, \ldots, v_n \}$ and $a \is \{ w_1, \ldots, w_m \}$ where the $v_i$ and $w_i$ do not overlap.
    \item $\{ a \is v \} \in D_i$ and there is a rule grounding with satisfied premises in $D_i$ and conclusion $a \is \{ w_1, \ldots, w_m \}$, where $v$ is not one of the $w_i$.
\end{itemize}
Either of these cases mean that $D_i$ cannot be saturated.
\end{proof}

Now we can proceed to prove \Cref{prop-algorithm-complete}: if there is a fact-set step sequence $\emptyset\ldots D$ under $P$ and $D$ is
    saturated by \reforexpanded{def-set-saturation}{Definition 2.10}, then there is an algorithmic
    step sequence $\bot_\DB \ldots (D)^+$ under $P$ and $(D)^+$ is a model of $P$ by ~\reforexpanded{def-model}{Definition 5.2}.

By induction, we will prove that if there is a fact-set step sequence $D_1\ldots D_k$ under $P$, then either $D_k$ cannot be extended to a saturated set of facts --- there is no solution $D'$ such that $D_k\ldots D'$ is a saturated --- or $(D_1)^+\ldots (D_k)^+$ is an algorithmic step sequence under $P$.  

The base case is immediate: given a fact-set step sequence containing no steps, provide an algorithmic step sequence containing no steps. 

In the inductive case, we have a fact-set step sequence $D_1\ldots, D_k, D_{k+1}$ and we know one of two things is true: either $D_k$ cannot be extended to a solution, or else $(D_1)^+\ldots (D_k)^+$ is an algorithmic step sequence under $P$. We need to show either that $D_{k+1}$ cannot be extended to a solution or else that $(D_1)^+\ldots (D_{k+1)})^+$ is an algorithmic step sequence under $P$.

If $D_k$ can't be extended to a solution neither can $D_{k+1}$, satisfying the first disjunctive option in the induction hypothesis.

If $D_k = D_{k+1}$ then $(D_1)^+\ldots (D_k)^+$ is the same thing as $(D_1)^+\ldots (D_{k+1})^+$, and we're done. Otherwise, by inspecting the definitions we see $D_{k+1} = \{ a \is v \} \cup D_{k}$ for some $a$ and $v$. We identify three cases:
\begin{itemize}
    \item If $\immcons{P}((D_k)^+) = \emptyset$, the first disjunctive option in the induction hypothesis is satisfied by \Cref{lem-immcons-fact-step}.
    \item If $\immcons{P}((D_k)^+) \neq \emptyset$ and $(a \mapsto \just{v}) \in \immcons{P[a]}((D_k)^+)$, then $(D_{k+1})^+ \in \{ (D_k)^+ \} \vee \immcons{P[a]}((D_k)^+)$. This means $(D_k)^k$ can take an algorithmic step to $(D_{k+1})^+$, satisfying the second disjunctive option in the induction hypothesis.
    \item If $\immcons{P}((D_k)^+) \neq \emptyset$ and $(a \mapsto \just{v}) \not\in \immcons{P[a]}((D_k)^+)$, then we must be able to identify some closed rule $H \leftarrow F$ in $P$ such that some $\sigma$ satisfies $F$ in $(D_k)^+$,  where $\sigma H = a \is \{ v_1, \ldots v_n \}$, and where $v$ is
    not one of the $v_i$. This implies that in $D_{k+1}$, or in any extension of it, 
    evolves to the empty set (\reforexpanded{def-set-evolution}{Definition 2.8}) and so therefore cannot
    be a solution or extended to a solution, satisfying the first disjunctive option in the induction hypothesis.
\end{itemize}

Applying this proof by induction to the given fact-set step sequence, we obtain from the fact-step step sequence $\emptyset\ldots D$ an algorithmic step sequence $\bot_\DB\ldots (D)^+$, because it would be an immediate contradiction to say that the saturated database $D$ could not be extended to a saturated database. We complete the proof by observing that, if $D$ is saturated by \reforexpanded{def-set-saturation}{Definition 2.10} then $(D)^+$ is a model.

\subsection{Soundness of the Algorithm}\label{sec-algorithm-sound}

Here we prove \Cref{prop-algorithm-sound}: if there is an algorithmic step sequence on $P$ from $\bot_{\DB}$ to $\Delta$ and $\Delta$ is a model of $P$, then $\Delta \in \lfp \bigimmcons{P}$.
Since $\lfp \bigimmcons{P}$ is the set of minimal models of $P$ (\reforexpanded{thm-exactly-the-minimal-models}{Theorem 5.9}), to prove \Cref{prop-algorithm-sound} it suffices to show $\Delta$ is minimal, which we show in \Cref{lemma:alg-sound-model}:



\begin{lemma}\label{lemma:alg-sound-model}
    Let $\bot_{\DB} \ldots \Delta$ be an algorithmic step
    sequence for a program $P$, with $\Delta$ a model of $P$. Then $\Delta$ is minimal: if $M$ is a model and $M \le \Delta$ then $M = \Delta$.
\end{lemma}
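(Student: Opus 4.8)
The plan is to prove the two inequalities $M \le_{\DB} \Delta$ (given) and $\Delta \le_{\DB} M$ and conclude $M = \Delta$ by antisymmetry of the constraint-database order (\Cref{def-db-database}). All the work is in establishing $\Delta \le_{\DB} M$, and I would do this by an induction along the algorithmic step sequence $\bot_{\DB} = \Delta_0, \ldots, \Delta_n = \Delta$ showing that every $\Delta_i \le_{\DB} M$.

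First I would record two preliminary facts. (i) Algorithmic step sequences are monotone increasing: since a step sends $\Delta_{i-1}$ to some $\Delta_i \in \{\Delta_{i-1}\} \vee \immcons{P[a]}(\Delta_{i-1})$ and $\{\Delta_{i-1}\}$ lies below that least upper bound, every element of the resulting choice set dominates $\Delta_{i-1}$, giving $\Delta_{i-1} \le_{\DB} \Delta_i$ and hence $\Delta_i \le_{\DB} \Delta_n = \Delta$ for all $i$. (ii) Because $\Delta_i$ comes from joining $\Delta_{i-1}$ with an element of $\immcons{P[a]}(\Delta_{i-1})$, which is singular in $a$, the two databases agree off $a$ and satisfy $\Delta_i[a] = \Delta_{i-1}[a] \vee c$ for some $(a \mapsto c) \in \immcons{P[a]}(\Delta_{i-1})$. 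So in the inductive step, where the hypothesis gives $\Delta_{i-1} \le_{\DB} M$, it suffices to prove $c \le M[a]$: then $M[a]$ is an upper bound of both $\Delta_{i-1}[a]$ and $c$, so $\Delta_i[a] = \Delta_{i-1}[a] \vee c \le M[a]$, and $\Delta_i \le_{\DB} M$ follows.

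Next I would extract from the model hypothesis the constraint that $M$ imposes at $a$. Since $M$ is a model, $\immcons{P}(M) = \{M\}$ (\Cref{def-model}); grouping the satisfied rule heads by the attribute of their conclusion rewrites this as $\{M\} \vee \bigvee_{a} \immcons{P[a]}(M) = \{M\}$, whence $\immcons{P[a]}(M) \le_{\Choice} \{M\}$. Monotonicity of $\immcons{P[a]}$ (the argument of \Cref{thm-immcons-monotone} applied to the $a$-headed rules) together with $\Delta_{i-1} \le_{\DB} M$ then yields $\immcons{P[a]}(\Delta_{i-1}) \le_{\Choice} \immcons{P[a]}(M) \le_{\Choice} \{M\}$. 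By the unique-existence characterization of $\le_{\Choice}$ in \Cref{def-choice}, there is a \emph{unique} $(a \mapsto c') \in \immcons{P[a]}(\Delta_{i-1})$ with $c' \le M[a]$.

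The hard part — and the step I expect to require the most care — is arguing that the branch the algorithm actually took, namely $c$, coincides with this distinguished $c'$; nothing about the step itself forces this, so the identification must come from squeezing both constraints below $\Delta$. Here I would use fact (i): $c \le \Delta_i[a] \le \Delta[a]$, while $c' \le M[a] \le \Delta[a]$ using the given $M \le_{\DB} \Delta$. Thus $c$ and $c'$ share the upper bound $\Delta[a]$, so the singular databases $(a \mapsto c)$ and $(a \mapsto c')$ are compatible; since both lie in the pairwise-incompatible set $\immcons{P[a]}(\Delta_{i-1})$, they must be equal, giving $c = c' \le M[a]$. This closes the induction, so $\Delta = \Delta_n \le_{\DB} M$ and therefore $M = \Delta$. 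The subtlety worth flagging explicitly is that it is the pairwise incompatibility of choice sets, rather than any direct monotonicity of the algorithmic step, that forces the algorithm's choice at each attribute to agree with the unique choice compatible with the smaller model $M$.
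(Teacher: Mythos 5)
Your proof is correct and takes essentially the same route as the paper's: an induction along the step sequence showing each $\Delta_i \le M$, bounding the relevant immediate consequences by $\{M\}$ via monotonicity and the model equation $\immcons{P}(M) = \{M\}$, and then using pairwise incompatibility of choice sets together with the common upper bound $\Delta$ (available from the given $M \le \Delta$ plus monotonicity of the step sequence) to force the algorithm's actual choice to coincide with the unique element lying below $M$. The only difference is presentational: you carry out the identification pointwise, on constraints at the chosen attribute $a$, whereas the paper performs the same argument on whole constraint databases inside $\{\Delta_i\} \vee \immcons{P[a]}(\Delta_i)$.
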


\begin{proof}
    We will show inductively that each step $\Delta_i$ in the sequence $\bot_{\DB} \ldots \Delta$ is below $M$, and so in particular $\Delta \leq M$.
    %
    %

    {\bf Base case:} $\Delta_0 = \bot_{\DB}$. We have $\bot_{\DB} \leq M$ by definition.

    \newcommand\flubber{\immcons{P[a]}(\Delta_i)}

    {\bf Inductive case:} 
    Given $\Delta_i \leq M$, we need to show $\Delta_{i+1} \leq M$.
    By \cref{def-algorithmic-steps}, we know $\Delta_{i+1} \in \{\Delta_i\} \vee \immcons{P[a]}(\Delta_i)$.
    As a first step, we show $\flubber \le \{M\}$:
    \begin{align*}
        \flubber
        &= \bigvee \{ \langle \sigma H \rangle :
             (H \leftarrow F) \in P,\, \sigma F \le \Delta_i,\, \sigma F @ a \}
        \\
        &\le \bigvee \{ \langle \sigma H \rangle :
             (H \leftarrow F) \in P,\, \sigma F \le \Delta_i \}
        \\
        &\le \bigvee \{ \langle \sigma H \rangle :
             (H \leftarrow F) \in P,\, \sigma F \le M \}
        && \text{since }\Delta_i \le M
        \\
        &\le \immcons{P}(M)
        && \text{definition of }\immcons{P}
        \\
        &= \{M\}
        && \text{since }M\text{ is a model}
    \end{align*}

    \noindent
    This shows $\{M\}$ upper bounds $\flubber$; it is also an upper bound of $\{\Delta_i\}$, since $\Delta_i \le M$.
    Therefore it is above their least upper bound: $\{\Delta_i\} \vee \flubber \le \{M\}$.
    By definition of $\le_\Choice$ this means there is some $\Delta' \in \{\Delta_i\} \vee \flubber$ with $\Delta' \le M$.
    To show $\Delta_{i+1} \le M$ it suffices to show that this $\Delta'$ is in fact $\Delta_{i+1}$.
    Since they are elements of the choice set $\{\Delta_i\} \vee \flubber$, this follows if they are compatible.
    And indeed, our final destination $\Delta$ upper bounds both; the first because $\Delta' \le M \le \Delta$; the second because the step sequence is monotonically increasing, $\bot_\DB \le ... \le \Delta_{i+1} \le ... \le \Delta$.
\end{proof}

\subsection{Reachability of Finite Fixed Points}\label{sec-finite-fixed-points-reachable}

Here we prove \Cref{prop-finite-fixed-points-reachable}: If $\Delta \in \lfp \bigimmcons{P}$ and $\Delta$ is finite, then there is a fact-set step sequence $\emptyset\ldots(\Delta)^-$. 

\begin{lemma}
    If $\Delta \in \lfp \bigimmcons{P}$ and $\Delta$ is finite, then there is a $n$ such that $\Delta \in \bigimmcons{P}^n(\bot)$
\end{lemma}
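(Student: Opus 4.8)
The plan is to reach $\lfp\bigimmcons{P}$ by ordinary $\omega$-iteration and then exploit the finiteness of $\Delta$ to locate it at a finite stage. Write $\C_n = \bigimmcons{P}^n(\bot_\Choice)$ for $\bot_\Choice = \{\bot_\DB\}$. Since $\bigimmcons{P}$ is monotone (\Cref{thm-lifted-monotone}) and $\bot_\Choice$ is least, the iterates form an increasing chain $\C_0 \le \C_1 \le \cdots$. The first thing I would establish is that this $\omega$-chain already reaches the least fixed point, i.e.\ $\lfp\bigimmcons{P} = \bigvee_{n<\omega}\C_n$. By Kleene's theorem this follows once $\bigimmcons{P}$ is shown to preserve least upper bounds of increasing $\omega$-chains.

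The continuity of $\bigimmcons{P}$ is where I expect the real work to be, and it rests on the \emph{finitary} character of satisfaction (\Cref{def-db-satisfaction}). A premise $\prop{p}{\overline{t}}{v}$ is satisfied in a constraint database $E$ exactly when $\just{\sigma v} \le E[\propK{p}{\sigma\overline{t}}]$, and in $\Constraint$ the only element above $\just{\sigma v}$ is $\just{\sigma v}$ itself; hence along an increasing chain whose join is $E$, a premise is satisfied in the join iff it is already satisfied at some finite stage (the join of a chain of $\none{}$-constraints is again a $\none{}$-constraint, so reaching a $\just{}$-value requires attaining it at a concrete stage). Because each rule body $F$ has only finitely many premises, an entire firing $\sigma F \le \bigvee_n E_n$ is witnessed at one common finite stage. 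Propagating this through \Cref{def-immediate-consequence} and the definition of $\bigvee$ on choice sets (\Cref{def-choice-lub}) yields continuity; the delicate bookkeeping is for the choice-set joins, since one must check that a compatible selection realizing a database of $\bigimmcons{P}(\bigvee_n\C_n)$ can be assembled from finitely-supported witnesses drawn at a shared finite stage.

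Granting $\lfp\bigimmcons{P} = \bigvee_{n<\omega}\C_n$, I would next extract the approximation chain of $\Delta$. For each $n$ the relation $\C_n \le \lfp\bigimmcons{P}$ supplies a unique $\Delta_n \in \C_n$ with $\Delta_n \le \Delta$ (\Cref{def-choice}), and these are increasing: for $m \le n$, the unique element of $\C_m$ below $\Delta_n$ is also below $\Delta$, hence equals $\Delta_m$, so $\Delta_m \le \Delta_n$. Unwinding \Cref{def-choice-lub} for the chain shows that the element of $\bigvee_n\C_n$ lying below $\Delta$ is exactly $\bigvee_n\Delta_n$; since that element is $\Delta$ itself, we obtain $\bigvee_n\Delta_n = \Delta$, computed pointwise in $\Constraint$.

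Finally, finiteness forces this increasing chain to stabilize at a finite index. Let $A = \{a : \Delta[a] \ne \none{\emptyset}\}$, which is finite by \Cref{def-db-finite}. For $a \notin A$ every $\Delta_n[a] = \none{\emptyset}$. For $a \in A$ the chain $\Delta_0[a] \le \Delta_1[a] \le \cdots$ joins to $\Delta[a]$: if $\Delta[a] = \just{t}$ then, since the join of a chain of $\none{}$-constraints is again a $\none{}$-constraint, some $\Delta_{n_a}[a]$ must already equal $\just{t}$; and if $\Delta[a] = \none{X}$ with $X$ finite, the underlying sets increase to $X$ and so equal $X$ from some finite $n_a$ onward. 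Taking $n = \max_{a\in A} n_a$, a finite maximum over a finite set, gives $\Delta_n[a] = \Delta[a]$ for every $a$, i.e.\ $\Delta_n = \Delta$, so $\Delta \in \C_n = \bigimmcons{P}^n(\bot_\Choice)$, as required. The main obstacle is the continuity step; the stabilization argument is elementary once $\omega$-reachability of the fixed point is in hand.
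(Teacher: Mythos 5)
Your overall strategy is sound and your endgame is correct, but you take a genuinely different---and substantially heavier---route than the paper, and the step you yourself flag as ``the real work'' is precisely the step the paper is engineered to avoid. The paper never proves, or needs, $\omega$-continuity of $\bigimmcons{P}$ or the Kleene identity $\lfp \bigimmcons{P} = \bigvee_{n<\omega} \bigimmcons{P}^n(\bot_\Choice)$. It constructs the same approximant chain you do, using only monotonicity and Knaster--Tarski to get $\bigimmcons{P}^i(\bot_\Choice) \leq_\Choice \lfp\bigimmcons{P}$ and hence a unique $\Delta_i \in \bigimmcons{P}^i(\bot_\Choice)$ with $\Delta_i \leq \Delta$, and then closes the argument with a trick: if two consecutive approximants ever coincide, $\Delta_i = \Delta_{i+1}$, then $\Delta_i \in \immcons{P}(\Delta_i)$, so $\Delta_i$ is a model (\Cref{def-model}); by \Cref{corollary-covering} and the pairwise incompatibility of $\lfp\bigimmcons{P}$ (\Cref{thm-exactly-the-minimal-models}) this forces $\Delta_i = \Delta$, so $\Delta$ already lies in $\bigimmcons{P}^i(\bot_\Choice)$. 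Finiteness of $\Delta$ enters only to show the chain cannot strictly increase forever, since each strict increase introduces a new $\just{t}$-valued attribute below $\Delta$. So where you deduce $\Delta = \bigvee_n \Delta_n$ from $\omega$-reachability of the fixed point, the paper deduces $\Delta = \Delta_i$ from stabilization plus minimality, using lemmas it has already established.

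The one genuine gap in your proposal is that the continuity lemma carrying all of this weight is sketched, not proven. For what it is worth, the claim is true, and your reason is the right core: a $\just{t}$ constraint in a pointwise join of a chain must be attained at a finite stage (joins of $\none{X}$ constraints stay of $\none{}$ form), and rule bodies are finite; the choice-set bookkeeping also goes through, because for an increasing chain the compatible selections of \Cref{def-choice-lub} are exactly the increasing threads, after which the joins distribute as needed. Your route thus buys a strictly stronger structural fact---that the closure ordinal of $\bigimmcons{P}$ is at most $\omega$---which the paper never establishes, at the cost of a nontrivial verification that must be written out in full if you keep this route. What the paper's route buys is elementarity and brevity: nothing beyond monotonicity and results already in hand. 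One small point in your favor: your stabilization step, which handles both $\just{t}$ and finite $\none{X}$ attributes pointwise over the finite support, is spelled out more carefully than the corresponding step in the paper, which counts only $\just{t}$ attributes.
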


\begin{proof}
For all $i$, it is the case that $\bigimmcons{P}^i(\bot) \leq \lfp \bigimmcons{P}$, so we can define an infinite sequence of $\Delta_i$ such that each $\Delta_i \leq \Delta$ and $\Delta_i \in \bigimmcons{P}^i(\bot)$.

If we ever have a $i$ such that $\Delta_i = \Delta_{i+1}$, then $\Delta_i \in \lfp \bigimmcons{P}$, meaning that $i$ is the $n$ we are looking for. (Proof: $\Delta_i$ is a model, so $\Delta \leq \Delta_i$, and because $\bigimmcons{P}^i(\bot) \leq \lfp \bigimmcons{P}$, $\Delta_i \leq \Delta$.)

Therefore, to show that there exists an $n$ such that $\Delta \in \bigimmcons{P}^n(\bot)$, it suffices to show that the sequence of $\Delta_i$ cannot be be strictly ascending without limit. 

If $\Delta_{i-1}$ and $\Delta_{i}$
agree on all attributes where $\Delta_{i}[a] = \just{t}$, then $\Delta_{i} = \Delta_{i+1}$, so an infinite ascending sequence implies a that each $\Delta_{i}$ would have to contain at least $i$ attributes that map to values $\just{t}$. Because $\Delta$ is finite and $\Delta_i \leq \Delta$, this is impossible, so there must be the required finite $n$.
\end{proof}

\begin{lemma}
    If $\Delta_i$ is finite and takes an algorithmic step to $\Delta_{i+1}$, then $(\Delta_i)^-\ldots(\Delta_{i+1})^-$ is a fact-set step sequence. 
\end{lemma}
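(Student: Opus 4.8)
The plan is to trace exactly what an algorithmic step does to the erasure. By \Cref{def-algorithmic-steps}, $\Delta_{i+1} \in \{\Delta_i\} \vee \immcons{P[a]}(\Delta_i)$ for some attribute $a$, so $\Delta_{i+1} = \Delta_i \vee_\DB E$ for some $E \in \immcons{P[a]}(\Delta_i)$. Since every database in $\immcons{P[a]}(\Delta_i)$ is singular in $a$, we have $E[a'] = \none{\emptyset}$ for $a' \neq a$, whence $\Delta_{i+1}$ agrees with $\Delta_i$ at every attribute except possibly $a$, and $\Delta_i \le \Delta_{i+1}$. Therefore the erasures $(\Delta_i)^-$ and $(\Delta_{i+1})^-$ can differ only in whether $a$ receives a value. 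First I would dispose of the easy cases: if $\Delta_{i+1}[a]$ is not of the form $\just{v}$, or if $\Delta_i[a]$ was already $\just{v}$ (so monotonicity forces $\Delta_{i+1}[a] = \just{v}$ as well), then the two erasures coincide and the one-element sequence $(\Delta_i)^-$ is trivially a fact-set step sequence.

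The one substantive case is $\Delta_i[a] = \none{X}$ with $\Delta_{i+1}[a] = \just{v}$, so that $(\Delta_{i+1})^- = (\Delta_i)^- \cup \{a \is v\}$ differs by exactly one new fact. Since $\Delta_{i+1}[a] = \none{X} \vee_\Constraint E[a]$, this forces $E[a] = \just{v}$, i.e.\ the chosen element of $\immcons{P[a]}(\Delta_i)$ is $(a \mapsto \just{v})$. I would then unfold the least upper bound of \Cref{def-choice-lub}: an element equal to $(a \mapsto \just{v})$ is the join of a compatible selection of one database from each satisfied rule's head-choice-set $\langle \sigma H\rangle$, and by the case analysis defining $\bigvee_\Constraint$ this join is $\just{v}$ only if at least one selected constraint is already $\just{v}$. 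Inspecting \Cref{def-ruleconc}, a head-choice-set contains $(a \mapsto \just{v})$ exactly when it arises from an open rule whose ground head is $\propish{p}{\overline{t}}{v}$, or a closed rule whose ground head is $\prop{p}{\overline{t}}{\{ v_1, \ldots, v_m \}}$ with $v$ among the $v_j$, in either case with $a = \propK{p}{\sigma\overline{t}}$ and $\sigma F \le \Delta_i$.

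It remains to convert this into a fact-set step. I would first observe that $\sigma F \le \Delta_i$ implies $\sigma$ satisfies $F$ in $(\Delta_i)^-$ in the sense of \Cref{def-set-satisfaction}: by \Cref{def-db-satisfaction} each premise gives $\just{\sigma v'} \le \Delta_i[\propK{p'}{\sigma\overline{t'}}]$, and since $\just{t} \le c$ holds only when $c = \just{t}$, the corresponding fact lies in $(\Delta_i)^-$. Applying the open- or closed-rule clause of \Cref{def-set-evolution} to this $\sigma$ and the identified rule then yields $(\Delta_i)^- \setstepA{P} S$ where $S$ contains $(\Delta_i)^- \cup \{a \is v\} = (\Delta_{i+1})^-$ (the union is consistent because $\Delta_{i+1}$ is a genuine constraint database, so its erasure is consistent). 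Hence $(\Delta_i)^-$ steps to $(\Delta_{i+1})^-$, and the two-element sequence $(\Delta_i)^- (\Delta_{i+1})^-$ is a fact-set step sequence.

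The main obstacle is the middle step: extracting a single responsible rule instance from the least-upper-bound structure of $\immcons{P[a]}(\Delta_i)$. The care needed there is to argue, from the cases defining $\bigvee_\Constraint$, that a $\just{v}$ value in the composite choice set can only arise from a selection already contributing $\just{v}$ from some individual rule head, since the $\none{}$ contributions can never manufacture a $\just{}$ on their own. Everything else is routine unfolding of the erasure and the two satisfaction definitions.
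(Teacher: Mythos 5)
Your proposal is correct, and it proves the lemma exactly as stated. The paper's own proof is only a two-sentence sketch, and its first observation---that erasure does not change which substitutions satisfy which premises---is precisely your satisfaction-transfer step, so on the core idea the two arguments coincide. What you add is the part the paper leaves entirely implicit: extracting a single responsible rule instance from the least-upper-bound structure of $\immcons{P[a]}(\Delta_i)$, via the observation that $\just{v}$ can appear in a join of constraints only if some selected constraint already is $\just{v}$; this is the only delicate point in the whole lemma, and your handling of it (together with the consistency check on the union) is sound. Where the two arguments genuinely diverge is on finiteness: your proof never uses the hypothesis that $\Delta_i$ is finite, because an algorithmic step (\Cref{def-algorithmic-steps}) is singular in one attribute $a$, so the erasure gains at most one fact and at most one non-trivial fact-set step is ever needed. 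The paper instead leans on finiteness to argue there are only finitely many relevant rule heads $\sigma H$---an ingredient that matters only if the simulated step may change many attributes at once. That stronger, multi-attribute reading is what the surrounding argument actually requires: in the proof of \Cref{prop-finite-fixed-points-reachable}, the steps being stitched together are iterations of $\bigimmcons{P}$, i.e.\ $\Delta_{i+1} \in \immcons{P}(\Delta_i)$, not single-attribute algorithmic steps, and there the finitely-many-rule-heads observation is what guarantees each simulating fact-set sequence is finite. So your proof is a complete proof of the literal statement (and shows it holds without finiteness); if it were slotted into the paper, the remaining work would be bridging from algorithmic steps to full $\immcons{P}$-steps, which is exactly where the paper's finiteness remark earns its keep.
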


\begin{proof}
    This amounts to showing two things. First, we observe that the same substitutions satisfy the same premises according to \reforexpanded{def-set-satisfaction,def-db-satisfaction}{Definitions 2.7 and 4.12} (the extra information removed in the erasure transformation isn't relevant to which rules apply). Second, we observe that there are a finite number of relevant rule heads $\sigma H$ in any $\immcons{P}(\Delta)$, a straightforward consequence of the finite size of the database and the restriction in \reforexpanded{def-rule}{Definition 2.3} that all variables in the head of a rule appear in a premise.
\end{proof}

The proof of \Cref{prop-finite-fixed-points-reachable} follows from these two lemmas: we string together the required fact-set step sequence by concatenating finite step sequences between each of the finitely-many relevant iterations of the immediate consequence operator.

\section{Examples and Benchmarks for {\dusa}}\label{sec-benchmarking-and-examples}

We compare {\dusa} against two existing systems:
Clingo, part of the Postdam Answer Set Solving Collection (Potassco)
and generally recognized as the industry standard
 \cite{gebser2011potassco}, 
and Alpha, a implementation of answer set programming with
lazy grounding \cite{alpha}.
As discussed in \reforexpanded{lazy-asp-in-casp,sec-dusa-and-asp-perf}{Sections 3.3 and 7.2}, lazy-grounding systems are, like {\dusa}, able to handle
problems that pre-grounding systems like Clingo cannot, and are able to more efficiently compute solutions in cases where the fully-ground program is quite large (this is called the ``grounding bottleneck'').

There are three other often-cited lazy-grounding ASP solvers:
Gasp \cite{gasp}, ASPeRiX \cite{asperix}, and Omiga \cite{omiga}.
We were unable to find source code for Gasp,
we were unable to successfully
compile ASPeRiX, and we understand Alpha to be a successor to Omiga and so did not attempt to evaluate it.

All the results in this section were generated on AWS m7g.xlarge instances (64-bit 4-core virtual machines with 16 GB of memory running Ubuntu Linux) with processes limited to 10 GB of memory and 100 seconds of wall-clock time. This paper's artifact contains the benchmarking code \cite{simmons_2024_13921177}; both the raw data we generated and the spreadsheet used to analyize that data are available as well \cite{dusabenchmarkdatasets}.
We discuss three types of benchmarks:

\begin{itemize}
\item In \Cref{sec-examples-from-paper}, we cover spanning tree generation and canonical representative identification, examples from \reforexpanded{sec-examples}{Section 3} well-suited for {\casping} and the {\dusa} implementation.
\item In \Cref{sec-examples-sad-trombone}, we cover N-queens and Adam Smith's ``Map Generation Speedrun,'' pure search problems that are expressible in {\casping} but that demonstrate the kinds of problems that the current {\dusa} implementation is ill-suited for.
\item In \Cref{sec-alpha-benchmarks}, we replicate a set of benchmarks designed for the Alpha implementation of answer set programming with lazy grounding.
\end{itemize}

\paragraph{Concrete {\dusa} Syntax}

The {\dusa} examples in this section use the concrete syntax 
from the implementation, which follows the {\casping} notation from
the paper
closely. As is common in logic programming languages,
variables are distinguished by being capitalized
and the arrow $\leftarrow$ is replaced by ``\verb|:-|.'' Less commonly,
{\dusa} uses an un-curried logic programming notation, so we write
``\verb|color X Y is red|'' to express a premise we would have 
written in this paper as $\prop{color}{x,y}{\term{red}}$. In addition:

\begin{itemize}
    \item 
The implementation directly supports attributes with no values:
writing ``\verb|edge X Y|'' is equivalent to
$\prop{edge}{x,y}{\term{unit}}$ as motivated in
\reforexpanded{sec-simulating-datalog}{Section 2.2}. 
    \item 
When the
conclusion of a closed rule only has one value, we are allowed
to optionally omit the curly-braces: we can write
``\verb|parent X X is tt|''
for the head $\prop{parent}{x,x}{\{ \term{tt} \} }$, 
though $\prop{parent}{x,x}{\{ \term{tt}, \term{ff} \} }$ requires 
curly braces and will be written ``\verb|parent X X is { tt, ff }|''.
    \item
The implementation supports \verb`#forbid` rules, a list of premises which cannot be simultaneously satisfied in a valid solution. The rule ``\verb`#forbid p X, q X`'' 
in {\dusa}'s concrete syntax can be desugared into the rule $(\propV{ok}{\{ \term{no} \}} \leftarrow \propK{p}{x}, \propK{q}{x})$ with the addition a new predicate $\pred{ok}$ and rule $(\propV{ok}{\{ \term{yes} \}} \leftarrow)$.
\end{itemize}

\paragraph{Translating Answer Set Programs to {\CASPS}}

As discussed in \reforexpanded{sec-dusa-and-asp-perf}{Section 7.2}, we can perform fair head-to-head comparisons between ASP solvers and {\dusa} by using the translation from answer set 
programming to {\casping} presented in \Cref{sec-simulating-asp-details}. We will extend that translation in two ways. First, ASP features
\emph{constraints},
in the form of headless rules ($\leftarrow F$), which  
translate directly to the \verb|#forbid| rules introduced above. Second, ASP has is \emph{choice rules}, which 
allow the ASP engine to treat a predicate as either true or false.
A choice rule written in ASP as ``\verb|{p} :- q, not r|''
can be represented in {\casping} with the rule
$(\propV{p}{\{ \term{tt}, \term{ff} \}} \leftarrow \propV{q}{\term{tt}}, \propV{r}{\term{ff}})$. The general correctness of this extended translation is left for future work (as is the general correctness of the translation on open programs, as discussed in \reforexpanded{lazy-asp-in-casp}{Section 3.3}).

\subsection{Examples from the Paper}\label{sec-examples-from-paper}

Spanning tree generation and canonical representative identification as discussed in \reforexpanded{sec-examples}{Section 3} are both cases where {\casping} allows the problem to be described quite effectively. ASP implementations of the same algorithm are less consistent and less efficient, as we see in this section, which expands on the discussion in \reforexpanded{sec-perf-eval,sec-dusa-and-asp-perf}{Sections 7.1 and 7.2}.

\subsubsection{Spanning Tree Generation}

\begin{figure}
\begin{centering}
\footnotesize
\begin{minipage}[b]{0.4\textwidth}
\begin{verbatim}
edge(X,Y) :- edge(Y,X).

% Exactly one root
{root(X)} :- node(X).
:- root(X), root(Y), X != Y.
someRoot :- root(X).

:- not someRoot.

% The root has itself as a parent
parent(X,X) :- root(X).

% Any tree node can be a parent
inTree(P) :- parent(P,_).
{parent(X,P)} :- 
    edge(X,P), inTree(P).

% Only 1 parent
:- parent(X,P1), 
    parent(X,P2), P1 != P2.

% Tree covers connected component


:- edge(X,Y), inTree(X),
    not inTree(Y).
\end{verbatim}    
\end{minipage}
\begin{minipage}[b]{0.4\textwidth}
\begin{verbatim}
edge X Y :- edge Y X.

# Exactly one root
root X is { tt, ff } :- node X.
#forbid root X is tt, root Y is tt, X != Y.
someRoot is tt :- root X is tt.
someRoot is? ff.
#forbid someRoot is ff.

# The root has itself as a parent
parent X X is tt :- root X.

# Any tree node can be a parent
inTree P is tt :- parent P _ is tt.
parent X P is { tt, ff } :- 
    edge X P, inTree P.

# Only 1 parent
#forbid parent X P1 is tt,
     parent X P2 is tt, P1 != P2.

# Tree covers connected component
inTree Y is? ff :- 
    edge X Y, inTree X is tt.
#forbid edge X Y, inTree X is tt, 
    inTree Y is ff.    
\end{verbatim}
\end{minipage}
\end{centering}
\caption{Spanning-tree description in ASP, and direct translation to {\dusa}.}
\label{fig-spanning-tree-in-asp}
\end{figure}

A {\casp} for spanning tree generation is presented in \reforexpanded{sec-rooted-spanning-tree}{Section 3.1}, and a pure ASP program for solving the same problem, as well as that program's translation into {\dusa}'s concrete syntax, is given in \Cref{fig-spanning-tree-in-asp}. The best solution we know of in idiomatic Clingo ASP replaces the four rules in \Cref{fig-spanning-tree-in-asp} labeled ``Exactly one root'' with the single cardinality constraint \verb|1 {root(X) : edge(X,Y)} 1.|

\begin{figure}
\includegraphics[trim={2.5cm 16.5cm 4cm 3.1cm},clip,width=13.8cm]{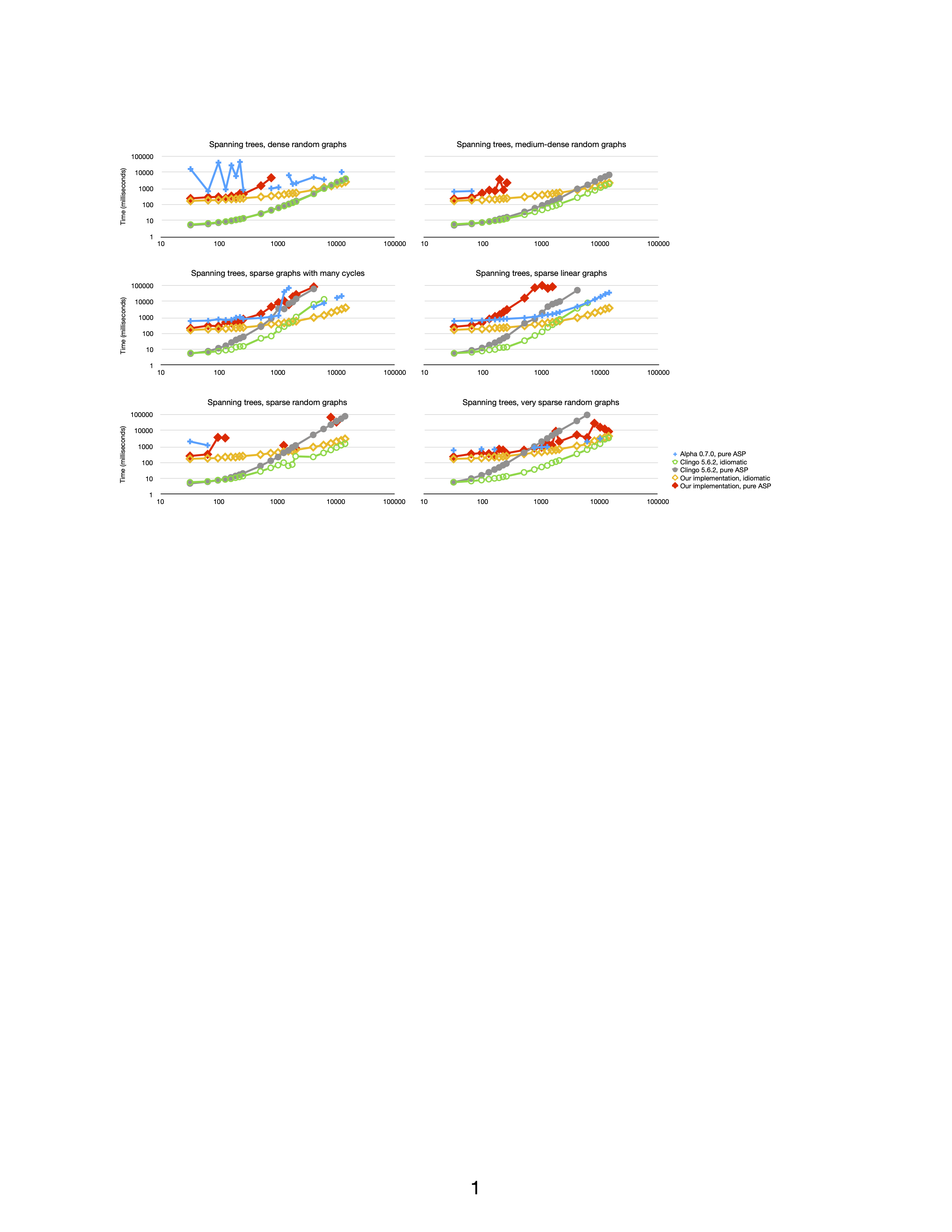}
\caption{Performance of {\dusa}, Clingo, and Alpha on finding spanning trees of graphs with different features. The $x$-axis is the number of edges in the input graph. All data points are the median of three runs requesting a single solution, discarding timeouts.}
\label{spanning-tree-analysis}
\end{figure}

These programs were tested on a graphs with different 
properties, with 
results shown in \Cref{spanning-tree-analysis}. In most cases, {\dusa} is outperformed by the idiomatic Clingo ASP solution on smaller graphs, but Clingo's performance seems to grow faster than linearly. In a fair fight, {\dusa} and Alpha outperform Clingo on some sparse graphs, which is what we would expect if Clingo's performance issues were connected to the ground-then-solve methodology.

\subsubsection{Appointing Canonical Representatives}

\begin{figure}
\begin{centering}
\footnotesize
\begin{minipage}[b]{0.4\textwidth}
\begin{verbatim}
edge(Y,X) :- edge(X,Y).

% Any node can be a rep
{representative(X,X)} :-
    node(X).

% Connected nodes have same rep
representative(Y,Rep) :-
    edge(X,Y),
    representative(X,Rep).

% Representatives must be unique
:- representative(X,R1), 
    representative(X,R2), R1 != R2.

% Every node has a rep
hasRep(X) :-
   representative(X,_).
   
:- node(X), not hasRep(X).
\end{verbatim}    
\end{minipage}
\begin{minipage}[b]{0.4\textwidth}
\begin{verbatim}
edge Y X :- edge X Y.

# Any node can be a rep
representative X X is { tt, ff } :-
    node X.

# Connected nodes have same rep
representative X Rep is tt :-
    edge X Y,
    representative Y Rep is tt.

# Representatives must be unique
#forbid representative X R1 is tt, 
    representative X R2 is tt, R1 != R2.

# Every node has a rep
hasRep X is tt :-
    representative X _ is tt.
hasRep X is? ff :- node X.
#forbid node X, hasRep X is ff.
\end{verbatim}
\end{minipage}
\end{centering}
\caption{Canonical representative identification in ASP, and direct
translation to {\dusa}.}
\label{fig-canonical-rep-in-asp}
\end{figure}

The idiomatic {\casping} approach to appointing canonical representatives for
connected components of an undirected graph is presented in 
 \reforexpanded{sec-canonical-representatives}{Section 3.2}, and a pure ASP program
 for solving the same program, as well as that program's translation into
 {\dusa}'s concrete syntax, is given in
\Cref{fig-canonical-rep-in-asp}. As in the previous section, it's possible to get a shorter and more performant Clingo program by using Clingo-supported cardinality constraints. In this case, the last three ASP rules in \Cref{fig-canonical-rep-in-asp} can be replaced by the single Clingo rule:
\begin{center}
\verb|:- node(X), not 1{representative(X,R) : node(R)}1.|
\end{center}

For this problem, idiomatic
{\dusa} is the clear winner in all cases. Both {\dusa} and Alpha outperform Clingo in almost all cases when running the same ASP program, as discussed in \reforexpanded{sec-dusa-and-asp-perf}{Section 7.2}. This is not terribly surprising:
in a graph with $v$ nodes, Clingo's grounder is almost unavoidably going to 
require $O(v^3)$ groundings of the key rule that requires connected nodes to have
the same representative. It was more unexpected that, on all but the sparsest graphs, {\dusa} 
outperformed Alpha in a head-to-head matchup despite the {\dusa} implementation exerting almost no control over backtracking. Extremely sparse random graphs were a notable exception that caused {\dusa} to time out on all but the very smallest examples.

\begin{figure}
\includegraphics[trim={2.5cm 16.5cm 4cm 3.1cm},clip,width=13.8cm]{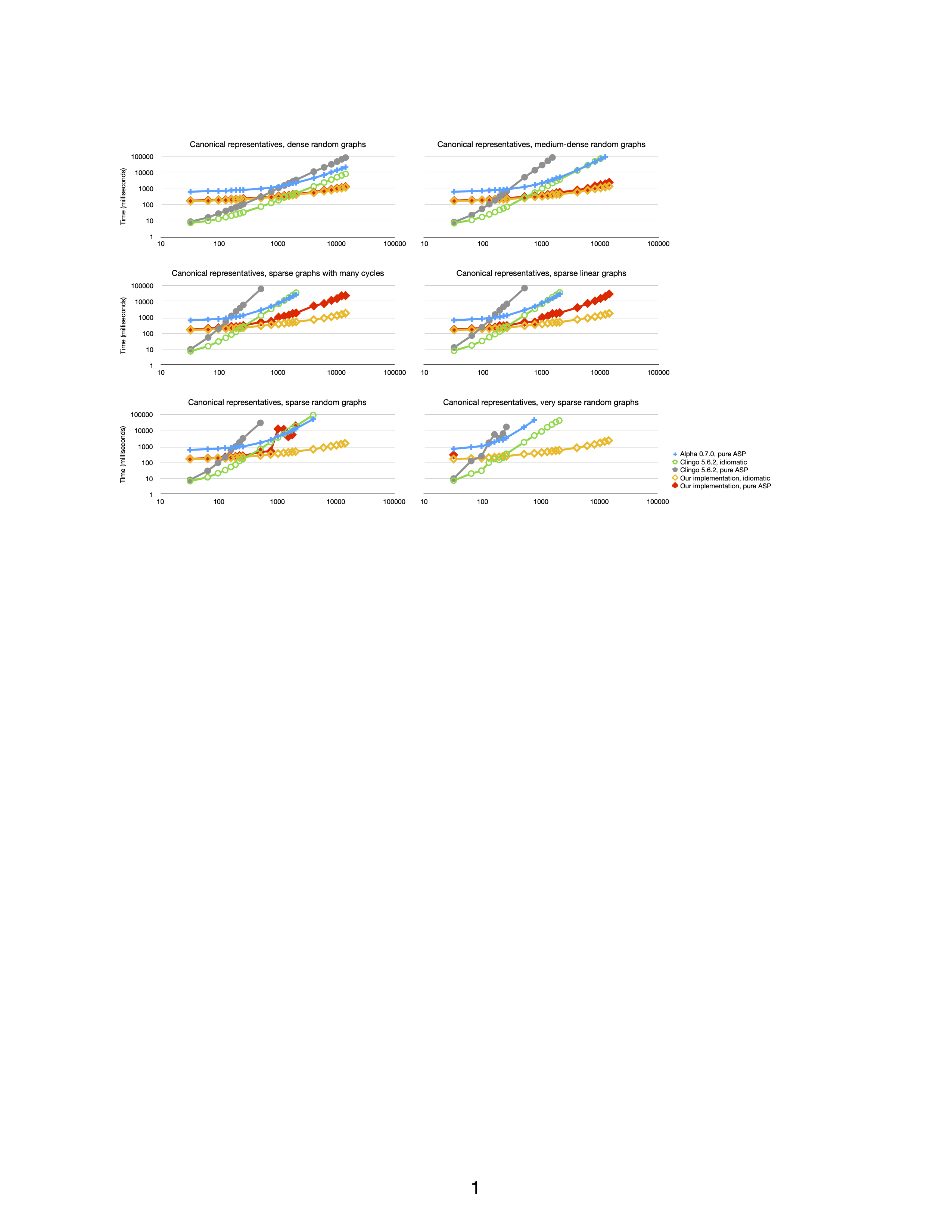}
\caption{Performance of {\dusa}, Clingo, and Alpha on finding canonical representatives for the connected components of graphs with different features. The $x$-axis is the number of edges in the input graph. All data points are the median of three runs requesting a single solution, discarding timeouts.}
\label{canonical-rep-analysis}
\end{figure}

\subsection{Examples Ill-Suited to the Current {\dusa} Implementation}\label{sec-examples-sad-trombone}

In this section we compare {\dusa} to Clingo on two problems where we expect {\dusa} to perform poorly, because they require brute-force search. These represent low-hanging fruit for optimization of {\dusa}.

\subsubsection{N-Queens}

The N-queens problem is a well-known chess puzzle. The problem is to place N queens on an N-by-N chessboard such that no two queens can immediately attack each other. We used the 2014 notes ``Basic modeling and ASP and more via the n-Queens puzzle'' by Torsten Schaub as a guide to the ASP implementations in this sections \cite{schaub2013modeling}. Schaub's solutions are not pure ASP, so we can't just turn a crank to translate them to {\dusa}. Instead, we'll consider a spectrum of two different Clingo solutions and three different {\casping} solutions.

\paragraph{Basic Clingo Implementation}

\begin{figure}
\begin{centering}
\footnotesize
\begin{minipage}[b]{0.8\textwidth}
\begin{verbatim}
{ queen(1..n,1..n) }.

 :- not { queen(I,J) } == n.
 :- queen(I,J), queen(I,JJ), J != JJ.
 :- queen(I,J), queen(II,J), I != II.
 :- queen(I,J), queen(II,JJ), (I,J) != (II,JJ), I-J == II-JJ.
 :- queen(I,J), queen(II,JJ), (I,J) != (II,JJ), I+J == II+JJ.
\end{verbatim}
\end{minipage}
\end{centering}
\caption{Basic N-queens program in Clingo's dialect of answer-set programming.}
\label{fig-basic-n-queens}
\end{figure}

The most basic Clingo implementation of N-Queens is given in \Cref{fig-basic-n-queens}. Reading the rules in order, they express:

\begin{itemize}
    \item Each spot in an N-by-N grid can have a queen or not.
    \item There must be exactly N queens.
    \item No two queens can share the same column.
    \item No two queens can share the same row.
    \item No two queens can share the same diagonal (in either direction).
\end{itemize}

This program has the characteristics of the \emph{design-space sculpting} approach described by Smith and Mateas \shortcite{smith2011answer}. In the design-space sculpting approach, a possibility space is described in the most general terms, and then constraints are used to carve out out parts of the problem space that are not wanted. 

\paragraph{An Aside on Externally-Computed Predicates in {\dusa}}

\begin{figure}
\begin{centering}
\footnotesize
\begin{minipage}[b]{0.8\textwidth}
\begin{verbatim}
#builtin INT_MINUS minus
#builtin INT_PLUS plus
dim N :- size is N.
dim (minus N 1) :- dim N, N != 1.
# Expands to: dim Nminus1 :- dim N, N != 1, minus N 1 is Nminus1. 
\end{verbatim}
\end{minipage}
\end{centering}
\caption{{\dusa} declarations for using arithmetic and making a \texttt{dim} relation containing the numbers from 1 to declared \texttt{size}, inclusive.}
\label{fig-prelim-n-queens}
\end{figure}

Our implementation doesn't support the range notation \verb|1..n| shown in \Cref{fig-basic-n-queens}, but this is simple to emulate as a relation \verb|dim| that contains all the numbers from 1 to N. This is achieved by the code in  \Cref{fig-prelim-n-queens}, which is implicitly included in subsequent {\casps} in this section. \Cref{fig-prelim-n-queens} demonstrates the use of integers and integer operations in {\dusa}, which is supported by our implementation and our theory. The theory of 
{\casping} supports treating externally-computed predicates as infinite relations. The most fundamental such relations are equality and inequality on terms and comparison of numbers, and these have built-in syntax in the {\dusa} implementation. Premises like ``\verb|4 > X|'', ``\verb|N != 1|'', or ``\verb|P == pair X Y|'' can be thought of as referring to infinite relations: in the language of the paper, we would write $\propK{gt}{\term{4}, x}$, $\propK{neq}{n, \term{1}}$, or $\propK{eq}{p, \term{pair}(x, y)}$, respectively. 

Our theory and implementation are able to handle operating on infinite relations under two conditions:
\begin{enumerate}
    \item Infinite relations are never present in the conclusion of a rule, and
    \item It is tractable to find all the substitutions that satisfy a premise. 
\end{enumerate}
In the implementation, conservative syntactic checks prevent us from writing rules like $\propK{p}{x} \leftarrow \propK{gt}{\term{4}, x}$ or $\propK{p}{x, y} \leftarrow \propK{eq}{x,y}$ that might have infinitely many satisfying substitutions.

Additional built-in operations like addition and subtraction are included with \verb|#builtin| directives. These are also treated like infinite relations. \Cref{fig-prelim-n-queens} also demonstrates an additional language convenience that we picked up from LogiQL \cite{aref15logicblox}: the language supports treating built-in relations with functional dependencies like functions, so we can write the conclusion ``\verb|dim (minus N 1)|'' instead of needing to write ``\verb|dim Nminus1|'' and including ``\verb|minus N 1 is Nminus1|'' as an additional premise.

\paragraph{N-Queens in {\dusa}}

\begin{figure}
\begin{centering}
\footnotesize
\begin{minipage}[b]{0.35\textwidth}
\begin{verbatim}
location N is? (tup X Y) :-
  dim N, dim X, dim Y.



#forbid 
  location N is (tup X _), 
  location M is (tup X _), 
  N != M.
#forbid 
  location N is (tup _ Y), 
  location M is (tup _ Y),
  N != M.
#forbid 
  location N is (tup X1 Y1), 
  location M is (tup X2 Y2), 
  N != M,
  minus X1 Y1 == minus X2 Y2.
#forbid
  location N is (tup X1 Y1),
  location M is (tup X2 Y2),
  N != M,
  plus X1 Y1 == plus X2 Y2.
\end{verbatim}    
\end{minipage}
\begin{minipage}[b]{0.35\textwidth}
\begin{verbatim}
col N is? X :-
  dim N, dim X.
row N is? Y :-
  dim N, dim Y.
  
#forbid
  col N is X, col M is X,
  N != M.
  
#forbid 
  row N is Y, row M is Y,
  N != M.
  
#forbid 
  row N is X1, col N is Y1,
  row M is X2, col M is Y2,
  N != M,
  minus X1 Y1 == minus X2 Y2.
#forbid 
  row N is X1, col N is Y1,
  row M is X2, col M is Y2,
  N != M,
  plus X1 Y1 == plus X2 Y2.
\end{verbatim}
\end{minipage}
\begin{minipage}[b]{0.2\textwidth}
\begin{verbatim}





rowFor X is? Y :-
  dim X, dim Y.

  
colFor Y is X :-
  rowFor X is Y.


posDiag (plus X Y)
  is (tuple X Y) :-
  rowFor X is Y.


negDiag (minus X Y)
  is (tuple X Y) :-
  rowFor X is Y.


\end{verbatim}
\end{minipage}
\end{centering}
\caption{Three different implementations of N-Queens in {\dusa}.}
\label{fig-n-queens-in-dusa}
\end{figure}

The notation \verb|{ queen(I, J) } == n| in \Cref{fig-basic-n-queens} is an aggregate operation: it counts the number of occurrences of the \verb|queen| predicate. Aggregates in logic programming are, in many ways, similar to negation, because they allow one part of a program to reflect on the presence \textit{and absence} of information in another part of the program. {\dusa} currently supports neither negation nor aggregation, but a future implementation of {\casping} could potentially support a stratified form of both. In order to adapt the N-queens problem to {\casping}, we need a way to ensure that there will be exactly N queens without aggregation.

Of the three programs in \Cref{fig-n-queens-in-dusa}, the first is the most sculptural: we number each of the N queens and assign a unique coordinate pair to each queen. Our prefix-firing cost semantics suggests that the first rule in this program has a complexity of $O(n^3)$ for $n$ queens, and this observation also suggests a slight improvement. The middle program in \Cref{fig-n-queens-in-dusa} is a rewrite of the first one where we switch to sixth normal form (as the literature on LogiQL suggests we should prefer \cite{aref15logicblox}). Our cost semantics predicts more manageable behavior, quadratic instead of cubic, for this middle program. Despite the fact that any polynomial effect is dwarfed by exponential backtracking, the intuition it provides seems good: the middle program performs slightly better than the left program when we test performance.

If we think a bit about the problem, we can avoid the use of \verb|#forbid| constraints entirely, as shown in the third column of Figure~\ref{fig-n-queens-in-dusa}. Each column is labeled by an $x$ coordinate, and each row is labeled by a $y$ coordinate, and there is a functional dependency in both directions: each row has a queen in one column, and each column has a queen in one row. This is expressed by the first two rules in the rightmost program. Furthermore, we can assign every cell's coordinates to a unique numbered diagonal by adding the coordinates (for the diagonals that go up and to the right) or subtracting the coordinates (for the diagonals that go down and to the right). The last two rules in the third program use functional dependencies to invalidate any assignment that places two queens on the same diagonal.

\paragraph{Better Clingo Implementation}

\begin{figure}
\begin{centering}
\footnotesize
\begin{minipage}[b]{0.5\textwidth}
\begin{verbatim}

{ queen(I,1..n) } == 1 :- I = 1..n.
{ queen(1..n,J) } == 1 :- J = 1..n.

 :- { queen(D-J,J) } >= 2, D = 2..2*n.
 :- { queen(D+J,J) } >= 2, D = 1-n..n-1.\end{verbatim}
\end{minipage}
\end{centering}
\caption{Optimized N-queens implementation in Clingo.}
\label{fig-better-n-queens}
\end{figure}

The program we worked our way towards in the previous section uses the same ideas as one of Torsten Schaub's improved Clingo programs, shown in Figure~\ref{fig-better-n-queens}. Each of the four rules in Schaub's more efficient Clingo program corresponds loosely with one of the four rules in our final program in Figure~\ref{fig-n-queens-in-dusa}, but Schaub's version uses Clingo's counting constraints to enforce the at-most-one-queen-per-diagonal property that is enforced by functional dependencies in {\casping}.

\paragraph{Performance}

\begin{figure}
\includegraphics[trim={2.5cm 21.3cm 4cm 3.2cm},clip,width=13.8cm]{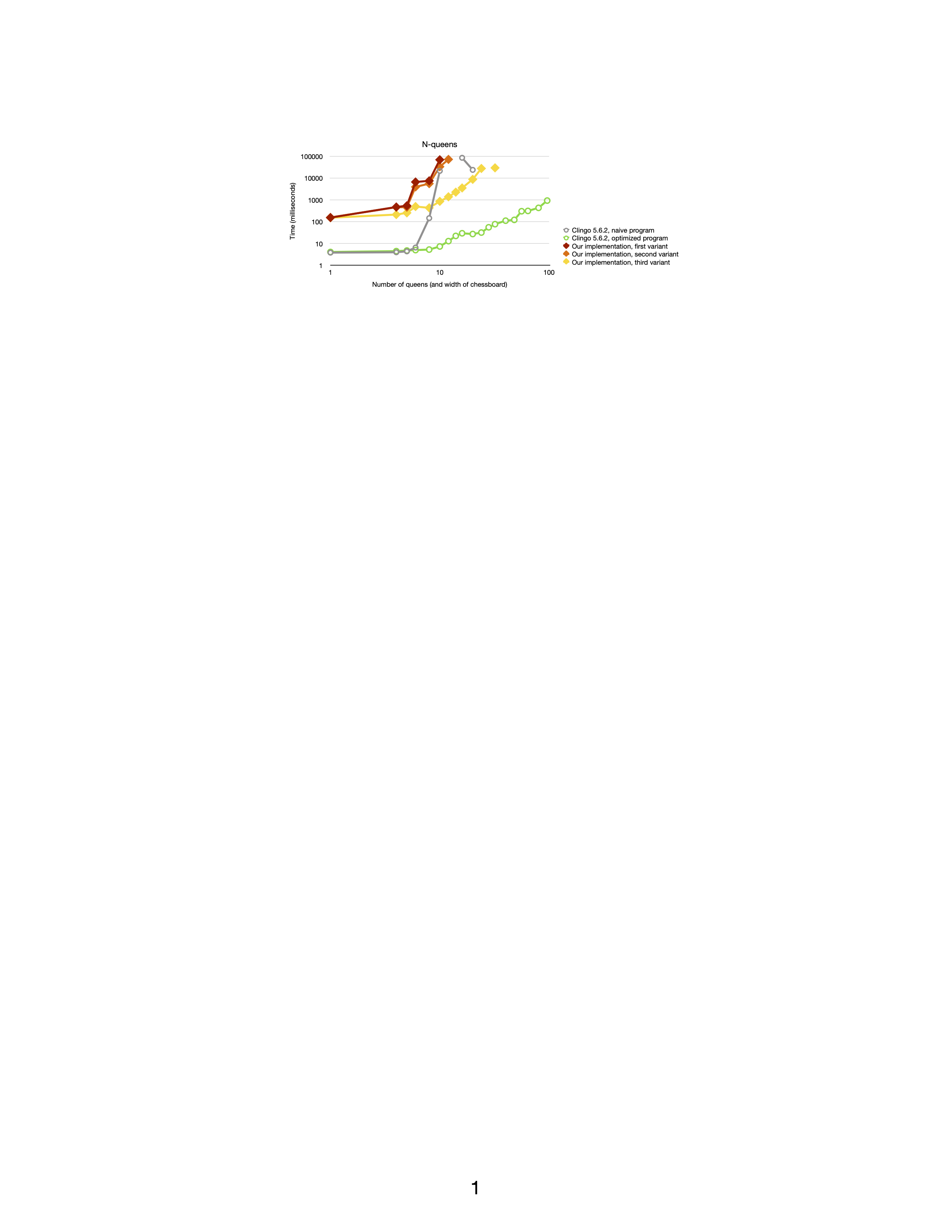}
\caption{Performance of the previous five implementations of the N-queens problem in {\dusa} and Clingo. All data points are the median of three runs requesting 10 different solutions, discarding timeouts.}
\label{queens-analysis}
\end{figure}

The performance of the two Clingo programs and three {\dusa} programs from this section is shown in Figure~\ref{queens-analysis} with the problem size scaling from 1 to 100 queens. The first Clingo program scales quite badly and starts hitting the 100-second timeout in at least some runs starting at 12 queens, whereas the optimized Clingo program in Figure~\ref{fig-better-n-queens} is far and away the best. The three successive {\dusa} programs fall in the middle: the third program from Figure~\ref{fig-n-queens-in-dusa} is the best but starts to encounter timeouts at 28 queens. 

\subsubsection{Map-Generation Speedrun}

\begin{figure}
\begin{centering}
\footnotesize
\begin{minipage}[b]{0.4\textwidth}
\begin{verbatim}
dim(1..W) :- width(W).


{ solid(X,Y) :dim(X) :dim(Y) }.


start(1,1).
finish(W,W) :- width(W).
step(0,-1).
step(0,1).
step(1,0).
step(-1,0).

reachable(X,Y) :-
    start(X,Y),
    solid(X,Y).
reachable(NX,NY) :-
    reachable(X,Y),
    step(DX,DY),
    NX = X + DX,
    NY = Y + DY,
    solid(NX,NY).

complete :- finish(X,Y),
    reachable(X,Y).
:- not complete.

at(X,Y, 0) :-
    start(X,Y),
    solid(X,Y).
at(NX,NY, T+1) :-
    at(X,Y,T),
    length(Len),
    T < Len,
    step(DX,DY),
    NX = X + DX,
    NY = Y + DY,
    solid(NX,NY).

speedrun :- finish(X,Y), at(X,Y,T).
:- speedrun.
\end{verbatim}    
\end{minipage}
\begin{minipage}[b]{0.4\textwidth}
\begin{verbatim}
#builtin INT_PLUS plus
#builtin INT_MINUS minus

dim Width :- width is Width.
dim (minus N 1) :- dim N, N != 1.

solid X Y is { tt, ff } :-
    dim X, dim Y.

start 1 1.
finish W W :- width is W.
step 0 -1.
step 0 1.
step 1 0.
step -1 0.

reachable X Y :- 
    start X Y,
    solid X Y is tt.
reachable NX NY :-
    reachable X Y,
    step DX DY,
    NX == plus X DX,
    NY == plus Y DY,
    solid NX NY is tt.

#demand finish X Y,
    reachable X Y.


at X Y 0 :- 
    start X Y,
    solid X Y is tt.
at NX NY (plus T 1) :-
    at X Y T,
    length is Len,
    T < Len,
    step DX DY,
    NX == plus X DX,
    NY == plus Y DY,
    solid NX NY is tt.

speedrun :- finish X Y, at X Y _.
#forbid speedrun.
\end{verbatim}
\end{minipage}
\end{centering}
\caption{Adam Smith's Map Generation Speedrun in Clingo's dialect of ASP, and an analogous (indirect) interpretation of the speedrun in {\dusa}.}
\label{fig-map-generation-speedrun}
\end{figure}

Adam Smith's blog post ``A Map Generation Speedrun with Answer Set Programming'' \cite{speedrun} is a beautiful example of what he calls the design-space sculpting approach to ASP \cite{smith2011answer}. On a tile grid, we stipulate that every tile can be solid or not, and we request maps where only a circuitous route through solid regions can reach the bottom-right corner of the map from the upper-left corner. 

Figure~\ref{fig-map-generation-speedrun} shows that ASP's expressiveness translates well into {\casping}, but note that this is an indirect interpretation, not a direct or mechanical translation. The fundamental choice --- is a tile solid or not? --- is naturally expressed as a closed choice rule.

\begin{figure}
\includegraphics[trim={2.5cm 21.3cm 4cm 3.15cm},clip,width=13.8cm]{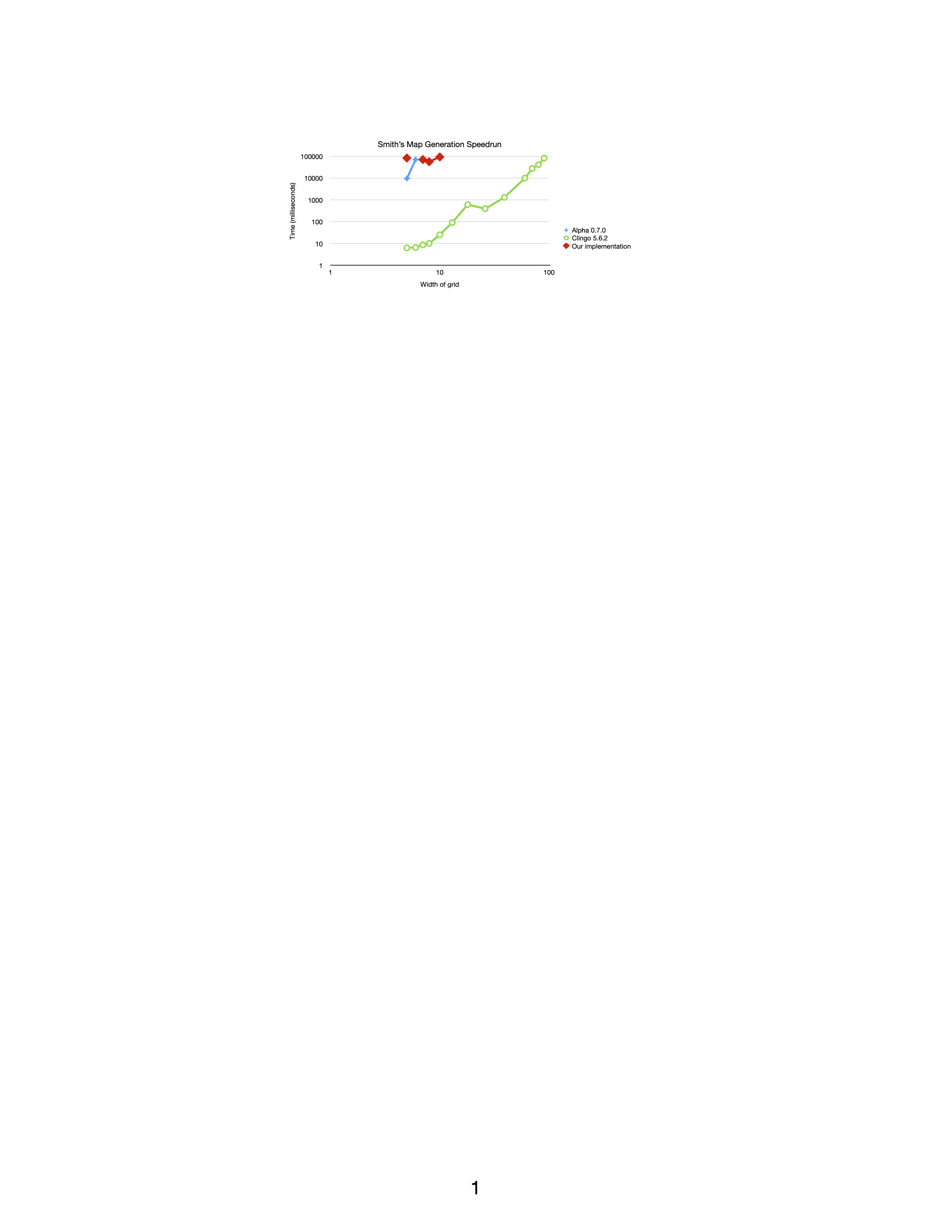}
\caption{Performance of {\dusa}, Clingo, and Alpha on the map generation programs in \Cref{fig-map-generation-speedrun}. All data points are the median of three runs requesting five different maps where the shortest path from upper-left to lower-right on an $n \times n$ grid was more than $2n + 1$ steps long, discarding timeouts.}
\label{map-generation-analysis}
\end{figure}

The map generation speedrun is an excellent demonstration of {\dusa}'s current lack of support for Smith and Mateas's design-space sculpting approach: it times out for all grids larger than $10 \times 10$, and times out at least once at every size, including $5 \times 5$ grids. We did not expect Alpha's relatively poor showing: it was unable to handle any grids larger than $6 \times 6$.

\subsection{Alpha Benchmarks}\label{sec-alpha-benchmarks}

Using a benchmark someone else wrote is a good way to keep yourself honest. We are therefore grateful to
Weinzirel \shortcite{alpha}, who published an easily accessible set of benchmarks for the Alpha implementation of answer set programming with lazy grounding \cite{alphabenchmark}. All the answer set programs in this benchmark are amenable to direct, mechanical translation to {\casps}.

Our hypothesis was that this benchmark would be somewhat kind to {\dusa}, since lazy grounding ASP and the {\casp} translation of ASP share many of the same advantages over the ground-then-solve approach. By in large, the results were similar to our other benchmarks: {\dusa} performs impressively well except on problems that obviously rely heavily on exponential backtracking search.

Weinzirel's test suite has four sections, which we will discuss in turn. Unless otherwise indicated, we evaluate {\dusa} on the direct translation of the ASP programs that are shown.

\subsubsection{Ground Explosion}

\begin{figure}
\begin{centering}
\footnotesize
\begin{minipage}[b]{0.8\textwidth}
\begin{verbatim}
p(X1,X2,X3,X4,X5,X6) :-
   select(X1), select(X2), select(X3),
   select(X4), select(X5), select(X6).

select(X) :- dom(X), not nselect(X).
nselect(X) :- dom(X), not select(X).
 :- not nselect(Y), select(X), dom(Y), X != Y.
\end{verbatim}
\end{minipage}
\end{centering}
\caption{Ground explosion benchmark.}
\label{fig-ground-explosion}
\end{figure}

The ground explosion benchmarks use the program in Figure~\ref{fig-ground-explosion} to trigger pathological behavior in ground-then-solve ASP solvers. For a domain with $n$ elements, a ground-then-solve solver will need to ground $6^n$ variants of the first rule, despite the fact that there are only $n+1$ solutions.

\begin{figure}
\includegraphics[trim={2.5cm 16.5cm 4cm 9cm},clip,width=13.8cm]{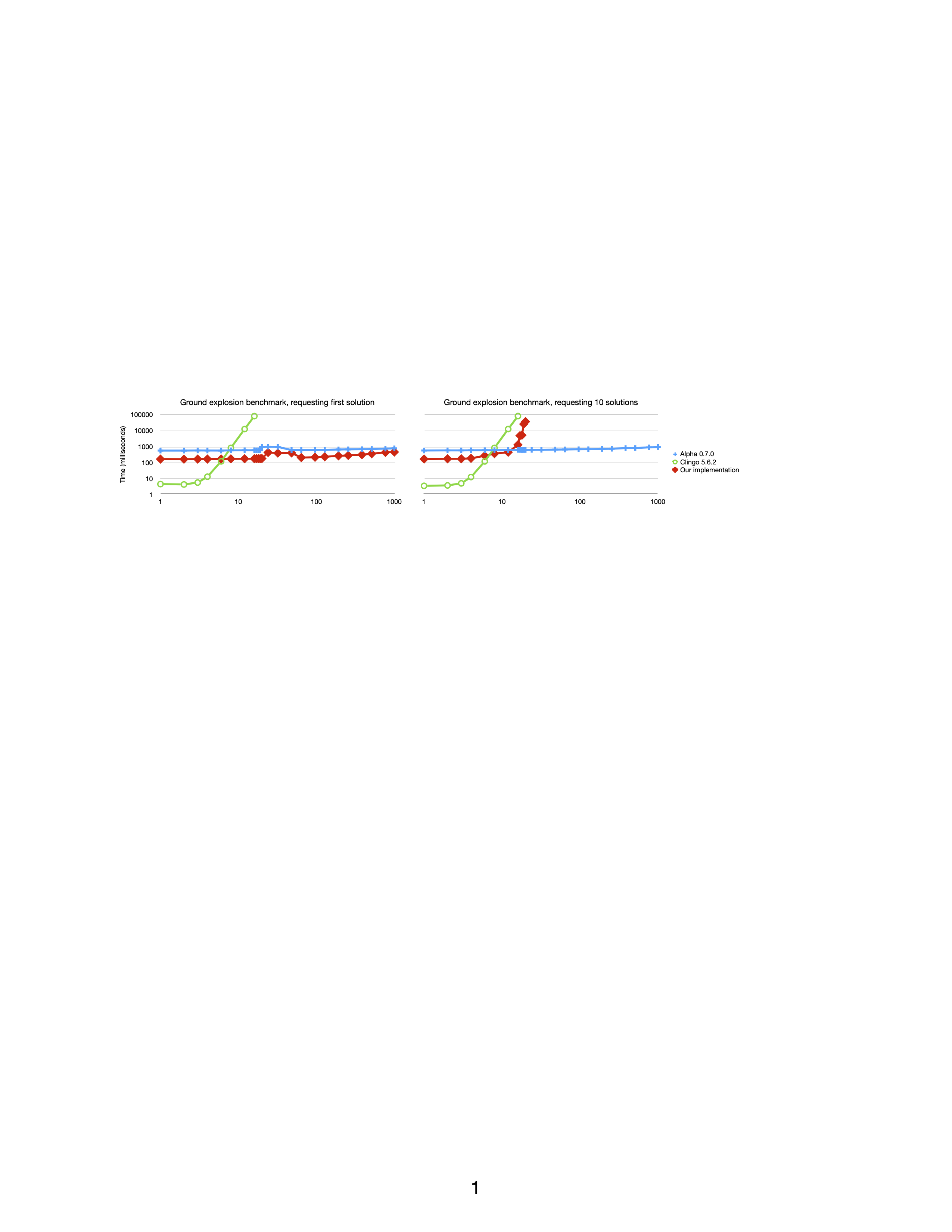}
\caption{Performance of the {\dusa}, Clingo, and Alpha solvers on the ground explosion benchmark in \Cref{fig-ground-explosion}. The $x$-axis is the problem size. All data points are the median of three runs, discarding timeouts.}
\label{ground-explosion-analysis}
\end{figure}

The results from this set of Alpha benchmarks are shown in \Cref{ground-explosion-analysis}. This benchmark shows both the strength of the {\dusa} implementation and the degree to which backtracking is done naively and inefficiently in {\dusa}. The first solution is found immediately, matching the asymptotic behavior of the Alpha implementation and improving on constant factors, but enumerating additional solutions is done in a way that results in exponential backtracking, so {\dusa} is unable to return 10 distinct solutions for most examples. We expect it to be relatively straightforward to dramatically improve {\dusa}'s behavior here.

\subsubsection{Graph 5-Colorability}

\begin{figure}
\begin{centering}
\footnotesize
\begin{minipage}[b]{0.45\textwidth}
\begin{verbatim}
chosenColour(N,C) :- node(N), colour(C),
  not notChosenColour(N,C).
notChosenColour(N,C) :- node(N), colour(C),
  not chosenColour(N,C).

:- node(X), not colored(X).
colored(X) :- chosenColour(X,Fv1).

:- node(N), chosenColour(N,C1),
  chosenColour(N,C2), C1!=C2.

colour(red0).
colour(green0).
colour(blue0).
colour(yellow0).
colour(cyan0).

:- node(X), not chosenColour(X,red0),
  not chosenColour(X,blue0),
  not chosenColour(X,yellow0),
  not chosenColour(X,green0),
  not chosenColour(X,cyan0).

:- link(X,Y), node(X), node(Y), X<Y,
  chosenColour(X,C), chosenColour(Y,C).
\end{verbatim}    
\end{minipage}
\begin{minipage}[b]{0.45\textwidth}
\begin{verbatim}
color N is {
  red,
  green,
  blue,
  yellow,
  cyan
} :- node N.






#forbid link X Y,  X < Y,
  color X is C, color Y is C.
\end{verbatim}
\end{minipage}
\end{centering}
\caption{Graph 5-colorability expressed in ASP, and an analogous (indirect) interpretation in {\dusa}.}
\label{fig-5-color}
\end{figure}

A graph is 5-colorable if each node can be assigned one of five different colors such that no edge connects nodes of the same color. Perhaps there is a dramatically clearer way to express this concept in ASP than the ASP code from the benchmark shown in \Cref{fig-5-color}, but it is remarkable how much simpler this is to express in {\casping}, as shown on the right side of \Cref{fig-5-color}.

\begin{figure}
\includegraphics[trim={2.5cm 16.2cm 4cm 9cm},clip,width=13.8cm]{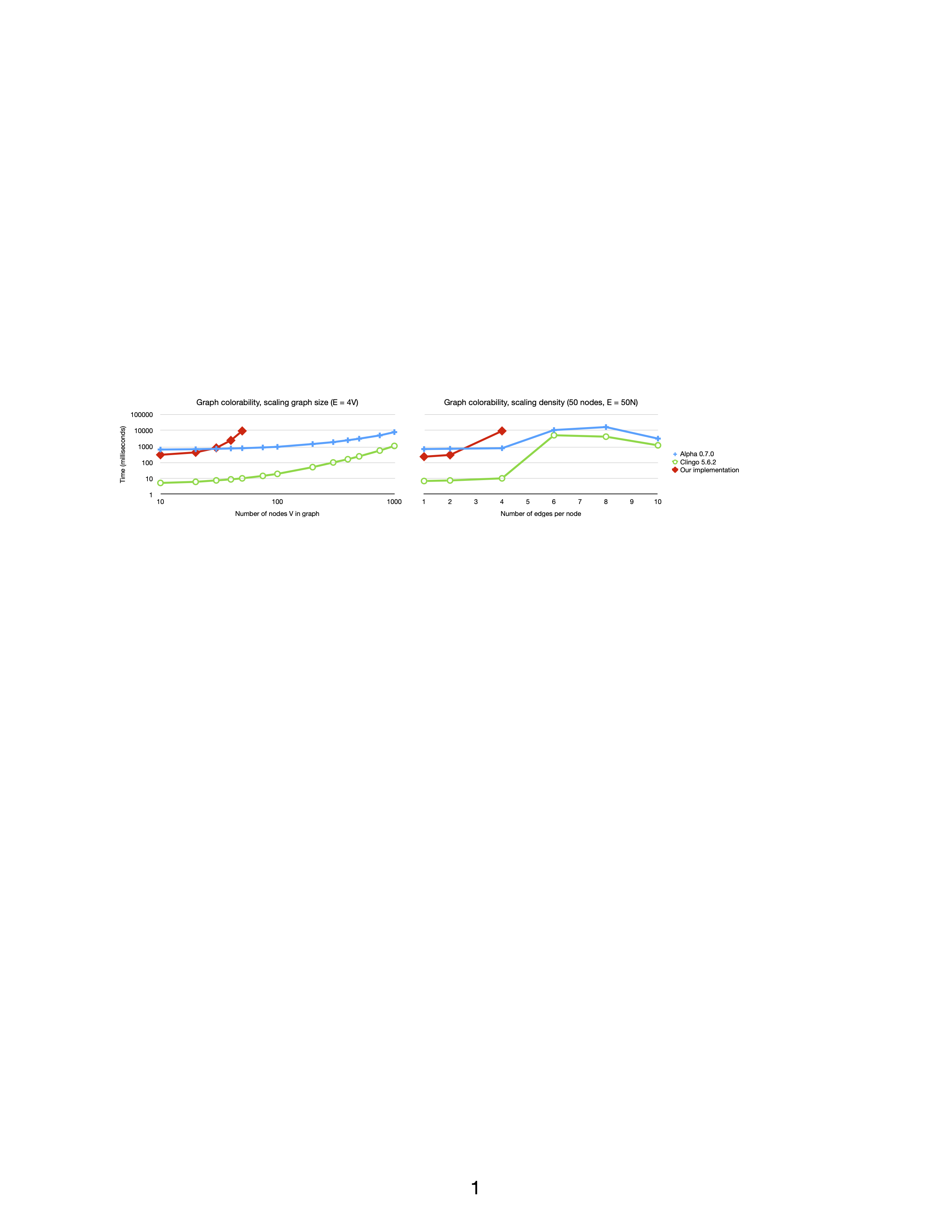}

\caption{Performance of the {\dusa}, Clingo, and Alpha solvers on 5-colorability benchmark from \Cref{fig-5-color}. All data points are the \textbf{average} of ten runs on different graphs requesting 10 different solutions each time, discarding timeouts.  (The same data points with 50 nodes and 200 edges are present in both graphs.)}
\label{5color-analysis}
\end{figure}

This is essentially a pure search problem, so it's not surprising that the {\dusa} implementation performs quite badly. We didn't bother testing the performance of the {\casp} that directly corresponds to the ASP program in \Cref{fig-5-color}, since the idiomatic program already exhibited extremely poor performance.

\subsubsection{``Cutedge'' Benchmark}

\begin{figure}
\begin{centering}
\footnotesize
\begin{minipage}[b]{0.8\textwidth}
\begin{verbatim}
delete(X,Y) :- edge(X,Y), not keep(X,Y).
keep(X,Y) :- edge(X,Y), delete(X1,Y1), X1 != X.
keep(X,Y) :- edge(X,Y), delete(X1,Y1), Y1 != Y.
reachable(X,Y) :- keep(X,Y).
reachable(X,Y) :- special(Y),reachable(X,Z),reachable(Z,Y).
\end{verbatim}
\end{minipage}
\end{centering}
\caption{``Cutedge'' benchmark.}
\label{fig-cutedge}
\end{figure}

The ``cutedge'' benchmark shown in \Cref{fig-cutedge} removes an arbitrary edge from a graph and then calculates reachability from one endpoint of the removed edge and another specially designated node. The benchmark seems to be intended to force bad performance on ground-then-solve approaches without being as obviously unfair as the ground explosion benchmark. \Cref{cutedge-analysis} shows the very similar performance of {\dusa} and Alpha. Clingo only able to handle the smallest examples in under 100 seconds.

\begin{figure}
\includegraphics[trim={2.5cm 16.5cm 4cm 8.75cm},clip,width=13.8cm]{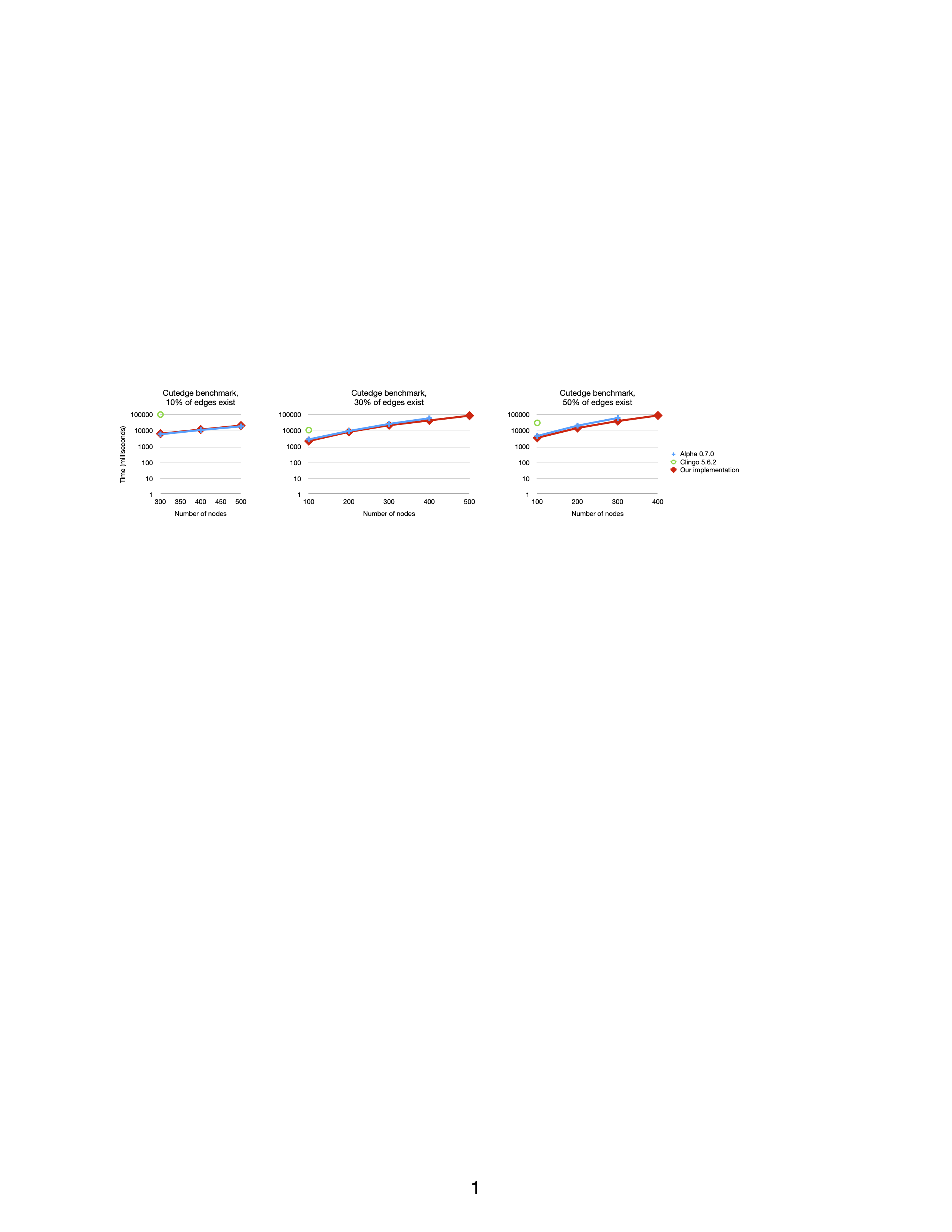}
\caption{Performance of the {\dusa} and Alpha solvers on the ``cutedge'' benchmark from \Cref{fig-cutedge} on random graphs with different edge probabilities. All data points are the \textbf{average} of ten runs on different graphs requesting 10 different solutions, discarding timeouts.}
\label{cutedge-analysis}
\end{figure}

\subsubsection{Graph Reachability Benchmark}

The graph reachability benchmark (\Cref{fig-reachability}) is included for completeness, but the test cases and results weren't particularly interesting.
It would be more interesting to rerun this test on the graphs used to test the spanning tree and canonical representative programs in order to tease apart different asymptotic behaviors.

\begin{figure}
\begin{centering}
\footnotesize
\begin{minipage}[b]{0.8\textwidth}
\begin{verbatim}
reachable(X, Y) :- edge(X, Y).
reachable(S, Y) :- start(S), reachable(S, X), reachable(X, Y).
\end{verbatim}
\end{minipage}
\end{centering}
\caption{Graph reachability.}
\label{fig-reachability}
\end{figure}

\begin{figure}
\begin{center}
\begin{tabular}{c  c c c} 
\toprule
 Instance size & Our implementation & Clingo & Alpha \\ [0.5ex] 
\midrule
 1000/4 & 0.4 & 0.05 & 0.8 \\ 
 \hline
 1000/8 & 0.5 & 0.1 & 0.9 \\
 \hline
 10000/2 & 1.9 & 1.1 & 1.2 \\
 \hline
 10000/4 & 3.6 & 2.3 & 1.6 \\
 \hline
 10000/8 & 2.3 & 5.1 & 2.1 \\ 
\bottomrule
\end{tabular}
\end{center}\caption{Performance of the {\dusa}, Clingo, and Alpha solvers on the graph reachability program from \Cref{fig-reachability}. Instance size is number of nodes / number of edges per node, running time in \textbf{seconds}, and all data points are the \textbf{average} of ten runs on different graphs requesting 10 different solutions each time (though there is in fact only one unique solution in each case).}
\label{reachability-analysis}
\end{figure}

\end{document}